\numberwithin{equation}{section}
\renewcommand\root{\mathfrak{x}}
\newcommand\pulp{\ensuremath{\mathtt{PULP}}}
\newcommand\duniq{d_{\mathrm{uniq}}}
\newcommand\dMS{d_{\mathrm{MS}}}
\newcommand\dpure{d_{\mathrm{pure}}}
\newcommand\PM{\{\pm1\}}
\newcommand\ism{\cong}
\newcommand\sign{\mathrm{sign}}
\newcommand\disteq{\sim}
\newcommand\dist{\mathrm{dist}}
\newcommand\MU{\vec\mu}
\newcommand{\BP}{\mathrm{BP}}
\newcommand{\fU}{\mathfrak U}
\newcommand{\fB}{\mathfrak B}
\newcommand{\fT}{\mathfrak T}
\newcommand{\vDelta}{\vec\Delta}
\newcommand{\vd}{\vec d}
\newcommand{\vN}{\vec N}
\newcommand{\vs}{\vec s}
\newcommand{\vX}{\vec X}
\newcommand{\vL}{\vec L}
\renewcommand{\epsilon}{\eps}
\renewcommand{\subset}{\subseteq}
\newcommand\vY{\vec Y}
\newcommand\vr{\vec r}
\newcommand\vm{\vec m}
\newcommand\ETA{{\vec\eta}}
\newcommand\PHI{\vec\Phi}
\newcommand\nix{\,\cdot\,}
\newcommand\dd{{\mathrm d}}
\renewcommand{\vec}[1]{\boldsymbol{#1}}
\newcommand\SIGMA{\vec\sigma}
\newcommand\TAU{\vec\tau}
\newtheorem{definition}{Definition}[section]
\newtheorem{claim}[definition]{Claim}
\newtheorem{remark}[definition]{Remark}
\newtheorem{theorem}[definition]{Theorem}
\newtheorem{lemma}[definition]{Lemma}
\newtheorem{proposition}[definition]{Proposition}
\newtheorem{corollary}[definition]{Corollary}
\newtheorem{fact}[definition]{Fact}
\newcommand\fG{\mathfrak{G}}
\newcommand\fE{\mathfrak{E}}
\newcommand\fF{\mathfrak{F}}
\newcommand\cA{\mathcal{A}}
\newcommand\cC{\mathcal{C}}
\newcommand\cF{\mathcal{F}}
\newcommand\cU{\mathcal{U}}
\newcommand\cT{\mathcal{T}}
\newcommand\cL{\mathcal{L}}
\newcommand\cP{\mathcal{P}}
\newcommand\cX{\mathcal{X}}
\newcommand\cV{\mathcal{V}}
\def\cR{{\mathcal R}}
\newcommand\vx{\vec x}
\newcommand\vZ{\vec Z}
\newcommand\THETA{\vec\theta}
\newcommand\eps{\varepsilon}
\newcommand\NN{\mathbb{N}}
\newcommand\TT{\mathbb{T}}
\newcommand\Erw{\mathbb{E}}
\newcommand{\vecone}{\mathbb{1}}
\newcommand{\Po}{{\rm Po}}
\newcommand\bc[1]{\left({#1}\right)}
\newcommand\cbc[1]{\left\{{#1}\right\}}
\newcommand\brk[1]{\left\lbrack{#1}\right\rbrack}
\newcommand\abs[1]{\left|{#1}\right|}
\newcommand\RR{\mathbb{R}}
\def\?#1{}
\def\whp{w.h.p\@ifnextchar-{.}{\@ifnextchar.{.\?}{\@ifnextchar,{.}{\@ifnextchar){.}{\@ifnextchar:{.:\?}{.\ }}}}}}
\def\Whp{W.h.p\@ifnextchar-{.}{\@ifnextchar.{.\?}{\@ifnextchar,{.}{\@ifnextchar){.}{\@ifnextchar:{.:\?}{.\ }}}}}}
\newcommand{\Lovasz}{Lov\'asz}
\newcommand{\Chvatal}{Chv\'{a}tal}
\newcommand\pr{\mathbb{P}}
\renewcommand\Pr{\pr}
\newcommand\Lem{Lemma}
\newcommand\Prop{Proposition}
\newcommand\Cl{Claim}
\newcommand\Thm{Theorem}
\newcommand\Fact{Fact}
\newcommand\Cor{Corollary}
\newcommand\Sec{Section}
\newcommand\Chap{Chapter}
\newcommand{\dsat}{d_{\mathrm{sat}}}
\newcommand{\drsb}{d_{\mathrm{rsb}}}
\newcommand{\dalg}{d_{\mathrm{alg}}}
\newcommand{\dgiant}{d_{\mathrm{giant}}}
\newcommand{\fun}{\uppsi}
\newcommand{\sfun}{\phi}
\newcommand{\Pfun}{\Upgamma}
\newcommand*{\NoKid}{\mathrel{\scalebox{0.6}{$\Circle$}}}
\newcommand*{\AllKid}{\mathrel{\scalebox{0.6}{$\CIRCLE$}}}
\newcommand*{\PureP}{\mathrel{\scalebox{0.8}{$\oplus$}}}
\newcommand*{\PureM}{\mathrel{\scalebox{0.8}{$\ominus$}}}
\newcommand{\rNoKid}{\mathrel{\raisebox{0.3pt}{\scalebox{0.8}{$\Circle$}}}}
\newcommand{\rAllKid}{\mathrel{\raisebox{0.3pt}{\scalebox{0.8}{$\CIRCLE$}}}}
\newcommand{\rPureP}{\mathrel{\raisebox{0pt}{$\oplus$}}}
\newcommand{\rPureM}{\mathrel{\raisebox{0pt}{$\ominus$}}}
\newcommand{\diffr}{\mathcal{D}}
\DeclareMathOperator{\type}{tp}
\newcommand{\vdpc}{\vd_{+}^{\star}}
\newcommand{\vdmc}{\vd_{-}^{\star}}
\newcommand{\LArg}{\pmb{\Upxi}}
\newcommand{\rvec}[1]{#1}
\newcommand{\fratio}{{\gamma}}
\newcommand{\fprop}{{\gamma^{-1}}}
\newcommand{\Ex}[1]{E_{#1}}
\newcommand{\Ind}{\vecone}
\newcommand{\dours}{\ensuremath{d_{\mathrm{con}}}}
\newcommand{\LDELit}{{\mathrm{LL}}}
\newcommand{\LDELitS}[1]{{\mathrm{LL}}^{\star}_{#1}}
\newcommand{\LDEP}{\mathrm{LL}^+}
\definecolor{aoEng}{rgb}{0, 0.5,0}
\newcommand{\fn}{\mathfrak{h}}
\newcounter{kcomcount}
\def\ex{{\mathbb E}}
\def\pr{{\mathbb P}}
\newcommand\RSA{Random Structures and Algorithms}
\newcommand\CPC{Combinatorics, Probability and Computing}
\begin{document}

\title{The random $k$-SAT Gibbs uniqueness threshold revisited}

\author{Arnab Chatterjee, Amin Coja-Oghlan, Catherine Greenhill, Vincent Pfenninger, Maurice Rolvien, Pavel Zakharov, Kostas
Zampetakis}

\address{Arnab Chatterjee, {\tt arnab.chatterjee@tu-dortmund.de}, TU Dortmund, Faculty of Computer Science, 12 Otto-Hahn-St, Dortmund 44227, Germany.}
\address{Amin Coja-Oghlan, {\tt amin.coja-oghlan@tu-dortmund.de}, TU Dortmund, Faculty of Computer Science and Faculty of Mathematics, 12 Otto-Hahn-St, Dortmund 44227, Germany.}
\address{Catherine Greenhill, {\tt c.greenhill@unsw.edu.au}, School of Mathematics and Statistics, UNSW Sydney, NSW 2052, Australia.}
\address{Vincent Pfenninger, {\tt pfenninger@math.tu-graz.at},TU Graz, Institute of Discrete Mathematics, Steyrergasse 30, 8010 Graz, Austria.}
\address{Maurice Rolvien,
{\tt maurice.rolvien@uni-hamburg.de}, University of Hamburg, Faculty of Mathematics, Informatics and Natural Sciences, Department of Informatics, Vogt-K\"olln-Str.\ 30, 22527 Hamburg, Germany.}
\address{Pavel Zakharov, {\tt pavel.zakharov@tu-dortmund.de},TU Dortmund, Faculty of Computer Science and Faculty of Mathematics, 12 Otto-Hahn-St, Dortmund 44227, Germany.}
\address{Kostas Zampetakis, {\tt konstantinos.zampetakis@tu-dortmund.de},TU Dortmund, Faculty of Computer Science, 12 Otto-Hahn-St, Dortmund 44227, Germany.}

\begin{abstract}%
We prove that for any $k\geq3$ for clause/variable ratios up to the Gibbs uniqueness threshold of the corresponding Galton-Watson tree, the number of satisfying assignments of random $k$-SAT formulas is given by the `replica symmetric solution' predicted by physics methods [Monasson, Zecchina: Phys.\ Rev.\ Lett.\ {\bf76} (1996)].
Furthermore, while the Gibbs uniqueness threshold is still not known precisely for any $k\geq3$, we derive new lower bounds on this threshold that improve over prior work [Montanari and Shah: SODA (2007)].
The improvement is significant particularly for small~$k$.
  \hfill {\em MSc:~68Q87, 60C05, 68R07}
\end{abstract}

\maketitle

\section{Introduction}\label{sec_intro}

\subsection{Background and motivation}\label{sec_motiv}

\noindent
Going back to experimental work from the 1990s, the most prominent question concerning random $k$-SAT has been to pinpoint the satisfiability threshold, defined as the largest density $m/n$ of clauses $m$ to variables $n$ up to which satisfying assignments likely exist~\cite{ANP,Cheeseman}.
Currently, the satisfiability threshold is known precisely in the case of $k=2$~\cite{CR,Goerdt} and for $k\geq k_0$ with $k_0$ an undetermined (large) constant~\cite{DSS3}.
The latter result confirms `predictions' based on an analytic but non-rigorous physics technique called the `cavity method'.
Indeed, the cavity method predicts the satisfiability threshold for every $k\geq3$~\cite{MPZ}, but random $k$-SAT for `small' $k\geq3$ appears to be a 
particularly hard nut to crack.
Additionally, according to the cavity method several phase transitions precede the satisfiability threshold and are expected to impact, among other things, the performance of algorithms~\cite{pnas}.
One of these phase transitions, the {\em Gibbs uniqueness transition}, pertains to a spatial mixing property that also plays a pivotal role in the computational complexity of counting and sampling~\cite{Sly}.

From a statistical physics viewpoint, the satisfiability threshold is 
only the second most important quantity associated with random $k$-SAT.
The first place firmly belongs to the typical number of satisfying assignments, known as the {\em partition function} in physics parlance~\cite{MM}.
All the other predictions, including the location of the satisfiability threshold, ultimately derive from the formula for the number of satisfying assignments or closely related variables~\cite{MMZ}.
Yet there has been little progress on confirming the physics formula for the number of satisfying assignments rigorously.

Three prior contributions stand out.
First, a proof technique called the `interpolation method' turns the physics prediction into a rigorous upper bound~\cite{FranzLeone,Guerra,PanchenkoTalagrand}.%
\footnote{Strictly speaking, the contributions \cite{FranzLeone,Guerra,PanchenkoTalagrand} deal with the `random $k$-SAT model at positive tempertature', see \Sec~\ref{sec_discussion}. In \Cor~\ref{cor_interpolation} below we combine the interpolation method with a concentration argument to bound the number of actual satisfying assignments.}
Second, in the case $k=2$, conceptually much simpler than $k\geq3$, the physics formula has been proved correct~\cite{2sat}.
Third, Montanari and Shah~\cite{MS} proved that also for $k\geq3$ for certain clause/variable densities the `replica symmetric solution' from physics correctly approximates the number of `good' assignments that satisfy all but $o(n)$ clauses.
However, it seems difficult to estimate the gap between the number of such `good' assignments and the number of actual satisfying assignments.
A rigorous method to this effect would likely imply the existence of uniform satisfiability thresholds for all $k\geq3$, thereby resolving a long-standing conundrum~\cite{BGT,Friedgut}.
The proof of Montanari and Shah is based on the aforementioned Gibbs uniqueness property.

The aim of the present paper is to determine the number of {\em actual} satisfying assignments of random $k$-SAT formulas for clause/variable densities up to the Gibbs uniqueness threshold.
Specifically, we verify that the `replica symmetric solution' from~\cite{MZ,MZ2} yields the correct answer for any $k\geq3$ right up to the Gibbs uniqueness threshold, 
even though the precise value of this threshold is not currently known.
Additionally, we derive a new lower bound on the Gibbs uniqueness threshold.
The improvement is particularly significant for `small' $k\geq3$.
Combining these two results, we obtain the first rigorous formula for the number of satisfying assignments of random $k$-SAT formula for a non-trivial regime of clause/variable densities.
Crucially, the result covers meaningful clause/variable densities even for small $k\geq3$.

\subsection{Results}\label{sec_results}

\noindent
Let $\PHI=\PHI_{d,k}(n)$ be the random $k$-CNF on $n$ Boolean variables $x_1,\ldots,x_n$ with $\vm=\vm_n\disteq\Po(dn/k)$ clauses $a_1,\ldots,a_{\vm}$.
The clauses $a_i$ are drawn independently and uniformly from the set of all $2^k \binom{n}{k}$ possible clauses with $k$ distinct variables.
Hence, the parameter $d$ prescribes the expected number of clauses in which a given variable appears.
Let $S(\PHI)$ be the set of satisfying assignments of $\PHI$ and let $Z(\PHI)=|S(\PHI)|$.
We encode the Boolean values `true' and `false' by $+1$ and $-1$, respectively.
Since right up to the satisfiability threshold $Z(\PHI)$ is of order $\exp(\Theta(n))$ \whp\ for trivial reasons%
\footnote{For example, \whp\ there are $\Omega(n)$ variables that do not appear in any clause.},
our objective is to study the random variable $n^{-1}\log Z(\PHI)$ as $n\to\infty$.

\subsubsection{The number of satisfying assignments up to the Gibbs uniqueness threshold}\label{sec_results_num_sol}

The first main result vindicates the `replica symmetric solution' for values of $d$ up to the Gibbs uniqueness threshold of the Galton-Watson tree that mimics the local topology of $\PHI$.
Let us define these concepts precisely.

We begin with the Galton-Watson tree $\TT=\TT_{d,k}$, which is generated by a two-type branching process.
The two types are {\em variable nodes} and {\em clause nodes}.
The process starts with a single root variable node $\root$.
The offspring of any variable node is a $\Po(d)$ number of clause nodes, while every clause node begets precisely $k-1$ variable nodes.
Additionally, independently for each clause node $a$ and every variable node $x$ that is either a child or the parent of $a$ a {\em sign}, denoted $\sign(x,a)\in\PM$, 
is chosen uniformly at random.
The resulting random tree $\TT$ models the local structure of the random formula $\PHI$ in the sense of local weak convergence~\cite{AldousSteele,Lovasz}.%
\footnote{\Cor~\ref{cor_lcwk} below provides a precise statement to this effect.}

Next, we define the Gibbs uniqueness property on the tree $\TT$.
For an integer $\ell\geq0$ let $\TT^{(\ell)}$ be the finite tree obtained by removing all variable and clause nodes at a distance greater than $2\ell$ from the root $\root$.
We identify the finite tree $\TT^{(\ell)}$ with a Boolean formula whose variables/clauses are precisely the variable/clause nodes of $\TT^{(\ell)}$.
Let $S(\TT^{(\ell)})\neq\emptyset$ be the set of satisfying assignments of this formula and let $\TAU^{(\ell)}\in S(\TT^{(\ell)})$ be a uniformly random satisfying assignment.
Moreover, let $\partial^{2\ell}\root$ be the set of variable nodes of $\TT^{(\ell)}$ at distance precisely $2\ell$ from the root $\root$.
Then for given $d,k$ the tree $\TT=\TT_{d,k}$ has the {\em Gibbs uniqueness property} if
\begin{align}\label{eqTreeUniq}
  \lim_{\ell\to\infty}\ex\brk{\max_{\tau\in S(\TT^{(\ell)})}\abs{ \pr\brk{\TAU^{(\ell)}(\root)=1\mid\TT} -\pr\brk{\TAU^{(\ell)}(\root)=1\mid\TT,\,\forall x\in\partial^{2\ell}\root:\TAU^{(\ell)}(x)=\tau(x)} }}&=0&&\mbox{(see~\cite{pnas})}.
\end{align}
In words, in the limit of large $\ell$ the truth value $\TAU^{(\ell)}(\root)$ of the root $\root$ is asymptotically independent of the truth values $\{\TAU^{(\ell)}(x)\}_{x\in\partial^{2\ell}\root}$ of the variables at distance $2\ell$ from $\root$.
In light of the above, for any $k \ge 2$ we further define $\duniq(k)$ as 
\begin{align}
  \duniq(k) = \inf \{ d>0: \text{ condition \eqref{eqTreeUniq} fails to hold for } {d,k}\} \enspace.
\end{align}
It is easy to see that $\duniq(k)$ is strictly positive and finite for any $k\geq2$.
Indeed, in \Thm~\ref{thm_uniq} we will derive explicit lower bounds on $\duniq(k)$.
However, the exact value of $\duniq(k)$ is not currently known for any $k\geq3$.

As a final preparation we need to spell out the `replica symmetric solution' from~\cite{MZ}.
This prediction comes in terms of a distributional fixed point problem, i.e., a fixed point problem on the space $\cP(0,1)$ of probability measures on the open unit interval.
Specifically, consider the {\em Belief Propagation operator}
\begin{align}\label{eqBPop}
  \BP_{d,k}&:\cP(0,1)\to\cP(0,1),&\pi&\mapsto\hat\pi=\BP_{d,k}(\pi)
\end{align}
defined as follows.
Let $\vd^+,\vd^-\disteq\Po(d/2)$ be Poisson variables with expectation $d/2$.
Moreover, let $(\MU_{\pi,i,j})_{i,j\geq1}$ be a sequence of i.i.d. random variables, each following distribution $\pi$.
All these random variables are mutually independent. Further, let
\begin{align}\label{eqhat}
  \MU_{\pi,i}&=1-\prod_{j=1}^{k-1}\MU_{\pi,i,j}\quad\mbox{ for $i\geq1$,}&&\mbox{and}&
  \hat\MU_{\pi}&=\frac{\prod_{i=1}^{\vd^-}\MU_{\pi,2i-1}}{\prod_{i=1}^{\vd^-}\MU_{\pi,2i-1}+\prod_{i=1}^{\vd^+}\MU_{\pi,2i}}.
\end{align}
Then $\hat\pi$ is the distribution of $\hat\MU_{\pi}$.
Furthermore, for a probability measure $\pi\in\cP(0,1)$ define the {\em Bethe free entropy}%
\footnote{Throughout the paper $\log$ refers to the natural logarithm.}
\begin{align}\label{eqBethe}
  \fB_{d,k}(\pi)&=\ex\brk{\log\bc{\prod_{i=1}^{\vd^-}\MU_{\pi,2i-1}+\prod_{i=1}^{\vd^+}\MU_{\pi,2i}}-\frac{d(k-1)}{k}\log\bc{1-\prod_{j=1}^k\MU_{\pi,1,j}}},
\end{align}
provided that the expectation on the r.h.s.\ exists.
Finally, let $\delta_{1/2}\in\cP(0,1)$ be the atom at $1/2$ and let us write $\BP_{d,k}^\ell$ for the $\ell$-fold application of the operator $\BP_{d,k}$.

\begin{theorem}\label{thm_main}
  Let $k\geq3$ and assume that $0<d<\duniq(k)$.
  Then the weak limit
  \begin{align}\label{eqpidk}
    \pi_{d,k}=\lim_{\ell\to\infty}\BP_{d,k}^{\ell}(\delta_{1/2})\in\cP(0,1)
  \end{align}
  exists and
  \begin{align}\label{eqBFE}
    \lim_{n\to\infty}\frac1n\log Z(\PHI)&=\fB_{d,k}(\pi_{d,k})&&\mbox{in probability}.
  \end{align}
\end{theorem}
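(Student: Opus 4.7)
The plan is to match an upper bound, which follows directly from the interpolation Corollary~\ref{cor_interpolation}, with a lower bound obtained via an Aizenman--Sims--Starr (ASS) cavity computation that leverages Gibbs uniqueness on the Galton--Watson tree. First I would verify existence of the limit $\pi_{d,k}$ in \eqref{eqpidk}. The measure $\BP_{d,k}^\ell(\delta_{1/2})$ is precisely the law of the root marginal in $\TT^{(\ell)}$ when each depth-$2\ell$ leaf is assigned an independent uniform spin, and \eqref{eqTreeUniq} asserts that this root marginal is asymptotically independent of the boundary condition. A short argument (replacing depth $2\ell$ by $2\ell'>2\ell$ and using uniqueness on the nested sub-trees) then shows that $(\BP_{d,k}^\ell(\delta_{1/2}))_{\ell\ge0}$ is weakly Cauchy, so it converges to a measure $\pi_{d,k}$ which must be a fixed point of $\BP_{d,k}$. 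Plugging this $\pi_{d,k}$ into Corollary~\ref{cor_interpolation} delivers $\limsup_{n\to\infty}n^{-1}\log Z(\PHI)\le\fB_{d,k}(\pi_{d,k})$ in probability.

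For the matching lower bound I would run the ASS scheme. Writing the expected log partition function as a telescoping sum and coupling $\PHI_{d,k}(n+1)$ to $\PHI_{d,k}(n)$ by revealing one extra variable $x_{n+1}$ together with its incident clauses, a standard cavity manipulation expresses the increment $\Erw[\log(Z(\PHI_{d,k}(n+1))/Z(\PHI_{d,k}(n)))]$ as a bounded functional of the joint law of the \emph{cavity} marginals of the $O(1)$ many variables that sit within distance two of $x_{n+1}$ in $\PHI_{d,k}(n)$. By local weak convergence (Corollary~\ref{cor_lcwk}) this neighborhood is asymptotically distributed like a finite-depth truncation of $\TT_{d,k}$, and Gibbs uniqueness---bootstrapped to a joint mixing statement controlling several boundary sites at once---forces the associated cavity marginals to converge in distribution to independent copies of $\pi_{d,k}$. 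A routine algebraic manipulation then identifies the limiting expected increment with the summand of $\fB_{d,k}(\pi_{d,k})$ in \eqref{eqBethe}, so Cesaro summation gives $\liminf_{n\to\infty}n^{-1}\Erw[\log Z(\PHI)]\ge\fB_{d,k}(\pi_{d,k})$. A bounded-difference (Azuma--Hoeffding) estimate applied to the Doob martingale obtained by exposing the Poisson sample of clauses one at a time then promotes both bounds to convergence in probability.

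The main obstacle is the bootstrap step inside the ASS computation. Condition \eqref{eqTreeUniq} is a statement about a single root marginal, whereas the increment naturally involves the joint law of the $k-1$ cavity marginals appearing inside each new clause, together with correlations across the $\vd\disteq\Po(d)$ incident clauses. Upgrading \eqref{eqTreeUniq} to the required joint decoupling proceeds by showing that, with probability $1-o(1)$, the relevant boundary sites sit in edge-disjoint sub-trees once $x_{n+1}$ is removed, so that their cavity marginals genuinely factorise up to vanishing error; the small-probability event on which the local neighborhood of $x_{n+1}$ fails to be tree-like contributes only $O(\log n)$ to $\log Z$ and is absorbed into the error term. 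This decoupling---together with the clean identification of the limit as $\fB_{d,k}(\pi_{d,k})$ rather than as some spurious fixed point of $\BP_{d,k}$---is the crux of the argument.
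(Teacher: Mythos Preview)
Your overall architecture matches the paper's: interpolation for the upper bound, Aizenman--Sims--Starr for the lower bound, Gibbs uniqueness to identify the cavity marginals with independent samples from $\pi_{d,k}$. Two steps, however, do not go through as stated, and they are precisely where the paper's work lies.

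First, Azuma--Hoeffding does not apply to $\log Z(\PHI)$: adding a single clause can drop $Z(\PHI)$ from $\exp(\Theta(n))$ to zero, so the clause-exposure martingale has unbounded differences. The paper applies Azuma--Hoeffding only to the positive-temperature quantity $\log Z_\beta(\PHI)$, where each clause shifts the value by at most $\beta$ (\Lem~\ref{lem_AH}); letting $\beta\to\infty$ \emph{after} the concentration step yields the w.h.p.\ upper bound of \Cor~\ref{cor_interpolation}. The in-probability \emph{lower} bound then comes not from concentration of $\log Z(\PHI)$ but from combining this w.h.p.\ upper bound, the deterministic cap $\log Z(\PHI)\le n\log2$, and the expectation identity $\frac1n\Erw[\log(Z(\PHI)\vee1)]=\fB_{d,k}(\pi_{d,k})+o(1)$ obtained from the ASS telescoping.

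Second, your handling of the ASS increment underestimates the hard-constraint difficulty. You call the increment a ``bounded functional'' of the cavity marginals and say the non-tree-like event ``contributes only $O(\log n)$'', but neither is true. The event that the depth-$\Lambda$ neighbourhood of $x_{n+1}$ carries a single excess edge has probability $\tilde O(n^{-1})$, yet on it the increment $\log\frac{Z(\PHI''')\vee1}{Z(\PHI')\vee1}$ can be as negative as $-\Theta(n)$ (one new clause may wipe out all satisfying assignments of $\PHI'$); naively this contributes $\Theta(1)$, not $o(1)$, to the expected increment. Establishing uniform integrability of the increment so that the ``good event'' computation actually determines the expectation is the paper's main technical contribution. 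It is carried out via the \pulp\ algorithm and its analysis on the Galton--Watson tree (\Sec~\ref{sec_pulp_pr}), culminating in the moment bound $\Erw\bigl|\log\frac{Z(\PHI''')\vee1}{Z(\PHI')\vee1}\bigr|^{3/2}=O(1)$ of \Prop~\ref{prop_bad}; H\"older's inequality then kills the contribution of events of probability $o(1)$. This step has no analogue in your plan, and without it the ASS computation does not close.
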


The formula~\eqref{eqBFE} matches the prediction from~\cite{MZ} precisely.
Of course, part of the assertion of \Thm~\ref{thm_main} is that the Bethe free entropy $\fB_{d,k}(\pi_{d,k})$ is well defined.
Admittedly, the formula~\eqref{eqBFE} is not `explicit'.
But the proof of \Thm~\ref{thm_main} evinces that the convergence~\eqref{eqpidk} occurs rapidly.
Therefore, a randomised algorithm called `population dynamics'~\cite{MM} can be used to approximate~\eqref{eqBFE} within any desired numerical accuracy.

\begin{figure}
  \includegraphics[height=8.5cm]{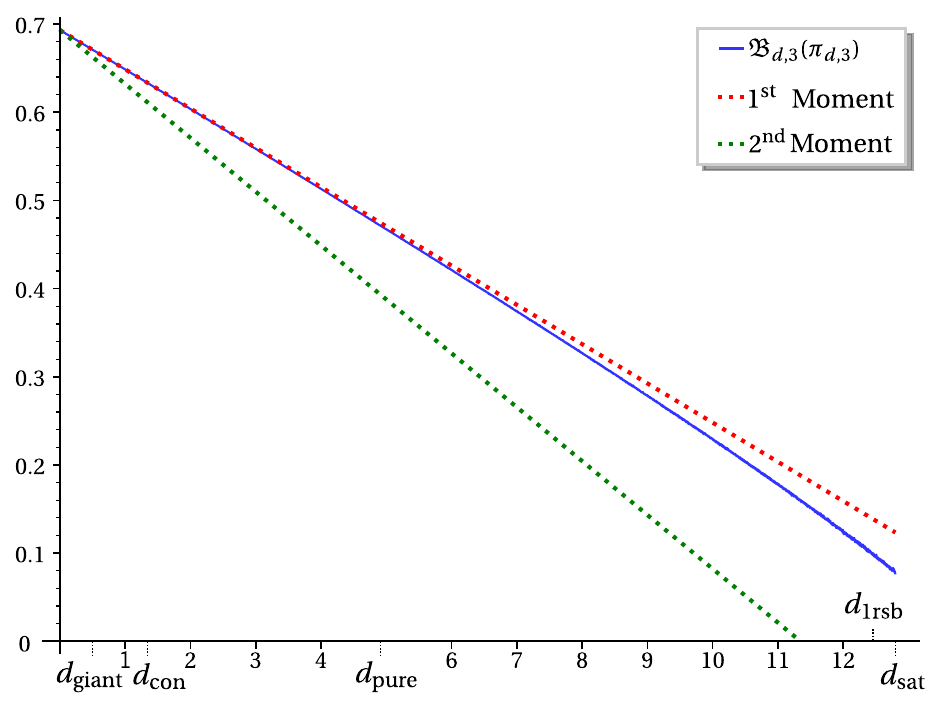}
  \caption{Comparison of $\fB_{d,k}(\pi_{d,k})$ with known bounds for $\lim_{n\to\infty}\frac1n\log Z(\PHI)$ for $k=3$.
    The red dotted line depicts the first moment upper bound~\eqref{eq1mmt}, while the green dotted line represents the lower bound provided by \eqref{eq2mmt}. 
    The blue line displays a numerical approximation of $\fB_{d,3}(\pi_{d,3})$. To obtain our values, 
    we generated $10^{6}$ samples from $\pi \approx \BP^{25}_{d,3}(\delta_{1/2})$ and then evaluated the
    corresponding empirical average of the expression in \eqref{eqBethe}. }\label{fig_maurice}
\end{figure}

\subsubsection{An improved lower bound on Gibbs uniqueness}\label{sec_results_imp_unq}
The obvious next task is to determine the Gibbs uniqueness threshold $\duniq(k)$.
Currently, its value is known precisely only in the case $k=2$, where $\duniq(2)=2$ coincides with the random 2-SAT satisfiability threshold~\cite{2sat,CR,Goerdt}.
Furthermore, Montanari and Shah~\cite{MS} proved that the {\em pure literal threshold}\footnote{This marks the threshold up to which the pure literal 
algorithm--which repeatedly assigns the preferred value to all variables appearing with a single sign--produces a satisfying assignment  \whp}
$\dpure(k)$ upper bounds $\duniq(k)$ for all $k\geq2$.%
\footnote{To be precise, Montanari and Shah established an upper bound on the Gibbs uniqueness threshold that turns out to coincide with the pure
  literal threshold, albeit without pointing out this identity. }
The value of $\dpure(k)$ admits a neat formula~\cite{BFU,Molloy}:
\begin{align}\label{eqpure}
  \duniq(k)\leq\dpure(k)&=\min_{z>0}\frac{z}{(1-\exp(-z/2))^{k-1}}.
\end{align}

Complementing the upper bound~\eqref{eqpure}, Montanari and Shah derived a lower bound $\dMS(k)$:
\begin{align}\label{eqMS}
  \dMS(k)&=\sup\cbc{d>0: d(k-1)\bc{1-\exp(-d/2)/4}\bc{1-\exp(-d/2)/2}^{k-2} <1}\leq\duniq(k).
\end{align}
Unfortunately, the bound \eqref{eqMS} is tight not even in the case $k=2$, where $\duniq(2)=2$ while $\dMS(2)\approx1.16$.
That said, the lower and upper bounds $\dMS(k)$ and $\dpure(k)$ match asymptotically in the limit of large $k$, as
\begin{align}\label{eqpureasymptotic}
  \dMS(k),\dpure(k)&=(2+o_k(1))\log k,
\end{align}
with $o_k(1)$ hiding a term that vanishes as $k\to\infty$.
The following theorem yields an improved lower bound $\dours(k)$ on $\duniq(k)$.

\begin{theorem}\label{thm_uniq}
  For all $k\geq3$ we have
  \begin{align}\label{eqmain}
    \duniq(k)\geq\dours(k):=\sup\cbc{d>0:\frac{d(k-1)}2\bc{1-\exp(-d/2)/2}^{k-2}<1}.
  \end{align}
\end{theorem}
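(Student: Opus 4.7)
The plan is to establish Gibbs uniqueness on $\TT$ by a contraction analysis of the Belief Propagation recursion induced by boundary conditions at depth $2\ell$, showing that for $d<\dours(k)$ the disagreement at the root decays exponentially as $\ell\to\infty$. I would set up the $k$-SAT BP formalism as follows: for each oriented edge $(v,b)$ in $\TT^{(\ell)}$ (variable $v$ to parent clause $b$) and each $\tau\in S(\TT^{(\ell)})$, let $\eta_{v\to b}^\tau\in[0,1]$ denote the probability that $v$ takes the sign falsifying its literal in $b$ under the subtree posterior on satisfying assignments with boundary $\tau$. The recursion is $\eta_{b\to v}=\prod_{y}\eta_{y\to b}$ at a clause and, at a variable with child clauses partitioned as $C^{\pm}$ by literal sign relative to the edge to $b$,
\[
\eta_{v\to b}=\frac{\prod_{c\in C^+}(1-\eta_{c\to v})}{\prod_{c\in C^+}(1-\eta_{c\to v})+\prod_{c\in C^-}(1-\eta_{c\to v})},
\]
matching the operator $\BP_{d,k}$ in~\eqref{eqBPop}. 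It suffices to show $\ex[|\eta_{\root}^\tau-\eta_{\root}^{\tau'}|]\to 0$ for any $\tau,\tau'\in S(\TT^{(\ell)})$.

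The heart of the proof is a level-by-level analysis of the signed disagreement $\Delta_{v\to b}:=\eta_{v\to b}^\tau-\eta_{v\to b}^{\tau'}$. A first-order expansion of the BP update yields, at a clause,
\[
\Delta_{b\to v}=\sum_{y}\Big(\prod_{y'\neq y}\eta_{y'\to b}^{\bullet}\Big)\Delta_{y\to b}+O(\Delta^2),
\]
and, at a variable,
\[
\Delta_{v\to b}=\sum_c\varepsilon_c\cdot\frac{\eta_{v\to b}(1-\eta_{v\to b})}{1-\eta_{c\to v}}\Delta_{c\to v}+O(\Delta^2),
\]
with $\varepsilon_c\in\{+1,-1\}$ according to whether $c\in C^-$ or $c\in C^+$. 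Passing to expectations over $\TT$ and using (i) the conditional independence of sibling subtrees and (ii) the i.i.d.\ uniform $\pm1$ literal signs, the $\varepsilon_c$'s force the cross terms in $\ex[(\Delta_{v\to b})^2]$ to vanish, effectively halving the naive branching at each variable node and thereby producing the prefactor $1/2$.

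For the clause step I would use the distributional inequality $\ex[\eta_{y\to b}]\leq 1-\eul^{-d/2}/2$: with probability at least $\eul^{-d/2}$ the variable $y$ has no child clause of the sign opposing $s_{y,b}$, and on this event the ``free'' sign of $y$ is the one satisfying $b$, so $1-\eta_{y\to b}\geq 1/2$. Combined with the independence of sibling subtrees and a product expansion, this yields the clause-side damping $(1-\eul^{-d/2}/2)^{k-2}$ corresponding to the $k-2$ other children of a clause. Together with the grandchild-variable branching $d(k-1)$ and the variable-side halving above, one obtains a two-BP-step contraction factor of $\frac{d(k-1)}{2}(1-\eul^{-d/2}/2)^{k-2}$; when this is strictly less than $1$, iterating $\ell$ times forces $\ex[|\Delta_{\root}|]\to 0$, establishing~\eqref{eqTreeUniq}.

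The main obstacle is the variable-side sensitivity: because $\eta_{c\to v}$ may approach $1$, the ratio $\eta_v(1-\eta_v)/(1-\eta_{c\to v})$ is not uniformly bounded, so a second-moment framework combined with the sign-randomness of literals is essential to absorb the blow-up into a variance estimate and extract the clean prefactor $1/2$. This sign-cancellation is where the improvement over~\cite{MS} arises: the Montanari-Shah argument uses a pointwise $L^1$ derivative at the variable step and unavoidably picks up the factor $1-\eul^{-d/2}/4$ in place of $1/2$.
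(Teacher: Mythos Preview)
Your proposal identifies the right target contraction factor and correctly isolates the clause-side damping $(1-\eul^{-d/2}/2)^{k-2}$ via the pure-literal event. The genuine gap is in your mechanism for the variable-side factor $1/2$.

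Your plan is to compute $\ex_{\TT}[(\Delta_{v\to b})^2]$ and use that the $\varepsilon_c$ are i.i.d.\ $\pm1$ to kill the cross terms. But the cross term is $\ex[\varepsilon_c\varepsilon_{c'}\,X_cX_{c'}]$ with $X_c=\frac{\eta_v(1-\eta_v)}{1-\eta_{c\to v}}\Delta_{c\to v}$, and $\eta_v$ itself is a function of \emph{all} the signs $(\varepsilon_c)_c$ through the partition $C^+\cup C^-$; so $\varepsilon_c$ is not independent of $X_cX_{c'}$ and the cross terms do not vanish. Worse, the boundary conditions $\tau,\tau'$ over which you must take the supremum in~\eqref{eqTreeUniq} may depend on the full tree, including the signs; hence $\Delta_{c\to v}$ is correlated with $\varepsilon_c$ as well. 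There is also a secondary issue: alternating between an $L^2$ step at variables and a product step at clauses forces Cauchy--Schwarz losses you have not budgeted for, and the ``$+O(\Delta^2)$'' remainders are not small a priori since $\Delta$ is $O(1)$ near the boundary.

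For comparison, the paper obtains the same factor $1/2$ by a completely different, deterministic route. It first reduces to a single explicit extremal boundary $\TAU^+$ (\Lem~\ref{lem:ExtremalBnd}), passes to log-likelihood ratios $\ETA_x^{(\ell)}$ so that the recursion is Markovian down the tree, and then tracks the conditional laws according to the four variable types $\rAllKid,\rPureP,\rPureM,\rNoKid$. The crucial step is a combinatorial coupling between the type vectors $r$ of positive clauses and $r'$ of negative clauses (e.g.\ $r'=(r_{\AllKid},r_{\PureM}+1,r_{\PureP}-1,r_{\NoKid})$) that simultaneously preserves the weight $P(r)\cdot r_t$ \emph{and} matches the damping parameters, so that the paired derivative bounds combine into $\sfun_\lambda(w)=\fun_\lambda(w)+\fun_\lambda(1-w)\le\frac{\lambda/2}{1-\lambda/2}$ (\Lem~\ref{lem:OptimizeAntipodal}). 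This calculus identity, not sign cancellation in a moment, is what converts the naive $d(k-1)$ branching into $\frac{d(k-1)}{2}$ and yields the threshold $\dours(k)$.
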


\noindent
An easy calculation reveals that
\begin{align}\label{eqmainMS}
  \dMS(k)<\dours(k)&&\mbox{for every }k\geq2.
\end{align}
Moreover, it is satisfactory that the formula \eqref{eqmain} reproduces the correct (previously known) threshold $\duniq(2)=\dours(2)=\dpure(2)=2$.
That said, we have no reason to believe that \eqref{eqmain} is tight for any $k\geq3$.

\begin{table}[h]
  \begin{center}
    \begin{tabular}{|l||l|l|l|l|l|l|l} \hline
      $k$ & $2$ & $3$ & $4$ & $5$ 
      \\ \hline\hline
      $\dgiant$ & $1.0000$ & $0.5000$ & $0.3333$ & $0.2500$
      \\ \hline
      $\dMS$ & $1.1625$ & $0.8792$ & $0.8695$ & $0.9236$   
      \\ \hline
      $\dours$ & $2.0000$ & $1.3431$ & $1.2451$ & $1.2635$ 
      \\ \hline
      $\dpure$ & $2.0000$ & $4.9108$ & $6.1782$ & $7.0178$ 
      \\ \hline
      $\dsat$ & $2.0000$ & $12.801$ & $39.724$ & $105.585$ 
      \\ \hline
    \end{tabular}
  \end{center}
  \smallskip
  \caption{The values of $\dMS(k), \dours(k)$, and $\dpure(k)$ for $2\leq k\leq 5$.
    Additionally, $\dgiant(k) = 1/(k-1)$ marks the giant component threshold of the hypergraph induced by the random $k$-CNF formula. 
    Moreover, $\dsat(k)$ is the satisfiability threshold according to physics predictions~\cite{MMZ}. It is not hard to show that
  $\dgiant(k) \le \dMS(k) \le \dours(k) \le \duniq(k) \le \dpure(k) \le \dsat(k) $, for all $k \ge 2$.}
  \label{tab_main}
\end{table}

Combining \Thm s~\ref{thm_main} and~\ref{thm_uniq}, we obtain the following.

\begin{corollary}\label{cor_uniq}
  Let $k\geq3$.
  If $d<\dours(k)$ then~\eqref{eqBFE} holds.
\end{corollary}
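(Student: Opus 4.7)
The plan is to observe that \Cor~\ref{cor_uniq} is an immediate consequence of chaining the two main theorems, and to spell out this chain carefully since there is essentially nothing else to do.

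\medskip

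First, I would fix $k\geq 3$ and $0<d<\dours(k)$. Applying \Thm~\ref{thm_uniq}, we have the inequality $\dours(k)\leq\duniq(k)$, and hence $0<d<\duniq(k)$. This places $d$ inside the regime covered by \Thm~\ref{thm_main}. Invoking \Thm~\ref{thm_main} then yields simultaneously that the weak limit $\pi_{d,k}=\lim_{\ell\to\infty}\BP_{d,k}^{\ell}(\delta_{1/2})$ in $\cP(0,1)$ exists, that the Bethe free entropy $\fB_{d,k}(\pi_{d,k})$ is well defined, and that $n^{-1}\log Z(\PHI)\to\fB_{d,k}(\pi_{d,k})$ in probability. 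This is exactly the assertion of \eqref{eqBFE}.

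\medskip

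There is no real obstacle here: the two ingredients \Thm~\ref{thm_main} (replica symmetric formula holds up to Gibbs uniqueness) and \Thm~\ref{thm_uniq} (lower bound $\dours(k)$ on Gibbs uniqueness) are designed precisely to be combined in this manner, and the corollary records the resulting unconditional statement in terms of the explicit quantity $\dours(k)$ from \eqref{eqmain}. The only thing worth highlighting in the writeup is that $\dours(k)$ is explicit (and strictly larger than the Montanari--Shah bound $\dMS(k)$ by \eqref{eqmainMS}), so \Cor~\ref{cor_uniq} produces a genuinely non-trivial, fully explicit regime of densities where the replica symmetric prediction of~\cite{MZ} is rigorously confirmed, even for small $k\geq3$.
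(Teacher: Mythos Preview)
Your proposal is correct and matches the paper's approach exactly: the corollary is obtained by chaining \Thm~\ref{thm_uniq} (so $d<\dours(k)\leq\duniq(k)$) with \Thm~\ref{thm_main}. The paper does not even spell out a separate proof, merely stating that the result follows from combining the two theorems.
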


\noindent
\Cor~\ref{cor_uniq} constitutes the first rigorous result to determine the precise asymptotic value of $\log Z(\PHI)$ for a non-trivial regime of $d$ for any $k\geq3$.
To elaborate, the formula~\eqref{eqBFE} is trivially true for $d<1/(k-1)$ because for such $d$ the $k$-uniform hypergraph induced by the clauses of $\PHI$ has no giant component and Belief Propagation is exact on acyclic graphical models~\cite{MM}.
But \Cor~\ref{cor_uniq} applies to $d$ well beyond this threshold, as displayed in Table~\ref{tab_main}.
In particular, in contrast to much of the prior work on random $k$-SAT, \Cor~\ref{cor_uniq} applies to a non-trivial regime of $d$ even for `small' $k\geq3$.

Although Table~\ref{tab_main} contains the values $\dMS(k)$ from~\cite{MS} for comparison, we emphasise that Montanari and Shah's result only yields the number of `good' assignments satisfying all but $o(n)$ clauses, rather than of actual satisfying assignments.
In fact, the best prior rigorous bounds on the number of satisfying assignments for $d>1/(k-1)$ derive from the first and the second moment methods.
Specifically, the folklore first moment bound reads
\begin{align}\label{eq1mmt}
  \frac1n\log Z(\PHI)&\leq\log 2+\frac dk\log(1-2^{-k})+o(1)\quad\mbox\whp
\end{align}
Furthermore, 
Achlioptas and Peres~\cite{yuval} perform a second moment argument on the number of {\em balanced} satisfying assignments, i.e., satisfying assignments that enjoy a peculiar additional condition required to keep the second moment under control.
They show that \whp
\begin{align}\label{eq2mmt}
  \frac1n\log Z(\PHI)&\geq(1-d)\log 2+\frac dk\log\brk{\bc{\lambda^{1/2}+\lambda^{-1/2}}^k-\lambda^{-k/2}}+o(1),&&\mbox{where}&(1-\lambda)(1+\lambda)^{k-1}=1,\,\lambda>0.
\end{align}

Figure~\ref{fig_maurice} illustrates the bounds~\eqref{eq1mmt}--\eqref{eq2mmt} along with \eqref{eqBFE} for $k=3$.
As the figure shows, the correct value \eqref{eqBFE} is quite close to the first moment bound.
That said, the first moment bound strictly exceeds $\fB_{d,k}(\pi_{d,k})$ for all $d>0$, $k\geq3$~\cite{CKM}.
On the other hand, Figure~\ref{fig_maurice} demonstrates that the `balanced second moment bound'~\eqref{eq2mmt} significantly undershoots $\fB_{d,3}(\pi_{d,3})$.
Recall that Figure~\ref{fig_maurice} is on a logarithmic scale; thus, even small differences translate into exponentially large errors.

\subsection{Preliminaries and notation}\label{sec_prelim}
Let $\Phi$ be a Boolean expression in conjunctive normal such that no clause contains the same variable twice.
We write $V(\Phi)$ for the set of Boolean variables of $\Phi$ and $F(\Phi)$ for the set of clauses.
The formula $\Phi$ gives rise to a bipartite graph $G(\Phi)$ on the vertex set $V(\Phi)\cup F(\Phi)$ in which a variable $x$ and a clause $a$ are adjacent iff variable $x$ appears in clause $a$ (either positively or negatively).
Let $E(\Phi)$ denote the edge set of the graph $G(\Phi)$.
Furthermore, for a vertex $v\in V(\Phi)\cup F(\Phi)$ let $\partial_\Phi v$ be the set of neighbours of $v$; where the reference to $\Phi$ is self-evident, we just write $\partial v$.

The graph $G(\Phi)$ induces a metric on $V(\Phi)\cup F(\Phi)$ by letting $\dist_\Phi(v,w)$ equal the length of the shortest path from $v$ to $w$.
For a vertex $v$ and an integer $\ell\geq0$ let $\partial^\ell_\Phi v=\partial^\ell v$ be the set of all vertices $w$ at distance precisely $\ell$ from $v$.

For a clause $a$ and a variable $x\in\partial a$ we define $\sign_\Phi(x,a)=1$ if $a$ contains $x$ as a positive literal, and $\sign_\Phi(x,a)=-1$ if $a$ contains the negation $\neg x$.
(This is unambiguous because clause $a$ is not allowed to contain both $x$ and $\neg x$.)
For a variable $x\in V(\Phi)$ and $s\in\PM$ we let $\partial^s_\Phi x=\partial^sx$ be the set of clauses $a\in\partial_\Phi x$ such that $\sign_\Phi(x,a)=s$.
Where convenient we use the shorthand $\partial^\pm x=\partial^{\pm1}x$.
We say that a variable $x$ is {\em pure} in $\Phi$ if $\sign_\Phi(x,a)=\sign_\Phi(x,b)$ for all $a,b\in\partial x$.
More specifically, say that $x$ is a {\em pure literal} of $\Phi$ if $\partial^-x=\emptyset$.
Similarly, $\neg x$ is called a pure literal if $\partial^+x=\emptyset$.
A variable or literal that fails to be pure is called {\em mixed}.

For a literal $l\in\{x,\neg x:x\in V(\Phi)\}$ we let $|l|$ denote the underlying variable; thus, $|x|=|\neg x|=x$ for $x\in V(\Phi)$.
Moreover, we define $\sign(x)=1$ and $\sign(\neg x)=-1$.
Further, for a literal $l$ we define $1\cdot l=l$ and $(-1)\cdot l=\neg l$.

If $\Phi$ is satisfiable, then $\SIGMA_\Phi=(\SIGMA_\Phi(x))_{x\in V(\Phi)}$ denotes a uniformly random satisfying assignment of $\Phi$.
Where the reference to $\Phi$ is obvious we just write $\SIGMA$.

Let $\mu,\nu$ be two probability measures on $\RR^h$, let $q\geq1$ and assume that $\int_{\RR^h}\|x\|_q^q\dd\mu(x),\int_{\RR^h}\|x\|_q^q\dd\nu(x)<\infty$.
We recall that the {\em $L_q$-Wasserstein distance} of $\mu,\nu$ is defined as
\begin{align*}
  W_q(\mu,\nu)&=\inf_{(\vec\xi,\vec\zeta)}\ex\brk{\|\vec\xi-\vec\zeta\|_q^q}^{1/q},
\end{align*}
where the infimum is taken over all pairs $(\vec\xi,\vec\zeta)$ of random variables defined on the same probability space $\Omega$ such that $\vec\xi$ has distribution $\mu$ and $\vec\zeta$ has distribution $\nu$.
If $\vX,\vY$ are random variables with distributions $\mu,\nu$, it is convenient to use the shorthand $W_q(\vX,\vY)=W_q(\mu,\nu)$, provided that $\ex[\|\vX\|_q^q],\ex[\|\vY\|_q^q]<\infty$.

For two random variables $\vX,\vY$ we write $\vX\disteq\vY$ if $\vX,\vY$ are identically distributed.
Moreover, for a probability distribution $\mu$ and a random variable $\vX$ we write $\vX\disteq\mu$ if $\vX$ has distribution $\mu$.

We will make repeated use of the following tail bound for Poisson variables.

\begin{lemma}[Bennett's inequality~{\cite[\Thm~2.9]{LugCon}}]\label{bennett}
  Suppose that $\vX\disteq\Po(\lambda)$ with $\lambda>0$ and let $\varphi(x)=(1+x)\log(1+x)-x$ for $x>-1$.
  Then
  \begin{align*}
    \pr\brk{\vX\geq\lambda+t}&\leq\exp(-\lambda\varphi(t/\lambda))&&\mbox{ for any }t>0,\\
    \pr\brk{\vX\leq\lambda-t}&\leq\exp(-\lambda\varphi(-t/\lambda))&&\mbox{ for any }0<t<\lambda.
  \end{align*}
\end{lemma}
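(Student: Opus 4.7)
The plan is to apply the standard Chernoff (exponential moment) method, exploiting that a Poisson variable $\vX\disteq\Po(\lambda)$ has the explicit moment generating function $\ex[\exp(s\vX)]=\exp(\lambda(\eul^s-1))$ for every $s\in\RR$. The function $\varphi(x)=(1+x)\log(1+x)-x$ is precisely (a rescaling of) the Legendre transform of $s\mapsto\lambda(\eul^s-1)$ centred at $\lambda$, so Cram\'er's recipe will produce the claimed rate functions once the optimisation over $s$ is carried out in closed form.

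For the upper tail, I fix $t>0$ and, for any $s>0$, apply Markov's inequality to $\eul^{s\vX}$:
\begin{align*}
\pr\brk{\vX\geq\lambda+t}\leq\eul^{-s(\lambda+t)}\ex\brk{\eul^{s\vX}}=\exp\bc{\lambda(\eul^s-1)-s(\lambda+t)}.
\end{align*}
Differentiating the exponent with respect to $s$ and setting the derivative to zero gives the optimal value $s^\star=\log(1+t/\lambda)$, which is strictly positive since $t>0$. Plugging $s^\star$ back in, the exponent becomes $t-(\lambda+t)\log(1+t/\lambda)=-\lambda\bc{(1+t/\lambda)\log(1+t/\lambda)-t/\lambda}=-\lambda\varphi(t/\lambda)$, exactly the claimed bound.

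For the lower tail I run the symmetric argument: for $0<t<\lambda$ and any $s>0$, Markov applied to $\eul^{-s\vX}$ yields
\begin{align*}
\pr\brk{\vX\leq\lambda-t}\leq\eul^{s(\lambda-t)}\ex\brk{\eul^{-s\vX}}=\exp\bc{\lambda(\eul^{-s}-1)+s(\lambda-t)}.
\end{align*}
Optimising produces $s^\star=-\log(1-t/\lambda)$, which is positive precisely because $0<t<\lambda$ (this is the one place where the assumption $t<\lambda$ is used, and it is the only mildly delicate point of the argument). Substituting collapses the exponent to $-\lambda\bc{(1-t/\lambda)\log(1-t/\lambda)+t/\lambda}=-\lambda\varphi(-t/\lambda)$, completing the second inequality. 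The entire proof is a textbook Chernoff computation specialised to Poisson; no step presents a real obstacle, and I would simply cite \cite{LugCon} for the one-line verification if a reference is preferred over writing it out.
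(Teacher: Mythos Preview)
Your proof is correct; it is the standard Chernoff/Cram\'er computation for a Poisson variable, and every step checks out. The paper itself does not prove this lemma but simply cites \cite[\Thm~2.9]{LugCon}, so there is no ``paper's own proof'' to compare against---your argument is essentially what one finds in that reference.
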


For reals $a,b$ we write
\begin{align*}
  a\vee b&=\max\{a,b\},&a\wedge b&=\min\{a,b\}.
\end{align*}

Unless specified otherwise asymptotic notation $o(\nix),\,O(\nix)$, etc.\ is understood to refer to the limit $n\to\infty$.
The symbol $\tilde O(\nix)$ is understood to swallow $\mathrm{polylog}(n)$ terms.
Throughout we tacitly assume that $n$ is sufficiently large so that the various estimates are valid.
We use the conventions $\log 0=-\infty$ and $\log\infty=\infty$.
Finally, throughout the paper we assume that $k\geq3$ is a fixed integer.

\section{Overview}\label{sec_outline}

\noindent
In this section we survey the proofs of the main results.
Subsequently, we discuss further related work.
The proof details are deferred to the remaining sections; see \Sec~\ref{sec_org} for pointers.
We assume throughout that $k\geq3$.

\subsection{Existence of the fixed point and upper bound}\label{sec_de}
As a first step towards the proof of \Thm~\ref{thm_main} we prove that the limit~\eqref{eqpidk} exists for $d<\duniq(k)$.
More precisely, we will establish the following statement.

\begin{proposition}\label{prop_arnab}
  For every $k \ge 3$ and every $d < \duniq(k)$, the $W_1$-limit $\pi_{d,k}=\lim_{\ell\to\infty}\BP_{d,k}^{\ell}(\delta_{1/2})$ exists and
  \begin{align}\label{eq_prop_arnab_bound}
    \ex\brk{\log^2\MU_{\pi_{d,k},1,1}}+\ex\abs{\log\bc{\prod_{i=1}^{\vd^-}\MU_{\pi_{d,k},2i}+\prod_{i=1}^{\vd^+}\MU_{\pi_{d,k},2i-1}}}+\ex\abs{\log\bc{1-\prod_{j=1}^k\MU_{\pi_{d,k},1,j}}}&<\infty.
  \end{align}
  In addition, $\MU_{\pi_{d,k},1,1}$ and $1-\MU_{\pi_{d,k},1,1}$ are identically distributed.
\end{proposition}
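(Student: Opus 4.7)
The plan is to address the three assertions of the proposition in turn: symmetry, existence of the $W_1$ limit, and the integrability bound~\eqref{eq_prop_arnab_bound}. For the symmetry claim I would verify the stronger fact that $\BP_{d,k}(\pi)$ is symmetric about $1/2$ for \emph{every} $\pi\in\cP(0,1)$. Indeed, $\vd^+$ and $\vd^-$ are iid $\Po(d/2)$, and swapping them in~\eqref{eqhat} together with a relabelling of the iid sequence $(\MU_{\pi,i})_{i\geq 1}$ transforms $\hat{\MU}_\pi$ into $1-\hat{\MU}_\pi$ in distribution; thus $\BP_{d,k}(\pi)$ is symmetric regardless of $\pi$. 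Since $\delta_{1/2}$ is symmetric, so is every iterate $\BP_{d,k}^\ell(\delta_{1/2})$, and so is any distributional limit.

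For the existence of the $W_1$-limit I would exploit that $\BP_{d,k}$ is exact on trees: a standard unfolding identifies $\BP_{d,k}^{\ell}(\delta_{1/2})$ with the distribution of the free root marginal $\Pr[\TAU^{(\ell)}(\root)=1\mid\TT]$ on the truncated tree $\TT^{(\ell)}$. The Gibbs uniqueness hypothesis~\eqref{eqTreeUniq} is precisely the assertion that this marginal becomes asymptotically insensitive to arbitrary boundary values on $\partial^{2\ell}\root$. Comparing $\BP_{d,k}^{\ell+1}(\delta_{1/2})$ with $\BP_{d,k}^{\ell}(\delta_{1/2})$---the former is the free root marginal on $\TT^{(\ell+1)}$, equivalently the root marginal on $\TT^{(\ell)}$ under the effective boundary condition obtained by integrating out the outer layer---one concludes via~\eqref{eqTreeUniq} that the iterates are Cauchy in distribution and converge to the root marginal under the unique Gibbs measure on $\TT$. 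Since all these marginals lie in the compact interval $[0,1]$, weak convergence is equivalent to $W_1$ convergence.

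The delicate point is~\eqref{eq_prop_arnab_bound}: the a priori bound $\mu\in[0,1]$ does not control the three logarithmic quantities, which may diverge as $\mu\to\{0,1\}$. I would establish uniform-in-$\ell$ tail bounds on the log-odds $L_\ell:=\log(\mu_\ell/(1-\mu_\ell))$ with $\mu_\ell\sim\BP_{d,k}^\ell(\delta_{1/2})$, and transfer them to $\pi_{d,k}$ via Fatou's lemma together with the weak convergence above. The tail bounds come from the recursion $L_{\ell+1}\disteq \sum_{i=1}^{\vd^-}U^+_i-\sum_{i=1}^{\vd^+}U^-_i$ where $U^\pm_i:=\log(1-\prod_{j=1}^{k-1}\mu^\pm_{i,j})$ are iid; the compound Poisson structure of each sum permits concentration estimates (via truncation and \Lem~\ref{bennett}) that reduce the task to controlling the lower tail of a single $U^\pm_i$. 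Here the assumption $k\geq 3$ plays the crucial role: the event $U^\pm_i<-s$ forces \emph{every one} of the $k-1$ iid factors $\mu_j$ to lie within $\Theta(\eul^{-s})$ of $1$, so the iid structure amplifies the single-variable tail of $\mu$ to its $(k-1)$-th power, and for $k-1\geq 2$ this amplification is exactly enough to render $U^\pm_i$ square-integrable and close the induction. Extracting such quantitative moment bounds from the essentially qualitative Gibbs uniqueness hypothesis is the principal technical hurdle of the proposition.
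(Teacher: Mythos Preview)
Your treatment of the symmetry claim and the $W_1$-convergence matches the paper's Facts~\ref{prop_identical}--\ref{fac_BPexact} and \Lem~\ref{lem_weak_pi} essentially verbatim. The divergence is in the integrability bound~\eqref{eq_prop_arnab_bound}, and there your recursion argument has a genuine gap.

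You correctly note that $\Pr[U_i<-s]\leq\Pr[1-\mu<\eul^{-s}]^{k-1}$, so finiteness of $\ex[L_\ell^2]$ implies finiteness of $\ex[U_\ell^2]$ when $k\geq3$. But ``close the induction'' demands more: you need $\ex[L_{\ell+1}^2]=d\cdot\ex[U_\ell^2]\leq C$ with the \emph{same} $C$ that bounded $\ex[L_\ell^2]$. If you push the amplification through (say, split $\int 2s\,p(s)^{k-1}\dd s$ at a cutoff $s_0$ and use Chebyshev on $p$), the best you extract is $\ex[U_\ell^2]\leq c\cdot\ex[\log^2(1-\mu_\ell)]+O(1)$ for a universal constant $c$, which feeds back to a recursion $C_{\ell+1}\leq c'd\cdot C_\ell+O(1)$. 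This stays bounded only for $d$ below some absolute constant---and crucially, your argument never invokes the Gibbs uniqueness hypothesis, so it cannot know about $\duniq(k)$ and cannot cover the full range $d<\duniq(k)$ claimed in the proposition.

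The paper takes a completely different, combinatorial route for this step (\Lem~\ref{lem_secmu}). It reads $\MU_{\pi_{d,k}^{(\ell)},1,1}$ as the conditional-to-unconditional ratio $Z(\TT^{(\ell)},\{\root\})/Z(\TT^{(\ell)})$ on the tree, then applies the \pulp\ bound (\Lem~\ref{lem_Ichi}) to get $|\log\MU_{\pi_{d,k}^{(\ell)},1,1}|\leq(\log 2)\cdot|\overline{\{\root\}}_{\TT^{(\ell)}}|$, uniformly in $\ell$. The second moment of the closure size is then bounded via the doubly-exponential decay of the pure-literal height (\Lem~\ref{lem:Peeling_on_Trees} and \Cor~\ref{cor_c3}), which holds for all $d<\dpure(k)\geq\duniq(k)$. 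No bootstrap is needed: the bound is uniform by construction, and the range of $d$ is governed by the pure-literal threshold rather than by a contraction constant.
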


The existence of the limit $\pi_{d,k}$ is an easy consequence of the Gibbs uniqueness property.
As an aside, the limit $\pi_{d,k}=\lim_{\ell\to\infty}\BP_{d,k}^{\ell}(\delta_{1/2})$ is a fixed point of the Belief Propagation operator, i.e.,
\begin{align}\label{eqBPfixedPoint}
  \pi_{d,k}&=\BP_{d,k}(\pi_{d,k}).
\end{align}
The proof of the bound~\eqref{eq_prop_arnab_bound} is a bit more subtle and requires a few preparations, but we will come to that.
The upshot of \eqref{eq_prop_arnab_bound} is that the Bethe free entropy $\fB_{d,k}(\pi_{d,k})$ is well defined.

With the fixed point $\pi_{d,k}$ in hand we can bring to bear the `interpolation method' to the upper bound the likely value of $\log Z(\PHI)$.

\begin{corollary}\label{cor_interpolation}
  If $d < \duniq(k)$  then  \whp\ we have $\frac1n\log Z(\PHI)\leq \fB_{d,k}(\pi_{d,k})+o(1).$
\end{corollary}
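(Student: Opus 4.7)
The plan is to pass through the positive-temperature variant of random $k$-SAT, apply the interpolation method there, and then push $\beta\to\infty$ together with a concentration bound in order to bound the actual count of satisfying assignments. For an inverse temperature $\beta>0$, letting $H(\sigma,\PHI)$ denote the number of clauses of $\PHI$ violated by $\sigma\in\PM^n$, introduce the soft-constraint partition function
\begin{align*}
Z_\beta(\PHI)=\sum_{\sigma\in\PM^n}\exp\bc{-\beta H(\sigma,\PHI)}.
\end{align*}
The trivial inequality $Z(\PHI)\leq Z_\beta(\PHI)$ holds for every $\beta>0$, since satisfying assignments contribute weight one while every other assignment contributes positive weight. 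Hence it suffices to control $Z_\beta(\PHI)$ for large $\beta$.

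First I would invoke the Franz-Leone/Panchenko-Talagrand interpolation scheme, adapted to a Poisson clause count. The outcome is that for the soft-constraint Belief Propagation operator $\BP^\beta_{d,k}$ and any distributional fixed point $\pi^\beta$ arising as $\pi^\beta=\lim_{\ell\to\infty}(\BP^\beta_{d,k})^\ell(\delta_{1/2})$ one has
\begin{align*}
\frac1n\ex\brk{\log Z_\beta(\PHI)}\leq\fB^\beta_{d,k}(\pi^\beta)+o(1),
\end{align*}
where $\fB^\beta_{d,k}$ is the positive-temperature analog of \eqref{eqBethe} obtained by replacing the hard satisfaction indicator with its soft version $1-(1-\exp(-\beta))\prod_{j=1}^k\MU_{\pi,1,j}$ and modifying \eqref{eqhat} accordingly. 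This step is essentially the main result of \cite{PanchenkoTalagrand}, transposed to Poisson degrees.

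Second, I would add a concentration argument. Conditional on $\vm=m$, the random variable $\log Z_\beta(\PHI)$ is a function of $m$ i.i.d.\ uniform clauses and changes by at most $\beta$ if any single clause is modified, so Azuma-Hoeffding yields
\begin{align*}
\pr\brk{\abs{\log Z_\beta(\PHI)-\ex[\log Z_\beta(\PHI)\mid\vm]}>n\epsilon\mid\vm=m}\leq 2\exp\bc{-\Omega(n^2\epsilon^2/(m\beta^2))}.
\end{align*}
Combined with Bennett's inequality for $\vm\disteq\Po(dn/k)$, this gives $\frac1n\log Z_\beta(\PHI)\leq\frac1n\ex\log Z_\beta(\PHI)+o(1)$ \whp\ provided $\beta=\beta(n)$ grows slowly (say $\beta=o(n^{1/3})$). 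Together with the interpolation bound and $Z\leq Z_\beta$, this delivers $\frac1n\log Z(\PHI)\leq\fB^\beta_{d,k}(\pi^\beta)+o(1)$ \whp

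The final and hardest step is to let $\beta\to\infty$ and show that $\fB^\beta_{d,k}(\pi^\beta)\to\fB_{d,k}(\pi_{d,k})$. Three ingredients are needed: (i) $\BP^\beta_{d,k}$ converges pointwise to $\BP_{d,k}$ as $\beta\to\infty$; (ii) the iterates $(\BP^\beta_{d,k})^\ell(\delta_{1/2})$ converge, uniformly in $\beta$ near $\infty$, to the unique attractor $\pi_{d,k}$ guaranteed by \Prop~\ref{prop_arnab} via Gibbs uniqueness, whence $\pi^\beta\to\pi_{d,k}$ weakly; (iii) the moment bound \eqref{eq_prop_arnab_bound} supplies the uniform integrability required to pass the $\beta\to\infty$ limit through the expectation defining $\fB^\beta_{d,k}$. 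The principal obstacle is exactly this interchange at (ii)--(iii): one must promote the zero-temperature Gibbs uniqueness on $\TT_{d,k}$ to a uniformity statement for the soft model at large $\beta$, and verify that the logarithmic terms in $\fB^\beta_{d,k}$ remain uniformly integrable as the hard-constraint singularity emerges. Once this interchange is justified, choosing $\beta=\beta(n)\to\infty$ slowly enough to keep the concentration estimate valid and combining all bounds yields $\frac1n\log Z(\PHI)\leq\fB_{d,k}(\pi_{d,k})+o(1)$ \whp
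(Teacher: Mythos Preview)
Your overall architecture is right: soften to $Z_\beta$, apply interpolation, concentrate via Azuma--Hoeffding, then send $\beta\to\infty$. But the step you flag as ``the final and hardest'' is a self-inflicted obstacle. The Panchenko--Talagrand interpolation bound (Theorem~\ref{thm_interp} in the paper) holds for \emph{any} probability measure $\pi$ on $[0,1]$, not just for the fixed point $\pi^\beta$ of the soft operator $\BP^\beta_{d,k}$. So you need not construct $\pi^\beta$ at all, nor prove that $\pi^\beta\to\pi_{d,k}$, nor promote zero-temperature Gibbs uniqueness to a uniform statement at large $\beta$.

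Instead, simply plug the zero-temperature measure $\pi=\pi_{d,k}$ (whose existence is already supplied by \Prop~\ref{prop_arnab}) into the $\beta$-dependent interpolation bound. The right-hand side then depends on $\beta$ only through the explicit factors $1-\exp(-\beta)$, and monotone convergence together with the integrability bound~\eqref{eq_prop_arnab_bound} gives
\[
\lim_{\beta\to\infty}\ex\brk{\log\bc{\prod_{i=1}^{\vd^-}\MU_{\beta,\pi_{d,k},2i}+\prod_{i=1}^{\vd^+}\MU_{\beta,\pi_{d,k},2i-1}}-\frac{d(k-1)}{k}\log\bc{1-(1-e^{-\beta})\prod_{j=1}^k\MU_{\pi_{d,k},1,j}}}=\fB_{d,k}(\pi_{d,k}).
\]
This is exactly what the paper does (see~\eqref{eq_dom}), and it eliminates your steps (i)--(ii) entirely; step (iii) reduces to monotone convergence with a fixed measure. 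The concentration argument can then be run at \emph{fixed} $\beta$, after which one takes $\beta\to\infty$ at the end, so there is no need to let $\beta=\beta(n)$ diverge with $n$.
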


\noindent
The interpolation method is a mainstay of the study of disordered systems in mathematical physics and has also been used to investigate random constraint satisfaction problems.
In particular, the variant of the interpolation method from~\cite{PanchenkoTalagrand} (in combination with \Prop~\ref{prop_arnab}) easily implies that
\begin{align*}
  \limsup_{n\to\infty}\frac1n\ex\brk{\log(Z(\PHI)\vee1)}&\leq\fB_{d,k}(\pi_{d,k})\enspace;
\end{align*}
taking the logarithm of $Z(\PHI)\vee1$ ensures that the expectation is well defined, as it is possible (albeit unlikely for $d<\duniq(k)$) that $\PHI$ is unsatisfiable.
The added value of \Cor~\ref{cor_interpolation} is that we obtain a bound that holds {\em with high probability}, rather than just a bound on the expectation.
The interpolation method was used in~\cite{2sat} in a similar fashion to prove a `with high probability' bound on the number of satisfying assignments of random 2-CNFs.
The proof of \Cor~\ref{cor_interpolation} is an adaptation of that argument to $k\geq3$.

\subsection{A matching lower bound}\label{sec_lower}
The key step towards \Thm~\ref{thm_main} is to establish a lower bound on $\log Z(\PHI)$ that matches the upper bound from \Cor~\ref{cor_interpolation}.
To accomplish this task we employ a coupling argument known as the `Aizenman-Sims-Starr scheme' in mathematical physics.
Its original version was intended to estimate the partition function of the Sherrington-Kirkpatrick model, a spin glass model~\cite{Aizenman}.
But the technique has since been employed in probabilistic combinatorics (e.g.,~\cite{CKM,CKPZ,COP}).
By comparison to the mathematical physics context, the crucial difference is that here our objective is to count actual satisfying assignments where every single clause imposes a hard constraint, whereas in spin glass theory constraints are soft.
The same issue occurred in previous work on the random 2-SAT problem~\cite{2sat}.
However, in that case a relatively simple percolation argument was sufficient to deal with the ensuing complications.
As we will see, for $k\geq3$ considerably more care is needed.

But first things first.
The basic idea behind the Aizenman-Sims-Starr argument is to perform a kind of induction.
Translated to random $k$-SAT this means that we couple the random $k$-CNF $\PHI_{d,k}(n)$ with $n$ variables with the random $k$-CNF $\PHI_{d,k}(n+1)$ with $n+1$ variables.
Recall that $\PHI_{d,k}(n)$ comprises $\vm_n\disteq\Po(dn/k)$ independent random clauses.
Ultimately \Thm~\ref{thm_main} is going to be a consequence of \Cor~\ref{cor_interpolation} and the following statement.

\begin{proposition}\label{prop_aizenman}
  If $d<\duniq(k)$ then $\ex\brk{\log (Z(\PHI_{d,k}(n+1))\vee 1)}-\ex\brk{\log (Z(\PHI_{d,k}(n))\vee 1)}{=}\fB_{d,k}(\pi_{d,k})+o(1).$
\end{proposition}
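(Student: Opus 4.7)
The plan is to apply the Aizenman--Sims--Starr coupling, expressing the finite difference in $\ex[\log(Z\vee 1)]$ as the expected change in $\log Z$ under two canonical local perturbations of a common base. First, by Poisson thinning I would introduce a base formula $\PHI^0$ on variables $x_1,\ldots,x_n$ consisting of $\Po(d(n-k+1)/k)$ uniform random $k$-clauses, and observe the distributional identities $\PHI_{d,k}(n)\disteq\PHI^0\cup\vA$ and $\PHI_{d,k}(n+1)\disteq\PHI^0\cup\vB$, where $\vA$ is an independent collection of $\Po(d(k-1)/k)$ uniform $k$-clauses on $[n]$ and $\vB$ is an independent collection of $\vd^-+\vd^+\disteq\Po(d)$ clauses each containing the fresh variable $x_{n+1}$ (with $\vd^\pm\disteq\Po(d/2)$ copies of either sign) together with $k-1$ uniform other variables in $[n]$ with random signs. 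Writing $\Delta_{\vX}:=\ex[\log(Z(\PHI^0\cup\vX)\vee 1)]-\ex[\log(Z(\PHI^0)\vee 1)]$, the target difference equals $\Delta_{\vB}-\Delta_{\vA}$, reducing the proposition to evaluating two local perturbations.

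Second, I would invoke the local weak convergence of $\PHI^0$ to $\TT_{d,k}$ together with the hypothesis $d<\duniq(k)$ to approximate the relevant Gibbs marginals. Since $|\vA|,|\vB|=O(1)$ with high probability, the $O(1)$ variables of $[n]$ appearing in $\vA\cup\vB$ are pairwise distinct w.h.p.\ and their depth-$2\ell$ neighbourhoods in $\PHI^0$ are pairwise vertex-disjoint, each asymptotically distributed as an independent copy of $\TT_{d,k}^{(\ell)}$. The Gibbs uniqueness property \eqref{eqTreeUniq} then implies that the joint law of the cavity marginals of $\SIGMA_{\PHI^0}$ at these boundary variables converges in $W_1$ (after a mild truncation) to independent samples of the fixed point $\pi_{d,k}$ of \eqref{eqBPfixedPoint}, and the integrability bound \eqref{eq_prop_arnab_bound} of \Prop~\ref{prop_arnab} ensures that the resulting Bethe integrals are finite and that distributional convergence lifts to convergence of expectations.

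Third, I would evaluate the two changes explicitly. Conditioning on $(\PHI^0,\vB)$ and summing the partition function over the two values of $x_{n+1}\in\PM$ gives
\begin{align*}
\frac{Z(\PHI^0\cup\vB)}{Z(\PHI^0)}=\prod_{i=1}^{\vd^-}\bc{1-\prod_{j=1}^{k-1}\MU_{i,j}^{-}}+\prod_{i=1}^{\vd^+}\bc{1-\prod_{j=1}^{k-1}\MU_{i,j}^{+}},
\end{align*}
where $\MU_{i,j}^{\pm}$ is the $\SIGMA_{\PHI^0}$-cavity probability that the $j$-th other variable of the $i$-th new clause takes the value contradicting the clause. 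Passing to the limit via the joint convergence of the $\MU^{\pm}_{i,j}$ to i.i.d.\ $\pi_{d,k}$-samples identifies $\lim_n\Delta_{\vB}$ with the first summand of $\fB_{d,k}(\pi_{d,k})$. An analogous computation for $\vA$---each of whose $\Po(d(k-1)/k)$ clauses is independently satisfied by $\SIGMA_{\PHI^0}$ with conditional probability $1-\prod_{j=1}^k\MU_{j}$---yields $\lim_n\Delta_{\vA}=\frac{d(k-1)}{k}\ex[\log(1-\prod_{j=1}^k\MU_{\pi_{d,k},1,j})]$, and subtracting gives exactly $\fB_{d,k}(\pi_{d,k})$.

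The main obstacle is the hard-constraint pathology that is absent from the classical spin-glass setting: a single unlucky clause in $\vA$ or $\vB$ can drop $Z$ to zero, so that $\log(Z\vee 1)$ vanishes while the Bethe approximation stays finite. To control this I would combine three ingredients: (i) the interpolation upper bound \Cor~\ref{cor_interpolation}, which already pins $\limsup_n n^{-1}\ex[\log(Z(\PHI)\vee 1)]\leq\fB_{d,k}(\pi_{d,k})$ and thus reduces the task to a matching lower bound on $\Delta_{\vB}-\Delta_{\vA}$; (ii) a tail estimate on the cavity marginals ensuring that the $\MU^{\pm}_{i,j}$ lie in $[\eta,1-\eta]$ outside an event of probability $o(1)$, exploiting the symmetry $\MU_{\pi_{d,k},1,1}\disteq 1-\MU_{\pi_{d,k},1,1}$ from \Prop~\ref{prop_arnab}; and (iii) Bennett's inequality (\Lem~\ref{bennett}) to truncate $|\vA|,|\vB|,\vd^{\pm}$ at polylogarithmic thresholds, enabling a dominated-convergence argument and a clean exchange of limits.
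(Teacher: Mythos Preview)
Your coupling and the identification of the two Bethe contributions are exactly right and match the paper's \textbf{CPL1--CPL3} and \Prop s~\ref{lem_PHI''}--\ref{lem_PHI'''}. The gap is in your treatment of the ``hard-constraint pathology''. Your ingredients (ii) and (iii) only deliver $\pr[\fG^c]=o(1)$ for the bad event (marginals outside $[\eta,1-\eta]$, large degrees, approximate factorisation failing). But on $\fG^c$ the quantity $\bigl|\log\bigl((Z(\PHI^0\cup\vB)\vee1)/(Z(\PHI^0)\vee1)\bigr)\bigr|$ is bounded only by $n\log 2$, so its contribution is $O(n)\cdot\pr[\fG^c]$, which need not be $o(1)$. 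In particular, for any fixed $\eta>0$ the event $\{\pr[\SIGMA_{\PHI^0}(x)=1\mid\PHI^0]<\eta\}$ has probability converging to $\pi_{d,k}([0,\eta])>0$, not $o(1/n)$, so no ``dominated convergence'' is available. Your ingredient (i) does not rescue this: the interpolation bound from \Cor~\ref{cor_interpolation} controls $\limsup_n n^{-1}\sum_N(\Delta_{\vB}(N)-\Delta_{\vA}(N))$, i.e.\ the Ces\`aro average, but this combined with $\liminf_n(\Delta_{\vB}-\Delta_{\vA})\geq\fB_{d,k}(\pi_{d,k})$ does not force the increments themselves to converge; and in any case you still need the lower bound on $\Delta_{\vB}$, which is precisely where the bad-event term bites.

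What the paper supplies, and what your sketch is missing, is a \emph{uniform integrability} bound: \Prop~\ref{prop_bad} shows $\ex\bigl[|\log((Z(\PHI''')\vee1)/(Z(\PHI')\vee1))|^{3/2}\bigr]=O(1)$ (and likewise for $\PHI''$), so that H\"older gives $\ex\bigl[\vecone_{\fG^c}\cdot|\log(\cdot)|\bigr]\leq O(1)\cdot\pr[\fG^c]^{1/3}=o(1)$. This moment bound is not obtainable from marginal tail estimates or degree truncation; it requires the \pulp\ analysis (\Lem s~\ref{lem_Ichi}--\ref{lem_EIchi}), which constructs a small set of variable flips repairing any satisfying assignment of $\PHI'$ into one of $\PHI'''$, and bounds the $3/2$-moment of its size via a Galton--Watson coupling and the doubly-exponential tail of pure-literal heights (\Lem~\ref{lem:Peeling_on_Trees}). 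That combinatorial step is the heart of the proof and cannot be replaced by the soft arguments you propose.
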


\noindent
Once again we work with $Z(\PHI_{d,k}(n))\vee 1$ and $Z(\PHI_{d,k}(n+1))\vee 1$ to ensure that the expectations are well defined.

To prove \Prop~\ref{prop_aizenman} we couple the random formulas $\PHI_{d,k}(n+1)$ and $\PHI_{d,k}(n)$ as follows.
\begin{description}
  \item[CPL1] Let $\PHI'$ be a random $k$-CNF with variables $x_1,\ldots,x_n$ and $\vm'\disteq\Po(d(n-k+1)/k)$ clauses.
  \item[CPL2] Obtain $\PHI''$ from $\PHI'$ by adding another $\vec\Delta''\disteq\Po(d(k-1)/k)$ independent random clauses.
  \item[CPL3] Obtain $\PHI'''$ from $\PHI'$ by adding one new variable $x_{n+1}$ and $\vec\Delta'''\disteq\Po(d)$ independent random clauses that each contain $x_{n+1}$ and $k-1$ other variables from $\{x_1,\ldots,x_n\}$.
\end{description}
Observe that $\PHI''$ ultimately has variables $x_1,\ldots,x_n$ and a total of $\vm_n\disteq\Po(dn/k)$ random clauses.
Thus, $\PHI''$ is identical to the random formula $\PHI_{d,k}(n)$.
Similarly, $\PHI'''$ has the same distribution as $\PHI_{d,k}(n+1)$.
Consequently, we obtain the following.

\begin{fact}\label{fact_coupling}
  For any $d>0$ we have $Z(\PHI_{d,k}(n))\disteq Z(\PHI'')$ and $Z(\PHI_{d,k}(n+1))\disteq Z(\PHI''')$.
\end{fact}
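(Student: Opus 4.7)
The plan is a routine verification that the three-step coupling CPL1--CPL3 produces the correct marginals, relying only on two elementary properties of Poisson random variables: \emph{additivity} (a sum of independent Poissons is Poisson) and \emph{thinning} (assigning each point of a Poisson process an independent label splits it into independent Poisson processes). Once the random formulas $\PHI''$ and $\PHI_{d,k}(n)$ are shown to agree in distribution, and similarly $\PHI'''$ and $\PHI_{d,k}(n+1)$, the claimed equalities for $Z(\nix)$ follow at once because the number of satisfying assignments is a deterministic functional of the formula.

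For the first identity, $\PHI''$ has variable set $\{x_1,\ldots,x_n\}$ and its clauses form the disjoint union of the $\vm'\disteq\Po(d(n-k+1)/k)$ i.i.d.\ uniform $k$-clauses from CPL1 with the $\vec\Delta''\disteq\Po(d(k-1)/k)$ additional i.i.d.\ uniform $k$-clauses from CPL2, all drawn independently from the same uniform distribution on $k$-clauses over $\{x_1,\ldots,x_n\}$. Additivity gives $\vm'+\vec\Delta''\disteq\Po(dn/k)\disteq\vm_n$, and conditional on the total count the individual clauses are i.i.d.\ uniform on the admissible $k$-clauses. This matches the definition of $\PHI_{d,k}(n)$ verbatim, so $\PHI''\disteq\PHI_{d,k}(n)$ and in particular $Z(\PHI'')\disteq Z(\PHI_{d,k}(n))$.

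For the second identity, I sample $\PHI_{d,k}(n+1)$ and partition its $\vm_{n+1}\disteq\Po(d(n+1)/k)$ clauses into those that contain $x_{n+1}$ and those that do not. A uniform $k$-clause on $\{x_1,\ldots,x_{n+1}\}$ contains $x_{n+1}$ with probability $\binom{n}{k-1}/\binom{n+1}{k}=k/(n+1)$, so Poisson thinning yields two independent collections with sizes $\Po(d(n+1)/k \cdot k/(n+1))=\Po(d)$ and $\Po(d(n+1)/k \cdot (n+1-k)/(n+1))=\Po(d(n-k+1)/k)$, matching $\vec\Delta'''$ and $\vm'$ respectively. Conditioning on membership in each class shows that the $x_{n+1}$-avoiding clauses are uniform $k$-clauses on $\{x_1,\ldots,x_n\}$ (matching $\PHI'$ from CPL1), while the $x_{n+1}$-containing clauses pick $k-1$ further variables uniformly from $\{x_1,\ldots,x_n\}$ with independent uniform signs on all $k$ literals (matching the new clauses from CPL3). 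The only bookkeeping point is that CPL3 leaves $\PHI'$ untouched on $\{x_1,\ldots,x_n\}$ and only introduces $x_{n+1}$ through the $\vec\Delta'''$ new clauses, which reflects exactly the independence produced by thinning. No obstacle arises: the statement is a sanity check on the coupling and the entire proof is Poisson additivity plus thinning.
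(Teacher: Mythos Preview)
Your proof is correct and follows exactly the reasoning the paper has in mind; the paper itself does not give a separate proof but simply observes before the fact that $\PHI''$ has $\Po(dn/k)$ i.i.d.\ uniform clauses on $x_1,\ldots,x_n$ and that $\PHI'''\disteq\PHI_{d,k}(n+1)$ ``similarly''. Your explicit use of Poisson additivity for $\PHI''$ and Poisson thinning for $\PHI'''$ is precisely the justification underlying that one-line observation.
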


The coupling {\bf CPL1--CPL3} reduces the proof of \Prop~\ref{prop_aizenman} to getting a handle on the differences $\log(Z(\PHI'')\vee 1)-\log(Z(\PHI')\vee 1)$ and $\log(Z(\PHI''')\vee 1)-\log(Z(\PHI')\vee 1)$.
More precisely, recalling~\eqref{eqhat}--\eqref{eqBethe}, we see that \Prop~\ref{prop_aizenman} is a consequence of the following two statements.

\begin{proposition}\label{lem_PHI''}
  If $d<\duniq(k)$ then
  \begin{align}\label{eqlem_PHI''}
    \ex\brk{\log\frac{Z(\PHI'')\vee 1}{Z(\PHI')\vee 1}}&=\frac{d(k-1)}{k}\ex\brk{\log\bc{1-\prod_{j=1}^k\MU_{\pi_{d,k},1,j}}}+o(1).
  \end{align}
\end{proposition}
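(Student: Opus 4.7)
The plan is to combine a clause-by-clause decomposition of $\log(Z(\PHI'')/Z(\PHI'))$ with an approximate factorization of Gibbs marginals enabled by $d<\duniq(k)$. First I would exploit the identity
$$Z(\PHI'')=Z(\PHI')\cdot\pr_{\SIGMA_{\PHI'}}[\SIGMA\text{ satisfies }a_1,\ldots,a_{\vec\Delta''}],$$
valid whenever $\PHI'$ is satisfiable. Combining the upper bound from \Cor~\ref{cor_interpolation} with the finiteness of $\fB_{d,k}(\pi_{d,k})$ from \Prop~\ref{prop_arnab} yields $Z(\PHI')\geq 1$ \whp, so the $\vee 1$ caps in~\eqref{eqlem_PHI''} only contribute a vanishing correction after a truncation on an event of probability $o(1)$. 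Since $\vec\Delta''\disteq\Po(d(k-1)/k)$ and, given $\PHI'$, the clauses $a_i$ are conditionally i.i.d.\ uniform $k$-clauses, the task reduces to computing the expectation of $\log\pr_{\SIGMA_{\PHI'}}[\SIGMA\text{ satisfies }a]$ for a single random clause $a$, and then multiplying by $\ex[\vec\Delta'']=d(k-1)/k$ via Wald-type linearity.

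The crux is the per-clause approximation. For a uniform clause $a$ on variables $X_1,\ldots,X_k$ with signs $s_1,\ldots,s_k$, we need
$$\pr_{\SIGMA_{\PHI'}}[\SIGMA(X_j)=-s_j\text{ for all }j]=\prod_{j=1}^k\mu_j+o(1),\qquad\mu_j=\pr_{\SIGMA_{\PHI'}}[\SIGMA(X_j)=-s_j].$$
Since $X_1,\ldots,X_k$ are uniformly random in $[n]$, typical $k$-tuples lie pairwise at distance $\Omega(\log n)$ in the factor graph $G(\PHI')$, and Gibbs uniqueness on the limit tree $\TT_{d,k}$ propagates via a standard pinning/coupling scheme into exponentially decaying pair correlations in $\PHI'$---essentially the spatial-mixing argument of~\cite{MS}. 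Concurrently, local weak convergence of the rooted formula $(\PHI',X_j)$ to $(\TT_{d,k},\root)$ combined with Gibbs uniqueness identifies the $W_1$-limit of each $\mu_j$ with $\MU_{\pi_{d,k},1,j}\disteq\pi_{d,k}$ from \Prop~\ref{prop_arnab}. Together these two ingredients yield
$$\ex\brk{\log\bc{1-\pr_{\SIGMA_{\PHI'}}[\SIGMA\text{ falsifies }a]}}\to\ex\brk{\log\bc{1-\prod_{j=1}^k\MU_{\pi_{d,k},1,j}}},$$
which, multiplied by $d(k-1)/k$, is precisely the right-hand side of~\eqref{eqlem_PHI''}.

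The main obstacle is the integrability needed to upgrade weak convergence of the $\mu_j$'s to $L^1$-convergence of $\log(1-\prod_j\mu_j)$: when some $\mu_j$ is close to $1$ (a nearly forced variable), the logarithm blows up and mere convergence in distribution is insufficient. The bound~\eqref{eq_prop_arnab_bound} in \Prop~\ref{prop_arnab} supplies the target $L^1$ integrability for the limit, but transferring it to finite $n$ requires showing that marginals fall close to $\{0,1\}$ only on a negligible fraction of variables---a quantitative uniform-integrability step controlled by first-moment bounds on the count of forced or near-forced variables in $\PHI'$. I expect this truncation/tail-control to be the technical heart of the proof.
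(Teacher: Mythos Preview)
Your overall plan matches the paper's: on the good event rewrite $\log(Z(\PHI'')/Z(\PHI'))$ as $\log\pr_{\SIGMA_{\PHI'}}[\SIGMA\models a_1,\ldots,a_{\vDelta''}]$, use \Prop~\ref{prop_benign} to factor the joint Gibbs probability over the $k\vDelta''$ random variables and to identify each marginal with an independent sample from $\pi_{d,k}$, then sum over clauses. The paper implements this via a truncation $\lambda_\eps(z)=\log(z\vee\eps)$: \Lem~\ref{lem_PHI''_3} shows $\ex[\log\text{ratio}]$ is within $\delta$ of $\frac{d(k-1)}{k}\ex[\lambda_\eps(1-\prod_j\MU_{\pi_{d,k},1,j})]$ for small $\eps$, and then $\eps\to0$ by monotone convergence against~\eqref{eq_prop_arnab_bound}.

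The gap in your proposal is the tail control, and you underestimate it. You dismiss the $\vee1$ caps because $Z(\PHI')\geq1$ \whp, but that is not the issue: even on $\{Z(\PHI')\geq1\}$ the ratio $Z(\PHI'')/Z(\PHI')$ can be zero or doubly-exponentially small, and on a set of probability $o(1)$ the quantity $|\log(Z(\PHI'')\vee1)/(Z(\PHI')\vee1)|$ can be of order $n$, so an $o(1)$ probability bound alone says nothing about the expectation. The paper's key input here is \Prop~\ref{prop_bad}, the uniform $3/2$-moment bound $\ex[|\log\text{ratio}|^{3/2}]=O(1)$, whose proof is the entire \pulp\ analysis of \Sec~\ref{sec_pulp_pr}. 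Combined with H\"older, this turns the $o(1)$ probability of the bad event (factorization fails, variables collide, $\vDelta''$ large, or some marginal is within $\eta$ of $\{0,1\}$) into an $o(1)$ contribution to the expectation. Your proposed alternative---first-moment bounds on the count of near-forced variables---would at best control the tails of individual marginals $\mu_j$; it does not bound $|\log(Z(\PHI'')/Z(\PHI'))|$ on the event where factorization breaks down, and the paper does not attempt that route. Likewise the finite-$n$ second-moment bound on the truncated per-clause term (\Lem~\ref{lem_PHI''_2}) is not obtained by counting forced variables but by pushing the $L^2$ bound $\ex[\log^2\MU_{\pi_{d,k},1,1}]<\infty$ from \Prop~\ref{prop_arnab} (itself proved via \pulp\ on the tree) through \Prop~\ref{prop_benign}.
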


\begin{proposition}\label{lem_PHI'''}
  If $d<\duniq(k)$ then
  \begin{align}\label{eqlem_PHI'''}
    \ex\brk{\log\frac{Z(\PHI''')\vee 1}{Z(\PHI')\vee 1}}&=\ex\brk{\log\bc{\prod_{i=1}^{\vd^-}\MU_{\pi_{d,k},2i}+\prod_{i=1}^{\vd^+}\MU_{\pi_{d,k},2i-1}}}+o(1).
  \end{align}
\end{proposition}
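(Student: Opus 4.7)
My plan is to express $Z(\PHI''')/Z(\PHI')$ explicitly in terms of marginals of a uniformly random satisfying assignment $\SIGMA'$ of $\PHI'$, and then invoke Gibbs uniqueness on $\TT_{d,k}$ to evaluate these marginals. Since $d<\duniq(k)\leq\dpure(k)<\dsat(k)$, the formula $\PHI'$ is satisfiable \whp, so the $\vee 1$ contributes only an $o(1)$ error once uniform integrability is in place (addressed in paragraph 3). Conditional on satisfiability of $\PHI'$, by step~\textbf{CPL3} each added clause contains $x_{n+1}$ with a uniformly random sign and $k-1$ uniformly random other literals; an extension $(\sigma,t)\in\PM^{n+1}$ satisfies $\PHI'''$ iff $\sigma\in S(\PHI')$ and every added clause in which $x_{n+1}$ appears with sign $-t$ has at least one other literal true under $\sigma$. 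Calling such a clause \emph{forcing} if all its $k-1$ non-$x_{n+1}$ literals are false under $\sigma$, and writing $A^+$ (resp.\ $A^-$) for the indicator that no positive-sign (resp.\ negative-sign) new clause is forcing, I obtain
\begin{align*}
\frac{Z(\PHI''')}{Z(\PHI')} &= \ex\brk{A^+ + A^-\mid\PHI'}.
\end{align*}
Poisson thinning by the sign of $x_{n+1}$ yields the independent counts $\vd^+,\vd^-\disteq\Po(d/2)$ that appear in the Bethe formula.

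Next I evaluate $\ex[A^\pm\mid\PHI']$ using the tree. For each non-$x_{n+1}$ literal $\ell$ attached to a new clause, let $\mu_\ell=\pr[\SIGMA'(|\ell|)=-\sign(\ell)\mid\PHI']$, i.e.\ the probability that $\ell$ is false under $\SIGMA'$. By local weak convergence (\Cor~\ref{cor_lcwk}), the depth-$2\ell$ neighbourhood of a uniformly random variable of $\PHI'$ converges in distribution to $\TT^{(\ell)}$. Since each new clause picks its $k-1$ other variables uniformly from $\{x_1,\ldots,x_n\}$, \whp\ the $(k-1)(\vd^++\vd^-)$ chosen variables lie in pairwise disjoint depth-$2\ell$ neighbourhoods, each locally isomorphic to an independent copy of $\TT^{(\ell)}$. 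Gibbs uniqueness~\eqref{eqTreeUniq} then implies that on this typical event each $\mu_\ell$ is asymptotically distributed as an independent $\MU_{\pi_{d,k},i,j}$, the boundary conditions imposed by the rest of $\PHI'$ having no asymptotic effect. Combined with the approximate independence of the $(k-1)(\vd^++\vd^-)$ selected variables under $\SIGMA'$, this yields
\begin{align*}
\ex[A^+\mid\PHI']&\approx\prod_{i=1}^{\vd^+}\MU_{\pi_{d,k},2i-1},&\ex[A^-\mid\PHI']&\approx\prod_{i=1}^{\vd^-}\MU_{\pi_{d,k},2i}
\end{align*}
jointly in distribution, whence \eqref{eqlem_PHI'''} follows in distribution after summing and taking logarithms.

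Finally I upgrade distributional convergence of $\log\ex[A^++A^-\mid\PHI']$ to convergence in $L^1$. The upper tail is trivial since $\ex[A^++A^-\mid\PHI']\leq 2$. The lower tail is more delicate, because the conditional expectation could in principle be exponentially small on atypical $\PHI'$. I would dominate $|\log\ex[A^++A^-\mid\PHI']|$ by a sum of $O(\vd^++\vd^-)$ single-clause log-non-forcing contributions, each integrable in the limit by \eqref{eq_prop_arnab_bound} in \Prop~\ref{prop_arnab}, and combine this with a Bennett-type estimate on $\vd^\pm$ (\Lem~\ref{bennett}) to obtain uniform integrability in $n$. The hardest part of the argument will be the joint marginal factorisation in paragraph 2: Gibbs uniqueness on $\TT$ is a one-variable statement, yet I need factorisation across $(k-1)(\vd^++\vd^-)$ variables simultaneously, together with a quantitative handle on the rare `bad' local structures of $\PHI'$ (short cycles in the union of the new clauses' neighbourhoods, collisions among the selected variables, and atypical boundary conditions) on which the cavity marginals of $\TT$ fail to track the true marginals on $\PHI'$. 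Controlling these rare events while preserving the $o(1)$ error in the expected log ratio is the crux of the proof.
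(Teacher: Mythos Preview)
Your first two paragraphs are essentially correct and mirror the paper's argument: the identity $Z(\PHI''')/Z(\PHI')=\ex[A^++A^-\mid\PHI']$ is exactly what appears in~\eqref{eqlem_PHI'''_3_4}, and the convergence and approximate factorisation of the marginals are supplied by \Prop~\ref{prop_benign}, which is proved precisely along the lines you sketch (local weak convergence plus Gibbs uniqueness).

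The gap is in paragraph~3. ``Integrable in the limit by~\eqref{eq_prop_arnab_bound}'' does not yield uniform integrability of the finite-$n$ random variables. The single-clause contributions you propose involve the \emph{actual} marginals $\mu_\ell=\pr[\SIGMA'(|\ell|)=-\sign(\ell)\mid\PHI']$ on $\PHI'$; these can be arbitrarily close to~$1$ on atypical $\PHI'$, and weak convergence of their law to $\pi_{d,k}$ gives no control on how fast $\pr[\mu_\ell>1-\delta]$ decays with~$n$. Worse, your domination of $|\log\ex[A^++A^-\mid\PHI']|$ by a sum of single-clause terms already presupposes approximate factorisation, which only holds on the good event; off it you have no marginal-based bound whatsoever. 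A Bennett estimate on $\vd^\pm$ cannot compensate for this.

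The paper closes this gap not through marginals but combinatorially, via \Prop~\ref{prop_bad}: the \pulp\ algorithm (\Lem s~\ref{lem_Ichi}--\ref{lem_EIchi}) produces a small literal set $\bar\cL$ whose truth forces all new clauses and all old clauses it touches, so $Z(\PHI''')\geq 2^{-|\bar\cL|}Z(\PHI')$, and the $3/2$-moment of $|\bar\cL|$ is bounded uniformly in~$n$ by analysing \pulp\ on the Galton--Watson tree. This gives $\ex\bigl[|\log((Z(\PHI''')\vee1)/(Z(\PHI')\vee1))|^{3/2}\bigr]=O(1)$ directly. H\"older's inequality then shows the bad event contributes $o(1)$; on the good event one works with the truncation $\lambda_\eps(z)=\log(z\vee\eps)$, compares to the limit via \Prop~\ref{prop_benign}, and lets $\eps\to0$ by monotone convergence (\Lem s~\ref{lem_PHI'''_2}--\ref{lem_PHI'''_3}). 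This finite-$n$ combinatorial tail bound is the piece your outline is missing.
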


To prove \Prop s~\ref{lem_PHI''}--\ref{lem_PHI'''} we effectively need to trace the impact that \emph{local} changes have on the number of satisfying assignments.
Indeed, under the coupling {\bf CPL1--CPL3}, the formula $\PHI''$ is obtained from the `base formula' $\PHI'$ by adding just a bounded expected number of random clauses.
Thus, if we imagine that, as both the first moment upper bound~\eqref{eq1mmt} and the balanced second moment lower bound~\eqref{eq2mmt} suggest, each additional random clause typically reduces the number of satisfying assignments by a constant factor, then the quantity $|\log(Z(\PHI'')/Z(\PHI'))|$ should be bounded with probability close to one.
Similar reasoning applies to $\PHI'''$.

Yet while with high probability the local changes that turn $\PHI'$ into $\PHI''$ or $\PHI'''$ are indeed benign, because we are dealing with hard constraints there is a non-negligible probability that $\log(Z(\PHI'')/Z(\PHI'))$ and $\log(Z(\PHI''')/Z(\PHI'))$ could be large.
Indeed, a single extra clause might wipe out all satisfying assignments of $\PHI'$, in which case
\begin{align*}
  \log\frac{Z(\PHI'')\vee 1}{Z(\PHI') \vee 1}=-\log Z(\PHI')=-\Omega(n).
\end{align*}
Hence, we need to argue that such drastic changes are sufficiently rare.
The following statement furnishes the necessary tail bound.

\begin{proposition}\label{prop_bad}
  For $d<\duniq(k)$ we have
  \begin{align}\label{eq_prop_bad}
    \ex\brk{\left|\log\frac{Z(\PHI'')\vee 1}{Z(\PHI')\vee 1}\right|^{3/2}+\left|\log\frac{Z(\PHI''')\vee 1}{Z(\PHI')\vee1}\right|^{3/2}}=O(1).
  \end{align}
\end{proposition}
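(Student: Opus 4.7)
The essential difficulty is that the coupling {\bf CPL1--CPL3} grafts only $O(1)$ expected extra clauses onto $\PHI'$, yet a single unlucky choice could in principle annihilate all satisfying assignments of $\PHI'$ and drive $\log((Z(\PHI'')\vee 1)/(Z(\PHI')\vee 1))$ down to $-n\log 2$. The plan is to confine this pathology to an event $\bar\cG$ of probability $O(n^{-C})$ with $C>3/2$, so that the trivial estimate $|\log(\cdot)|\le n\log 2$ contributes $o(1)$ there, while on the good event $\cG$ we exhibit a clean $O(1)$ bound.

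Define $\cG$ as the intersection of: (i) $\vec\Delta''+\vec\Delta'''\le\log n$, whose complement has probability $n^{-\omega(1)}$ by \Lem~\ref{bennett}; (ii) the pure literal algorithm successfully terminates on each of $\PHI',\PHI'',\PHI'''$, certifying $Z(\PHI'),Z(\PHI''),Z(\PHI''')\ge 1$, which fails with probability $O(n^{-C})$ since $d<\duniq(k)\le\dpure(k)$ puts us in the subcritical regime of the pure-literal branching process (cf.~\cite{Molloy,MS}); and (iii) the Paley--Zygmund lower bounds
\[
Z(\PHI'')\ge\tfrac{1}{2}(1-2^{-k})^{\vec\Delta''}Z(\PHI'),\qquad Z(\PHI''')\ge(1-2^{-k})^{\vec\Delta'''}Z(\PHI').
\]
On $\cG$, combining (iii) with the upper bounds $Z(\PHI'')\le Z(\PHI')$ and $Z(\PHI''')\le 2Z(\PHI')$ yields the deterministic estimate
\[
\left|\log\frac{Z(\PHI'')\vee 1}{Z(\PHI')\vee 1}\right|\le\log 2+\vec\Delta''|\log(1-2^{-k})|
\]
and the analogous bound for $\PHI'''$, whose $3/2$-moments are $O(1)$ by Poisson tails. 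On $\bar\cG$, the worst-case $|\log(\cdot)|\le n\log 2$ paired with $\Pr[\bar\cG]=O(n^{-C})$ delivers an $O(n^{3/2-C})=o(1)$ contribution.

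The technical crux is (iii). Conditioning on $\PHI'$ and $\vec\Delta''$,
\[
\ex[Z(\PHI'')^2\mid\PHI',\vec\Delta'']=\sum_{\sigma,\sigma'\in S(\PHI')}\bc{1-2\cdot 2^{-k}+2^{-k}q(\sigma,\sigma')^k+O(1/n)}^{\vec\Delta''}
\]
depends on the overlap $q(\sigma,\sigma')$ between pairs of uniformly random satisfying assignments of $\PHI'$. Gibbs uniqueness (\Prop~\ref{prop_arnab}) pins typical overlaps near $1/2$, forcing $\ex[Z(\PHI'')^2\mid\PHI',\vec\Delta'']\le(1+o(1))(\ex[Z(\PHI'')\mid\PHI',\vec\Delta''])^2$ and thereby a constant-probability Paley--Zygmund lower bound on $Z(\PHI'')$. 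To boost the failure probability to $O(n^{-C})$, the $\vec\Delta''$ added clauses will be partitioned into $\Theta(\log n)$ independent batches and Paley--Zygmund iterated across batches, using Chernoff on the number of successful batches.

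The main obstacle is propagating overlap concentration uniformly through all the intermediate formulas $\PHI'\cup\{b_1,\ldots,b_j\}$: after some batches the formula is no longer distributed as $\PHI_{d,k}(n)$, and we must verify that Gibbs uniqueness and the associated overlap structure survive these perturbations. This will be handled either by a union bound over the polylogarithmically many intermediate formulas or by a contraction argument in $W_1$-distance on the space of marginal distributions on the branching tree $\TT_{d,k}$, leveraging strict subcriticality $d<\duniq(k)$.
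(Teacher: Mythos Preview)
Your proposal has a genuine gap in step (iii), and the repair you sketch does not work.

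First, the batching idea is incoherent at the level of cardinalities: $\vec\Delta''\disteq\Po(d(k-1)/k)$ and $\vec\Delta'''\disteq\Po(d)$ have \emph{bounded} means, and on your event (i) there are at most $\log n$ extra clauses in total. There is nothing to partition into $\Theta(\log n)$ independent batches of nontrivial size, and even with one clause per batch the events ``batch $j$ succeeds'' are not independent once earlier clauses have been inserted.

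Second, and more seriously, the only input you have on overlap concentration is \Prop~\ref{prop_benign}, which is an $o(1)$ statement with no quantitative rate. From this you can extract a Paley--Zygmund lower bound $Z(\PHI'')\ge c\,\ex[Z(\PHI'')\mid\PHI',\vec\Delta'']$ with probability bounded away from zero, but there is no mechanism to amplify this to failure probability $O(n^{-C})$ for $C>3/2$. Your own ``main obstacle'' paragraph correctly identifies that the intermediate formulas are not distributed as $\PHI_{d,k}(n)$, so you cannot re-invoke Gibbs uniqueness at each step; the suggested $W_1$-contraction fix would require a quantitative (polynomial-in-$n$) contraction rate that is nowhere established.

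The paper sidesteps second-moment considerations entirely. Given the new clauses, it picks one literal from each to form a set $\cL$ with $|\cL|\le k\vec\Delta''$ (or $(k-1)\vec\Delta'''$), runs \pulp\ on $(\PHI',\cL)$, and invokes \Lem~\ref{lem_Ichi} to obtain the \emph{deterministic} inequality $Z(\PHI'')\ge Z(\PHI',\cL)\ge 2^{-|\bar\cL|}Z(\PHI')$. Thus $|\log(Z(\PHI'')/Z(\PHI'))|\le|\bar\cL|\log 2$ on the good event, and the entire burden shifts to the moment bound $\ex[|\bar\cL|^{3/2}]\le C|\cL|^{3/2}$ of \Lem~\ref{lem_EIchi}, whose proof is purely combinatorial (local structure of $\PHI'$ versus the Galton--Watson tree, doubly-exponential tails of pure-literal heights) and uses only $d<\dpure(k)$, not Gibbs uniqueness or any overlap information.
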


\subsection{Pure literal pursuit}\label{sec_pulp}
The proof of \Prop~\ref{prop_bad} constitutes the main technical challenge towards the proof of \Thm~\ref{thm_main}.
The linchpin of the proof is an algorithm that we call {\em Pure Literal Pursuit} (`\pulp').
Its purpose is to trace the repercussions of setting a relatively small number of variables to specific truth values.
More precisely, \pulp\ will allow us to compare the number of satisfying assignments that set a few chosen variables to specific values to the total number of satisfying assignments.

To this end \pulp\ attempts to solve the following optimisation task.
Suppose we are given a $k$-CNF $\Phi$ and a set $\cL$ of literals of $\Phi$ that we deem to be set to `true'.
We would like to identify a superset $\bar\cL\supseteq\cL$ of literals with the following properties; think of $\bar\cL$ as a `closure' of $\cL$.
\begin{description}
  \item[PULP1] every clause $a$ of $\Phi$ that contains a literal from $\neg\bar\cL=\{\neg l:l\in\bar\cL\}$ also contains a literal from $\bar\cL$.
  \item[PULP2] there is no literal $l$ such that $l,\neg l\in\bar\cL$.
\end{description}
Of course, it may be impossible to satisfy {\bf PULP1} and {\bf PULP2} simultaneously.
In this case we ask \pulp\ to report a `contradiction'.
But if {\bf PULP1}--{\bf PULP2} can be satisfied, we aim to find a closure $\bar\cL$ of as small size $|\bar\cL|$ as possible.

The combinatorial idea behind {\bf PULP1}--{\bf PULP2} is as follows.
Deeming the literals from the initial set $\cL$ `true', our goal is to reconcile this assumption with the formula $\Phi$.
To this end we enhance the set $\cL$.
Clearly, any clause that contains the negation $\neg l$ of a literal $l$ that we deem true also needs to contain another literal $l'$ that is set to true.
This is what {\bf PULP1} asks.
Furthermore, it would be contradictory to deem both $l$ and its negation $\neg l$ true; this is {\bf PULP2}.

The size of the closure $\bar\cL$ yields a bound on the reduction in the number of satisfying assignments if we indeed insist on all literals $l\in\cL$ being set to true.
Formally, let $S(\Phi,\cL)$ be the set of all satisfying assignments $\sigma\in S(\Phi)$ under which all literals $l\in\cL$ evaluate to `true'.
Also set $Z(\Phi,\cL)=|S(\Phi,\cL)|$.

\begin{lemma}\label{lem_Ichi}
  For any $\Phi,\cL$ and any $\bar\cL \supseteq \cL$ that satisfies {\bf PULP1}--{\bf PULP2} we have $Z(\Phi)\leq 2^{|\bar\cL|}Z(\Phi,\cL)$.
\end{lemma}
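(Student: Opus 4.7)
The plan is to construct an explicit map $f : S(\Phi) \to S(\Phi,\cL)$ whose fibres each have size at most $2^{|\bar\cL|}$; summing over the image then yields $Z(\Phi) \leq 2^{|\bar\cL|} Z(\Phi,\cL)$. As a preliminary observation, {\bf PULP2} implies that the map $l \mapsto |l|$ is injective on $\bar\cL$, so there are precisely $|\bar\cL|$ variables underlying the literals of $\bar\cL$, and for each such variable $x$ there is a unique sign $s_x \in \PM$ with $s_x \cdot x \in \bar\cL$.

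Given $\sigma \in S(\Phi)$, I would define $f(\sigma) = \sigma'$ by setting $\sigma'(x) = s_x$ whenever $x$ underlies some literal of $\bar\cL$, and $\sigma'(x) = \sigma(x)$ otherwise. Then $\sigma'$ evaluates every literal of $\bar\cL \supseteq \cL$ to true by construction. The heart of the argument is verifying that $\sigma'$ still satisfies every clause $a$ of $\Phi$, which I would do by a short case analysis: fix any literal $l$ of $a$ that $\sigma$ evaluates to true and check three possibilities. If $|l|$ does not underlie any literal of $\bar\cL$ then $\sigma'(|l|) = \sigma(|l|)$ and $l$ is still true; if $l \in \bar\cL$ then $\sigma'$ satisfies $l$ directly; and if $\neg l \in \bar\cL$ then $a$ contains a literal of $\neg\bar\cL$, so {\bf PULP1} supplies another literal $l' \in \bar\cL$ lying in $a$, and $\sigma'$ satisfies $l'$ by construction (noting $l'$ must coincide with $s_{|l'|} \cdot |l'|$, since otherwise both $l'$ and $\neg l'$ would lie in $\bar\cL$ contrary to {\bf PULP2}).

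Finally, $\sigma$ and $f(\sigma)$ agree off the $|\bar\cL|$ variables underlying $\bar\cL$, so $\sigma$ is determined by $f(\sigma)$ up to at most $2^{|\bar\cL|}$ possible choices on those variables, which bounds each fibre of $f$ by $2^{|\bar\cL|}$ and completes the proof. The only real obstacle is the third case of the clause-satisfaction check, and this is precisely where {\bf PULP1} is indispensable: it guarantees that whenever the modification from $\sigma$ to $\sigma'$ falsifies a literal $l$ that previously supported some clause $a$, there is already a replacement literal from $\bar\cL$ present in $a$ to carry the satisfaction. Without {\bf PULP1} the flipping procedure could annihilate satisfying assignments rather than merely map them forward.
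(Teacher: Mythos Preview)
Your proof is correct and follows essentially the same approach as the paper: define the map that overwrites each variable underlying $\bar\cL$ with the value dictated by $\bar\cL$ (well defined by {\bf PULP2}), check that the resulting assignment is satisfying (using {\bf PULP1}), and bound the fibres by $2^{|\bar\cL|}$. The only difference is that you spell out the three-case clause-satisfaction check explicitly, whereas the paper compresses it into a single sentence invoking {\bf PULP1}.
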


In order to identify a `small' closure $\bar\cL$ the \pulp\ algorithm resorts to {\em pure literal elimination}, a simple trick commonplace to satisfiability 
algorithms.
A variable $x$ is {\em pure} in a CNF formula $\Phi$ if $\sign(x,a)=\sign(x,b)$ for any two clauses $a,b\in\partial x$.
Clearly, if our objective is to construct a satisfying assignment, we might as well set all pure variables $x$ to the value that satisfies all clauses $a\in\partial x$ and disregard these clauses henceforth.
In light of this observation, pure literal elimination repeatedly removes all clauses that contain a pure variable.
Naturally, every round of clause removals may create new pure variables, and thus more clauses may be ripe for removal in the next round.
For a clause $a$ of the original formula $\Phi$ let $\fn_a(\Phi)\geq1$ be the number of the round at which pure literal elimination removes $a$.
If $a$ is never removed then we set $\fn_a(\Phi)=\infty$.

The \pulp\ algorithm invokes a slightly modified version of pure literal elimination to accommodate the initial set $\cL$ of literals.
Specifically, for a variable $x$ of a CNF $\Phi$ and $s\in\PM$ let $\Phi[x\mapsto s]$ be the CNF obtained by removing all clauses
$a\in\partial x$ with $\sign(x,a)=s$ and removing the literal $-s\cdot x$ from all $a\in\partial x$ with $\sign(x,a)=-s$.
The definition reflects that if we set $x$ to value $s$, all $a\in\partial^s x$ will be satisfied, while all $a\in\partial^{-s} x$ will have to be satisfied by one of their other constituent literals.
Further, let
\begin{align}\label{eqheightdef}
  \fn_x(s,\Phi)&=
  \begin{cases}
    0&\mbox{ if }\partial_\Phi^{-s} x=\emptyset,\\
    \max\cbc{\fn_a(\Phi[x\mapsto s]):a\in\partial_\Phi^{-s} x}&\mbox{ otherwise.}
  \end{cases}
  \qquad	\in[0,\infty].
\end{align}
We refer to $\fn_x(s,\Phi)$ as the {\em height of literal $s\cdot x$ in $\Phi$}.

The \pulp\ algorithm, displayed as Algorithm~\ref{fig_pulp}, harnesses the heights as follows.
In its attempt to precipitate {\bf PULP1} and {\bf PULP2} the algorithm iteratively enhances the set $\cL$ of literals deemed to be `true'.
For any clause $a$ that violates {\bf PULP1} and that contains a literal $l\not\in\neg\cL$ the algorithm adds one such literal $l$ {\em of minimum height} to $\cL$.
This choice is intended to keep the ultimate size of the closure small; one could say that \pulp\ uses height as a proxy of `size'.
If at any point the algorithm encounters a clause $a$ that consists of literals from $\neg\cL$ only, the algorithm reports a contradiction and aborts.

\begin{algorithm}[h!]\label{alg_ws}
  \KwData{A $k$-CNF $\Phi$ and a set $\cL$ of literals.}
  Let $\bar\cL=\cL$\;
  \While{there is a clause $a$ that contains a literal from $\neg\bar\cL$ but no literal from $\bar\cL$}
  {Pick such a clause $a$ that minimises the distance from the initial set $\cL=\{|l|:l\in\cL\}$\;
    \If{$a$ consists of literals $l\in\neg\bar\cL$ only}
    {return `contradiction' and halt\;}
    \Else{choose $x\in\partial a$ with $x,\neg x\not\in\bar\cL$ that minimises $\fn_{x}(\sign(x,a),\Phi)$ and add $\sign(x,a)\cdot x$ to $\bar\cL$ \;}
  }
  \Return\ $\bar\cL$
  \caption{The \pulp\ algorithm}\label{fig_pulp}
\end{algorithm}

\begin{remark}\label{rem_ws}
To break ties that may occur in the execution of Steps~3 and~7 of \pulp\ we assume that the variables and clauses of $\Phi$ are numbered so that Steps~3 and~7 can choose the clause/variable with the smallest number that satisfies the respective requirements.
In due course we will run \pulp\ on (finite subtrees of) the Galton-Watson tree $\TT$.
To number the variables and clauses of $\TT$ we equip each of them with an independent Gaussian label.
Since $\TT$ comprises a countable number of clauses/variables, these labels will almost surely be pairwise distinct.
\end{remark}

From here on we write $\bar\cL$ for the set of literals returned by \pulp\ if the algorithm does not encounter a contradiction;
in the event of a a contradiction we let $\bar\cL=\{x,\neg x:x\in V(\Phi)\}$ be the set of all literals.
Where the reference to the formula $\Phi$ is not entirely obvious, we write $\bar\cL_{\Phi}$.
The analysis of \pulp\ on the random formula $\PHI'$ furnishes the following bound on $|\bar\cL|$ in terms of the size of the initial set $\cL$.
This bound is the key ingredient towards the proof of \Prop~\ref{prop_bad}.

\begin{lemma}\label{lem_EIchi}
  There exists $C=C(d,k)>0$ such that the following is true.
  Let $\cL$ be a set of literals of $\PHI'$ such that $1\leq|\cL|\leq\log^2n$ and such that $\{x_i,\neg x_i\}\not\subseteq\cL$ for all $1\leq i\leq n$.
  Then $\Erw[|\bar\cL|^{3/2}]\leq C|\cL|^{3/2}$.
\end{lemma}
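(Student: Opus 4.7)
The overall strategy is to dominate the closure $\bar\cL$ produced by \pulp\ on $\PHI'$ by a sum of $|\cL|$ independent copies of the closure that \pulp\ would produce on the Galton--Watson tree $\TT$, and to analyse the latter as the total progeny of a subcritical branching process. To set up the domination, one runs \pulp\ on $\PHI'$ from $\cL$ while simultaneously exposing the formula breadth-first from each variable of $\cL$. Because $|\cL|\leq\log^2 n$ is subpolynomial, a standard first-moment/local-weak-convergence argument (in the spirit of \Cor~\ref{cor_lcwk}) couples \pulp\ with $L=|\cL|$ independent trees $\TT^{(l)}$, rooted at the literals $l\in\cL$, up to the first collision or cycle. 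One then verifies that $|\bar\cL|$ on $\PHI'$ is stochastically dominated by $\sum_{l\in\cL}|\bar\cL^{(l)}|$, where $\bar\cL^{(l)}$ denotes the closure of \pulp\ run on the independent tree $\TT^{(l)}$ from $\{l\}$: cycles in $\PHI'$ can only create contradictions or already-satisfied clauses, both of which can only shrink $\bar\cL$ relative to the tree bound.

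The main analytic content lies in the branching analysis on $\TT$. Consider \pulp\ run on $\TT$ from a single root literal $l=s\cdot\root$. Each iteration processes a clause $a$ containing $\neg l$---there are a $\Po(d/2)$-distributed number of such clauses by Poisson thinning---and selects, among the $k-1$ other literals of $a$, one of minimum height $\fn$, adding it to $\bar\cL$ and recursing on its subtree. Thus $|\bar\cL^{(l)}|$ is the total progeny of a multi-type branching process indexed by $\TT$, whose offspring distribution encodes the statistics of the height function on independent subtrees. The \emph{minimum-height} selection rule is crucial: whenever at least one of the $k-1$ candidates in $a$ has height $0$, i.e.\ is pure on the subtree below, the corresponding branch terminates. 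Under the Gibbs uniqueness hypothesis $d<\duniq(k)$---equivalently, the $W_1$-contraction of the BP operator \eqref{eqBPop} established in \Prop~\ref{prop_arnab}---this branching process is strictly subcritical, yielding exponential tails $\pr\brk{|\bar\cL^{(l)}|\geq t}\leq\exp(-\alpha(d,k)\,t)$ for all $t\geq 1$ and some $\alpha(d,k)>0$, and hence $\Erw[|\bar\cL^{(l)}|^{3/2}]\leq C_0(d,k)<\infty$.

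Aggregating over $l\in\cL$ completes the argument. Using the stochastic domination just described, the independence of the $\bar\cL^{(l)}$, and Jensen's inequality in the form $(\sum_i y_i)^{3/2}\leq L^{1/2}\sum_i y_i^{3/2}$, one obtains
\begin{align*}
\Erw\brk{|\bar\cL|^{3/2}}\leq L^{1/2}\sum_{l\in\cL}\Erw\brk{|\bar\cL^{(l)}|^{3/2}}\leq C_0(d,k)\,L^{3/2},
\end{align*}
which is the claimed bound with $C=C_0$.

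The technical heart of the proof is converting the abstract Gibbs uniqueness hypothesis into a \emph{quantitative} subcriticality statement for the height-guided branching process driving \pulp, and amplifying mean subcriticality to the exponential tails required for the $3/2$-moment. This calls for a careful analysis of the joint distribution of the $k-1$ candidate heights in a random clause, together with a proof that the minimum-height rule picks a pure (terminating) literal often enough to offset the $\Po(d/2)$-branching from the clauses containing $\neg l$. A secondary obstacle is the stochastic-domination step itself: verifying that cycles in $\PHI'$ do not enlarge $\bar\cL$ beyond what the tree-based analysis predicts requires a structural argument about the behaviour of \pulp\ under clause-sharing between the explorations rooted at distinct literals of $\cL$.
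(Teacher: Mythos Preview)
Your proposal has a genuine gap in the stochastic-domination step that cannot be repaired as stated.  You claim that ``cycles in $\PHI'$ can only create contradictions or already-satisfied clauses, both of which can only shrink $\bar\cL$ relative to the tree bound.''  But by the paper's convention (stated right after Algorithm~\ref{fig_pulp}), when \pulp\ reports a contradiction one sets $\bar\cL=\{x,\neg x:x\in V(\Phi)\}$, so that $|\bar\cL|=2n$.  Thus contradictions do not shrink $\bar\cL$; they blow it up to the maximum possible value.  Since on the tree $\TT$ with a single start literal \pulp\ can never hit a contradiction (each clause has $k-1\geq 2$ fresh literals in distinct subtrees), your proposed domination $|\bar\cL|_{\PHI'}\leq_{\mathrm{st}}\sum_{l\in\cL}|\bar\cL^{(l)}|_{\TT}$ fails on the event that a cycle triggers a contradiction.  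Even restricting to the non-contradiction case, the claim that a cycle can only help is not argued and is not obvious: a short cycle can alter the heights $\fn_x(\pm1,\PHI')$ that guide Step~7, so the tree and graph runs may select different literals and there is no pointwise monotonicity to invoke.

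The paper's proof does not attempt any such global domination.  Instead it conditions on the excess $\vX_{\Lambda,L}$ of the depth-$\Lambda$ neighbourhood of $\cL$ (with $\Lambda=\Theta(\log\log n)$) and treats three regimes separately.  On $\{\vX_{\Lambda,L}=-L\}$ the neighbourhood is a forest, the coupling to $L$ independent copies of $\TT$ is exact, and \Cor~\ref{lem:SMSizeGivenHeight} supplies a stretched-exponential tail $\pr[|\overline{\{\root\}}_{\TT}|>t]\leq c_3\exp(-t^{1/c_3})$ (not an exponential tail; note also that this comes from the pure-literal/height analysis in \Lem~\ref{lem:Peeling_on_Trees} for $d<\dpure(k)$, not from BP contraction).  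On $\{\vX_{\Lambda,L}=1-L\}$, which has probability $\tilde O(n^{-1})$, one shows $|\bar\cL|$ is polylogarithmic, so $\ex[\vecone\{\cdot\}|\bar\cL|^{3/2}]=o(1)$.  On $\{\vX_{\Lambda,L}>1-L\}$, which has probability $\tilde O(n^{-2})$, the deterministic bound $|\bar\cL|\leq 2n$ gives $\tilde O(n^{-2})\cdot(2n)^{3/2}=o(1)$.  The exponent $3/2$ is precisely calibrated to this last case; your argument, if it worked, would give all moments, which is a warning sign.  To fix your approach you would have to reinstate exactly this excess-based case split, at which point the tree analysis you sketch (with the correct stretched-exponential tail) covers only the $\vX_{\Lambda,L}=-L$ case.
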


The proof of \Lem~\ref{lem_EIchi} is one of the main technical challenges of the present work.
The difficulty stems from the stochastic dependencies that are inherent to the \pulp\ algorithm.
Specifically, in order to decide which literals to add to the set $\cL$, \pulp\ requires knowledge of the heights $\fn_{x}(\pm1,\PHI')$.
But these heights depend on the other variables $y\in\partial a\setminus\cbc x$, the clauses that these variables $y$ appear in, etc.
Furthermore, in its subsequent iterations the algorithm is apt to revisit some of these variables and clauses at a point when their heights have already been revealed.
These repetitions rule out an analysis of \pulp\ by way of routine techniques such as the principle of deferred decisions or the differential equations method.
The reason why we manage to cope with these complicated dependencies at all is that, remarkably, the heights $\fn_{x}(\pm1,\PHI')$ have only a tiny upper tail.
More precisely, as we will see the tails of these random variables decay at a {\em doubly} exponential rate.

\Prop~\ref{prop_bad} follows from the analysis of \pulp.
The basic idea is to apply the algorithm to an initial set $\cL$ of literals that contain one literal from each of the extra clauses that are present in $\PHI''$ or $\PHI'''$ but not in $\PHI'$.
With a bit of care the bounds from \Lem s~\ref{lem_Ichi} and~\ref{lem_EIchi} then imply~\eqref{eq_prop_bad}.
Finally, the analysis of \pulp\ that leads up to the proof of \Lem~\ref{lem_EIchi} also implies the necessary tail bounds to verify the bounds from~\eqref{eq_prop_arnab_bound}.
Specifically, the proof of \Lem~\ref{lem_EIchi} proceeds by way of analysing \pulp\ on the Galton-Watson tree $\TT_{d,k}$, and the bounds \eqref{eq_prop_arnab_bound} come out as a byproduct of that analysis.

\subsection{Completing the Aizenman-Sims-Starr scheme}\label{sec_aizenman_complete}
To obtain \Prop s~\ref{lem_PHI''}--\ref{lem_PHI'''} we combine \Prop~\ref{prop_bad} with an analysis of the quotients $Z(\PHI'')/Z(\PHI')$ and $Z(\PHI''')/Z(\PHI')$ on a likely `good' event.
On this good event the empirical distribution of the marginal probabilities $(\pr[\SIGMA_{\PHI'}(x_i)=1\mid\PHI'])_{1\leq i\leq n}$ of the different variables $x_i$ receiving the value `true' under a random satisfying assignment is `close' to the limiting distribution $\pi_{d,k}$ from \Prop~\ref{prop_arnab}.
Additionally, on the good event the joint distribution of the truth values assigned to a moderate number of variables is well approximated by a product measure.
Of course, to make this precise we need to investigate the empirical distribution
\begin{align}\label{eqempirical}
  \vec\pi_n'&=\frac1n\sum_{i=1}^n\delta_{\pr[\SIGMA_{\PHI'}(x_i)=1\mid\PHI']}\in\cP(0,1)
\end{align}
of the marginals $(\pr[\SIGMA_{\PHI'}(x_i)=1\mid\PHI'])_{1\leq i\leq n}$.

\begin{proposition}\label{prop_benign}
  Assume that $d<\duniq(k)$.
  Then $\ex\brk{W_1(\vec\pi_n',\pi_{d,k})}=o(1)$
  and for any $\ell = O(1)$ we have
  \begin{align*}
    \sum_{\sigma\in\PM^\ell}\ex\abs{\pr\brk{\forall 1\leq i\leq\ell:\SIGMA_{\PHI'}(x_i)=\sigma_i\mid\PHI'}-\prod_{i=1}^\ell\pr\brk{\SIGMA_{\PHI'}(x_i)=\sigma_i\mid\PHI'}}&=o(1).
  \end{align*}
\end{proposition}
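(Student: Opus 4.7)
The plan is to leverage local weak convergence of the random formula $\PHI'$ to the Galton--Watson tree $\TT_{d,k}$ (Corollary~\ref{cor_lcwk}) and the Gibbs uniqueness property~\eqref{eqTreeUniq}, which combined with \Prop~\ref{prop_arnab} forces the relevant marginal distributions to concentrate near $\pi_{d,k}$. For a variable $x_i$ of $\PHI'$ and an integer $\ell\geq1$, let $N_\ell(x_i)$ denote the depth-$2\ell$ neighborhood of $x_i$ in $G(\PHI')$, viewed as a $k$-CNF, and let $\hat\eta_\ell(x_i)\in(0,1)$ be the root marginal of this subformula under the free boundary at depth $2\ell$ (equivalently, the output of $\ell$ Belief Propagation iterations initialised from $\delta_{1/2}$). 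By local weak convergence, for a uniformly chosen variable $\vec x$ of $\PHI'$ the law of $N_\ell(\vec x)$ tends weakly to that of $\TT^{(\ell)}$, so the law of $\hat\eta_\ell(\vec x)$ tends (in $W_1$, since the support is $[0,1]$) to $\BP_{d,k}^\ell(\delta_{1/2})$.

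For the first assertion I would apply the triangle inequality,
\begin{align*}
\ex\brk{W_1(\vec\pi_n',\pi_{d,k})}
&\leq\ex\brk{\frac1n\sum_{i=1}^n\abs{\pr\brk{\SIGMA_{\PHI'}(x_i)=1\mid\PHI'}-\hat\eta_\ell(x_i)}}\\
&\qquad+\ex\brk{W_1\bc{\hat{\vec\pi}_n^{(\ell)},\BP_{d,k}^\ell(\delta_{1/2})}}+W_1\bc{\BP_{d,k}^\ell(\delta_{1/2}),\pi_{d,k}},
\end{align*}
with $\hat{\vec\pi}_n^{(\ell)}=\frac1n\sum_{i=1}^n\delta_{\hat\eta_\ell(x_i)}$. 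The last term vanishes as $\ell\to\infty$ by \Prop~\ref{prop_arnab}. The middle term is $o(1)$ in $n$ for each fixed $\ell$: local weak convergence identifies the ``population'' limit of $\hat\eta_\ell(\vec x)$, and a second-moment concentration argument based on the fact that variables at Tanner distance exceeding $4\ell$ in $\PHI'$ have essentially independent $2\ell$-neighborhoods yields convergence of $\hat{\vec\pi}_n^{(\ell)}$ to this limit. The first term is controlled by Gibbs uniqueness: on the high-probability event that $N_\ell(x_i)$ is acyclic, $\pr\brk{\SIGMA_{\PHI'}(x_i)=1\mid\PHI'}$ is a convex mixture of tree marginals produced by the various realised boundary configurations at depth $2\ell$, and \eqref{eqTreeUniq} forces all such tree marginals---including $\hat\eta_\ell(x_i)$ itself---to lie within a window that shrinks to zero as $\ell\to\infty$. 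Passing to a diagonal $\ell=\ell(n)\to\infty$ sufficiently slowly then closes the argument.

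For the second assertion I would apply the chain rule
\begin{align*}
\pr\brk{\forall i\leq\ell:\SIGMA_{\PHI'}(x_i)=\sigma_i\mid\PHI'}=\prod_{i=1}^\ell\pr\brk{\SIGMA_{\PHI'}(x_i)=\sigma_i\mid\PHI',\,\SIGMA_{\PHI'}(x_j)=\sigma_j\text{ for all }j<i}
\end{align*}
and compare each conditional factor with its unconditional counterpart. Fixing $\SIGMA_{\PHI'}(x_j)=\sigma_j$ amounts to appending a unit clause on $x_j$; because $\ell\leq\log n$, a routine union bound shows that with probability $1-o(1)$ the variables $x_1,\ldots,x_\ell$ are pairwise at Tanner distance at least $2\ell_0$ for any slowly growing $\ell_0=\ell_0(n)\to\infty$. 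Thus each such unit clause lies outside the $2\ell_0$-neighborhood of every other $x_i$ and acts on that neighborhood merely as a boundary perturbation at depth $\geq 2\ell_0$. Gibbs uniqueness bounds the effect of any such perturbation by some $\epsilon(\ell_0)\to0$, so telescoping over the $\ell$ chain-rule factors, summing over the $2^\ell\leq n$ sign patterns $\sigma$, and absorbing the low-probability exceptional events gives the stated $o(1)$ bound. The main obstacle throughout is the honest transfer of~\eqref{eqTreeUniq} from adversarial tree boundaries to the random, formula-induced ones: one must verify that whenever $N_\ell(x_i)$ is locally a tree, the actual formula marginal of $x_i$ falls within the convex hull of the tree marginals over all admissible boundary configurations, while the exceptional short-cycle events contribute only polynomially small error. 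Coordinating the depth-growth parameter with the relevant scalings---particularly in the second assertion, where $\ell$ is itself already allowed to grow like $\log n$---is the most delicate piece of accounting.
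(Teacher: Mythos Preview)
Your proposal is correct and follows essentially the same route as the paper. The paper's own argument (via \Lem~\ref{lem_benign}) isolates a finite set $\cT_\ell$ of ``good'' local trees on which Gibbs uniqueness gives a uniform deviation bound, then uses local weak convergence to show most variables have such a neighbourhood; your BP-approximation framing and triangle-inequality split is a straightforward repackaging of the same idea, and your chain-rule treatment of the second assertion is precisely the multivariate extension the paper leaves implicit in its one-line proof.
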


The proof of \Prop~\ref{prop_benign} hinges on the Gibbs uniqueness property and the convergence of the local topology of the random formula $\PHI'$ to the Galton-Watson tree $\TT_{d,k}$.
Together with careful coupling arguments \Prop s~\ref{prop_bad}--\ref{prop_benign} imply \Prop s~\ref{lem_PHI''}--\ref{lem_PHI'''}.
Moreover, in combination with Fact~\ref{fact_coupling} these two propositions yield \Prop~\ref{prop_aizenman}.
We complete this paragraph by showing how \Thm~\ref{thm_main} follows from \Cor~\ref{cor_interpolation} and \Prop~\ref{prop_aizenman}.

\begin{proof}[Proof of \Thm~\ref{thm_main}]
  The existence of the limit~\eqref{eqpidk} follows from \Prop~\ref{prop_arnab}.
  With respect to~\eqref{eqBFE}, we 
  apply \Prop~\ref{prop_aizenman} to obtain 
  \begin{align}\nonumber
    \frac{1}{n}\Erw\brk{\log(1 \vee Z(\PHI_{d,k}(n)))}&=\frac{1}{n}\sum_{N=0}^{n-1} \left(\Erw\brk{\log(1 \vee Z(\PHI_{d,k}(N+1))} - \Erw\brk{\log(1 \vee Z(\PHI_{d,k}(N))}\right)\\
    &= \fB_{d,k}(\pi_{d,k}) + o(1)
    \enspace.\label{eq_exp_tele}
  \end{align}
  Since, conversely, \Cor~\ref{cor_interpolation} shows that $\frac1n\log Z(\PHI)\leq \fB_{d,k}(\pi_{d,k})+o(1)$ \whp\ and since $\log Z(\PHI)\leq n\log 2$ deterministically, the assertion follows from~\eqref{eq_exp_tele}.
\end{proof}

\subsection{Lower-bounding the uniqueness threshold}\label{sec_uniq_outline}
The proof of \Thm~\ref{thm_uniq} combines three ingredients.
From the work~\cite{2sat} on random 2-SAT we borrow the idea of constructing an explicit extremal boundary configuration.
In effect, in order to prove Gibbs uniqueness we just have to consider one single boundary configuration, instead of an enormous number of possible configurations $\tau$ that grows quickly with the height $\ell$ as in the original definition~\eqref{eqTreeUniq}.
Second, from the work~\cite{MS} of Montanari and Shah we borrow the idea of expressly considering the effect of pure literals.
As it turns out, without explicit consideration of pure literals it seems difficult to even recover the correct asymptotic order~\eqref{eqpure} of the Gibbs uniqueness threshold.
Third, and most importantly, the improvement over the bound from~\cite{MS} stems from a new subtle coupling argument that we will explain in due course.

\subsubsection{The extremal boundary condition}

An obvious challenge associated with establishing the Gibbs uniqueness property~\eqref{eqTreeUniq} seems to be that we need to estimate the marginal of the root variable given {\em any} possible boundary condition, i.e., given any assignment of the variables at distance $2\ell$ from $\root$.
As we expect to see $(d  (k-1))^\ell$ variables at distance $2\ell$ from $\root$, we thus face a doubly exponential number $2^{(d  (k-1))^\ell}$ of possible boundary conditions.
But fortunately, following~\cite{2sat} we may confine ourselves to just a single, explicit boundary configuration $\TAU^+$ that satisfies
\begin{align}
  \pr&\brk{\TAU^{(\ell)}(r)=1\mid\TT,\,\forall x\in\partial^{2\ell}r:\TAU^{(\ell)}(x)=\TAU^+(x)}
  =
  \max_{\tau\in S(\TT^{(\ell)})} \pr\brk{\TAU^{(\ell)}(r)=1\mid\TT,\,\forall x\in\partial^{2\ell}r:\TAU^{(\ell)}(x)=\tau(x)}.
  \label{eqfunnydiff}
\end{align}
Due to the inherent symmetry of the distribution of $\TT$ with respect to the signs of the clauses,
towards the proof of~\eqref{eqTreeUniq} it is sufficient to show that the difference~\eqref{eqfunnydiff} vanishes as $\ell\to\infty$.

The extremal boundary condition can be constructed explicitly.
Specifically, given ${\TT}^{(\ell)}$ we construct a satisfying assignment $\TAU^+\in S(\TT^{(\ell)})$ by working our way down the tree ${\TT}^{(\ell)}$.
We begin by setting $\TAU^+(\root) = 1$.
Now suppose that for $q \ge 1$, the values of the variables at distance $2(q-1)$ from $\root$ have been already determined.
Let $w$ be a variable at distance ${2q}$ from $\root$ with parent clause $a$ and grandparent variable $u$.
Then we define
\begin{align}\label{eq_extremalBoundDef}
  \TAU^+(w) = \sign(w,a)\cdot\Ind \{\sign(u,a)\neq\TAU^+(u)\}
  -\sign(w,a)\cdot\Ind \{\sign(u,a)=\TAU^+(u)\}
  \enspace.
\end{align}
The idea behind~\eqref{eq_extremalBoundDef} is for $\TAU^+(w)$ to ``nudge'' $u$ towards $\TAU^+(u)$ 
by making sure that $w$ satisfies clause $a$ if setting $u$ to $\TAU^+(u)$ does not, 
and conversely making sure that $w$ fails to satisfy clause $a$ if setting $u$ to $\TAU^+(u)$ does.
A simple induction on $\ell$ shows that $\TAU^+$ is a satisfying assignment for which~\eqref{eqfunnydiff} holds.

\begin{lemma}\label{lem:ExtremalBnd}
  For any integer $\ell\geq0$ the assignment $\TAU^+$ defined via~\eqref{eq_extremalBoundDef} satisfies~\eqref{eqfunnydiff}.
\end{lemma}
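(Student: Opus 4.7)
The plan is induction on $\ell$ under a strengthened hypothesis: for every rooted subtree of $\TT$ of depth $\ell$ with root variable $v$ and every target value $s\in\{\pm1\}$, the boundary assignment produced by setting $\TAU^+(v)=s$ and propagating via~\eqref{eq_extremalBoundDef} maximises $\Pr[\TAU^{(\ell)}(v)=s\mid\text{boundary}]$ over all boundary assignments that leave the subtree satisfiable. Specialising to $v=\root$ and $s=1$ then yields~\eqref{eqfunnydiff}. The base case $\ell=0$ is immediate since $\partial^0 v=\{v\}$ fixes $v$ directly. As a side claim I would also verify that $\TAU^+\in S(\TT^{(\ell)})$, so that the max in~\eqref{eqfunnydiff} is taken over a nonempty set: for any clause $a$ with parent $u$ and children $w_1,\dots,w_{k-1}$, either $\TAU^+(u)=\sign(u,a)$ (so $u$ satisfies $a$) or~\eqref{eq_extremalBoundDef} sets $\TAU^+(w_j)=\sign(w_j,a)$ for every $j$ (so every $w_j$ satisfies $a$).

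For the inductive step I would write out the standard tree Belief Propagation identity. Let $a_1,\dots,a_M$ be the child clauses of $v$ and $w_{i,1},\dots,w_{i,k-1}$ the grandchildren through $a_i$. Because the sub-subtrees rooted at the $w_{i,j}$ are pairwise disjoint, conditional on the boundary their satisfying-extension counts are independent, and a direct enumeration of satisfying completions yields
\begin{align*}
\Pr[\TAU^{(\ell)}(v)=s\mid\text{boundary}]=\frac{\prod_{i:\sign(v,a_i)\neq s}(1-q_i)}{\prod_{i:\sign(v,a_i)=s}(1-q_i)+\prod_{i:\sign(v,a_i)\neq s}(1-q_i)},
\end{align*}
where $q_i=\prod_{j=1}^{k-1}\mu_{i,j}(-\sign(w_{i,j},a_i))$ and $\mu_{i,j}(\cdot)$ is the marginal of $\TAU(w_{i,j})$ in the sub-subtree below $w_{i,j}$ under the prescribed boundary. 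This rational function is strictly increasing in each $1-q_i$ with $\sign(v,a_i)\neq s$ and strictly decreasing in each $1-q_i$ with $\sign(v,a_i)=s$.

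The task therefore reduces to maximising $\mu_{i,j}(\sign(w_{i,j},a_i))$ when $\sign(v,a_i)\neq s$ and $\mu_{i,j}(-\sign(w_{i,j},a_i))$ when $\sign(v,a_i)=s$, separately in each sub-subtree. Direct comparison with~\eqref{eq_extremalBoundDef} shows that these target values are precisely $\TAU^+(w_{i,j})$. Hence the inductive hypothesis, applied to the depth-$(\ell-1)$ sub-subtree rooted at $w_{i,j}$ with target $\TAU^+(w_{i,j})$, identifies the per-branch optimum as the restriction of $\TAU^+$, and the disjoint pieces combine into the single global boundary $\TAU^+$. The main subtlety worth flagging is that the inductive hypothesis has to be phrased for an arbitrary target $s\in\{\pm1\}$, because the recursion~\eqref{eq_extremalBoundDef} feeds either $\sign(w_{i,j},a_i)$ or its negation down the tree depending on the alignment of $\sign(v,a_i)$ with $s$; an induction carried only with $s=1$ at every level would not close.
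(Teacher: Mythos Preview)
Your proof is correct and follows essentially the same approach as the paper: both argue by induction, express the root marginal via the tree recursion (the paper's \eqref{eq:SplusMax}--\eqref{eq:SminusMax} are exactly your numerator/denominator factorisation), and use monotonicity in the grandchildren's marginals to reduce to the inductive hypothesis. Your explicit parameterisation of the hypothesis by the target value $s$ is the same device as the paper's use of the variable-dependent target $\TAU^+(x)$ in Claim~\ref{Lemma_Noela}; the paper's induction on the level of $x$ within a fixed $\TT^{(\ell)}$ and your induction on the subtree depth are equivalent reformulations.
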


\noindent
Hence, proving \Thm~\ref{thm_uniq} reduces to establishing the following.

\begin{proposition}\label{prop:condMarg}
  For $d < \dours(k)$ we have that
  \begin{align}\label{eqprop:condMarg}
    \lim_{\ell\to\infty}\Erw\brk{\pr\brk{\TAU^{(\ell)}(\root)=1\mid\TT,\,\forall x\in\partial^{2\ell}\root:\TAU^{(\ell)}(x)=\TAU^+(x)}-\pr\brk{\TAU^{(\ell)}(\root)=1\mid\TT}}=0.
  \end{align}
\end{proposition}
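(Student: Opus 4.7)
\emph{Overall strategy.} My plan is to compare the two conditional measures appearing in~\eqref{eqprop:condMarg} via the standard $k$-SAT Belief Propagation recursion on $\TT^{(\ell)}$ and to show that the resulting per-vertex discrepancy contracts geometrically towards the root, with rate below $1$ precisely when $d<\dours(k)$. For each variable $v$ of $\TT^{(\ell)}$ let $\eta_v^{+}=\pr[\TAU^{(\ell)}(v)=1\mid\TT,\,\forall x\in\partial^{2\ell}\root\colon\TAU^{(\ell)}(x)=\TAU^+(x)]$ and $\eta_v^{\circ}=\pr[\TAU^{(\ell)}(v)=1\mid\TT]$. Both quantities satisfy the $k$-SAT BP fixed-point equations on the subtree rooted at $v$, differing only in the boundary values imposed at the leaves at depth $2\ell$, namely $\eta_v^{+}=\Ind\{\TAU^+(v)=1\}$ versus $\eta_v^{\circ}=1/2$. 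Writing $\Delta_v^{(\ell)}:=|\eta_v^{+}-\eta_v^{\circ}|$, the quantity on the left of~\eqref{eqprop:condMarg} equals $\Erw[\eta_\root^{+}-\eta_\root^{\circ}]$ which is bounded by $\Erw[\Delta_\root^{(\ell)}]$, so it suffices to prove $\Erw[\Delta_\root^{(\ell)}]\to 0$.

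\emph{Per-level contraction.} Let $v$ be a variable at depth $2q<2\ell$ with $D\sim\Po(d)$ offspring clauses $b_1,\dots,b_D$, each carrying $k-1$ grandchild variables $v'_{j,1},\dots,v'_{j,k-1}$. Differentiating the clause-node BP update, I aim to derive a pointwise estimate
\begin{align*}
\Delta_v^{(\ell)}\;\le\;\sum_{j=1}^{D}\sum_{i=1}^{k-1}\Delta_{v'_{j,i}}^{(\ell)}\prod_{i'\neq i}\Ind\{v'_{j,i'}\text{ is not pure-satisfying for }b_j\},
\end{align*}
where a grandchild $v''=v'_{j,i'}$ is called \emph{pure-satisfying for} $b_j$ when its own cavity message places all its mass on the value $\sign(v'',b_j)$; in that case $b_j$ is trivially satisfied and the boundary discrepancy carried by $v'_{j,i}$ cannot propagate through $b_j$ to $v$. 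A Poisson-thinning computation on $\TT$ identifies this event with the pure-literal-of-matching-sign event in $v''$'s subtree, whose probability is at least $\tfrac12 e^{-d/2}$ (the factor $e^{-d/2}$ coming from the requirement that the $\Po(d/2)$ offspring clauses of $v''$ carrying the opposite sign be absent, and the factor $1/2$ from averaging over $\sign(v'',b_j)$). Taking expectations, using the conditional independence of the subtrees rooted at distinct grandchildren, and absorbing a further factor $1/2$ via the symmetry $\MU_{\pi_{d,k},1,1}\disteq 1-\MU_{\pi_{d,k},1,1}$ from Proposition~\ref{prop_arnab} (which effectively signs the `active' discrepancy), I obtain
\begin{align*}
\Erw[\Delta_v^{(\ell)}]\;\le\;\gamma\,\Erw[\Delta_{v'}^{(\ell)}],\qquad \gamma\;=\;\frac{d(k-1)}{2}\bc{1-\tfrac12 e^{-d/2}}^{k-2}.
\end{align*}
By the very definition of $\dours(k)$, $\gamma<1$ whenever $d<\dours(k)$.

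\emph{Iteration and conclusion.} Iterating the contraction over all $\ell$ levels of $\TT^{(\ell)}$ and using the trivial bound $\Delta_v^{(\ell)}\le 1$ at the leaves yields $\Erw[\Delta_\root^{(\ell)}]\le\gamma^\ell\to 0$, which together with Lemma~\ref{lem:ExtremalBnd} completes the proof of~\eqref{eqprop:condMarg}.

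\emph{Principal obstacle.} The hard part is the sharp pointwise estimate in the contraction step. The clause-node $k$-SAT BP update is a non-linear rational function of its $k-1$ incoming messages, and a crude Lipschitz bound only recovers the weaker Montanari--Shah factor $(1-\tfrac14 e^{-d/2})\cdot(1-\tfrac12 e^{-d/2})^{k-2}$, off by a multiplicative factor $\approx 2$. Extracting the improved factor $\tfrac12(1-\tfrac12 e^{-d/2})^{k-2}$ will require exploiting (i) the explicit recursive construction of $\TAU^+$ in~\eqref{eq_extremalBoundDef}, which ensures that grandchild discrepancies carry a \emph{signed} contribution consistently aligned with the extremal boundary, and (ii) the $\pi_{d,k}$-symmetry from Proposition~\ref{prop_arnab}, which lets us average over the sign of the active grandchild rather than absorbing it into a worst-case absolute value. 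Together these two ingredients should account for the missing factor of~$2$ relative to~\cite{MS}.
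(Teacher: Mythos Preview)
Your diagnosis of the obstacle is accurate: a naive Lipschitz bound on the clause update only reproduces the Montanari--Shah contraction rate, and the improvement to $\dours$ requires extracting an additional factor of~$1/2$. But the mechanism you propose for that factor --- ``signed discrepancies aligned with $\TAU^+$'' together with the symmetry $\MU_{\pi_{d,k},1,1}\disteq 1-\MU_{\pi_{d,k},1,1}$ from Proposition~\ref{prop_arnab} --- does not do the job. The symmetry of $\pi_{d,k}$ is a statement about the \emph{limit} of the BP iterates; it says nothing about how discrepancies contract along the recursion, and cannot be invoked inside a per-level Lipschitz bound. Nor does your indicator-based pointwise inequality $\Delta_v\le\sum_{j,i}\Delta_{v'_{j,i}}\prod_{i'\neq i}\Ind\{\cdot\}$ follow from differentiating the BP update: the variable map $\eta\mapsto A/(A+B)$ is not additive in the clause contributions, and its gradient can be arbitrarily large when $A+B$ is small, so the probability-scale approach has no uniform Lipschitz control.

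The paper's argument differs in two essential ways. First, it works with the log-likelihood ratios $\ETA_x^{(\ell)}$ of~\eqref{eta} rather than probabilities; in this parametrisation the variable update is \emph{additive} over child clauses, so the contraction analysis decouples into per-clause terms. This forces a separate tail argument (Lemma~\ref{lem_SmallOnTop}, proved via the \pulp\ machinery) to ensure the $\ETA_x^{(\ell)}$ are finite near the root, after which the contraction is established in a weighted Wasserstein metric~\eqref{eq:DistDef} on the space of type-conditioned distributions. Second --- and this is where the extra $1/2$ actually comes from --- each grandchild is classified by pure-literal type $\{\rAllKid,\rPureP,\rPureM,\rNoKid\}$, and the type vector $r=(r_\AllKid,r_\PureP,r_\PureM,r_\NoKid)$ of a positive-sign clause is \emph{coupled} with the permuted vector $(r_\AllKid,r_\PureM,r_\PureP,r_\NoKid)$ of a negative-sign clause (with analogous bijections for the $\rPureP,\rPureM$ terms; see Lemma~\ref{lem_PairingE}). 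These bijections preserve the multinomial weight $P(r)\cdot r_t$ and align the dampening parameters so that the paired derivative terms can be maximised \emph{jointly}: the elementary inequality $\fun_\lambda(w)+\fun_\lambda(1-w)\le\frac{\lambda/2}{1-\lambda/2}$ of Lemma~\ref{lem:OptimizeAntipodal}, attained at $w=1/2$, then produces the factor $(1-e^{-d/2}/2)^{k-2}$ together with the prefactor $d(k-1)/2$. Neither the type-vector coupling nor the joint maximisation is captured by your symmetry heuristic; without them one is stuck at~$\dMS$.
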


The proof of Proposition~\ref{prop:condMarg} may seem delicate because the boundary condition $\TAU^+$ depends on the tree ${\TT}^{(\ell)}$.
To sidestep this problem, we generalise another idea from the work \cite{2sat} on random $2$-SAT to $k\geq3$ by introducing a quantity that allows us to prove~\eqref{prop:condMarg} but that behaves `Markovian' as we pass up and down the tree.
Specifically, for a variable $x$ of ${\TT}^{(\ell)}$ let $\TT_x^{(\ell)}$ be the sub-formula of ${\TT}^{(\ell)}$ comprising $x$ and its progeny.
Moreover, for a satisfying assignment $\tau\in S({\TT}^{(\ell)})$ let
\begin{align*}
  S({\TT}_x^{(\ell)},\tau)
  &=
  \cbc{\chi\in S({\TT}_x^{(\ell)}):
    \forall y\in V({\TT}_x^{(\ell)})\cap\partial^{2\ell}_{\TT}\root :
    \chi_y=\tau_y } \enspace,
  &
  Z({\TT}_x^{(\ell)},\tau)
  &=\abs{S({\TT}_x^{(\ell)},\tau)} \enspace.
\end{align*}
In words, $S({\TT}_x^{(\ell)},\tau)$ contains all satisfying assignments of ${\TT}_x^{(\ell)}$ that comply with the boundary condition induced by $\tau$.
Additionally, for $t=\pm1$ let
\begin{align*}
  S({\TT}_x^{(\ell)},\tau,t) &=\cbc{\chi\in S({\TT}_x^{(\ell)},\tau):\chi_x=t},&
  Z({\TT}_x^{(\ell)},\tau,t)&=\abs{S({\TT}_x^{(\ell)},\tau,t) }
\end{align*}
be the set and number of satisfying assignments of ${\TT}_x^{(\ell)}$ that agree with $\tau$ on the boundary and assign value $t$ to $x$.
Finally, let
\begin{align}\label{eta}
  \ETA_{x}^{(\ell)}
  &=
  \log\frac{Z({\TT}_x^{(\ell)},\TAU^+,\TAU^+(x))}
  {Z({\TT}_x^{(\ell)},\TAU^+,-\TAU^+(x))}
  \in\RR\cup\{\pm\infty\}
\end{align} %
be the log-likelihood ratio that gauges how likely a random satisfying assignment $\TAU$ of ${\TT}_x^{(\ell)}$ subject to the $\TAU^+$-boundary condition is to set $x$ to its designated value $\TAU^+(x)$ from \eqref{eq_extremalBoundDef}.
In terms of~\eqref{eta}, 
the proof of \Prop~\ref{prop:condMarg} comes down to showing that for $d<\dours(k)$,
\begin{align}\label{eqstrangeconv}
  \lim_{\ell\to\infty}\ETA_\root^{(\ell)}&=\log\left(\frac{\MU_{\pi_{d,k},1,1}}{1-\MU_{\pi_{d,k},1,1}}\right)&&\mbox{in distribution}.
\end{align}

For a start, the following lemma bounds the tails of  $\ETA_{x}^{(\ell)} $ for large enough $\ell$ and $x$ reasonably close to the root variable $\root$.

\begin{lemma}\label{lem_SmallOnTop}
  For every  $0<d< \dours(k)$ there exist $c=c(d,k)$ and a sequence $(\varepsilon_t)_t$ with $\lim_{t \to \infty}\varepsilon_t = 0$ such that for any $t>0$, $\ell > ct^c$ we have
  \begin{align}\label{eq_lem_SmallOnTop}
    \Pr \left[\max_{x \in \partial^{2t}\root}
    \;\left|\ETA_x^{(\ell)}\right| \le 2t^c\right]
    > 1 - \varepsilon_t
    \enspace.
  \end{align}
\end{lemma}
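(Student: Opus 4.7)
The plan is to establish Lemma~\ref{lem_SmallOnTop} via a quantitative contraction analysis of a Belief Propagation recursion for the log-likelihood ratios $\ETA_x^{(\ell)}$, running from the boundary of $\TT^{(\ell)}$ inwards toward $\root$. Let $\mu_y^{(\ell)}=(1+\exp(-\ETA_y^{(\ell)}))^{-1}$ denote the conditional probability that $y$ takes its $\TAU^+$-value. A direct computation using the extremal structure of $\TAU^+$ from~\eqref{eq_extremalBoundDef} yields, for a variable $x$ with child clauses $b_1,\ldots,b_{\vm}$ (where $\vm\disteq\Po(d)$) and grandchildren $y_{i,1},\ldots,y_{i,k-1}$ through $b_i$,
\[
\ETA_x^{(\ell)}
=-\sum_{i:\sign(x,b_i)=\TAU^+(x)}\log\bc{1-\prod_{j=1}^{k-1}\mu_{y_{i,j}}^{(\ell)}}
+\sum_{i:\sign(x,b_i)=-\TAU^+(x)}\log\bc{1-\prod_{j=1}^{k-1}\bc{1-\mu_{y_{i,j}}^{(\ell)}}}.
\]
The key observation is that a single child clause can inflate $|\ETA_x^{(\ell)}|$ only if \emph{all} $k-1$ grandchildren through that clause are simultaneously strongly biased in the critical direction.

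Next I would quantify a one-step contraction. Set $p_s=\pr[|\ETA_y^{(\ell)}|>h_s]$ for $y$ at distance $2s$ from $\root$, with $h_s$ a threshold to be chosen polynomial in $s$. If all grandchildren of $x$ through a clause $b_i$ satisfy $|\ETA_{y_{i,j}}^{(\ell)}|\leq h_s$, the contribution of $b_i$ to $\ETA_x^{(\ell)}$ is deterministically bounded by a function of $h_s$. So the event that $|\ETA_x^{(\ell)}|$ exceeds $h_{s-1}$ forces at least one of the $\Po(d)$ child clauses of $x$ to have all $k-1$ grandchildren strongly biased. A careful analysis of this event --- disentangling grandchildren biased recursively via their own subtrees from those biased trivially by being pure literals at the sibling level --- produces a recursion of the form
\[
p_{s-1}\;\leq\;\frac{d(k-1)}{2}\bc{1-\exp(-d/2)/2}^{k-2}\cdot p_s\;+\;o(p_s),
\]
where the leading coefficient is precisely the quantity appearing in the definition of $\dours(k)$. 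Strict subcriticality thus yields geometric decay of $p_s$.

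I would then iterate this contraction for $s$ running from $\ell$ downwards, starting from the trivial boundary bound $p_\ell\leq 1$. After $\ell-s$ iterations of the subcritical recursion, $p_s$ decays geometrically, eventually falling below any prescribed polynomially small threshold. A union bound over the at most $(d(k-1))^t$ variables at distance $\leq 2t$ from $\root$ then yields~\eqref{eq_lem_SmallOnTop}, provided $\ell-t$ is large enough --- polynomially in $t$ --- to absorb both the combinatorial factor and a tail-amplification loss. Choosing $h_t=2t^c$ and $\ell>ct^c$ with $c=c(d,k)$ sufficiently large completes the argument.

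The main obstacle is the second step: extracting the exact factor $(1-\exp(-d/2)/2)^{k-2}$ rigorously from the recursion requires the subtle coupling between the $\ETA$-propagation up a single clause and the pure-literal structure of the $k-2$ sibling subtrees alluded to in \Sec~\ref{sec_uniq_outline}. Achieving a polynomial rather than merely exponential bound on $h_t$ also demands sharp (doubly exponential) tail estimates on the BP ingredients, in the spirit of the \pulp\ analysis of \Sec~\ref{sec_pulp}.
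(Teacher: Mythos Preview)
Your approach differs substantially from the paper's and contains a genuine gap.

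The paper's proof is a short direct application of the \pulp\ analysis from \Sec~\ref{sec_pulp_pr} and uses no contraction argument whatsoever. For $x\in\partial^{2t}\root$ and $s\in\PM$, let $\vN_{x,s}$ be the size of the \pulp\ closure of $\{s\cdot x\}$ in the subtree $\TT_x^{(\ell)}$. Provided $\vN_{x,s}<\ell-t$ (so the closure does not reach the imposed boundary), the switching argument of \Lem~\ref{lem_Ichi} gives $Z(\TT_x^{(\ell)},\tau)\leq 2^{\vN_{x,s}}Z(\TT_x^{(\ell)},\tau,s)$ for \emph{every} boundary condition $\tau$, whence $|\ETA_x^{(\ell)}|\leq\vN_{x,+1}+\vN_{x,-1}$. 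The stretched-exponential tail on $\vN_{x,\pm1}$ from \Cor~\ref{lem:SMSizeGivenHeight}, which needs only $d<\dpure(k)$ and not $d<\dours(k)$, then combines with a union bound over $|\partial^{2t}\root|\leq(O(dk))^t$ (\Lem~\ref{lem_crude}) to give~\eqref{eq_lem_SmallOnTop}. The bound on $\ETA_x^{(\ell)}$ is thus local to the first $\vN_{x,\pm1}$ levels below $x$; no recursive propagation from the boundary is involved.

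The gap in your plan is the implication ``$|\ETA_x^{(\ell)}|>h_{s-1}$ forces some child clause to have all $k-1$ grandchildren strongly biased''. This is false for polynomial thresholds $h_s$. If a positive clause $b_i$ has even \emph{one} grandchild $y$ with $\ETA_y^{(\ell)}\leq h_s$, then $\mu_y\leq(1+e^{-h_s})^{-1}$ and the contribution of $b_i$ is bounded only by $-\log\bigl(e^{-h_s}/(1+e^{-h_s})\bigr)\leq h_s+\log 2$, not by an absolute constant. With $\Po(d)$ clauses one therefore gets $|\ETA_x^{(\ell)}|$ of order $\Po(d)\cdot h_s$ with constant probability, without any clause having all grandchildren above threshold; for $h_{s-1}\sim h_s$ polynomial in $s$ this already makes $p_{s-1}$ bounded away from zero. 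The factor $\tfrac{d(k-1)}{2}(1-e^{-d/2}/2)^{k-2}$ you target is the $W_1$-contraction constant of \Prop~\ref{prop_BPPureLit}; that contraction acts on \emph{distributions} via the coupling of \Sec~\ref{sec_uniq_outline} between positive and negative clauses, and does not transfer to a pointwise tail recursion. Your final paragraph effectively concedes this obstacle; the paper sidesteps it entirely via the combinatorial \pulp\ bound, for which the weaker hypothesis $d<\dpure(k)$ already suffices.
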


\noindent
The proof of \Lem~\ref{lem_SmallOnTop} rests on combinatorial arguments reminiscent of the analysis of \pulp.

A key feature of the definition~\eqref{eta} is that the random variables $\vec\eta_x^{(\ell)}$ exhibit a `reverse Markovian' behaviour.
This is because $\vec\eta_x^{(\ell)}$ depends only on $\TAU^+(x)$ and the part $\TT_x^{(\ell)}$ of the tree pending on $x$.
Furthermore, because the distribution of the random tree $\TT_x^{(\ell)}$ is symmetric with respect to sign flips, even the dependence on the value $\TAU^+(x)$ can be eliminated.
All we need to keep in mind is that the values $\TAU^+(y)$ for $y\in V(\TT_x^{(\ell)})$ are constructed from the value $\TAU^+(x)$ in accordance with the recurrence~\eqref{eq_extremalBoundDef}.
Thus, by flipping all signs in the tree $\TT_x^{(\ell)}$ if necessary, we could assume without loss that $\TAU^+(x)=1$ without changing the distribution of $\vec\eta_x^{(\ell)}$ with respect to the randomness of $\TT_x^{(\ell)}$.
As a consequence, it is possible to set up a recurrence that expresses the log-likelihood ratios $\vec\eta_x^{(\ell)}$ of variables $x$ at distance $q$ from $\root$ in terms of the $\vec\eta_y^{(\ell)}$ for $y$ at distance $q+2$ from $\root$.

Due to the recursive nature of the random tree $\TT$, it suffices to set up this recurrence for the root $\root$ of the tree.
In other words, to prove \eqref{eqstrangeconv} we just need a recurrence that expresses the distribution of the random variable $\vec\eta_{\root}^{(\ell+1)}$
in terms of the law of $\vec\eta_{\root}^{(\ell)}$ for $\ell\geq0$.
A bit of reflection (see \Cl~\ref{Lemma_Noela}), reveals that the corresponding distributional operator
\begin{align*}
  \LDEP_{d,k}&:\cP((-\infty, \infty])\to\cP((-\infty, \infty])\enspace,&\rho\mapsto\hat\rho=\LDEP_{d,k}(\rho)
\end{align*}
has the following shape.
For a distribution $\rho\in\cP((-\infty, \infty])$ let $(\vec\eta_{\rho,i,j})_{i,j\geq1}$ be a family of random variables with distribution $\rho$.
Moreover, let $(\vs_i)_{i\geq1}$ be a sequence of uniformly random $\pm1$-valued random variables and let $\vd\disteq\Po(d)$.
All of these random variables are mutually independent.
Additionally, for $q\geq0$ and $z_1,\ldots,z_q\in\RR\cup\{\pm\infty\}$ define
\begin{align}
  \Pfun(z_1,\ldots,z_q)&= \prod_{i=1}^{q} \frac{1 +\tanh(z_i/2)} {2} \enspace .\label{eq_ProdMacro}
\end{align}

Then $\hat\rho=\LDEP_{d,k}(\rho)$ is the distribution of the random variable
\begin{align}\label{eq:NudgedFixedPoint}
  -\sum_{i=1}^{\vd}&
  \vs_i
  \cdot
  \log\left(
  1-
  {\Pfun\left(\vs_i\cdot
    \bigl(
    {\ETA_{\rho, i, 1}}, \ldots, {\ETA_{\rho, i, k-1}}
    \bigr)
    \right)}
  \right)\enspace.
\end{align}

Ultimately we will derive \eqref{eqstrangeconv}, and thereby \Prop~\ref{prop:condMarg}, from \Lem~\ref{lem_SmallOnTop} and a contraction argument.
However, this is not quite as straightforward as one might be inclined to expect.
Indeed, at first glance, a natural approach to proving \eqref{eqstrangeconv} from \Lem~\ref{lem_SmallOnTop} seems to be to show that $\LDEP_{d,k}$ is a contraction, say, with respect to the $W_1$-metric.
This is indeed carried out in~\cite{2sat} for $k=2$, where it is shown that $\LDEP_{d,2}$ contracts for all $0<d<2$, i.e., right up to the random 2-SAT satisfiability threshold.
However, for $k\geq3$ we can only show that $\LDEP_{d,k}$ contracts for $d<2/(k-1)$, a value well below $\dours(k)$ and short of the correct asymptotic order~\eqref{eqpureasymptotic}.

\subsubsection{Pure and mixed literals}
To cover a larger range of $d$ we borrow from~\cite{MS} the idea of expressly taking into account pure literals.
To elaborate, while $\LDEP_{d,k}$ describes how the law of $\vec\eta_\root^{(\ell+1)}$ results from that of $\vec\eta_\root^{(\ell)}$, the operator fails to take into account that $\root$ itself as well as some of the grandchildren of $\root$ in $\TT$ may be pure literals.
However, the pure literal property has a marked effect on the log-likelihood ratios.
For if, say, $\root$ only appears positively, then a simple double counting argument shows that $\vec\eta_\root^{(\ell)}\geq0$ for all $\ell$.
By extension, pure literals among the grandchildren of $\root$ have a `dampening' effect and may thus improve the range of $d$ for which we can establish contraction.

For a variable node $x$ of $\TT$, let us denote by $\TT_x$ the subtree of $\TT$ rooted at $x$ and 
containing its progeny. Leveraging the above observation, we classify a variable $x$ of $\TT$ as $\rAllKid$, $\rPureP$, $\rPureM$, or $\rNoKid$, depending on whether $x$ appears 
both positively and negatively in $\TT_x$, only positively, only negatively, or whether $x$ has no children at all, respectively.
Furthermore, instead of just tracing the law of $\vec\eta_\root^{(\ell)}$ for $\ell\geq0$, we study the four separate conditional distributions 
given the type $\rAllKid,\rPureP,\rPureM$ or $\rNoKid$ of $\root$. Of course, the distribution of $\vec\eta_\root^{(\ell)}$ given type $\rNoKid$ 
(i.e., $\root$ has no children) is just the atom at zero for all $\ell$.

To describe the evolution of the other distributions we introduce the operator
\begin{align}\nonumber
  \LDELitS{d,k}:
  \cP  (-\infty,\infty] \times \cP (0, +\infty] &\times \cP(-\infty, 0] 
  \to
  \cP (-\infty,\infty] \times  \cP(0, +\infty] \times  \cP(-\infty, 0] 
  \enspace,\\
  \text{ with }
  (\rho_{\AllKid}, \rho_{\PureP}, \rho_{\PureM})& \mapsto
  (\hat\rho_{\AllKid}, \hat\rho_{\PureP}, \hat\rho_{\PureM})= \LDELitS{d,k}(\rho_{\AllKid}, \rho_{\PureP}, \rho_{\PureM})
  \label{eqLDELitS}
\end{align}
defined as follows.
Let $\vdpc, {\vdpc}^{\prime}, \vdmc, {\vdmc}^{\prime}$ be Poisson variables with parameter $d/2$, conditioned on being positive.
Moreover, let
$ \vec{r}_1=\left(\vec{r}_{\AllKid,1}, \vec{r}_{\PureP, 1}, \vec{r}_{\PureM,1}, \vec{r}_{\NoKid,1}\right), \; \vec{r}_2=\left(\vec{r}_{\AllKid,2}, \vec{r}_{\PureP,2}, \vec{r}_{\PureM,2}, \vec{r}_{\NoKid,2}\right), \ldots$
be multinomial variables with ${k-1}$ trials and probabilities
\begin{align}\label{eq_def_multipr}
  p_{\AllKid} = (1-e^{-d/2})^2, \enspace \quad
  p_{\PureP} = p_{\PureM}=e^{-d/2}(1-e^{-d/2}), \quad\enspace
  p_{\NoKid} = e^{-d}
  \enspace.
\end{align}
For $i,j \ge 1$ let $\ETA_{\AllKid,i,j}$, $\ETA_{\PureP,i,j}$, $\ETA_{\PureM,i,j}$ be random variables with law $\rho_{\AllKid}$, $\rho_{\PureP}$, $\rho_{\PureM}$, respectively.
All of the aforementioned random variables are mutually independent.
Further, for a sign $\varepsilon \in \{\pm 1\}$ and a vector $r=(r_{\AllKid}, r_{\PureP}, r_{\PureM}, r_{\NoKid})$ 
of non-negative integers with $r_{\AllKid} + r_{\PureP} + r_{\PureM} + r_{\NoKid} = {k-1}$ and $i\geq0$, $1\leq j\leq4$ we let
\begin{align}\label{eq:defOfXI}
  \LArg_{i,j}(\eps,r)=  1 -  \frac{1}{2^{{r}_{\NoKid}}} \cdot
  \Pfun
  \left(\varepsilon \bigl( {\ETA_{\AllKid,4i+j,1}},\ldots,{\ETA_{\AllKid,4i+j,r_{\AllKid}}}\bigr) \right)
  \Pfun
  \left(\varepsilon\bigl( {\ETA_{\PureP,4i+j,1}},\ldots,{\ETA_{\PureP,4i+j,r_{\PureP}}}\bigr) \right)
  \Pfun\left(\eps
  \bigl( {\ETA_{\PureM,4i+j,1}},\ldots,{\ETA_{\PureM,4i+j,r_{\PureM}}}\bigr)
  \right)
  .
\end{align}
The r.h.s.\ of~\eqref{eq:defOfXI} amounts to rewriting the argument of the logarithm in \eqref{eq:NudgedFixedPoint} when the number of variables of each type is distributed according to $r$.
Finally, let
\begin{align}\label{eq:defOfXIWithoutr}
  \LArg_{i,j}=\LArg_{i,j}\bc{(-1)^{j+1},\vec r_{4i+j}}
  \enspace.
\end{align}
Then the operator~\eqref{eqLDELitS} maps $\rho_{\AllKid}, \rho_{\PureP}, \rho_{\PureM}$ to the distributions  $\hat\rho_{\AllKid}, \hat\rho_{\PureP}, \hat\rho_{\PureM}$ of the random variables 
\begin{align}
  \hat\rho_{\AllKid} &\disteq -\sum_{i =1}^{\vdpc}  \log\,\LArg_{i,1}+\sum_{i =1}^{\vdmc}  \log\,\LArg_{i,2} \label{eq:AllKidRecGUnq} \enspace,
  &
  \hat\rho_{\PureP} &\disteq -\sum_{i =1}^{{\vdpc}'}  \log\, \LArg_{i,3} \enspace,				  
  &
  \hat\rho_{\PureM} &\disteq +\sum_{i =1}^{{\vdmc}'}  \log\, \LArg_{i,4} \enspace.
\end{align}

\subsubsection{Coupling and contraction}
While Montanari and Shah~\cite{MS} do not write their proof of the lower bound $\dMS(k)\leq\duniq(k)$ in the language of distributional recurrences, translating their argument to the current formalism evinces two key differences by comparison to the approach that we are going to take.
First, Montanari and Shah establish contraction with respect to messages \emph{from clauses to variables}, instead of messages from variables to clauses 
as considered here. 

While this change of perspective may seem innocuous at first, working with respect to variables provides us with greater control over how the change in log-likelihood ratios propagates. 
In particular, working with variable-to-clause messages and taking into account the four variable types $\rAllKid,\rPureP,\rPureM,\rNoKid$ allows us to optimise the metric with respect to which we establish contraction.
Hence, for $t>0$ we endow the space 
$\cP  (-\infty,\infty] \times \cP (0, +\infty] \times \cP(-\infty, 0]$ 
with the metric
\begin{align}
  \dist_{t}\left(
  \left(\rho_{\AllKid}, \rho_{\PureP}, \rho_{\PureM}\right), \left({\rho}_{\AllKid}', {\rho}_{\PureP}', {\rho}_{\PureM}'\right)
  \right)
  =
  \left(1-e^{-t/2}\right)
  \cdot
  W_1\left(\rho_{\AllKid}, {\rho}_{\AllKid}'\right)
  +
  e^{-t/2}\cdot
  W_1\left(\rho_{\PureP}, {\rho}_{\PureP}'\right)
  +
  e^{-t/2}\cdot
  W_1\left(\rho_{\PureM}, {\rho}_{\PureM}'\right)
  \enspace.
  \label{eq:DistDef}
\end{align}

The following proposition summarises the main step towards the proof of \Thm~\ref{thm_uniq}.

\begin{proposition}\label{prop_BPPureLit}
  For every $d< \dours(k)$, the operator $\LDELitS{d,k}$ is a contraction with respect to the metric $\dist_{d}$.
\end{proposition}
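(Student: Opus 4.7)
The plan is to prove contraction via a synchronous coupling argument, reducing the claim to a per-clause Lipschitz estimate that matches the weighted metric $\dist_{d}$. Given two triples $\vec\rho = (\rho_{\AllKid}, \rho_{\PureP}, \rho_{\PureM})$ and $\vec\rho' = (\rho'_{\AllKid}, \rho'_{\PureP}, \rho'_{\PureM})$, I would first form optimal $W_1$-couplings $(\ETA_{T,i,j}, \ETA'_{T,i,j})$ independently for every type $T \in \{\AllKid, \PureP, \PureM\}$ and every pair of indices, and re-use the same Poisson variables $\vdpc, \vdmc, \vdpc', \vdmc'$ and the same multinomial vectors $\vec r_i$ in both recursions defining $\LDELitS{k,d}(\vec\rho)$ and $\LDELitS{k,d}(\vec\rho')$. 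Via the triangle inequality applied term-wise to the Poisson sums in~\eqref{eq:AllKidRecGUnq} and Wald's identity for sums indexed by independent Poisson variables, each output Wasserstein distance is bounded by the relevant Poisson mean times a single per-clause expectation $\ex\brk{|\log\LArg - \log\LArg'|}$.

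The core technical step is a single-clause sensitivity bound. Parametrise each input via $u_{T,j} := (1+\tanh(\eps\eta_{T,j}/2))/2$, so that $du_{T,j}/d\eta_{T,j} = \eps u_{T,j}(1-u_{T,j})$. Writing $A = 2^{-r_{\NoKid}}\prod_{(T,j)} u_{T,j}$ for the argument of $\log(1-A)=\log\LArg$, a direct computation gives
\begin{align*}
  \frac{\partial(-\log\LArg)}{\partial\eta_{T,j}} = \frac{\eps\cdot u_{T,j}(1-u_{T,j})\cdot B_{T,j}}{1 - u_{T,j} B_{T,j}},\qquad B_{T,j} := 2^{-r_{\NoKid}}\prod_{(T',j')\neq(T,j)}u_{T',j'}.
\end{align*}
The key observation is that the sign convention forces one pure-type factor to be tame: for the $\AllKid$ output with $\eps=+1$, each $\PureM$-child contributes $u\leq 1/2$ almost surely (since $\eta_{\PureM}\leq 0$), and symmetrically for $\eps=-1$. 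Together with the mirror symmetry of the Galton-Watson tree, which yields $\ex[u_{\AllKid}]=1/2$ and $\ex[u_{\PureP}\mid \eps=+1]+\ex[u_{\PureM}\mid \eps=+1]=1$, this allows one to evaluate the expected Jacobian over the multinomial law of the remaining $k-2$ children as a per-child factor $1-\tfrac{1}{2}e^{-d/2}$, giving an overall contribution of $\bc{1-\tfrac{1}{2}e^{-d/2}}^{k-2}$.

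Assembling, every output component inherits a bound of the form $W_1(\hat\rho_{T},\hat\rho'_{T})\leq \sum_{T'}\alpha_{T,T'}(d)\cdot W_1(\rho_{T'},\rho'_{T'})$ where $\alpha_{T,T'}(d)$ is a non-negative coefficient proportional to $(1-\tfrac{1}{2}e^{-d/2})^{k-2}$. The asymmetric weights $(1-e^{-d/2})$ and $e^{-d/2}$ in $\dist_{d}$ are calibrated precisely so that these three row-bounds coalesce into a single contraction constant
\begin{align*}
  \lambda(d) = \frac{d(k-1)}{2}\bc{1-\tfrac{1}{2}e^{-d/2}}^{k-2},
\end{align*}
which is strictly less than $1$ precisely when $d<\dours(k)$, by the very definition~\eqref{eqmain}. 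The factor $d(k-1)/2$ collects the relevant Poisson expectations ($d/(1-e^{-d/2})$ for the $\AllKid$ component, $d/(2(1-e^{-d/2}))$ for the pure components) together with the multinomial size $k-1$, and is balanced against the metric weights.

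The main obstacle is handling the logarithmic singularity of $\log(1-\cdot)$: the argument $A$ can approach $1$ with non-negligible probability when many children have large log-likelihood ratios, so that a naive Lipschitz bound would diverge. The resolution rests on the pointwise cancellation in the displayed derivative, where the factor $u_{T,j}(1-u_{T,j})$ in the numerator vanishes precisely in the regime that causes $1-u_{T,j}B_{T,j}$ to vanish. Converting this pointwise cancellation into a usable integrated bound requires exploiting the sign convention forcing one pure-type factor to be at most $1/2$ and then combining this with the mirror-symmetry identities to collapse the Jacobian expectation to the clean per-child factor $1-\tfrac{1}{2}e^{-d/2}$. This is the step where the specific threshold $\dours(k)$ enters, and where the improvement over the $\dMS(k)$ bound of~\cite{MS} originates: working with variable-to-clause messages and the four-way type decomposition permits the tighter weighted metric $\dist_{d}$ that the clause-to-variable formalism cannot capture.
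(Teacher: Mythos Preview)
Your setup is sound — the synchronous coupling, Wald's identity, and the derivative formula $\partial(-\log\LArg)/\partial\eta_{T,j} = \eps u_{T,j}(1-u_{T,j})B_{T,j}/(1-u_{T,j}B_{T,j})$ all match the paper's approach, and you correctly identify the target constant $\lambda(d)=\tfrac{d(k-1)}{2}(1-\tfrac12 e^{-d/2})^{k-2}$.

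The gap is in the step that produces the factor $(1-\tfrac12 e^{-d/2})^{k-2}$. You invoke ``mirror symmetry of the Galton--Watson tree'' yielding $\ex[u_{\AllKid}]=\tfrac12$ and $\ex[u_{\PureP}\mid\eps=+1]+\ex[u_{\PureM}\mid\eps=+1]=1$. But in a contraction argument the input triple $(\rho_{\AllKid},\rho_{\PureP},\rho_{\PureM})$ is \emph{arbitrary}; there is no reason for $\ex[u_{\AllKid}]$ to equal $\tfrac12$, and the tree symmetry is a property of the fixed point, not of generic inputs. So the ``expected Jacobian'' computation as you describe it does not go through, and without a further idea you would only recover a per-term bound of the Montanari--Shah type, i.e.\ contraction only up to $\dMS(k)<\dours(k)$.

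What the paper actually does to reach $\dours(k)$ is a combinatorial coupling between positive and negative clauses that you do not mention. After bounding each partial derivative by $\fun_\lambda(w)=\lambda w(1-w)/(1-\lambda w)$ with $\lambda=2^{-r_{\NoKid}-r_{\PureM}}$ (for $\eps=+1$) or $\lambda=2^{-r_{\NoKid}-r_{\PureP}}$ (for $\eps=-1$), the paper pairs the $\eps=+1$ term indexed by $r=(r_{\AllKid},r_{\PureP},r_{\PureM},r_{\NoKid})$ with the $\eps=-1$ term indexed by $r'$ obtained by swapping the $\PureP$ and $\PureM$ slots (with a shift by~$1$ for the pure types). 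The point is twofold: (i) these bijections preserve the coefficient $P(r)\cdot r_T$, so the pairing is exact; and (ii) the paired derivatives have the \emph{same} $\lambda$ but antipodal arguments $w$ and $1-w$, so one can bound $\fun_\lambda(w)+\fun_\lambda(1-w)\le \lambda/(2-\lambda)\le\lambda$ jointly rather than each term separately. Only after this pairing does the multinomial sum over the remaining $k-2$ children collapse to $(p_{\AllKid}+p_{\PureP}+\tfrac12 p_{\PureM}+\tfrac12 p_{\NoKid})^{k-2}=(1-\tfrac12 e^{-d/2})^{k-2}$. This coupling is the mechanism that replaces your appeal to input symmetry, and it is the source of the improvement over~\cite{MS}.
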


The second key difference between~\cite{MS} and the present approach will emerge in the proof of \Prop~\ref{prop_BPPureLit} itself.
As we are about to see, leveraging the four variable types enables us to carry out a sharper bound on the derivative of our operator $\LDELitS{d,k}$.
This comes in the form of a subtle combinatorial coupling between variable types among clauses with opposite signs.
To explain this, we recall that $\LDELitS{d,k}$ describes how the laws of the log-likelihood ratios 
$\rho_{\AllKid}, \rho_{\PureP}$, and $\rho_{\PureM}$, evolve given the corresponding laws of the variables in one generation below. 
Recall also that we are always considering the positive boundary condition, i.e., the one maximising the value of each log-likelihood ratio. 

Let us write $\rho = (\rho_{\AllKid}, \rho_{\PureP}, \rho_{\PureM})$, $\rho' = (\rho'_{\AllKid}, \rho'_{\PureP}, \rho'_{\PureM})$, and 
$\hat\rho, \hat\rho'$ for their corresponding images under the operator $\LDELitS{d,k}$. We wish to establish that $\dist_d(\hat\rho, \hat\rho') < c\cdot \dist_d(\rho, \rho')$, for
some constant $c=c(d,k) < 1$. 
We call a clause $a$ positive if it contains its parent variable as a direct literal; otherwise, 
we call $a$ negative.
The change between the output distributions $\hat\rho, \hat\rho'$ describing the log-likelihood law
of, say, variable $\root$, comes from two sources: the positive and the negative children of $\root$. Observe that there is no 
obvious symmetry between the two, as we have imposed the positive boundary condition, and therefore, the influence of positive clauses is typically more pronounced. 
In turn, the change caused by each clause can be further attributed to that of the $k-1$ grandchildren variables it features. To be more precise, let us consider
the contribution of a single positive clause $a$. Let us write
$\vr=(\vr_{\AllKid}, \vr_{\PureP}, \vr_{\PureM}, \vr_{\NoKid})$ for the type-distribution of the children variables of $a$, where $\vr$ follows the law described in 
\eqref{eq_def_multipr}. Consider also an arbitrary enumeration of the variables of each type $t \in \{\rAllKid, \rPureP, \PureM\}$,  and write 
$\diffr_i^{t}(z, \vr; +1)$ for the magnitude of the partial derivative of the message clause $a$ sends to $\root$, with respect to the message clause $a$ 
receives from its $i$-th variable of type $t$. Then, the expected contribution of clause $a$ to the distance $\dist(\hat\rho, \hat\rho')$ is bounded in terms of
$\diffr_i^{t}(z, \vr; +)$'s as follows
\begin{align}
  \Erw\brk{
    \sum_{i=1}^{\vec{r}_{\AllKid}}
    \left|\int_{{\ETA}_{\AllKid, i}}^{{\ETA}'_{\AllKid, i}}\!
    \diffr^{\AllKid}_{i}(w_i, \vr; +1)
    \dd w_i
    \right|
    +
    \sum_{j=1}^{\vec{r}_{\PureP}}
    \!\left|
    \int_{{\ETA}_{\PureP, j}}^{{\ETA}'_{\PureP, j}}\!
    \diffr^{\PureP}_{j}(y_j, \vr; +1)
    \dd y_j\right|
    +
    \sum_{\ell=1}^{\vec{r}_{\PureM}}
    \left|
    \int_{{\ETA}_{\PureM, \ell}}^{{\ETA}'_{\PureM, \ell}}
    \!
    \diffr^{\PureM}_{\ell}(z_\ell, \vr; +1)
    \dd z_\ell
    \right|
  }
  \enspace,
  \label{eqtypeContr}
\end{align}
where $\ETA_{t,i},\ETA'_{t,i}$ follow the law of $\rho_t, \rho'_t$, respectively. 
Expanding the expectation with respect to the type-distribution $\vr$, and writing  $P(r)=\pr[\vr=r]$, for the probability 
of a vector $r=(r_{\AllKid}, r_{\PureP}, r_{\PureM}, r_{\NoKid})$, we rewrite \eqref{eqtypeContr} as
\begin{align}
  \sum_{r}P(r)
  \bc{
    {r}_{\AllKid}
    \cdot
    \ex\left|\int_{{\ETA}_{\AllKid, 1}}^{{\ETA}'_{\AllKid, 1}}\!
    \diffr^{\AllKid}_{1}(z, r; +1)
    \dd z
    \right|
    +
    {r}_{\PureP}
    \cdot
    \ex
    \!\left|
    \int_{{\ETA}_{\PureP, 1}}^{{\ETA}'_{\PureP, 1}}\!
    \diffr^{\PureP}_{1}(z, r; +1)
    \dd z\right|
    +
    {{r}_{\PureM}}
    \cdot
    \ex
    \left|
    \int_{{\ETA}_{\PureM, 1}}^{{\ETA}'_{\PureM, 1}}
    \!
    \diffr^{\PureM}_{1}(z, r; +1)
    \dd z
    \right|
  }
  \enspace.
  \label{eqtypeContr2}
\end{align}
The expected contribution of a negative clause is given 
by an expression similar to \eqref{eqtypeContr2}, albeit in terms of $\diffr_1^{t}(z, \vr; -)$, 
i.e.,
the partial derivative of the message $a \to \root$, with respect to the message from a variable of type $t$ to clause $a$. 
Specifically, the expected contribution of a negative clause reads: 
\begin{align}
  \sum_{r'}P(r')
  \bc{
    {r}'_{\AllKid}
    \cdot
    \ex\left|\int_{{\ETA}_{\AllKid, 1}}^{{\ETA}'_{\AllKid, 1}}\!
    \diffr^{\AllKid}_{1}(z, r'; -1)
    \dd z
    \right|
    +
    {r}'_{\PureP}
    \cdot
    \ex
    \!\left|
    \int_{{\ETA}_{\PureP, 1}}^{{\ETA}'_{\PureP, 1}}\!
    \diffr^{\PureP}_{1}(z, r'; -1)
    \dd z\right|
    +
    {{r}'_{\PureM}}
    \cdot
    \ex
    \left|
    \int_{{\ETA}_{\PureM, 1}}^{{\ETA}'_{\PureM, 1}}
    \!
    \diffr^{\PureM}_{1}(z, r'; -1)
    \dd z
    \right|
  }
  \enspace.
  \label{eqtypeContr2N}
\end{align}

It is not hard to see that pure literals have a `dampening' effect on each partial 
derivative $\diffr^{t}_1(z, r; \pm )$. Consider a clause $a$ whose children variables are distributed among the different types according to $r$. Then 
each derivative in \eqref{eqtypeContr2}--\eqref{eqtypeContr2N}, can be bounded in terms of the number of pure literals featured in $a$, 
excluding the variable of type $t$ with respect to which the derivative is taken. Notice that the
operator $\LDELitS{d,k}$ effectively incorporates the positive boundary condition by imposing the sign
of each variable with respect to its parent clause, $a$, to be $+$ if $a$ is positive, and $-$ if $a$ is negative. 
With that in mind, we see that if $a$ is a positive
clause,  the total number of pure literals it contains is just $r_{\PureM} + r_{\NoKid}$. On the other hand, if $a$ is 
negative, then the total number of pure literals it contains is  $r_{\PureP} + r_{\NoKid}$.
Bounding \emph{separately} each derivative $\diffr^{t}_i(z, r; \pm 1 )$ in \eqref{eqtypeContr2}--\eqref{eqtypeContr2N}, 
and invoking the mean value theorem, yields an upper bound on for the contraction constant~$c$.

However, we can do better by partitioning the derivatives in \eqref{eqtypeContr2}--\eqref{eqtypeContr2N} into groups, and optimising them \emph{jointly}.  
Indeed, a careful examination of the expression \eqref{eq:defOfXI}, 
reveals that, for example, any sum of the form $\diffr^{\PureP}(z, (*, *, r_{\PureM},r_{\NoKid}); +1)+\diffr^{\PureP}(z, (*,r_{\PureM}+1, *,r_{\NoKid}); -1)$ 
can be explicitly maximised, and the resulting maximum is smaller than the sum of maxima of the parts. 
At first sight, this seems to be of little use, if any, as in order to implement such a coupling between terms \eqref{eqtypeContr2}--\eqref{eqtypeContr2N}, 
we should also match their coefficients, that is, the quantity $P(r)\cdot r_{\PureP}$, must remain invariant under the coupling.
Somewhat unexpectedly, it turns out that the coupling $r \mapsto r'$ with $r' = (r_{\AllKid}, r_{\PureM}+1, r_{\PureP}-1, r_{\NoKid})$, enjoys both features. 
Similar couplings strategies (depicted in Figure~\ref{fig_coupling}) facilitate the maximisation of $\rPureP, \rPureM$-terms. The full proof of 
\Prop~\ref{prop_BPPureLit} can be found in \Sec~\ref{sec_prop:condMarg}. 

We conclude the section explaining how \Thm~\ref{thm_uniq} follows from the above.

\begin{figure}
  \includegraphics[width=12.5cm]{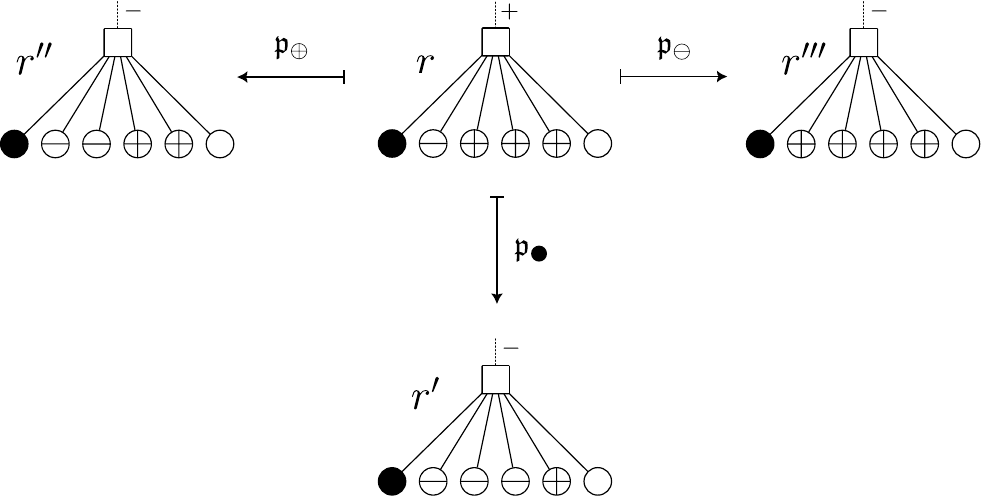}
  \caption{Example of a coupling between derivative terms in \eqref{eqtypeContr2}--\eqref{eqtypeContr2N}. 
    For vector $r$ and type $t\in \{\rAllKid, \rPureP, \rPureM\}$, we pair the term $\diffr^{t}(z, r; +1)$ in \eqref{eqtypeContr2} with
    the term $\diffr^{t}(z, \mathfrak{p}_t(r); -1)$ in \eqref{eqtypeContr2N}.}
  \label{fig_coupling}  
\end{figure}

\begin{proof}[Proof of \Thm~\ref{thm_uniq}]
  From the triangle inequality, and \Lem~\ref{lem:ExtremalBnd}, it is immediate to obtain \Thm~\ref{thm_uniq} from \Prop~\ref{prop:condMarg}.
\end{proof}

\subsection{Discussion}\label{sec_discussion}
The location of the random 2-SAT satisfiability threshold was pinpointed already in the 1990s~\cite{CR,Goerdt} essentially because the threshold coincides with the giant component phase transition of a directed random graph whose edges correspond to the clauses.
This argument also implies that both the pure literal algorithm and another efficient algorithm called unit clause propagation find satisfying assignments up to the satisfiability threshold \whp\
By contrast, in the case of random $k$-SAT with $k\geq3$ the satisfiability threshold is known only for~$k$ exceeding an undetermined (but large) constant $k_0$~\cite{DSS3}.
The proof is based on a sophisticated, physics-inspired second moment argument that significantly extends ideas from earlier work~\cite{nae,yuval,KostaSAT}.
Asymptotically in the limit of large $k$ the satisfiabiliy threshold reads
\begin{align}\label{eqdsat}
  \dsat(k)&=k\brk{2^k\log 2-\frac{1+\log 2}2}+\eps_k,&&\mbox{where }\lim_{k\to\infty}\eps_k=0.
\end{align}
Even though~\cite{nae,yuval,DSS3,KostaSAT} rely on the second moment method, they do not yield asymptotically tight estimates of the number of satisfying assignments for any regime of $d$.
This is because the second moment method is applied not to the number of satisfying assignments, but to another, exponentially smaller random variable.
The assumption that $k$ exceeds a large constant is used critically in~\cite{KostaSAT,DSS3} to ensure certain concentration and expansion properties.

For $3\leq k<k_0$ even the {\em existence} of a uniform satisfiability threshold remains an open problem, although a sharp threshold sequence that may vary with $n$ is known to exist~\cite{Friedgut}.
That said, an upper bound on the satisfiability threshold (sequence) that matches the so-called `1-step replica symmetry breaking' prediction from statistical physics can be verified using the interpolation method from mathematical physics~\cite{FranzLeone,MMZ,MPZ,PanchenkoTalagrand}.
However, the currently known lower bounds for small $k$ (say, $k=3,4,5$) fall short of this upper bound~\cite{yuval,Greg,KKL}.
For example, in the case $k=3$ the best current lower bound is $\dsat(3)\geq10.56$, while $\dsat(3)\approx12.801$ according to physics predictions~\cite{MMZ,MPZ}.

Thus, the satisfiability of random formulas continues to pose a substantial challenge for `small' $3\leq k<k_0$.
In light of this, a particularly satisfactory aspect of the present results is that they apply and are meaningful for all $k\geq3$.
In fact, comparing the asymptotic bounds~\eqref{eqpureasymptotic} and~\eqref{eqdsat}, we see that the Gibbs uniqueness threshold $\duniq(k)$ is much smaller than $\dsat(k)$ for large $k$	.
Thus, \Thm s~\ref{thm_main} and~\ref{thm_uniq} cover larger shares of the satisfiable regime of $d$ for smaller values of $k$; cf.~Table~\ref{tab_main}.

The best current lower bounds on the satisfiability thresholds for $k\geq4$ are non-constructive.
With respect to the algorithmic problem of finding a satisfying assignment of a random $k$-CNF the best current results for `small' $k$ are based on simple combinatorial algorithms, analysed via the method of differential equations~\cite{FriezeSuen,Greg,KKL}.
Asymptotically for large $k$ the best known efficient algorithm~\cite{better} succeeds up to
\begin{align}\label{eqdalg}
  \dalg(k)&=(1-\eps_k)2^k\log k&&\mbox{where }\lim_{k\to\infty}\eps_k=0,
\end{align}
about a factor of $\log(k)/k$ below~\eqref{eqdsat}.
There is evidence that certain types of algorithms do not succeed for much larger values of $d$, at least for enough large $k$~\cite{Barriers,Bresler,bpdec,Hetterich}.
Apart from the task of finding a satisfying assignment, an important line of work deals with the problem of counting and sampling satisfying assignments of random $k$-CNFs for large $k$~\cite{CGGGHPMM,Zongchen,HWY}.
The best current result~\cite{Zongchen} covers the regime $d\leq 2^k/k^c$ for an undetermined (large enough) constant $c>0$.
Since for large $k$ the bound $2^k/k^c$ significantly exceeds the pure literal threshold~\eqref{eqpureasymptotic}, it might be an interesting question whether ideas from~\cite{CGGGHPMM,Zongchen,HWY} can be used to verify the replica symmetric solution~\eqref{eqBFE} for $d$ beyond the Gibbs uniqueness threshold for large $k$.

Most of the prior work on the rigorous verification of the replica symmetric solution focuses on a soft version of random $k$-SAT, the so-called random $k$-SAT model at {\em inverse termperature} $\beta>0$~\cite{PanchenkoBook}.
The partition function $Z_\beta(\PHI)$ of this model, its the key quantity of interest, is defined as follows.
For a clause $a$ of the random formula $\PHI$ and a truth assignment $\sigma$ write $\sigma\models a$ if $\sigma$ satisfies clause $a$.
Then
\begin{align}\label{eqZbeta}
  Z_\beta(\PHI)&=\sum_{\sigma\in\PM^n}\exp\bc{-\beta\sum_{i=1}^{\vm}\vecone\{\sigma\not\models a_i\}}.
\end{align}
Thus, each assignment $\sigma$ contributes a summand equal to $\exp(-\beta)$ raised to the power of the number of clauses that $\sigma$ fails to satisfy.
In effect,
\begin{align}\label{eqZbetainfty}
  Z(\PHI)&=\lim_{\beta\to\infty}Z_\beta(\PHI).
\end{align}
A line of prior work~\cite{Biswas,Panchenko2,Talagrand} deals with the derivation of the `thermodynamic limit'
\begin{align}\label{eqthermo}
  \lim_{n\to\infty}\frac1n\ex\brk{\log Z_\beta(\PHI)}
\end{align}
for small $d$ and/or small $\beta$.
Specifically, these works verify that~\eqref{eqthermo} is given by the replica symmetric solution at inverse temperature $\beta$ from~\cite{MZ,MZ2} under the assumption
\begin{align}\label{eqBiswas}
  d(k-1)\min\cbc{1,6\beta\exp(4\beta)}&<1.
\end{align}
We observe that for large $\beta$ the bound \eqref{eqBiswas} holds only up to the giant component threshold $d=1/(k-1)$, where the replica symmetric solution trivially follows from the fact that Belief Propagation is exact on acyclic graphical models~\cite[Theorem 4.1]{MM}.
That said, a technique called the interpolation method shows that the replica symmetric solution yields an upper bound on~\eqref{eqthermo} for all $d,\beta>0$~\cite{FranzLeone,PanchenkoTalagrand}.
In particular, we will combine the interpolation method with a concentration argument in order to prove \Cor~\ref{cor_interpolation}.

According to physics predictions the `replica symmetric solution' from~\cite{MZ,MZ2} yields the correct value of both $\lim_{n\to\infty}n^{-1}\log Z_\beta(\PHI)$ for all $\beta>0$ and of $\lim_{n\to\infty}n^{-1}\log Z(\PHI)$ for all $d$ up to a threshold $\drsb(k)$ close to but strictly below the satisfiability threshold $\dsat(k)$ for all $k\geq3$~\cite{pnas}.
The threshold $\drsb(k)$ is known as the `1-step replica symmetry breaking phase transition' in physics jargon; its asymptotic value is predicted as
\begin{align}\label{eqdrsb}
  \drsb(k)&=k\brk{2^k\log 2-2\log 2}+\eps_k,&&\mbox{where }\lim_{k\to\infty}\eps_k=0.
\end{align}
Indeed, the interpolation method can be used to verify that the replica symmetric solution ceases to be correct for $\drsb(k)+\eps_k<d<\dsat(k)$ with $\eps_k\to0$.
Conversely, the replica symmetric solution is known to be correct for all $d$ and $\beta>0$ where a certain correlation decay condition is satisfied~\cite{COMR}, provided that $k$ is large enough.
Physics methods predict that this condition holds for all $\beta>0$ and all $d<\drsb(k)$~\cite{pnas}.

The aforementioned work of Montanari and Shah~\cite{MS} also deals with the soft variant of random $k$-SAT~\eqref{eqZbeta}, but allows for an inverse temperature $\beta=\beta(n)$ that tends to infinity slowly 
as $n\to\infty$.
Specifically, considering a small power $\beta=n^\delta$ enables Montanari and Shah to estimate the number of assignments that satisfy all but $o(n)$ clauses.
The proof combines an interpolation on $0\leq\beta\leq n^\delta$ with a contraction argument that improves over the previous contraction estimates from~\cite{Biswas,Panchenko2,Talagrand}.
Instead of the interpolation on $\beta$, in order to prove \Thm~\ref{thm_main} we use the Aizenman-Sims-Starr scheme.
Because we count actual satisfying assignments, this requires the careful combinatorial analysis of tail events, which is where the \pulp\ algorithm and its analysis come in.
Additionally, towards the proof of \Thm~\ref{thm_uniq} we devise an improved version of the contraction argument from~\cite{MS}.
Following Montanari and Shah, we also take advantage of the impact of pure literals on the Belief Propagation operator.
But we develop an improved coupling scheme that yields a better range of $d$ for which contraction occurs.
Additionally, once again because we deal with actual satisfying assignments, the proof of the Gibbs uniqueness property involves the analysis of the \pulp\ algorithm on a Galton-Watson tree in order to cope with unlikely events.

By comparison to the `soft' random $k$-SAT model~\eqref{eqZbeta}, few prior contributions deal with the actual number $Z(\PHI)$ of satisfying assignments.
A result of Abbe and Montanari~\cite{AM}  implies that a deterministic limit (in probability)
\begin{align}\label{eqAM}
  \lim_{n\to\infty}\frac1n\log Z(\PHI)
\end{align}
exists for Lesbegue-almost all $0<d<\dpure(k)$ for all $k\geq2$.
However, the proof, which is based on the interpolation method, does not reveal the value of~\eqref{eqAM}.
In fact, prior to the present work the limit \eqref{eqAM} was known only in two cases.
First, in the trivial regime $d<1/(k-1)$ below the giant component threshold.
Second, in the case $k=2$ for $0<d<\dsat(2)=2$~\cite{2sat}.
In both cases the limit \eqref{eqAM} coincides with the replica symmetric solution from~\cite{MZ}.
Beyond the convergence in probability, $\log Z(\PHI)$ is known to satisfy a central limit theorem in the case $k=2$~\cite{CCOMRRZZ}.

To compute the limit~\eqref{eqAM} in the case $k=2$ the contribution~\cite{2sat} employs the Aizenman-Sims-Starr scheme.
The couplings that we use towards the proofs of \Prop~\ref{lem_PHI''}--\ref{lem_PHI'''} generalise the argument from~\cite{2sat} to $k\geq3$.
The main technical novelty lies in the way that moderately unlikely events are treated.
Specifically, in the case $k=2$ the simple Unit Clause propagation algorithm, which essentially boils down to directed reachability, was sufficient to derive a tail bound similar to (and actually stronger than)~\eqref{eq_prop_bad}.
By contrast, since in the case $k\geq3$ the clauses ``branch out'', the analysis of tail events and, accordingly, the derivation of \eqref{eq_prop_bad} is far more delicate.
The core of this derivation is the detailed analysis of the \pulp\ algorithm right up to $\dpure(k)$.

Finally, by contrast to random $k$-SAT the validity of the replica symmetric solution is known for the optimal parameter range for several other random constraint satisfaction problems that enjoy certain symmetry properties.
Examples include random graph colouring or random $k$-NAESAT~\cite{cond,CKPZ}.
Due to the symmetry property
\footnote{A formal definition of `symmetry' could be that the uniform distribution on spins is a fixed point of the Belief Propagation operator on the respective Galton-Watson tree.}
the replica symmetric solution simply coincides with the first moment of the number of solutions.
In effect, in many symmetric problems it is even possible to precisely determine the limiting distribution of the number of solutions, which superconcentrates on the first moment~\cite{CKM}.
By contrast, in random $k$-SAT the first moment overshoots the typical number of satisfying assignments by an exponential factor~\cite{nae}, which is why random $k$-SAT is so much more delicate than symmetric problems.
That said, there is a regular variant of random $k$-SAT (where every variable appears an equal number of times positively and negatively) where symmetry and superconcentration are recovered~\cite{COW}.

\subsection{Organisation}\label{sec_org}
In the remaining sections we work our way through the proofs of \Thm s~\ref{thm_main} and \ref{thm_uniq}. Specifically, in
\Sec~\ref{sec_pulp_pr} we analyse the \pulp\ algorithm introduced in \Sec~\ref{sec_pulp}, proving \Lem s~\ref{lem_Ichi}-\ref{lem_EIchi},
which facilitates many of the subsequent results.

In \Sec~\ref{sec_prop_arnab} we establish \Prop~\ref{prop_arnab}, verifying that the quantities appearing in \Thm~\ref{thm_main} are well-defined.
The proof of \Cor~\ref{cor_interpolation} follows in  \Sec~\ref{sec_prop_interpolation}.
\Sec~\ref{sec_prop_aizenman} is devoted to the Aizenman-Sims-Starr scheme, and in particular the proof of \Prop~\ref{prop_aizenman}.
There we also complete the proof of \Thm~\ref{thm_main}.

Our final \Sec~\ref{sec_prop:condMarg}, deals with the remaining proofs toward establishing \Thm~\ref{thm_uniq}. We begin by proving \Lem~\ref{lem_SmallOnTop},
showing that
the log-likelihood ratios of the random Galton-Watson formula close to the root are bounded \whp. This enables us to
compare the output distribution of the non-random operator introduced in \Sec~\ref{sec_uniq_outline} with that of actual ratios on the random tree.
We then proceed with the proof of \Prop~\ref{prop_BPPureLit}, and conclude with the proof of \eqref{eqstrangeconv}, completing the proof of \Thm~\ref{thm_uniq}.

\section{Analysis of {\pulp}}\label{sec_pulp_pr}

\noindent
This section is concerned with the analysis of \pulp\ from \Sec~\ref{sec_pulp}.
In particular, we prove \Lem~\ref{lem_EIchi}.
But let us get the proof of \Lem~\ref{lem_Ichi} out of the way first.

\subsection{Proof of \Lem~\ref{lem_Ichi}}
Suppose that $\bar\cL \supseteq \cL$ satisfies {\bf PULP1}--{\bf PULP2}.
Let $U=\{|l|:l\in\bar\cL\}$  be the set of variables underlying the literals $\bar\cL$.
Moreover, let $\chi: U \to \PM$ be the truth assignments under which all literals of $l\in\bar\cL$ evaluate to `true'.
Due to {\bf PULP2}, the assignment $\chi$ is well defined.
Moreover, since $\cL\subseteq\bar\cL$, under $\chi$ all literals $l\in\cL$ evaluate to `true'.
Hence, for a satisfying assignment $\sigma\in S(\Phi)$ define an assignment $\sigma'$ by letting 
\begin{align*}
  \sigma'(x)=\vecone\{x\in U\}\chi(x)+\vecone\{x\not\in U\}\sigma(x) \enspace.
\end{align*}
Because $\bar\cL$ satisfies condition {\bf PULP1}, we have $\sigma'\in S(\Phi,\cL)$.
Finally, because for a satisfying assignment $\tau'\in S(\Phi,\cL)$ there are no more than $2^{|U|}=2^{|\bar\cL|}$ satisfying assignments $\tau\in S(\Phi)$ such that $\tau(x)=\tau'(x)$ for all $x \not\in U$, we obtain the desired bound $Z(\Phi)\leq 2^{|\bar\cL|}Z(\Phi,\cL)$.

\subsection{Turning a tree to \pulp}\label{sec_treepulp}
While the ultimate goal of this section is to study the \pulp\ algorithm on the random formula $\PHI'$ to prove \Lem~\ref{lem_EIchi}, a necessary preparation is to investigate the algorithm on the random Galton-Watson tree $\TT=\TT_{d,k}$.
Of course, since $\TT$ may be infinite we should formally confine ourselves to the finite trees $\TT^{(\ell)}$ truncated at the $2\ell$-th level from the root $\root$.
Hence, recalling~\eqref{eqheightdef}, we aim to estimate the height $\fn_\root(s,\TT^{(\ell)})$ for finite $\ell$.
That said, since these random variables are monotonically increasing in $\ell$, it makes sense to define
\begin{align}\label{eqhTT}
  \fn_\root(s,\TT)&=\lim_{\ell\to\infty}\fn_\root(s,\TT^{(\ell)})\in[0,\infty].
\end{align}
We point out that for $d<\dpure(k)$ the tails of $\fn_\root(s,\TT)$ decay at a doubly exponential rate.

\begin{lemma}\label{lem:Peeling_on_Trees}
  For any $d<\dpure(k)$ there exist $c_1=c_1(d,k),c_2=c_2(d,k)>0$ such that
  \begin{align*}
    \Pr \brk{ \fn_{\root}(\pm1,\TT) \geq h } &\leq c_1 \cdot \exp\bc{-\exp\bc{c_2 \cdot h}} && \text{ for every } h \ge 1 \enspace.
  \end{align*}
\end{lemma}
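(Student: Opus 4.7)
The plan is to reduce the tail bound to a standard branching-process recursion for pure literal elimination on the Galton-Watson tree $\TT$, and then to amplify the usual exponential convergence to a doubly exponential one by exploiting the exponent $k-1\geq 2$ featured in the recursion. First, I would unravel the definition~\eqref{eqheightdef}: $\fn_\root(s,\TT)\geq h$ iff some clause $a\in\partial^{-s}\root$ survives round $h-1$ of PE applied to $\TT[\root\mapsto s]$. Such an $a$ has $k-1$ variables besides $\root$, say $y_1,\ldots,y_{k-1}$, each rooting an independent one-level-deeper Galton-Watson subtree, and $a$ survives round $h-1$ only if none of the $y_i$ becomes pure with sign matching $\sign(y_i,a)$ within $h-1$ rounds of PE on its own subtree. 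Writing $\phi_h$ for the probability that a subtree-root variable $y$ becomes pure with matching parent sign within $\leq h$ rounds (a single quantity handles both signs by symmetry) and $g_h=1-\phi_h$, the independence of the $\vd^-\sim\Po(d/2)$ clauses in $\partial^{-s}\root$ and of their attached subtrees, combined with the Poisson PGF, yields $\Pr\brk{\fn_\root(s,\TT)\geq h}\leq 1-\exp(-(d/2)g_{h-1}^{k-1})\leq (d/2)g_{h-1}^{k-1}$.

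Second, I would derive the same-flavour recurrence for $\phi_h$. A variable $y$ becomes pure positive in $y$'s subtree within $h$ rounds iff each negative-sign child clause $b$ of $y$ is removed within $h-1$ rounds, which happens iff at least one of the $k-1$ siblings of $y$ in $b$ saves $b$ within $h-1$ rounds. Dropping the self-referential contribution of $y$ itself (a monotone simplification that only slows PE) and applying the Poisson PGF once more, I obtain
\begin{align*}
  \phi_h\geq \exp\bc{-(d/2)(1-\phi_{h-1})^{k-1}},\qquad \phi_0=e^{-d/2},
\end{align*}
equivalently $g_h\leq (d/2) g_{h-1}^{k-1}$.

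Third, I would invoke the classical fact that $\phi=1$ is the only fixed point of the map $\phi\mapsto\exp(-(d/2)(1-\phi)^{k-1})$ on $[\phi_0,1]$ precisely for $d<\dpure(k)$, the very calculation underpinning~\eqref{eqpure}. Since the iteration is monotone increasing, $g_h\to 0$, and some $h_0=h_0(d,k)$ satisfies $g_{h_0}<(d/2)^{-1/(k-2)}$. From $h_0$ onwards, the recursion $g_{h+1}\leq (d/2)g_h^{k-1}$ with $k-1\geq 2$ iterates into $g_{h_0+t}\leq C_0^{-(k-1)^t}$ for some $C_0=C_0(d,k)>1$, a doubly exponential bound. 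Combining this with the first display, taking a union over $s=\pm 1$, and absorbing the fixed offset $h_0$ into $c_1$ while choosing $c_2<\log(k-1)$, I obtain the claimed tail $c_1\exp(-\exp(c_2 h))$ uniformly in $h\geq 1$.

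The principal obstacle I anticipate is the fixed-point analysis, in particular verifying both uniqueness of the fixed point at $\phi=1$ precisely for $d<\dpure(k)$ and that the base point $\phi_0=e^{-d/2}$ lies in its basin of attraction; this recovers the Broder--Frieze--Upfal/Molloy analysis of the pure literal threshold. A secondary care point is justifying the monotone coupling that removes the self-referential term from the subtree recursion.
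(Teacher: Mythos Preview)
Your proposal is correct and takes essentially the same approach as the paper: derive the branching recursion $p_h = 1 - \exp(-(d/2)p_{h-1}^{k-1})$ for the survival probability and exploit the exponent $k-1\geq 2$ to amplify geometric decay to doubly exponential decay. The paper sidesteps both of your anticipated obstacles: by running PE on $\TT[\root\mapsto s]$ rather than on $\TT$ itself the self-referential term never arises and the recursion is an \emph{equality}, and the convergence $p_h\to 0$ then follows from the one-line ratio bound $(p_{h+1}/p_h)^{k-1}=d\cdot(1-e^{-z/2})^{k-1}/z\big|_{z=dp_h^{k-1}}\leq d/\dpure(k)<1$, so no fixed-point analysis is needed.
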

\begin{proof}
  By symmetry it suffices to consider $\fn_{\root}(1,\TT)$.
  Thus, let $p_{h, \ell} = \Pr \brk{ \fn_{\root} (1,\TT^{(\ell)})\geq h}$.
  All variables at distance $2\ell$ from $\root$ are leaves and therefore pure in the tree $\TT^{(\ell)}$.
  Consequently, pure literal elimination removes all clauses of $\TT^{(\ell)}$ within at most $\ell$ rounds.
  Hence, $p_{h,\ell}=0$ for $h>\ell$.
  Furthermore, we claim that
  \begin{align}\label{eqlem:Peeling_on_Trees_1}
    p_{h,\ell}&=\varphi_{d,k}(p_{h-1,\ell-1}),\qquad\mbox{where}\quad\varphi_{d,k}(z)=1-\exp\bc{-\frac d2z^{k-1}}&&(1\leq h\leq\ell).
  \end{align}
  Indeed, if $\fn_{\root} (1,\TT^{(\ell)})\geq h\geq1$ then  by~\eqref{eqheightdef} there exists a clause $a\in\partial_{\TT} r$ with $\sign(\root,a)=-1$ such that $\fn_a(\TT^{(\ell)}[\root\mapsto 1])\geq h-1$.
  In other words, pure literal elimination on the sub-tree $\TT^{(\ell)}[a]$ of $\TT^{(\ell)}$ rooted at clause $a$ and with variable $\root$ removed takes at least $h-1$ rounds to remove clause $a$.
  Consequently, pure literal elimination on $\TT^{(\ell)}[a]$ takes at least $h-1$ rounds to remove one of the variables $x\in\partial_\TT a\setminus\{\root\}$.
  In other words, the sub-tree $\TT^{(\ell)}[x]$ comprising $x$ and its successors satisfies
  \begin{align}\label{eqlem:Peeling_on_Trees_2}
    \fn_x(\sign(x,a),\TT^{(\ell)}[x])&\geq h-1&&\mbox{for every }x\in\partial_\TT a\setminus\{\root\}.
  \end{align}
  But since $\TT$ is a Galton-Watson tree, the sub-tree $\TT^{(\ell)}[x]$ has the same distribution as the random tree $\TT^{(\ell-1)}$.
  Hence, \eqref{eqlem:Peeling_on_Trees_2} implies that for every $a\in\partial_{\TT}\root$ with $\sign(x,a)=-1$,
  \begin{align}\label{eqlem:Peeling_on_Trees_3}
    \pr\brk{\fn_a(\TT^{(\ell)}[\root\mapsto 1])\geq h-1\mid\TT^{(1)}}&=p_{h-1,\ell-1}^{k-1}.
  \end{align}
  Finally, the construction of $\TT$ ensures that the number of $a\in\partial_{\TT}\root$ with $\sign(x,a)=-1$ has distribution $\Po(d/2)$.
  Therefore, \eqref{eqlem:Peeling_on_Trees_3} shows that
  \begin{align*}
    p_{h,\ell}&=\sum_{i=0}^{\infty}\pr\brk{\Po(d/2)=i}\bc{1-\bc{1-p_{h-1,\ell-1}^{k-1}}^i}=1-\exp\bc{-\frac d2p_{h-1,\ell-1}^{k-1}},
  \end{align*}
  which completes the proof of~\eqref{eqlem:Peeling_on_Trees_1}.
  
  Since the sequences $(p_{h,\ell})_{\ell}$ are non-decreasing, the limits $p_h=\lim_{\ell\to\infty}p_{h,\ell}$ exist.
  Moreover, \eqref{eqlem:Peeling_on_Trees_1} shows that
  \begin{align}\label{eqlem:Peeling_on_Trees_4}
    p_h&=\varphi_{d,k}(p_{h-1})&&(h\geq1).
  \end{align}
  Hence, recalling the definition~\eqref{eqpure} of $\dpure(k)$, we find
  \begin{align*}
    \left(\frac{p_{h+1}}{p_{h}}\right)^{k-1} = \left(\frac{\varphi_{d,k}(p_h)} {p_h} \right)^{k-1} = d\cdot \frac{\left( 1 - \exp\left(-dp_h^{k-1}/2\right)\right)^{k-1}} {d p^{k-1}_h} \le \frac{d}{\dpure}<1 \enspace.
  \end{align*}
  Consequently,
  \begin{align}\label{eqlem:Peeling_on_Trees_5}
    \lim_{h\to\infty}p_h&=0.
  \end{align}
  
  To complete the proof we expand $\varphi_{d,k}(z)$ around $z=0$:
  \begin{align}\label{eqlem:Peeling_on_Trees_6}
    \varphi_{d,k}(z) &= \frac{d}{2}z^{k-1} +O(z^{2k-2})&&\mbox{ as }z\to 0.
  \end{align}
  Thus, 
  the function $\varphi_{d,k}(z)$ is well approximated by a $(k-1)$-th power. 
  Since $k\geq3$, combining \eqref{eqlem:Peeling_on_Trees_4}--\eqref{eqlem:Peeling_on_Trees_6}, we conclude that for sufficiently large $h$ we have $p_h\leq(d/2+1)p_{h-1}^{k-1}$.
  Consequently, $p_h\leq c_1 \cdot \exp\bc{-\exp\bc{c_2 \cdot h}}$ for suitable $c_1=c_1(d,k)$ and $c_2=c_2(d,k)$.
\end{proof}

We remind ourselves that 
$\overline{\{\pm1\cdot\root\}}_{\TT^{(\ell)}}$ signifies the output of \pulp\ run on the formula $\TT^{(\ell)}$ with initial literal set $\{\pm1\cdot\root\}$.
We extend the definition of the closure to the (possibly infinite) tree $\TT$ by letting
\begin{align*}
  \overline{\{\pm1\cdot\root\}}_{\TT}&=\bigcap_{\ell_0\geq1}\bigcup_{\ell\geq\ell_0}\overline{\{\pm1\cdot\root\}}_{\TT^{(\ell)}}.
\end{align*}
This definition ensures that if the height $\fn_{\root}(\pm1,\TT)$ from~\eqref{eqhTT} is finite, then
\begin{align*}
  \overline{\{\pm1\cdot\root\}}_{\TT}=\overline{\{\pm1\cdot\root\}}_{\TT^{(\ell)}}\qquad\mbox{ for all }\ell\geq\fn_{\root}(\pm1,\TT).
\end{align*}
In order to estimate the size of this set, we combine \Lem~\ref{lem:Peeling_on_Trees} with a crude bound on the total number of variable nodes of the Galton-Watson tree $\TT^{(\ell)}$.
Recall that $V(\TT^{(\ell)})$ signifies the set of variable nodes of $\TT^{(\ell)}$.

\begin{lemma}\label{lem_crude}
  Let $d>0$.
  For any $\ell\geq1$ and any $t>100(1+d(k-1))^2$ we have
  \begin{align*}
    \pr\brk{|V(\TT^{(\ell)})|>t^\ell}&\leq \ell\exp(-t^{\ell/2}/4).
  \end{align*}
\end{lemma}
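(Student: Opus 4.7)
\noindent My plan is a union bound over generations combined with an iterated Poisson Chernoff estimate. Write $X_q$ for the number of variable nodes of $\TT^{(\ell)}$ at distance $2q$ from $\root$, so that $X_0=1$, the conditional law of $X_q$ given $X_{q-1}$ is that of $(k-1)\Po(d X_{q-1})$ by the two-type branching structure of $\TT$, and $|V(\TT^{(\ell)})|=\sum_{q=0}^\ell X_q$. By pigeonhole, on the event $\{|V(\TT^{(\ell)})|>t^\ell\}$ some $X_q$ must exceed $M:=t^\ell/(\ell+1)$; since $X_0=1<M$ it suffices to bound $\pr[X_q>M]$ for $q\in\{1,\ldots,\ell\}$ and then apply a union bound.

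For the single-step estimate I would invoke the standard inequality $\pr[\Po(\lambda)\geq N]\leq (e\lambda/N)^N$ valid for $N\geq\lambda$. Setting $\kappa:=2ed(k-1)$, conditioning on $X_{q-1}\leq M/\kappa$ gives $dX_{q-1}\leq M/(2e(k-1))$, so $\pr[X_q>M,\,X_{q-1}\leq M/\kappa]\leq 2^{-M/(k-1)}$. Splitting according to whether $X_{q-1}$ exceeds $M/\kappa$ yields the one-step recurrence $\pr[X_q>M]\leq 2^{-M/(k-1)}+\pr[X_{q-1}>M/\kappa]$, which when iterated $q$ times and combined with $X_0=1$ becomes
\[
  \pr[X_q>M]\leq\sum_{j=0}^{q-1}2^{-M/((k-1)\kappa^j)}+\vecone\{M\leq\kappa^q\}.
\]

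Plugging in $M=t^\ell/(\ell+1)$, the hypothesis $t>100(1+d(k-1))^2$ forces $M>\kappa^\ell$ comfortably, so the indicator vanishes. The sum is dominated by its $j=q-1$ term with exponent $M/((k-1)\kappa^{q-1})$; a crude bound $(k-1)\kappa^{\ell-1}\leq(5(1+d(k-1)))^\ell$ together with $t\geq 100(1+d(k-1))^2$ then gives $M/((k-1)\kappa^{\ell-1})\geq 2^\ell t^{\ell/2}/(\ell+1)\geq t^{\ell/2}$. Summing the at most $\ell$ such $q$-terms and recalling $\log 2>1/4$ yields $\pr[|V(\TT^{(\ell)})|>t^\ell]\leq\ell\exp(-t^{\ell/2}/4)$.

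The only subtlety — and what I expect to be the main pitfall — is insisting on multiplicative shrinkage $M\mapsto M/\kappa$ at each iteration. A naive alternative that conditions on $X_{q-1}\leq\sqrt{M}$ would after $q$ steps yield only an exponent of order $M^{1/2^q}$, which for $q=\ell$ collapses to $t^{\ell/2^\ell}$ and is far too weak for the target bound. Absolute (rather than root-type) shrinkage is what preserves the polynomial order $t^{\ell/2}$ in the final exponent; the quadratic form of the lower bound $t>100(1+d(k-1))^2$ is exactly tuned to make this iteration close.
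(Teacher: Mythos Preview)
Your strategy—pigeonhole over generations, then an iterated Poisson tail bound with multiplicative threshold shrinkage—is the same as the paper's (there the thresholds are $g^{h-\ell}t^\ell$ with $g=10(1+d(k-1))$ and Bennett's inequality replaces your Chernoff bound). The gap is your ``crude bound'' $(k-1)\kappa^{\ell-1}\leq(5(1+d(k-1)))^\ell$: at $\ell=1$ this reads $k-1\leq 5(1+d(k-1))$, which is false whenever $k$ is large and $d$ small. Concretely, for $\ell=1$ your estimate on $\Pr[X_1>M]$ is $2^{-M/(k-1)}=2^{-t/(2(k-1))}$; with $k=1001$, $d=10^{-3}$ (so $d(k-1)=1$ and the hypothesis is merely $t>400$) and $t=1000$ this gives $2^{-1/2}\approx 0.71$, nowhere near the target $\exp(-\sqrt{1000}/4)\approx 3.7\times 10^{-4}$. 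The issue is structural: passing from $X_q>M$ to $\Po(dX_{q-1})>M/(k-1)$ costs a factor $1/(k-1)$ in the Chernoff exponent, and the hypothesis $t>100(1+d(k-1))^2$—which depends on $k$ and $d$ only through the product $d(k-1)$—cannot absorb it.

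In fact the same parameter choice breaks the \emph{statement} with the constants as written: $\Pr[|V(\TT^{(1)})|>1000]=\Pr[\Po(10^{-3})\geq1]\approx 10^{-3}>3.7\times10^{-4}$. The paper's Bennett computation hides the same $1/(k-1)$ in its Poisson deviation target $g^{h-\ell}t^\ell/(k-1)$ and shares this defect. Your multiplicative-shrinkage mechanism is exactly the right idea and is what the paper does; both arguments go through cleanly once the lower bound on $t$ is strengthened to depend on $k$ and $d$ separately rather than only on $d(k-1)$, which is harmless for every downstream use of the lemma.
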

\begin{proof}
  Let $\vN_\ell=|V(\TT^{(\ell)})|$ for brevity, set $g=10(1+d(k-1))$ and notice that $t>g^2$.
  The construction of the Galton-Watson tree $\TT$ ensures that $\vN_0=1$ and that for $\ell\geq1$ given $\vN_{\ell-1}$ we have $\vN_\ell\disteq(k-1)\cdot\Po(d\vN_{\ell-1}).$
  Therefore, Bennett's inequality shows that
  \begin{align}\label{eqcrude1}
    \pr\brk{\vN_h>g^{h-\ell}t^\ell\mid\vN_{h-1}\leq g^{h-1-\ell}t^\ell}&\leq
    \exp\bc{-\frac{t^\ell}{4g^{\ell-h}}}\leq\exp(-t^{\ell/2}/4)&&(1\leq h\leq\ell).
  \end{align}
  Furthermore, if $\vN_\ell>t^\ell$ then there exists $1\leq h\leq\ell$ such that $\vN_h>g^{h-\ell}t^\ell$ while $\vN_{h-1}\leq g^{h-1-\ell}t^\ell$.
  Hence, combining~\eqref{eqcrude1} with the union bound completes the proof.
\end{proof}

\begin{corollary}\label{lem:SMSizeGivenHeight}
  For any $d<\dpure(k)$ there exists $c_3=c_3(d,k)>0$ such that
  \begin{align*}
    \pr[|\overline{\{\pm1\cdot\root\}}_{\TT}|>t]&\leq c_3\exp(-t^{1/c_3})&&\mbox{ for all }t>0.
  \end{align*}
\end{corollary}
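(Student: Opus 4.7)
The plan is to combine the doubly-exponential height tail of \Lem~\ref{lem:Peeling_on_Trees} with the growth bound of \Lem~\ref{lem_crude} via a balanced union bound. The starting observation is the deterministic containment
\[|\overline{\{\pm 1\cdot\root\}}_{\TT}|\leq|V(\TT^{(H)})|\quad\text{when }H<\infty,\qquad H:=\fn_{\root}(\pm 1,\TT),\]
which follows from the remark just before the corollary: once $H<\infty$, \pulp\ on $\TT$ agrees with \pulp\ on the truncation $\TT^{(H)}$, and each literal added to $\bar\cL$ by Algorithm~\ref{fig_pulp} corresponds to a distinct variable of that truncated tree. On the event $H=\infty$ the height tail alone will control the probability.

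With this reduction in hand, for a parameter $h=h(t)$ to be tuned I split
\[\pr[|\overline{\{\pm 1\cdot\root\}}_{\TT}|>t]\leq\pr[H\geq h]+\pr[|V(\TT^{(h)})|>t].\]
\Lem~\ref{lem:Peeling_on_Trees} bounds the first summand by $c_1\exp(-\exp(c_2 h))$. For the second, writing $t=T^h$ with $T=t^{1/h}$ and invoking \Lem~\ref{lem_crude} yields $h\exp(-T^{h/2}/4)$, provided that $T>100(1+d(k-1))^2$.

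The natural optimisation is the choice $h=\lceil\log t/(Cc_2)\rceil$ for a constant $C=C(d,k)$ large enough that $\exp(Cc_2/2)>100(1+d(k-1))^2$. With this choice $c_2 h\geq(\log t)/C$, so that the first summand is at most $c_1\exp(-t^{1/C})$; simultaneously, for all sufficiently large $t$, the base $T=t^{1/h}$ exceeds the threshold required by \Lem~\ref{lem_crude} and satisfies $T^{h/2}=t^{1/2}$, so that the second summand is $O(\log t\cdot\exp(-t^{1/2}/4))$. Summing and absorbing the polylogarithmic factor into the exponent gives the stated stretched-exponential tail for a suitable $c_3=c_3(d,k)$; the bounded-$t$ regime is then absorbed by further enlarging $c_3$.

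The one place where I anticipate any subtlety is a rigorous justification of the deterministic comparison $|\bar\cL|\leq|V(\TT^{(H)})|$. This requires walking through Algorithm~\ref{fig_pulp} on a tree and checking that every variable underlying a literal added to $\bar\cL$ lies within graph distance $2H$ of $\root$ and is visited at most once; this is intuitively clear from the minimum-height choice of the literal in Step~7, but is the only non-routine bookkeeping in the argument.
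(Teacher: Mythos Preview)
Your proposal is correct and takes essentially the same approach as the paper: both rely on the deterministic containment $|\bar\cL|\leq|V(\TT^{(H)})|$ and then combine the doubly-exponential height tail (\Lem~\ref{lem:Peeling_on_Trees}) with the growth bound (\Lem~\ref{lem_crude}). The only cosmetic difference is that the paper sums over all heights $h$ and splits each summand according to whether $t\leq g^{2h}$ or $t>g^{2h}$, whereas you pick a single threshold $h(t)\asymp\log t$ and split once; your version is arguably cleaner but equivalent in substance.
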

\begin{proof}
  By symmetry it suffices to consider $\vN=|\overline{\{\root\}}_{\TT}|$.
  Since \Lem~\ref{lem:Peeling_on_Trees} shows that $\pr[\fn_{\root}(1,\TT)<\infty]=1$, we may assume from now on that indeed $\fn_{\root}(1,\TT)<\infty$.
  Moreover, picking $c_3=c_3(d,k)>0$ large enough, we may assume that $t>t_0$ for a large $t_0=t_0(d,k)$.
  Let $\vN_\ell=|V(\TT^{(\ell)})|$, $p_h=\pr\brk{\fn_\root(1,\TT)=h}$ and $g=10(1+d(k-1))$.
  It is an immediate consequence of the way that \pulp\ proceeds that for all $l\in\overline{\{\root\}}_{\TT}$ we have $|l|\in V(\TT^{(\fn_\root(1,\TT))})$.
  Hence, $\vN\leq\vN_{\fn_\root(1,\TT)}$.
  Therefore, by the law of total probability,
  \begin{align}\label{eqlem:SMSizeGivenHeight_1}
    \pr\brk{\vN>t}&\leq\sum_{h\geq0}S_h,&&\mbox{where }S_h=\pr\brk{\fn_\root(1,\TT)=h}\pr\brk{\vN_h>t\mid\fn_\root(1,\TT)=h}.
  \end{align}
  Depending on the value of $t$ in relation to $h$, we use either \Lem~\ref{lem:Peeling_on_Trees} or \Lem~\ref{lem_crude} to bound $S_h$.
  \begin{description}
    \item[Case 1: $t_0<t\leq g^{2h}$] \Lem~\ref{lem:Peeling_on_Trees} shows that for certain $c_1,c_2>0$ we have
    \begin{align}\label{eqlem:SMSizeGivenHeight_2}
      S_h&\leq \pr\brk{\fn_\root(1,\TT)=h} \leq c_1\exp(-\exp(c_2h))\leq c_12^{-h}\exp(-t^{1/c_3}),
    \end{align}
    provided $c_3$ is chosen large enough.
    \item[Case 2: $t>g^{2h}$] we apply \Lem~\ref{lem_crude} to obtain
    \begin{align}\label{eqlem:SMSizeGivenHeight_3}
      S_h&\leq \pr\brk{\vN_h>t} \leq h\exp(-\sqrt t/4)\leq h2^{-h}\exp(-t^{1/3}),
    \end{align}
    provided that $t>t_0$ is sufficiently large.
  \end{description}
  Combining the bounds~\eqref{eqlem:SMSizeGivenHeight_1}--\eqref{eqlem:SMSizeGivenHeight_3} completes the proof.
\end{proof}

\begin{corollary}\label{cor_c3}
  For any $d<\dpure(k)$ we have $\ex[|\overline{\{\pm1\cdot\root\}}_{\TT}|^2]<\infty.$
\end{corollary}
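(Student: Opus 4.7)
The plan is to deduce the second-moment bound directly from the stretched-exponential tail estimate supplied by \Cor~\ref{lem:SMSizeGivenHeight}. Writing $\vN=|\overline{\{\pm1\cdot\root\}}_{\TT}|$ and invoking the identity $\ex[\vN^2]=\int_0^\infty 2t\,\Pr[\vN>t]\dd t$, \Cor~\ref{lem:SMSizeGivenHeight} provides a constant $c_3=c_3(d,k)>0$ (depending on $d<\dpure(k)$ and $k$) such that
\begin{align*}
  \ex[\vN^2]&\leq\int_0^\infty 2t\cdot c_3\exp\bc{-t^{1/c_3}}\dd t \enspace.
\end{align*}

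The integral on the right is a standard stretched-exponential moment integral, finite of any order because $\exp(-t^{1/c_3})$ decays faster than any negative power of $t$. Concretely, the substitution $u=t^{1/c_3}$, i.e.\ $t=u^{c_3}$ and $\dd t=c_3 u^{c_3-1}\dd u$, converts the bound into
\begin{align*}
  \ex[\vN^2]&\leq 2c_3^2\int_0^\infty u^{2c_3-1}\exp(-u)\dd u=2c_3^2\,\Gamma(2c_3)<\infty \enspace,
\end{align*}
which is the desired conclusion. Because $\Pr[\vN>t]$ in \Cor~\ref{lem:SMSizeGivenHeight} does not distinguish between the two choices of sign, the same bound applies to both $|\overline{\{\root\}}_{\TT}|$ and $|\overline{\{\neg\root\}}_{\TT}|$.

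There is no serious obstacle: all the combinatorial and probabilistic work has already been done in the proof of \Cor~\ref{lem:SMSizeGivenHeight}, which itself combines \Lem~\ref{lem:Peeling_on_Trees} (doubly exponential tails for the height of the root under pure-literal elimination) with \Lem~\ref{lem_crude} (crude size bounds on truncated Galton--Watson trees). The present corollary is essentially a bookkeeping step, extracting from that stretched-exponential tail the fact that all polynomial moments of $\vN$ are finite; we only need the second moment here because it is what \Lem~\ref{lem_EIchi} will require downstream in the analysis of \pulp\ on $\PHI'$.
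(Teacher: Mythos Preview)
Your proof is correct and takes exactly the same approach as the paper, which simply states that the corollary is an immediate consequence of \Cor~\ref{lem:SMSizeGivenHeight}. You have merely spelled out the routine tail-integration step that the paper leaves implicit.
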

\begin{proof}
  This is an immediate consequence of \Cor~\ref{lem:SMSizeGivenHeight}.
\end{proof}

\subsection{Proof of \Lem~\ref{lem_EIchi}}\label{sec_lem_EIchi}
Because the distribution of $\PHI'$ is invariant under variable permutations and inversions, we may assume the initial set $\cL$ of literals passed to \pulp\ is just $\cL=\{x_1,\ldots,x_L\}$ for an integer $L=\tilde O(1)$.
For an integer $\ell\geq1$ let $\vec\phi_{\ell,L}'$ be the sub-formula of $\PHI'$ comprising all clauses and variables at distance at most $2\ell$ from $\cL$.
We recall that this formula has a bipartite graph representation $G(\vec\phi_{\ell,L}')$ with variable nodes $V(\vec\phi_{\ell,L}')$, clause nodes $F(\vec\phi_{\ell,L}')$ and edges $E(\vec\phi_{\ell,L}')$.
The {\em excess} of $\vec\phi_{\ell,L}'$ is defined as
\begin{align*}
  \vX_{\ell,L}&=|E(\vec\phi_{\ell,L}')|-|V(\vec\phi_{\ell,L}')|-|F(\vec\phi_{\ell,L}')|.
\end{align*}
Thus, $\vX_{\ell,L}=-L$  iff $G(\vec\phi'_{\ell,L})$ consists of $L$ acyclic components.

\begin{lemma}\label{lem_excess}
  Let $d>0$, $c>0$ and assume that $L\leq\log^cn$ and $\ell\leq c\log\log n$.
  Then
  \begin{align}\label{eqlem_excess1}
    \pr\brk{\vX_{\ell,L}>-L}&= \tilde O(n^{-1}),&\pr\brk{\vX_{\ell,L}>1-L}&= \tilde O(n^{-2}).
  \end{align}
  Furthermore, there exists $c_4=c_4(c,d,k)>0$ such that
  \begin{align}\label{eqlem_excess2}
    \pr\brk{|V(\vec\phi'_{\ell,L})|+|F(\vec\phi'_{\ell,L})|>\log^{c_4}n}&= O(n^{-2}).
  \end{align}
\end{lemma}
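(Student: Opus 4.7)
The strategy is a breadth-first exploration (principle of deferred decisions) of the neighbourhood of $\cL$ in $\PHI'$, coupled to a branching process. Concretely, for $h=0,1,\dots,\ell$ let $\vec V_h$ (resp.\ $\vec F_h$) be the set of variables (resp.\ clauses) at distance $\leq 2h$ (resp.\ $\leq 2h-1$) from $\cL$, so $\vec V_0=\cL$ and $\vec V_\ell\cup\vec F_\ell=V(\vec\phi'_{\ell,L})\cup F(\vec\phi'_{\ell,L})$. At step $h$ we reveal, for each $x\in\vec V_h\setminus\vec V_{h-1}$, the set of clauses of $\PHI'$ containing $x$; conditional on the partial history, each such clause contributes $\mathrm{Po}(d)+O(1/n)$ clauses in expectation (since $\vec m'\sim\mathrm{Po}(d(n-k+1)/k)$ and clauses are drawn uniformly on $k$-subsets). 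Each revealed clause $a$ then brings along $k-1$ other variables, sampled essentially without replacement from $\{x_1,\dots,x_n\}$. A \emph{collision} is the event that one of these $k-1$ variables already lies in $\vec V_{h}\cup(\vec V_{h+1}\text{ discovered so far})$.

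\textbf{Step 1: Size bound \eqref{eqlem_excess2}.} I would iterate Bennett's inequality (\Lem~\ref{bennett}) exactly as in the proof of \Lem~\ref{lem_crude}. Set $g=10(1+d(k-1))$. Level-by-level, conditional on $|\vec V_h|+|\vec F_h|\leq g^h\cdot L\cdot\log^2 n$, the number of new clauses revealed at step $h+1$ is stochastically dominated by $\mathrm{Po}(d\,|\vec V_h|)$, and each contributes $k-1$ new variables. Bennett yields
\begin{align*}
\Pr\brk{|\vec V_{h+1}|+|\vec F_{h+1}|>g^{h+1}L\log^2 n\,\big|\,|\vec V_h|+|\vec F_h|\leq g^hL\log^2 n}\leq \exp\bc{-\Omega(\log^2 n)}=n^{-\omega(1)}.
\end{align*}
A union bound over the $\ell\leq c\log\log n$ levels produces \eqref{eqlem_excess2} with $c_4$ depending on $c,d,k$, since $g^\ell L\log^2 n=\log^{O(1)}n$.

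\textbf{Step 2: Excess bound \eqref{eqlem_excess1}.} Condition on the high-probability event $\cE=\{|V(\vec\phi'_{\ell,L})|+|F(\vec\phi'_{\ell,L})|\leq\log^{c_4} n\}$ from Step~1. Observe that $\vX_{\ell,L}=-C+\vec S$ where $C$ is the number of connected components of $\vec\phi'_{\ell,L}$ and $\vec S$ counts the \emph{surplus} edges (edges outside a chosen spanning forest). Since each component meets $\cL$, we have $C\leq L$; hence $\vX_{\ell,L}>-L$ forces either $\vec S\geq 1$ or $C<L$, and both events are witnessed by a collision during the exploration. Each individual collision event has conditional probability at most $(k-1)(|\vec V_h|+\log^{c_4}n)/(n-L-k)=O(\log^{c_4}n/n)$, and the total number of revealed clauses is at most $\log^{c_4}n$ on $\cE$. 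A union bound over all revealed clauses gives
\begin{align*}
\Pr\brk{\vX_{\ell,L}>-L,\,\cE}\leq\log^{c_4}n\cdot O(\log^{c_4}n/n)=\tilde O(n^{-1}),
\end{align*}
and Step~1 absorbs the complementary event. For the second inequality in \eqref{eqlem_excess1}, $\vX_{\ell,L}>1-L$ requires at least two collisions; by the same argument the expected number of ordered pairs of collisions during the exploration is at most $(\log^{c_4}n\cdot O(\log^{c_4}n/n))^2=\tilde O(n^{-2})$, and Markov on this count finishes the job.

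\textbf{Main obstacle.} The delicate point is the subtle dependency between the branching process and the collision events: the stopping time (do we ever exceed $\log^{c_4}n$?) interacts with the collision count. I would sidestep this by working on the unconditional exploration and noting that, up to probability $O(n^{-\omega(1)})$ from Step~1, the number of discovered clauses is deterministically bounded by $\log^{c_4}n$, so the collision bound reduces to a clean union bound. One must also be slightly careful that the very first clauses are drawn from the whole pool of $2^k\binom{n}{k}$ clauses (so the Poisson approximation and the independent sign choices are exact), and that discarding the $k-L$ term in the exponent of $\vm'\sim\mathrm{Po}(d(n-k+1)/k)$ only changes the BFS offspring mean by $O(1/n)$, which is absorbed in the $\tilde O$ bounds.
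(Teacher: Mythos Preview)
Your proposal is correct and follows essentially the same approach as the paper: a BFS/deferred-decisions exploration from $\cL$, Bennett's inequality iterated level-by-level (mirroring \Lem~\ref{lem_crude}) for the size bound~\eqref{eqlem_excess2}, and then a collision-counting argument conditioned on the polylogarithmic size event for~\eqref{eqlem_excess1}. The paper's write-up is marginally more explicit in distinguishing the two sources of excess edges (clauses meeting the active set in $\geq 2$ vertices versus newly-activated variables reached by multiple clauses) and in noting that excess-inducing clauses at different BFS rounds are Poisson and independent, which cleanly justifies the $\tilde O(n^{-2})$ bound for two collisions; your pairs-of-collisions Markov argument reaches the same conclusion.
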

\begin{proof}
  We study breadth first search (`BFS') on the graph $G(\PHI')$ from the start vertices $\cL$ by means of a routine deferred decisions argument.
  Throughout the execution of BFS each variable node is in one of three possible states: unexplored, active, or finished.
  
  Towards the proof of~\eqref{eqlem_excess2} we study a `parallel' version of BFS.
  More precisely, let $\cA_0=\cL$ be the set of initially active variables, let $\cU_0=\{x_1,\ldots,x_n\}\setminus\cL$ comprise the initially unexplored variables and let $\cF_0=\emptyset$.
  Further, for $t\geq0$ define $\cA_{t+1},\cU_{t+1},\cF_{t+1}$ as follows.
  If $\cA_t=\emptyset$ then the process has stopped and we let $\cA_{t+1}=\cA_t=\emptyset,\cU_{t+1}=\cU_t,\cF_{t+1}=\cF_t$.
  Otherwise let $\cA_{t+1}$ be the set of all variable nodes $y\in\cU_{t}$ such that there exist an active variable node $x\in\cA_t$ and a clause $a$ that contains $x$ and $y$; in symbols, $x,y\in\partial_{\PHI'}a$.
  Further, let $\cF_{t+1}=\cF_t\cup\cA_t$ and $\cU_{t+1}=\cU_t\setminus\cA_{t+1}$.
  The BFS exploration occurs `in parallel' in the sense that all active vertices activate their previously unexplored second neighbours simultaneously.
  
  Let $\fF_t$ be the $\sigma$-algebra generated by the first $t$ rounds of parallel exploring.
  Then the distribution of $|\cA_{t+1}|$ given $\fF_t$ is stochastically dominated by a random variable with distribution $(k-1)\Po(d|\cA_t|)$.
  This is because by the construction of the formula $\PHI'$ the total number of clauses containing a given variable node has distribution $\Po(d(1-(k-1)/n))$.
  Hence, for any $u>0$ we have
  \begin{align}\label{eqlem_excess3}
    \pr\brk{|\cA_{t+1}|>u\mid\fF_t}&\leq\pr\brk{(k-1)\Po(d|\cA_t|)>u}.
  \end{align}
  
  To complete the proof of \eqref{eqlem_excess2} we mimic the argument from the proof of \Lem~\ref{lem_crude}.
  Thus, let $u=\log^{c_4-3}n$ for a large enough $c_4=c_4(c,d,k)$ and set $g=10(1+d(k-1))$.
  Since $\ell\leq c\log\log n$, the bound \eqref{eqlem_excess3} and Bennett's inequality show that
  \begin{align*}
    \pr\brk{|\cA_{t+1}|>g^{t+1-\ell}u\mid|\cA_t|\leq g^{t-\ell}u}&\leq\exp\bc{-\frac{u}{4g^{\ell-t+1}}}\leq\exp(-\sqrt u/4)=O(n^{-3})&&(0\leq t<\ell).
  \end{align*}
  Hence, taking a union bound on $0\leq t<\ell$ and observing that $|V(\vec\phi'_{\ell,L})|\subset\cA_0\cup\cdots\cup\cA_\ell$, we obtain
  \begin{align}\label{eqlem_excess4}
    \pr\brk{|V(\vec\phi'_{\ell,L})|>u\log n}&=\tilde O(n^{-3}).
  \end{align}
  Finally, another application of Bennett's inequality demonstrates that with probability $1-O(n^{-2})$ no variable of $\PHI'$ appears in more than $\log n$ clauses.
  Thus, $|F(\vec\phi'_{\ell,L})|\leq|V(\vec\phi'_{\ell,L})|\log n$.
  Hence, \eqref{eqlem_excess4} implies~\eqref{eqlem_excess2}.
  
  We are left to establish~\eqref{eqlem_excess1}.
  The way we set up the BFS process implies that there are only two ways in which excess edges can come about.
  First, there may be clauses $a$ with $\partial_{\PHI'}a\subset\cA_t\cup\cA_{t+1}$ such that $|\partial_{\PHI'}a\cap\cA_t|\geq2$.
  Given that $|\cA_t\cup\cA_{t+1}|\leq\log^{c_4}n$, the number of such $a$ with $|\partial_{\PHI'}a\cap\cA_t|=2$ has distribution $\Po(\tilde O(1/n))$, and the number of $a$ with $|\partial_{\PHI'}a\cap\cA_t|>2$ has distribution $\Po(\tilde O(1/n^2))$.
  The second possibility is that for a variable $x\in\cA_{t+1}$ there exist clauses $a,b\in\partial_{\PHI'}x$ with $\partial_{\PHI'}a,\partial_{\PHI'}b\subset\cA_t\cup\cA_{t+1}$.
  Once again the number of such clauses has distribution $\Po(\tilde O(1/n))$ given $|\cA_t\cup\cA_{t+1}|\leq\log^{c_4}n$.
  Furthermore, excess inducing clauses occur independently at different rounds $t$ of the BFS process.
  Thus, \eqref{eqlem_excess1} follows from~\eqref{eqlem_excess2}.
\end{proof}

We proceed to derive bounds on $|\bar\cL|=|\bar\cL_{\PHI'}|$ depending on the value of the excess.
To deal with the case of excess $-L$, let $\Lambda=\Theta(\log\log n)$ and let $(\TT[i])_{i\geq1}$ be a sequence of independent copies of the random tree $\TT$.
In the case that the excess $\vX_{\Lambda,L}$ equals $-L$, the bound on $|\bar\cL|$ follows from the fact that the Galton-Watson tree $\TT$ captures the local structure of the graph $G(\PHI')$ in combination with the bound from \Cor~\ref{lem:SMSizeGivenHeight}.
More precisely, the following is true.

\begin{lemma}\label{lem_noexcess}
  For any $0<d<\duniq(k)$ and $c>0$ there exists $\zeta=\zeta(c,d,k)>0$ such that with $\Lambda=\lceil\zeta\log\log n\rceil$ 
  uniformly for all $1\leq L\leq\log^cn$ and all $u>0$ we have
  \begin{align*}
    \pr\brk{\vecone\{\vX_{\Lambda,L}=-L\}|\bar\cL|>u}&\leq\pr\brk{\sum_{i=1}^L|\overline{\{\root\}}_{\TT[i]}|>u}+O(n^{-2}).
  \end{align*}
\end{lemma}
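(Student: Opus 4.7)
The strategy is a coupling argument: we couple $\PHI'$ with $L$ independent Galton--Watson trees so that, on a high-probability event, PULP's execution on $\PHI'$ decomposes into $L$ independent executions on the trees. First choose $\zeta=\zeta(c,d,k)$ so that for $\Lambda=\lceil\zeta\log\log n\rceil$ the doubly exponential tail of Lemma~\ref{lem:Peeling_on_Trees} gives $\pr[\fn_{\root}(\pm 1,\TT)>\Lambda]=o(n^{-(2+c)})$, while still $\Lambda\leq c'\log\log n$ so that Lemma~\ref{lem_excess} applies. Introduce three events:
$\cE_{\mathrm{tr}}=\{\vX_{\Lambda,L}=-L\}$, on which $\vec\phi'_{\Lambda,L}$ decomposes into $L$ disjoint trees $T_1,\dots,T_L$ rooted at $x_1,\dots,x_L$;
$\cE_{\mathrm{sz}}=\{|V(\vec\phi'_{\Lambda,L})|+|F(\vec\phi'_{\Lambda,L})|\leq\log^{c_4}n\}$, which by Lemma~\ref{lem_excess} has probability $1-O(n^{-2})$;
and $\cE_{\mathrm{ht}}=\{\max_{1\leq i\leq L}\max_{s\in\PM}\fn_{\root}(s,\TT[i])\leq\Lambda\}$, which by Lemma~\ref{lem:Peeling_on_Trees} and a union bound over $L\leq\log^c n$ also has probability $1-O(n^{-2})$.

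Using breadth-first search from $\cL$ with the principle of deferred decisions, construct a coupling between $\PHI'$ and $\TT[1],\dots,\TT[L]$ such that, on $\cE_{\mathrm{tr}}\cap\cE_{\mathrm{sz}}$, the neighborhood $T_i$ matches the truncated tree $\TT[i]^{(\Lambda)}$ as a signed, labeled formula (using the Gaussian tie-breaking labels from Remark~\ref{rem_ws}). Each BFS step contributes total variation $O(1/n)$ between the Poisson clause count of a variable in $\PHI'$ and in $\TT$, and at most $\log^{c_4}n$ variables are revealed, so the coupling succeeds with probability $1-O(n^{-2})$. Moreover, by exploring slightly past depth $2\Lambda$ (still revealing only $\log^{O(1)}n$ vertices), we simultaneously couple the $O(\Lambda)$-neighborhood of every variable that PULP might query while computing a height $\fn_x(s,\PHI')$, at the same $1-O(n^{-2})$ cost.

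On the intersection $\cE$ of the above events and coupling success, the goal is to show $|\bar\cL|\leq\sum_{i=1}^{L}|\overline{\{x_i\}}_{\TT[i]}|$. By $\cE_{\mathrm{ht}}$ the closure $\overline{\{x_i\}}_{\TT[i]}$ stabilizes at depth $\fn_{\root}(\pm 1,\TT[i])\leq\Lambda$, so it consists entirely of literals whose underlying variables lie in $\TT[i]^{(\Lambda)}\cong T_i$; pairwise disjointness of $T_1,\dots,T_L$ in $\PHI'$ then allows PULP on $\PHI'$ to process each component independently in BFS order. The main obstacle is the last ingredient: establishing that the heights $\fn_x(s,\PHI')$ and $\fn_x(s,\TT[i])$ agree for the variables $x\in T_i$ that PULP examines, so that the minimising choices made in Step~7 of Algorithm~\ref{fig_pulp} on $\PHI'$ reproduce those of the independent run on $\TT[i]$. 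Monotonicity of pure literal elimination yields only the one-sided bounds $\fn_x(s,T_i)\leq\fn_x(s,\PHI')$ and $\fn_x(s,\TT[i]^{(\Lambda)})\leq\fn_x(s,\TT[i])$; to upgrade these to equality one leverages the deeper coupling from the previous paragraph, which confines both pure-literal-elimination processes to the same local $O(\Lambda)$-region. With heights matching and PULP's choice rule consistent under the coupling of Gaussian labels, the outputs decompose term by term into $\bar\cL=\bigsqcup_i\overline{\{x_i\}}_{\TT[i]}$. Summing the probabilities of $\cE_{\mathrm{sz}}^c$, $\cE_{\mathrm{ht}}^c$, and coupling failure, each $O(n^{-2})$, gives the claimed inequality uniformly in $u$.
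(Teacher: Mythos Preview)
The central gap is in your coupling error estimate. The total variation distance between the offspring law $\Po(d(1-(k-1)/n))$ of a variable in $\PHI'$ and the $\Po(d)$ offspring in $\TT$ is of order $1/n$, and your own accounting (at most $\log^{c_4}n$ variables revealed, each contributing $O(1/n)$) yields a coupling failure probability of $\tilde O(1/n)$, not $O(n^{-2})$. This is not cosmetic: since $|\bar\cL|\le 2n$ deterministically, an additive $\tilde O(1/n)$ error in the tail bound contributes $(2n)^2\cdot\tilde O(1/n)=\tilde O(n)$ to $\ex[\vecone\{\vX_{\Lambda,L}=-L\}|\bar\cL|^2]$ in Corollary~\ref{lem:coup}, which breaks the downstream argument. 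More fundamentally, no coupling that insists on the \emph{exact} isomorphism $T_i\cong\TT[i]^{(\Lambda)}$ can beat $\tilde O(1/n)$, because that is the genuine total variation distance between the two neighbourhood laws; your proposed fix of exploring an $O(\Lambda)$-deeper region only enlarges the number of revealed variables and hence worsens the accumulated total variation.

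The paper avoids this by replacing exact isomorphism with a monotone embedding: since $\Po(d(1-(k-1)/n))$ is stochastically dominated by $\Po(d)$, one couples so that the clause set of each variable in $\vec\phi'_{\Lambda,L}$ is a \emph{subset} of that in $\TT^{(\Lambda)}[i]$, topping up the shortfall with fresh independent clauses. On $\{\vX_{\Lambda,L}=-L\}$ this inclusion holds deterministically, so there is no $1/n$ leakage at all; the only probabilistic cost is $\pr[\max_i\fn_\root(1,\TT[i])\ge\Lambda]=O(n^{-2})$ from Lemma~\ref{lem:Peeling_on_Trees}. The height-matching concern you raise is then handled (admittedly tersely) via the sub-formula relation together with $\fn_\root(1,\TT[i])<\Lambda$, aiming only at the \emph{inequality} $|\bar\cL|\le\sum_i|\overline{\{\root\}}_{\TT[i]}|$ rather than at step-by-step agreement of PULP's choices on $\PHI'$ and on $\TT[i]$.
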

\begin{proof}
  We begin by coupling the random formula $\vec\phi'_{\ell,1}$ with the Galton-Watson tree $\TT^{(\ell)}[1]$ for $0\leq\ell\leq\Lambda$.
  The coupling operates in accordance with the iterations of the BFS process from the proof of
  \Lem~\ref{lem_excess}.
  Under the coupling some of the variable and clause nodes of $\vec\phi'_{\ell,1}$ and of the tree $\TT^{(\ell)}[1]$ are identical, but both $\TT^{(\ell)}[1]$ and $\vec\phi'_{\ell,1}$ may contain additional clauses or variables.
  These additional clauses/variables result from excess edges of $G(\vec\phi_{\ell,1}')$, i.e., edges that close cycles or merge different components in the course of the BFS process.
  
  For $\ell=0$ we just identify the start variable $x_1$ with the root $\root$ of the Galton-Watson tree $\TT[1]$.
  Going from $\ell$ to $\ell+1$, we remember the sets $\cA_\ell,\cA_{\ell+1}$ from the proof of \Lem~\ref{lem_excess}.
  For each variable $x\in\cA_\ell$ let $\cC_x$ be the set of clauses $a\in\partial_{\PHI'}x$ such that $|\partial_{\PHI'}a\cap\cA_{\ell+1}|=k-1$ and also 
  such that none of the variables $y\in\cA_{\ell+1}\cap\partial_{\PHI'}a$ appear in another clause $b\neq a$ with $\partial_{\PHI'}b\subset\cA_\ell\cup\cA_{\ell+1}$.
  In other words, $\cC_x$ contains all clauses $a\in\partial_{\PHI'}x$ that do not induce excess edges.
  Let $\vd_x=|\cC_x|$ be the number of such clauses.
  As we pointed out in the proof of \Lem~\ref{lem_excess}, $\vd_x$ is stochastically dominated by a $\Po(d)$ variable.
  Hence, there is a random variable $\vd_x'$ such that $\vd_x+\vd_x'\disteq\Po(d)$.
  
  For any variable $x\in\cA_\ell$ that is also a variable node of $\TT^{(\ell)}[1]$ we add all clauses $a\in\cC_x$ and the $k-1$ variables $y\in\partial_{\PHI'}a\cap\cA_{\ell+1}$ to $\TT^{(\ell+1)}[1]$.
  Additionally, $\TT^{(\ell+1)}[1]$ contains $\vd_x'$ independent random clauses that contain $x$ and $k-1$ new variable nodes without a counterpart in $\vec\phi'_{\ell+1,1}$.
  Finally, to complete $\TT^{(\ell+1)}[1]$ every variable $y$ of $\TT^{(\ell)}[1]$ at distance precisely $2\ell$ from $r$ such that $y\not\in V(\vec\phi_{\ell,1}')$ independently begets $\Po(d)$ offspring clause nodes, each containing $k-1$ new variable nodes that do not belong to $V(\vec\phi_{\ell+1,1}')$.
  
  The coupling ensures that $\vec\phi_{\Lambda,1}'$ is a sub-formula of $\TT^{(\Lambda)}[1]$ unless $\vX_{\Lambda,1}>-L$.
  The extension of this coupling to $\cL=\{x_1,\ldots,x_L\}$ is straightforward.
  We simply perform BFS exploration from the start variables $x_1,\ldots,x_L$ one after the other.
  Given that $\vX_{\Lambda,L}=-L$, we thus couple the sub-formula of $\PHI'$ explored from each $x_i$ with $\TT^{(\Lambda)}[i]$ for $1\leq i\leq L$ such that $\vec\phi'_{\Lambda,L}$ is contained in the union of $\TT^{(\Lambda)}[1],\ldots,\TT^{(\Lambda)}[L]$.
  Finally, we obtain independent copies $\TT[1],\ldots,\TT[L]$ of the (possibly infinite) tree $\TT$ by continuing the Galton-Watson processes $\TT^{(\Lambda)}[i]$ independently for depths $\ell>\Lambda$.
  
  The remaining task is to compare $|\bar\cL|$ with $\sum_{i=1}^L|\overline{\{\root\}}_{\TT[i]}|$.
  If $\vX_{\Lambda,L}=-L$ and if $\fn_\root(1,\TT[i])<\Lambda$ for all $1\leq i\leq L$, then the coupling ensures that all clauses 
  and variables of $\vec\phi_{\Lambda,L}'$ are contained in the disjoint union of the trees $\TT[1],\ldots,\TT[L]$, and thus 
  $|\bar\cL|\leq\sum_{i=1}^L|\overline{\{\root\}}_{\TT[i]}|$.
  Therefore, for any $u>0$ we have
  \begin{align}\label{eq_lem_noexcess1}
    \pr\brk{\vecone\cbc{\vX_{\Lambda,L}=-L,\,\max_{1\leq i\leq L}\fn_\root(1,\TT[i])<\Lambda}|\bar\cL|>u}&\leq
    \pr\brk{\sum_{i=1}^L|\overline{\{\root\}}_{\TT[i]}|>u}.
  \end{align}
  Furthermore, since $\Lambda\geq\zeta\log\log n$ for a large $\zeta>0$, \Lem~\ref{lem:Peeling_on_Trees} ensures that
  \begin{align}\label{eq_lem_noexcess2}
    \pr\brk{\max_{1\leq i\leq L}\fn_\root(1,\TT[i])\geq\Lambda}&\leq O(n^{-2}).
  \end{align}
  Combining~\eqref{eq_lem_noexcess1} and~\eqref{eq_lem_noexcess2} completes the proof.
\end{proof}

For later reference we make a note of the following immediate consequence of the coupling from the proof of \Lem~\ref{lem_noexcess}.
For two rooted Boolean formulas $\phi,\phi'$ we write $\phi\ism\phi'$ if there is an isomorphism of $\phi$ and $\phi'$ that preserves the root variable.
We consider the random formula $\vec\phi'_{\ell,1}$ rooted at $x_1$.

\begin{corollary}\label{cor_lcwk}
  For every $\ell \ge0$ and any fixed tree $T$ we have $\left|\Pr\brk{\TT^{(\ell)}\ism T}- \Pr\brk{\vec\phi'_{\ell,1}\ism T}\right| = o(1).$
\end{corollary}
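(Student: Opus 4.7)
The plan is to read off the corollary from the coupling constructed in the proof of \Lem~\ref{lem_noexcess}, specialised to $L=1$. That coupling places $\vec\phi'_{\ell,1}$ and $\TT^{(\ell)}[1]$ on a common probability space via a joint BFS exploration out of $x_1$: whenever a variable $x$ is explored, the set $\cC_x$ of clauses at $x$ that do not induce excess is added to both structures, while the surplus offspring $\vd'_x$ (chosen so that $\vd_x+\vd'_x\disteq\Po(d)$) are appended to $\TT^{(\ell)}[1]$ alone.

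First, I would observe that as rooted formulas $\vec\phi'_{\ell,1}\ism\TT^{(\ell)}[1]$ on the event $\cE$ defined by two conditions: (a) $\vX_{\ell,1}=-1$, which guarantees that $\partial_{\PHI'}x\setminus\cC_x=\emptyset$ for every explored variable $x$, so that no clause of $\vec\phi'_{\ell,1}$ is left out of $\TT^{(\ell)}[1]$; and (b) $\vd'_x=0$ for every such $x$, which guarantees that $\TT^{(\ell)}[1]$ contains no extra clauses beyond those common to $\vec\phi'_{\ell,1}$. Under (a) and (b) the two BFS-revealed structures have identical vertex and clause sets, identical adjacencies and identical signs, so they coincide as rooted Boolean formulas.

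Next, I would bound $\pr[\cE^c]=o(1)$. Since $\ell$ and $L=1$ are held fixed as $n\to\infty$, the first part of \Lem~\ref{lem_excess} gives $\pr[\vX_{\ell,1}>-1]=\tilde O(n^{-1})$. For condition (b), \Lem~\ref{lem_excess} also yields $|V(\vec\phi'_{\ell,1})|\leq\log^{c_4}n$ with probability $1-O(n^{-2})$. Conditional on the latter, the identity $\vd_x+\vd'_x\disteq\Po(d)$, combined with the fact that each explored variable attracts clauses at rate $\Po(d(1-O(1/n)))$, gives $\pr[\vd'_x\geq 1]=O(1/n)$ for each explored $x$. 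A union bound over the $O(\log^{c_4}n)$ such variables then delivers $\pr[\exists x:\vd'_x\geq 1]=\tilde O(n^{-1})$. Altogether $\pr[\cE^c]=o(1)$.

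Finally, for any fixed rooted tree $T$,
\begin{align*}
\abs{\pr[\TT^{(\ell)}\ism T]-\pr[\vec\phi'_{\ell,1}\ism T]}\leq\pr[\TT^{(\ell)}[1]\not\ism\vec\phi'_{\ell,1}]\leq\pr[\cE^c]=o(1),
\end{align*}
which is the desired conclusion. The ``main obstacle'' is really no obstacle at all: the coupling in \Lem~\ref{lem_noexcess} was engineered precisely to enable this comparison, and the only substance of the argument is verifying that both the excess event and the surplus event carry negligible probability at fixed depth $\ell$.
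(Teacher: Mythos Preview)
Your proof is correct and follows exactly the approach the paper intends: the paper records \Cor~\ref{cor_lcwk} as ``an immediate consequence of the coupling from the proof of \Lem~\ref{lem_noexcess}'', and you have spelled out precisely that consequence for $L=1$. The one place where a reader might want a word more is the bound $\pr[\vd'_x\geq1]=O(1/n)$: this requires not just the stochastic domination $\vd_x\leq_{\mathrm{st}}\Po(d)$ stated in the paper, but that $\ex[\vd_x]=d-\tilde O(1/n)$ conditionally on the BFS history, which in turn uses that (given a history of size $\tilde O(1)$) the number of still-unrevealed clauses through $x$ with all other endpoints unexplored is $\Po(d-\tilde O(1/n))$; this is the standard deferred-decisions computation and is implicit in the proof of \Lem~\ref{lem_excess}.
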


{\em From here on, we set $\Lambda=\lceil c_5\log\log n\rceil$ for a large enough $c_5=c_5(d,k)>0$.}
We obtain the following bound on the second moment of $|\bar\cL|$ on the event that the excess equals $-L$.

\begin{corollary}\label{lem:coup}
  For any $0<d<\duniq(k)$ and any $1\leq L\leq\log^2n$ we have $\ex[\vecone\{\vX_{\Lambda,L}=-L\}\cdot|\bar\cL|^2]=O(1)$.
\end{corollary}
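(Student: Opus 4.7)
The plan is to transport the finite-second-moment bound for the tree closure, supplied by \Cor~\ref{cor_c3}, through the stochastic-domination coupling of \Lem~\ref{lem_noexcess}. Since the hypothesis $d<\duniq(k)$ combined with the Montanari--Shah upper bound~\eqref{eqpure} yields $d<\dpure(k)$, \Cor~\ref{cor_c3} applies and furnishes a finite constant $M:=\ex[|\overline{\{\root\}}_{\TT}|^2]<\infty$ depending only on $d$ and $k$. Denote by $Y_1,\ldots,Y_L$ the i.i.d.\ copies of $|\overline{\{\root\}}_{\TT[i]}|$ provided by \Lem~\ref{lem_noexcess}.

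My first step would be to rewrite the expectation via the layer-cake identity and invoke the tail bound of \Lem~\ref{lem_noexcess}:
\begin{align*}
  \ex\brk{\vecone\{\vX_{\Lambda,L}=-L\}|\bar\cL|^2}
  &=\int_0^\infty 2u\cdot\pr\brk{\vecone\{\vX_{\Lambda,L}=-L\}|\bar\cL|>u}\dd u\\
  &\leq\int_0^\infty 2u\cdot\pr\brk{\sum_{i=1}^L Y_i>u}\dd u+O(n^{-2})\cdot(2n)^2=\ex\brk{\bc{\sum_{i=1}^L Y_i}^2}+O(1).
\end{align*}
The term $O(n^{-2})\cdot(2n)^2=O(1)$ absorbs the $O(n^{-2})$ exceptional event of \Lem~\ref{lem_noexcess} via the deterministic ceiling $|\bar\cL|\leq 2n$, which is the reason to work with tail probabilities rather than a direct expectation comparison.

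The remaining piece is a routine second-moment calculation. By the mutual independence of $Y_1,\ldots,Y_L$,
\begin{align*}
  \ex\brk{\bc{\sum_{i=1}^L Y_i}^2}=L\cdot M+L(L-1)(\ex Y_1)^2=O(L^2),
\end{align*}
with the implicit constants depending only on $d$ and $k$. Combining the two displays delivers the claimed estimate (up to an implicit polylogarithmic factor in $L$, absorbed in the $O(\cdot)$ as $L=\tilde O(1)$). I do not anticipate a genuine obstacle in the proof: the tree tail bound of \Cor~\ref{cor_c3} is already established, the coupling of \Lem~\ref{lem_noexcess} reduces the problem to i.i.d.\ tree closures, and independence neutralises the cross-terms. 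The only delicate point is the handling of the exceptional $O(n^{-2})$ event, which is why the proof is phrased through the layer-cake integration rather than a stochastic ordering of expectations directly.
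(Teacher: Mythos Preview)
Your proof is correct and follows essentially the same route as the paper's one-line argument: combine the deterministic ceiling $|\bar\cL|\leq 2n$, the coupling of \Lem~\ref{lem_noexcess}, and the finite second moment for the tree closure (the paper cites \Cor~\ref{lem:SMSizeGivenHeight}, you cite its immediate consequence \Cor~\ref{cor_c3}). Your layer-cake computation makes explicit exactly what the paper leaves implicit, including the key point that the $O(n^{-2})$ error from \Lem~\ref{lem_noexcess} must be integrated only over $[0,2n]$ thanks to the deterministic bound. Your observation that the outcome is really $O(L^2)$ rather than a literal $O(1)$ is accurate; since $|\bar\cL|\geq L$ always, no bound independent of $L$ is possible, and the paper's $O(1)$ should be read as $\tilde O(1)$ (or $O(L^2)$), consistent with its downstream use in \Lem~\ref{lem_EIchi}.
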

\begin{proof}
  Since $|\bar\cL|\leq 2n$ deterministically, this is an immediate consequence of \Cor~\ref{lem:SMSizeGivenHeight} and \Lem~\ref{lem_noexcess}.
\end{proof}

As a next step we deal with the case that the excess equals $1-L$.
More precisely, with $c_6=c_6(d,k)\gg c_5$ a large enough constant let $\Lambda^+=\lceil c_6\log\log n\rceil$.
We are going to bound $|\bar\cL|$ on the event that $\vX_{\Lambda,L}=\vX_{\Lambda^+,L}=1-L$.
The proof combines the bound on the probability of this event from \Lem~\ref{lem_excess} with a crude bound on $|\bar\cL|$.
To elaborate, since \Lem~\ref{lem_excess} shows that the event $\vX_{\Lambda,L}=\vX_{\Lambda^+,L}=1-L$ has probability $\tilde O(n^{-1})$, we can essentially get away with simply bounding $|\bar\cL|$ by the total number of variables within a $2\Lambda^+$ radius around the start variables $\cL$.
Indeed, as \Lem~\ref{lem_excess} shows, this number of variables is very likely polylogarithmic in $n$.
Working out the details, we obtain the following.

\begin{lemma}\label{lem_unicyc}
  Let $0<d<\duniq(k)$ and let $1\leq L\leq\log^2n$.
  Then $\ex\brk{\vecone\{\vX_{\Lambda,L}=\vX_{\Lambda^+,L}=1-L\}|\bar\cL|^{3/2}}=o(1).$
\end{lemma}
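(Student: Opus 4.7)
I split the expectation according to two auxiliary events. First, let $G:=\{|V(\vec\phi'_{\Lambda^+,L})|+|F(\vec\phi'_{\Lambda^+,L})|\leq\log^{c_4}n\}$; by~\eqref{eqlem_excess2}, $\pr[G^c]=O(n^{-2})$, so the deterministic bound $|\bar\cL|\leq 2n$ immediately gives
\[
\ex\brk{\vecone\{G^c\}|\bar\cL|^{3/2}}\leq(2n)^{3/2}\cdot O(n^{-2})=o(1).
\]

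Write $E:=\{\vX_{\Lambda,L}=\vX_{\Lambda^+,L}=1-L\}$ and note that by~\eqref{eqlem_excess1}, $\pr[E]\leq\pr[\vX_{\Lambda,L}>-L]=\tilde O(n^{-1})$. I control $\ex[\vecone\{E\cap G\}|\bar\cL|^{3/2}]$ by introducing the auxiliary ``bounded-height'' event
\[
H:=\{\fn_y(\pm1,\PHI')\leq\Lambda^+\text{ for every }y\in V(\vec\phi'_{\Lambda^+,L})\}
\]
and establishing two claims: (i) on $E\cap G\cap H$, \pulp\ stays within the $2\Lambda^+$-neighbourhood of $\cL$, whence $|\bar\cL|\leq\log^{c_4}n$ by the same reasoning that was used to derive $\vN\leq\vN_{\fn_\root(1,\TT)}$ in the proof of Corollary~\ref{lem:SMSizeGivenHeight}; and (ii) $\pr[E\cap G\cap H^c]=o(n^{-K})$ for any fixed $K>0$.

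For (ii) I extend the coupling of Lemma~\ref{lem_noexcess} to the unicyclic setting: on $E$ the subformula $\vec\phi'_{\Lambda^+,L}$ consists of $L$ tree-like components joined by a single excess edge, hence it can be embedded into a disjoint union of $L$ independent Galton-Watson trees $\TT[1],\ldots,\TT[L]$ with one pair of vertices identified to accommodate the cycle. Under this coupling each local height $\fn_y(\pm1,\PHI')$ is stochastically dominated by the corresponding tree height $\fn_\root(\pm1,\TT[i])$ up to a bounded additive correction. By Lemma~\ref{lem:Peeling_on_Trees},
\[
\pr\brk{\fn_\root(\pm1,\TT)>\Lambda^+}\leq c_1\exp\bc{-(\log n)^{c_2 c_6}},
\]
which is $o(n^{-K})$ for any fixed $K>0$ provided $c_6$ is chosen large enough, and a union bound over the $\leq\log^{c_4}n$ variables in $\vec\phi'_{\Lambda^+,L}$ yields (ii).

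Combining (i) and (ii),
\begin{align*}
\ex\brk{\vecone\{E\cap G\}|\bar\cL|^{3/2}}&\leq(\log^{c_4}n)^{3/2}\pr[E]+(2n)^{3/2}\pr[E\cap G\cap H^c]\\
&=\mathrm{polylog}(n)\cdot\tilde O(n^{-1})+O(n^{3/2-K})=o(1)
\end{align*}
for $K>3/2$. Together with the bound on $G^c$, this proves the lemma.

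The main obstacle is the coupling argument underlying (ii): in the presence of the single excess edge on $E$, the local heights in $\PHI'$ are not those of independent Galton-Watson trees, and one must verify that the solitary cycle introduces only a bounded perturbation which is absorbed by the doubly-exponential tail of Lemma~\ref{lem:Peeling_on_Trees}. Everything else amounts to Markov-type estimates from the probability bounds already established in Lemma~\ref{lem_excess}, together with the standard ``pure-literal elimination stays within the ball'' argument used in Corollary~\ref{lem:SMSizeGivenHeight}.
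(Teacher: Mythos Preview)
Your approach differs from the paper's and, as written, has two real gaps.

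\textbf{Gap in claim (i).} The containment $\overline{\{\root\}}_{\TT}\subset V(\TT^{(\fn_\root(1,\TT))})$ used in Corollary~\ref{lem:SMSizeGivenHeight} relies on a tree-specific induction: on a tree, when \pulp\ moves from a variable $x$ (with chosen sign $s$) through a clause $a\in\partial^{-s}x$ to a child $y$, one has $\fn_y(\sign(y,a),\Phi)=\fn_y(\sign(y,a),\Phi_y)$ because $\Phi[y\mapsto\sign(y,a)]$ disconnects the subtree $\Phi_y$ from the rest. This is what makes the height a potential that strictly decreases along \pulp's trajectory. With even a single excess edge this identity fails, and there is no reason the minimum height selected at each step should decrease. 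Your event $H$ bounds the heights $\fn_y(\pm1,\PHI')$ at all $y$ in the $\Lambda^+$-ball, but a bound on $\fn_y(s,\PHI')$ by $\Lambda^+$ does not prevent \pulp\ from walking a further $\Lambda^+$ steps \emph{past} $y$ and leaving the ball. So ``the same reasoning'' does not transfer.

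\textbf{Gap in claim (ii).} You assert that under a unicyclic coupling each $\fn_y(\pm1,\PHI')$ is dominated by a tree height ``up to a bounded additive correction''. This needs proof: the single extra clause can turn several pure variables into mixed ones, and the resulting delays in pure-literal elimination can cascade. There is no a priori $O(1)$ bound on this effect. Moreover, $\fn_y(\pm1,\PHI')$ is a global quantity in $\PHI'$, not in $\vec\phi'_{\Lambda^+,L}$; controlling it requires information about the formula well beyond the $\Lambda^+$-ball, so a coupling of $\vec\phi'_{\Lambda^+,L}$ alone to trees is not enough.

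\textbf{How the paper avoids both issues.} Rather than reasoning about heights in the unicyclic formula, the paper exploits the hypothesis $\vX_{\Lambda,L}=\vX_{\Lambda^+,L}$ to split the $\Lambda^+$-ball into an inner $\Lambda$-ball (which contains the unique excess edge) and an outer annulus that is a genuine forest rooted at the boundary set $\cL^+$. The inner part contributes at most its total size, which is $\mathrm{polylog}(n)$ by~\eqref{eqlem_excess2}. For the outer part the paper simply re-applies Lemma~\ref{lem_noexcess} to the residual formula $\vec\phi^-$ with start set $\cL^+$: since the annulus has excess $-|\cL^+|$, this is the pure tree case, and Corollary~\ref{lem:SMSizeGivenHeight} then bounds $|\bar\cL^+|$ by $\mathrm{polylog}(n)$ except with probability $O(n^{-2})$. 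Multiplying $(\mathrm{polylog}(n))^{3/2}$ by $\pr[E]=\tilde O(n^{-1})$ gives $o(1)$. The point is that the decomposition reduces everything to the tree analysis already in hand, with no need to control heights or \pulp\ on a formula containing a cycle.
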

\begin{proof}
  Let $\cV^+=V(\vec\phi'_{\Lambda,L})\setminus V(\vec\phi'_{\Lambda-1,L})$ and obtain $\vec\psi^-$ from $\PHI'$ by deleting all variables from $V(\vec\phi'_{\Lambda-1,L})$ and all clauses from $F(\vec\phi'_{\Lambda,L})$. 
  Further, let $\Lambda^-=\Lambda^+-\Lambda$ and let $\vec\psi^+$ be the sub-formula of $\vec\psi^-$ comprising all clauses and variables 
  of $\vec\psi^-$ with distance at most $2\Lambda^-$ from $\cV^+$ (see Figure \ref{fig_form} below).
\begin{figure}
  \begin{tikzpicture}[
  >=Latex,
  dot/.style={circle,fill=black,inner sep=1.8pt},
  leaf/.style={draw,rectangle,minimum size=4.2pt},
  ghost/.style={draw=black!40},
  faint/.style={draw=black!35},
  arr/.style={dotted,->,line width=.5pt}
]

\def\L{5}                 
\def\N{6}                 
\def\BN{9}                 
\def\R{5.9}               
\def\angA{200}            
\def\angB{340}            
\def\topY{3.9}            
\def\ovalW{6.8}           
\def\ovalH{1.2}           

\def\W{6.5}     
\def\H{0.6}     

\foreach \k in {1,...,\L}{
  \pgfmathsetmacro{\x}{-0.5*\ovalW + (\k-0.5)*\ovalW/\L}
  \node[circle,fill=gray] (t\k) at (\x,\topY) {};
}
\node[above=0.1pt of t1] { $1$};
\node[above=0.1pt of t2] { $2$};
\node[above=0.1pt of t\L] { $L$};
\node at ($(t2)!0.50!(t\L)$) 
{$\cdots$};

\draw[rounded corners=15pt,dashed]
  ($(-0.5*\ovalW,\topY+ 0.6)$) rectangle
  ($(0.5*\ovalW,\topY- 0.5)$);

\foreach \p in {t1}{
  \draw[ghost, thick] (\p) -- ++(-0.35,-0.85) node[fill=black,minimum size=8pt] {};
  \draw[ghost, thick] (\p) -- ++(0.00,-0.9) node[fill=black,minimum size=8pt] {};
  \draw[ghost, thick] (\p) -- ++(0.35,-0.9) node[fill=black,minimum size=8pt] {};
}

\foreach \p in {t\L}{
  \draw[ghost, thick] (\p) -- ++(-0.35,-0.9) node[fill=black,minimum size=8pt] {};
  \draw[ghost, thick] (\p) -- ++(0.00,-0.9) node[fill=black,minimum size=8pt] {};
  \draw[ghost, thick] (\p) -- ++(0.35,-0.9) node[fill=black,minimum size=8pt] {};
  \draw[ghost, thick] (\p) -- ++(0.7,-0.85) node[fill=black,minimum size=8pt] {};
}
\foreach \p in {t2}{
  \draw[ghost, thick] (\p) -- ++(-0.2,-0.9) node[fill=black,minimum size=8pt] {};
  \draw[ghost, thick] (\p) -- ++(0.2,-0.9) node[fill=black,minimum size=8pt] {};
}

\draw[faint, dotted, line width=1pt,domain=-\W+0.5:\W,samples=60,smooth,variable=\x]
  plot ({\x},{\H*(1.2*\x/\W)^2});

\foreach \i in {1,...,\BN}{
  \pgfmathsetmacro{\x}{-\W+0.5 + 1.*(\i-1)*2*(\W-0.5)/(\BN-1)}
  \pgfmathsetmacro{\y}{1+\H*(1.2*\x/\W)^2}
  \node[fill=black, minimum size=8pt] (b\i) at (\x,\y) {};
}

\foreach \i in {1,...,\N}{
  \pgfmathsetmacro{\x}{-\W+1 + 1.*(\i-1)*2*(\W-1)/(\N-1)}
  \pgfmathsetmacro{\y}{\H*(1.2*\x/\W)^2}
  \node[fill=black, circle] (b\i) at (\x,\y) {};
}

\node[above right=8pt, fill=white] at ($(b\N)+(-5.8,-1.05)$) {$ \mathcal{V}^{+}$};

\foreach \i in {1,6}{
  \foreach \dx in {-0.35,0,0.35}{
    \draw[thick] (b\i) -- ++(\dx,-0.85) node[ fill=black,minimum size=8pt] {};
  }
}

\foreach \i in {2,4}{
  \foreach \dx in {-0.25,0.25}{
    \draw[thick] (b\i) -- ++(\dx,-0.85) node[ fill=black,minimum size=8pt] {};
  }
}

\foreach \i in {5}{
  \foreach \dx in {-0.7,-0.35,0,0.35, 0.7}{
    \draw[thick] (b\i) -- ++(\dx,-0.85) node[ fill=black,minimum size=8pt] {};
  }
}
\draw[faint, dotted, line width=1pt,domain=-\W+0.5:\W,samples=60,smooth,variable=\x]
  plot ({\x},{-4+\H*(1.2*\x/\W)^2});


\draw[<->, line width=0.05mm] ($(t2)+(0.72,-0.5)$) -- ($(t2)+(0.72,-3.85)$) node[midway, fill=white] {$2\Lambda$};

\draw[<->,line width=0.05mm] ($(t2)+(3.25,-0.5)$) -- ($(t2)+(3.25,-7.82)$) node[midway, fill=white, yshift=-40pt] {$2\Lambda^+$};

\draw[<->, line width=0.05mm] ($(t2)+(-0.72,-3.83)$) -- ($(t2)+(-0.72,-7.78)$) node[midway, fill=white] {$2\Lambda^-$};

%
%
\draw[decorate,decoration={mirror, brace,amplitude=6pt}] (-7.22,4.2) -- (-7.22,-3)
  node[midway,xshift=-18pt,rotate=0] {$\pmb{\phi}'_{\Lambda^{+},L}$};
\draw[decorate,decoration={brace, mirror ,amplitude=6pt}] (-6.3,4.2) -- (-6.3,0.8)
  node[midway,xshift=-16pt,rotate=0] {$\pmb{\phi}'_{\Lambda,L}$};

\draw[decorate,decoration={brace,amplitude=6pt}] (7,0.9) -- (7,-3)
  node[midway,xshift=14pt,rotate=0] {$\pmb{\psi}^{+}$};
\draw[decorate,decoration={brace, amplitude=6pt}] (7.7,1.18) -- (7.7,-6)
  node[midway,xshift=14pt,rotate=0] {$\pmb{\psi}^{-}$};

\node at (-2.5,2.3) {$\vdots$};
\node at (0.0,2.1) {$\vdots$};
\node at (2.5,2.3) {$\vdots$};

\node at (-5,-1.7) {$\vdots$};
\node at (-2.9,-2) {$\vdots$};
\node at (0.0,-2.2) {$\vdots$};
\node at (2.5,-2) {$\vdots$};
\node at (5.0,-1.7) {$\vdots$};


\node at (-6,-4.7) {$\vdots$};
\node at (-3.5,-5) {$\vdots$};
\node at (-1.0,-5.2) {$\vdots$};
\node at (1.0,-5.2) {$\vdots$};
\node at (3.5,-5) {$\vdots$};
\node at (6,-4.7) {$\vdots$};
\end{tikzpicture}
  \caption{A sketch depicting the subformulas $\vec\psi^+, \vec\psi^-, \vec\phi'_{\Lambda,L}$, and $\vec\phi'_{\Lambda^+,L}$ of $\PHI'$ constructed above.}
  \label{fig_form}
\end{figure}
  If $\vX_{\Lambda,L}=\vX_{\Lambda^+,L}=1-L$ then
  \begin{align}\label{eqlem_unicyc_1}
    |V(\vec\phi'_{\Lambda,L})|+|F(\vec\phi'_{\Lambda,L})|-|E(\vec\phi'_{\Lambda,L})|&=L-1,\\
    |V(\vec\psi^+)|+|F(\vec\psi^+)|-|E(\vec\psi^+)|&=|\cV^+|.\label{eqlem_unicyc_2}
  \end{align}
  Moreover, \Lem~\ref{lem_excess} shows that for suitable $c_5'=c_5'(d,k,c_5),c_6'=c_6'(d,k,c_6)>0$ we have
  \begin{align}\label{eqlem_unicyc_3}
    \pr\brk{|V(\vec\phi'_{\Lambda,L})|+|F(\vec\phi'_{\Lambda,L})|>\log^{c_5'}n}&= O(n^{-2}),\\
    \label{eqlem_unicyc_4}
    \pr\brk{|V(\vec\phi'_{\Lambda^+,L})|+|F(\vec\phi'_{\Lambda^+,L})|>\log^{c_6'}n}&= O(n^{-2}).
  \end{align}
  
  Let $\hat\cL\subset\bar\cL$ be the set of literals $l$ that were added to the output set $\bar\cL$ by Step~7 of \pulp\ by way of clauses $a\in F(\vec\phi'_{\Lambda,L})$, i.e., at distance less than $2\Lambda$ from the initial set $\cL$.
  Let $\cV^-=\{|l|:l\in\hat\cL\}\cap\cV^+$ be the set of all variables at distance $2\Lambda$ from $\cL$ in $\PHI'$ that underlie a literal from $\hat\cL$.
  If $\vX_{\Lambda,L}=1-L$, the variables and clauses at distance at most $2\Lambda$ from $\cL$ do not cause \pulp\ to run into a contradiction, because each clause contains $k\geq3$ literals.
  Therefore, there does not exist a variable $x$ such that both $x$ and $\neg x$ belong to $\hat\cL$.
  Hence, because the signs of $\PHI'$ are uniformly random and \pulp\ proceeds in a BFS order, we may assume without loss of generality that $\hat\cL$ contains positive literals only.
  Thus,
  \begin{align}\label{eqlem_unicyc_10}
    \cV^-=\hat\cL\cap\cV^+\subset\cV^+.
  \end{align}
  
  We now apply \Lem~\ref{lem_noexcess} to the random formula $\vec\psi^-$.
  Specifically, let $\bar\cL^+$ be the output of \pulp\ on the formula $\vec\psi^-$ with the start set $\cL^+$ comprised 
  by the positive literals of $\cV^+$.
  Further, let $\fE$ be the event that 
  $|V(\vec\phi'_{\Lambda,L})|+|F(\vec\phi'_{\Lambda,L})|\leq\log^{c_5'}n$. Let
  $$\vX^+=|E(\vec\psi^+)|-|V(\vec\psi^+)|-|F(\vec\psi^+)|$$
  be the excess of $\vec\psi^+$.
  Since $0<d<\duniq(k)$, given $\fE$ the formula $\vec\psi^-$ has the same distribution as a random $k$-CNF with $n^-=n-O(\log^cn)$ variables and $\vm^-\disteq\Po(d^-n^-/k)$ random clauses, with $0<d^-=d+o(1)<\duniq(k)$.
  Hence, assuming that $c_6=c_6(d,k)> c_5'$ is sufficiently large, \Lem~\ref{lem_noexcess} shows that
  \begin{align}\label{eqlem_unicyc_5}
    \pr\brk{\vecone\fE\cap\{\vX^+=-|\cV^+|\}\cdot|\bar\cL^+|>u}&\leq\pr\brk{\sum_{1\leq i\leq\log^{c_5'}n}|\overline{\{\root\}}_{\TT[i]}|>u}+O(n^{-2})&&(u>0).
  \end{align}
  Combining~\eqref{eqlem_unicyc_5} with \Cor~\ref{lem:SMSizeGivenHeight}, we conclude that for a large enough $c_7=c_7(d,k) > c_6'$,
  \begin{align}\label{eqlem_unicyc_6}
    \pr\brk{\vecone\fE\cap\{\vX^+=-|\cV^+|\}\cdot|\bar\cL^+|>\log^{c_7}n}&=O(n^{-2}).
  \end{align}
  In light of above, we see that
  \begin{align*}
    \ex\brk{\vecone\{\vX_{\Lambda,L}=\vX_{\Lambda^+,L}=1-L\}|\bar\cL|^{3/2}}
      &\le\ex\brk{\vecone\fE\cap\{\vX_{\Lambda,L}=\vX_{\Lambda^+,L}=1-L\}|\bar\cL|^{3/2}} + (2n)^{3/2}(1-\Pr\brk{\fE}) ,
        &&\mbox[ \text{since } |\bar\cL|\le 2n]\\
        &\le\ex\brk{\vecone\fE\cap\{\vX_{\Lambda,L}=\vX_{\Lambda^+,L}=1-L\}(|\bar\cL^+| + \log^{c'_5}n)^{3/2}} + o(1) ,
        &&\mbox[ \text{from \eqref{eqlem_unicyc_3}} ]\\
      &\le\ex\brk{\vecone\fE\cap\{\vX^+=-|\cV^+|\}(|\bar\cL^+| + \log^{c'_5}n)^{3/2}} + o(1) ,
        &&\mbox[ \text{from \eqref{eqlem_unicyc_1}, \eqref{eqlem_unicyc_2}} ]\\
        &\le\pr\brk{\vecone\fE\cap\{\vX^+=-|\cV^+|\}\cdot|\bar\cL^+|>\log^{c_7}n} (2n)^{3/2}  \\
        &\quad + \pr\brk{\{\vX_{\Lambda,L}=1-L\}} (2\log^{c_7}n)^{3/2} +o(1)
      &&\mbox[ \text{total probability} ]  \\
      &= o(1)      &&\mbox[ \text{from }\eqref{eqlem_unicyc_6}, \text{\Lem}~\ref{lem_excess} ]  
  \end{align*}
  completing the proof.
\end{proof}

\begin{proof}[Proof of \Lem~\ref{lem_EIchi}]
  The assertion is immediate from \Cor~\ref{lem:coup}, \Lem~\ref{lem_unicyc}, \Lem~\ref{lem_excess} and the deterministic bound $|\bar\cL|\leq 2n$.
\end{proof}

\section{Proof of \Prop~\ref{prop_arnab}}\label{sec_prop_arnab}

\noindent
Let $\pi^{(\ell)}_{d,k}=\BP_{d,k}^{\ell}(\delta_{1/2})$ be the distribution obtained after $\ell$ iterations of $\BP_{d,k}(\cdot)$, with the convention $\pi^{(0)}_{d,k}=\delta_{1/2}$.
We recall $(\MU_{\pi,i,j})_{i,j\geq1}$ signify independent random variables with distribution $\pi$.

\begin{fact}\label{prop_identical}
For all $\ell \geq 0$ the random variables $\MU_{\pi^{(\ell)}_{d,k},1,1}$ and $1-\MU_{\pi^{(\ell)}_{d,k},1,1}$ are identically distributed.
\end{fact}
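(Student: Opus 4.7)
The plan is to argue by induction on $\ell$, but in fact the inductive step follows from a stronger observation that holds for every input distribution: the operator $\BP_{d,k}$ maps $\cP(0,1)$ into the subclass of probability measures that are symmetric about $1/2$. Thus the content of the fact is almost entirely in a single symmetry calculation; the base case $\ell=0$ is trivial since $\pi^{(0)}_{d,k}=\delta_{1/2}$ is supported at $1/2$.

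For the inductive step, fix an arbitrary $\pi\in\cP(0,1)$ and recall from \eqref{eqhat} that, abbreviating
\begin{align*}
  A&=\prod_{i=1}^{\vd^-}\MU_{\pi,2i-1},
  &
  B&=\prod_{i=1}^{\vd^+}\MU_{\pi,2i},
\end{align*}
we have $\hat\MU_{\pi}=A/(A+B)$ and hence $1-\hat\MU_{\pi}=B/(A+B)$. The key point is that the random pair $(\vd^+,\vd^-)$ is exchangeable, as both components are independent $\Po(d/2)$ variables, while the sequence $(\MU_{\pi,i})_{i\geq 1}$ is i.i.d.\ (because the underlying array $(\MU_{\pi,i,j})_{i,j\geq 1}$ is i.i.d.). Conditioning on $(\vd^+,\vd^-)=(m^+,m^-)$, the variables $A$ and $B$ are independent products of disjoint blocks of i.i.d.\ copies of $\MU_{\pi,1}$, so their joint conditional law depends only on the multiset $\{m^+,m^-\}$. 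Swapping $\vd^+\leftrightarrow\vd^-$ and relabelling the odd- and even-indexed $\MU_{\pi,i}$ therefore yields $(A,B)\disteq(B,A)$, and applying the continuous map $(a,b)\mapsto a/(a+b)$ gives $\hat\MU_\pi\disteq 1-\hat\MU_\pi$.

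Taking $\pi=\pi^{(\ell-1)}_{d,k}$ in this identity, and observing that $\MU_{\pi^{(\ell)}_{d,k},1,1}\disteq\hat\MU_{\pi^{(\ell-1)}_{d,k}}$ by definition of $\BP_{d,k}$, completes the inductive step. No step in this argument presents a real obstacle; the only place to exercise care is to formalise the ``relabelling'' of i.i.d.\ variables as an honest coupling, which is routine by conditioning on $(\vd^+,\vd^-)$ as indicated.
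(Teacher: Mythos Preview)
Your proof is correct and follows essentially the same approach as the paper: the paper's proof is the one-line observation that $\vd^+$ and $\vd^-$ are identically distributed, and your argument is a careful unpacking of exactly this symmetry (exchangeability of $(\vd^+,\vd^-)$ plus the i.i.d.\ structure of the $\MU_{\pi,i}$) to conclude $\hat\MU_\pi\disteq 1-\hat\MU_\pi$ for every $\pi$.
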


\begin{proof}
This is an immediate consequence of the fact that the random variables $\vd^+,\vd^-$ from the definition~\eqref{eqBPop}--\eqref{eqhat} are identically distributed.
\end{proof}

While the following is a direct consequence of the fact that Belief Propagation is `exact on trees' (see~\cite[\Chap~14]{MM} for precise statements), we carry out a detailed proof for the sake of completeness.
Following the conventions from \Sec~\ref{sec_results_num_sol}, we continue to denote by $\TAU^{(\ell)}$ a random satisfying assignment of the $k$-CNF $\TT^{(\ell)}=\TT^{(\ell)}_{d,k}$.

\begin{fact}\label{fac_BPexact}
For all $\ell\ge 0$, $d>0$ we have $\pr\brk{\TAU^{(\ell)}(\root)=1\mid\TT}\disteq\pi^{(\ell)}_{d,k}\enspace.$
\end{fact}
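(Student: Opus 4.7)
The plan is to prove the claim by induction on $\ell\geq0$, using the recursive structure of the Galton-Watson tree $\TT$ and matching the resulting local marginal expression with the Belief Propagation operator \eqref{eqBPop}--\eqref{eqhat}. The base case $\ell=0$ is immediate, since $\TT^{(0)}$ consists of the root $\root$ alone and no clauses, so $\pr[\TAU^{(0)}(\root)=1\mid\TT]=1/2$ almost surely, which matches $\pi^{(0)}_{d,k}=\delta_{1/2}$.

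For the inductive step, I would first unpack the structure of $\TT^{(\ell+1)}$ around $\root$. The root has a $\Po(d)$ number of child clauses, which split according to the uniform random signs into $\vd^+\disteq\Po(d/2)$ clauses in which $\root$ appears positively and $\vd^-\disteq\Po(d/2)$ clauses in which it appears negatively, independently. Each child clause $a$ has $k-1$ further variable children $y_1^{(a)},\ldots,y_{k-1}^{(a)}$, each carrying an independent uniform sign $\sign(y_j^{(a)},a)\in\PM$ and each being the root of an independent subtree distributed as $\TT^{(\ell)}$. Writing $Z_{\pm1}(\TT^{(\ell+1)})$ for the number of satisfying assignments of $\TT^{(\ell+1)}$ with $\TAU(\root)=\pm1$, I would then observe that a clause $a$ with $\sign(\root,a)=1$ is automatically satisfied when $\root=1$, whereas a clause $a$ with $\sign(\root,a)=-1$ requires some $y_j^{(a)}$ to satisfy its literal in $a$. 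Factoring the subtree contributions, which are disjoint and hence independent, this gives
\begin{align*}
  \pr\brk{\TAU^{(\ell+1)}(\root)=1\mid\TT}&=\frac{\prod_{a:\sign(\root,a)=-1}\bc{1-\prod_{j=1}^{k-1}\vq_{j,a}}}{\prod_{a:\sign(\root,a)=-1}\bc{1-\prod_{j=1}^{k-1}\vq_{j,a}}+\prod_{a:\sign(\root,a)=+1}\bc{1-\prod_{j=1}^{k-1}\vq_{j,a}}},
\end{align*}
where $\vq_{j,a}=\pr[\SIGMA_{\TT_{y_j^{(a)}}^{(\ell)}}(y_j^{(a)})\neq \sign(y_j^{(a)},a)\mid\TT_{y_j^{(a)}}^{(\ell)}]$ is the `message' from $y_j^{(a)}$ to $a$.

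The next step is to verify that the family $(\vq_{j,a})$ is i.i.d.\ with distribution $\pi^{(\ell)}_{d,k}$. Independence is immediate because the subtrees $\TT_{y_j^{(a)}}^{(\ell)}$ are disjoint and the signs $\sign(y_j^{(a)},a)$ are independent of everything else; distributional equality follows from the inductive hypothesis, which gives $\pr[\SIGMA_{\TT_{y_j^{(a)}}^{(\ell)}}(y_j^{(a)})=1\mid\TT_{y_j^{(a)}}^{(\ell)}]\disteq\pi^{(\ell)}_{d,k}$, combined with Fact \ref{prop_identical} to absorb the sign $\sign(y_j^{(a)},a)$. Comparing the displayed formula with the definition \eqref{eqhat} of $\hat\MU_{\pi^{(\ell)}_{d,k}}$, where $\vd^+,\vd^-$ play the role of the counts of positive and negative child clauses and the $\MU_{\pi,i,j}$ correspond to the $\vq_{j,a}$, the law of $\pr[\TAU^{(\ell+1)}(\root)=1\mid\TT]$ is seen to be $\BP_{d,k}(\pi^{(\ell)}_{d,k})=\pi^{(\ell+1)}_{d,k}$, completing the induction.

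This argument is essentially bookkeeping; there is no real obstacle beyond being careful about two points. First, one needs the observation that the conditional distribution of a uniform satisfying assignment factorises over disjoint subtrees hanging off distinct child clauses, which follows because the joint weight of an assignment decomposes as a product over clauses and the subtrees share only the root. Second, one must invoke Fact \ref{prop_identical} at the right moment to argue that the sign $\sign(y_j^{(a)},a)$ does not alter the distribution of the message; without the symmetry of $\pi^{(\ell)}_{d,k}$ under $u\mapsto1-u$ the inductive identification with $\BP_{d,k}$ would not go through.
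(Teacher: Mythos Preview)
Your proposal is correct and follows essentially the same approach as the paper: induction on $\ell$, factorisation of the satisfying-assignment count over the disjoint subtrees hanging off the child clauses of $\root$, the inductive hypothesis to identify the subtree marginals with $\pi_{d,k}^{(\ell)}$, and Fact~\ref{prop_identical} to absorb the random signs before matching with \eqref{eqhat}. The paper writes the recursion in terms of the counts $\vZ^{(\ell)}(s)$ and $\vZ_{y\to a}^{(\ell-1)}(s)$ rather than directly in terms of the marginals $\vq_{j,a}$, but the argument is otherwise identical.
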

\begin{proof}
We proceed by induction on $\ell$.
As $\pi^{(0)}_{d,k}=\delta_{1/2}$, for $\ell=0$ there is nothing to show.
To go from $\ell-1$ to $\ell\geq1$, for a clause $a\in\partial_\TT \root$ and a variable $y\in\partial_\TT a\setminus\cbc\root$ let $\TT_{y\to a}$ be the component of the forest $\TT-a$ obtained by removing clause $a$ that contains variable $y$.
We consider $y$ the root of $\TT_{y\to a}$.
Further, obtain $\TT_{y\to a}^{(\ell-1)}$ from $\TT_{y\to a}$ by deleting all clauses and variables at a distance greater than $2(\ell-1)$ from $y$.
Additionally, for $s\in\PM$ let
\begin{align}\label{eq_fac_BPexact_0}
\vZ^{(\ell)}(s)&=\abs{\cbc{\sigma\in S(\TT^{(\ell)}):\sigma(\root)=s}},&
\vZ_{y\to a}^{(\ell-1)}(s)&=\abs{\cbc{\sigma\in S(\TT_{y\to a}^{(\ell-1)}):\sigma(y)=s}}.
\end{align}
In words, $\vZ^{(\ell)}(s)$ is the number of satisfying assignments of $\TT^{(\ell)}$ that set the root $\root$ to $s$, and $\vZ_{y\to a}^{(\ell-1)}(s)$ is the corresponding quantity for the sub-tree $\TT^{(\ell-1)}_{y\to a}$.

Clearly, setting $\root$ to $s\in\PM$ immediately satisfies all clauses $a\in\partial_\TT^s\root$.
By contrast, once $\root$ is assigned the value $+1$ each clause $a\in\partial_\TT^{-s}\root$ needs to be satisfied by setting some other variable $y\in\partial_\TT a\setminus\cbc\root$ to the value $\sign(y,a)$.
Hence,
\begin{align}\label{eq_fac_BPexact_1}
\vZ^{(\ell)}(s)&=\brk{\prod_{a\in\partial^+_{\TT}\root}\prod_{y\in\partial_{\TT}a\setminus\cbc\root}\sum_{t\in\PM}\vZ^{(\ell-1)}_{y\to a}\bc t}\cdot\brk{\prod_{a\in\partial^-_\TT\root} \bc{\prod_{y\in\partial_{\TT}a\setminus\cbc\root}\sum_{t\in\PM}\vZ^{(\ell-1)}_{y\to a}\bc t-\prod_{y\in\partial_{\TT}a\setminus\cbc\root}\vZ^{(\ell-1)}_{y\to a}\bc{-\sign(y,a)}}}.
\end{align}
Furthermore, the definition of the Galton-Watson tree $\TT$ ensures that the sub-trees $\TT_{y\to a}^{(\ell-1)}$ are independent copies of $\TT^{(\ell-1)}$.
Hence, by induction we have
\begin{align}\label{eq_fac_BPexact_2}
\frac{\vZ_{y\to a}^{(\ell-1)}(1)}{\sum_{s\in\PM}\vZ_{y\to a}^{(\ell-1)}(s)}&\disteq\pi_{d,k}^{(\ell-1)}&&\mbox{for all }a\in\partial_{\TT}\root,\,y\in\partial_{\TT}a\setminus\cbc\root,
\end{align}
and the random variables $\vZ_{y\to a}^{(\ell-1)}(1)/\sum_{s\in\PM}\vZ_{y\to a}^{(\ell-1)}(s)$ are mutually independent.
Combining~\eqref{eq_fac_BPexact_1}--\eqref{eq_fac_BPexact_2} with Fact~\ref{prop_identical}, we finally obtain
\begin{align*}
\pr\brk{\TAU^{(\ell)}(\root)=1\mid\TT}&=\frac{\vZ^{(\ell)}(1)}{\sum_{s\in\PM}\vZ^{(\ell)}(s)}\disteq\frac{\prod_{i=1}^{\vd^-}\brk{1-\prod_{j=1}^{k-1}\MU_{\pi_{d,k}^{(\ell-1)},2i-1,j}}}{\prod_{i=1}^{\vd^-}\brk{1-\prod_{j=1}^{k-1}\MU_{\pi_{d,k}^{(\ell-1)},2i-1,j}}+\prod_{i=1}^{\vd^+}\brk{1-\prod_{j=1}^{k-1}\MU_{\pi_{d,k}^{(\ell-1)},2i,j}}}\disteq\pi_{d,k}^{(\ell)},
\end{align*}
thereby completing the induction.
\end{proof}

Combining the combinatorial interpretation of the distributions $\pi_{d,k}^{(\ell)}$ with the Gibbs uniqueness property, we proceed to show that the sequence $(\pi_{d,k}^{(\ell)})_\ell$ converges in the weak topology.
To this end, it suffices to show that the sequence is Cauchy with respect to the Wasserstein $W_1$ metric.

\begin{lemma}\label{lem_weak_pi}
If $d < \duniq(k)$ then $(\pi^{(\ell)}_{d,k})_{\ell\geq0}$ is a $W_1$-Cauchy sequence.
\end{lemma}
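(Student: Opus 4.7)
The plan rests on two ingredients: the combinatorial identification of $\pi^{(\ell)}_{d,k}$ supplied by Fact~\ref{fac_BPexact}, and the tree Markov property combined with the Gibbs uniqueness hypothesis~\eqref{eqTreeUniq}.

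First, I would set up a coupling. For $\ell' \geq \ell$, Fact~\ref{fac_BPexact} realises both $\pi^{(\ell)}_{d,k}$ and $\pi^{(\ell')}_{d,k}$ on the common probability space of $\TT$ as $\vec\xi_\ell = \pr[\TAU^{(\ell)}(\root)=1\mid\TT]$ and $\vec\xi_{\ell'} = \pr[\TAU^{(\ell')}(\root)=1\mid\TT]$. Since both take values in $[0,1]$, this coupling alone yields
\begin{align*}
W_1(\pi^{(\ell)}_{d,k}, \pi^{(\ell')}_{d,k}) \leq \ex\abs{\vec\xi_\ell - \vec\xi_{\ell'}}.
\end{align*}

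Next, I would invoke the tree Markov property to compare $\vec\xi_\ell$ and $\vec\xi_{\ell'}$. Removing the variables at distance $2\ell$ disconnects $\TT^{(\ell')}$, so once we condition on $\TAU^{(\ell')}(x)=\tau(x)$ for all $x\in\partial^{2\ell}\root$, the constraints inside $V(\TT^{(\ell)})$ decouple from the constraints below $\partial^{2\ell}\root$. A short count shows that the conditional law of $\TAU^{(\ell')}$ restricted to $V(\TT^{(\ell)})$ is then uniform on satisfying assignments of $\TT^{(\ell)}$ with boundary $\tau$, and hence coincides with the conditional law of $\TAU^{(\ell)}$ given the same boundary. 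Decomposing over $\tau$ yields
\begin{align*}
\vec\xi_{\ell'} = \sum_\tau w(\tau)\,\pr[\TAU^{(\ell)}(\root)=1\mid\TT,\,\forall x\in\partial^{2\ell}\root:\TAU^{(\ell)}(x)=\tau(x)],
\end{align*}
with weights $w(\tau)=\pr[\forall x\in\partial^{2\ell}\root:\TAU^{(\ell')}(x)=\tau(x)\mid\TT]$ supported on boundary configurations $\tau$ that extend to some element of $S(\TT^{(\ell)})$ (automatically, since any $\sigma\in S(\TT^{(\ell')})$ restricts to an element of $S(\TT^{(\ell)})$). Subtracting $\vec\xi_\ell$ and bounding the resulting convex combination by its sup norm gives
\begin{align*}
\abs{\vec\xi_{\ell'}-\vec\xi_\ell}\leq\max_{\tau\in S(\TT^{(\ell)})}\abs{\pr[\TAU^{(\ell)}(\root)=1\mid\TT,\,\forall x\in\partial^{2\ell}\root:\TAU^{(\ell)}(x)=\tau(x)]-\pr[\TAU^{(\ell)}(\root)=1\mid\TT]}.
\end{align*}

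Taking expectation over $\TT$ and invoking Gibbs uniqueness~\eqref{eqTreeUniq} (valid because $d<\duniq(k)$), the right-hand side tends to $0$ as $\ell\to\infty$ uniformly in $\ell'\geq\ell$, proving that $(\pi^{(\ell)}_{d,k})_\ell$ is $W_1$-Cauchy. The only step that requires care is the Markov decomposition: one must verify that the number of extensions of a boundary assignment $\tau$ into the pendant subtrees hanging off $\partial^{2\ell}\root$ depends only on $\tau$ (not on how $\tau$ is extended back towards $\root$), so that the induced conditional law on $V(\TT^{(\ell)})$ is genuinely uniform and the $\max$ in~\eqref{eqTreeUniq} ranges over every boundary that can arise from $\TAU^{(\ell')}$. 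This is transparent since the pendant subtrees are vertex-disjoint and their constraints factorise once $\tau$ is fixed.
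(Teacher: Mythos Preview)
Your proposal is correct and follows essentially the same route as the paper's proof: both couple $\pi^{(\ell)}_{d,k}$ and $\pi^{(\ell')}_{d,k}$ via the common tree $\TT$ using Fact~\ref{fac_BPexact}, decompose $\pr[\TAU^{(\ell')}(\root)=1\mid\TT]$ by conditioning on the values at level $2\ell$, identify the resulting conditional law with that of $\TAU^{(\ell)}$ via the tree Markov property, and then bound by the supremum over boundary conditions to invoke~\eqref{eqTreeUniq}. The only cosmetic difference is that the paper phrases the final step through an $\eps$-event $\fU_{\eps,\ell}$ of small probability, whereas you take the expectation of the maximum directly; the content is the same.
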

\begin{proof}
If $d<\duniq(k)$ then the random tree $\TT=\TT_{d,k}$ enjoys the Gibbs uniqueness property; hence, ~\eqref{eqTreeUniq} is satisfied.
Consequently, given $0<\eps<1$ we can choose $\ell_0=\ell_0(d,k,\eps)>0$ large enough so that the event
\begin{align*}
\fU_{\eps,\ell}=\cbc{\max_{\tau\in S(\TT^{(\ell)})}\abs{\pr\brk{\TAU^{(\ell)}(\root)=1\mid\TT}-\pr\brk{\TAU^{(\ell)}(\root)=1\mid\TT,\,\forall x\in\partial^{2\ell}\root:\TAU^{(\ell)}(x)=\tau(x)} }>\eps}
\end{align*}
has probability
\begin{align}\label{eq_lem_weak_pi_1}
\pr\brk{\fU_{\eps,\ell}}&<\eps \enspace,&&\mbox{for all }\ell\geq\ell_0.
\end{align}

Now suppose that $\ell_0\leq\ell<\ell'$.
Let $\TAU^{(\ell)},\TAU^{(\ell')}$ be independent uniformly random satisfying assignments of $\TT^{(\ell)}$ and $\TT^{(\ell')}$, respectively.
We claim that
\begin{align}\label{eq_lem_weak_pi_2}
\pr\brk{\abs{\pr\brk{\TAU^{(\ell)}(\root)=1\mid\TT}-\pr\brk{\TAU^{(\ell')}(\root)=1\mid\TT}}>\eps}&<\eps.
\end{align}
To see this, let $\TAU^{(\ell,\ell')}=(\TAU^{(\ell')}(x))_{x\in\partial^{2\ell}_{\TT}\root}$ comprise the truth values that $\TAU^{(\ell')}$ assigns to the variables at distance exactly $2\ell$ from $\root$.
Then
\begin{align*}
\pr\brk{\TAU^{(\ell')}(\root)=1\mid\TT}&=
\ex\brk{\pr\brk{\TAU^{(\ell')}(\root)=1\mid\TT,\TAU^{(\ell,\ell')}}\mid\TT}\\&=
\ex\brk{\pr\brk{\TAU^{(\ell)}(\root)=1\mid\TT,\TAU^{(\ell,\ell')},\,\forall x\in\partial^{2\ell}\root:\TAU^{(\ell)}(x)=\TAU^{(\ell,\ell')}_x} \mid\TT}.
\end{align*}
Hence, for every $T \in \fU_{\eps,\ell}$ we have
\begin{align}\label{eq_lem_weak_pi_3}
\abs{\pr\brk{\TAU^{(\ell')}(\root)=1\mid\TT = T}-\pr\brk{\TAU^{(\ell)}(\root)=1\mid\TT = T}}&\leq\eps.
\end{align}
Thus, \eqref{eq_lem_weak_pi_2} follows from~\eqref{eq_lem_weak_pi_1} and~\eqref{eq_lem_weak_pi_3}.

Finally, since Fact~\ref{fac_BPexact} demonstrates that $\pr\brk{\TAU^{(\ell)}(\root)=1\mid\TT = T}\disteq\pi_{d,k}^{(\ell)}$ and 
$\pr\brk{\TAU^{(\ell')}(\root)=1\mid\TT = T}\disteq\pi_{d,k}^{(\ell')}$, \eqref{eq_lem_weak_pi_2} shows that
\begin{align*}
W_1\bc{\pi_{d,k}^{(\ell)},\pi_{d,k}^{(\ell')}}&<2\eps&&\mbox{	for all $\ell_0\leq\ell<\ell'$.}
\end{align*}
Hence, the sequence $(\pi_{d,k}^{(\ell)})_\ell$ is Cauchy.
\end{proof}

We are left to bound the lower tail of the limiting distribution $\pi_{d,k}=\lim_{\ell \to \infty}\pi^{(\ell)}_{d,k}$.

\begin{lemma}\label{lem_secmu}
If $d<\duniq(k)$ then $\ex\log^2\MU_{\pi_{d,k},1,1} < \infty$ .
\end{lemma}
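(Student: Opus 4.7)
The plan is to use Fact~\ref{fac_BPexact} to reinterpret $\MU^{(\ell)}:=\MU_{\pi_{d,k}^{(\ell)},1,1}$ combinatorially as $\pr[\TAU^{(\ell)}(\root)=1\mid\TT]$, to bound $(\log\MU^{(\ell)})^2$ uniformly in $\ell$ via the \pulp\ analysis of \Sec~\ref{sec_pulp_pr}, and then to pass to the limit $\ell\to\infty$ via weak convergence and lower semicontinuity. Specifically, since $\MU^{(\ell)}=Z(\TT^{(\ell)},\{\root\})/Z(\TT^{(\ell)})\in(0,1)$, \Lem~\ref{lem_Ichi} applied to the $k$-CNF $\TT^{(\ell)}$ with initial literal set $\cL=\{\root\}$ yields $-\log\MU^{(\ell)}\le|\bar\cL_{\TT^{(\ell)}}|\log 2$ whenever \pulp\ terminates without reporting a contradiction.

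First I would handle the typical event $\{\fn_\root(1,\TT)<\ell\}$. On this event the pure-literal recursion underlying \pulp\ fits strictly inside the $2\ell$-neighbourhood of $\root$, so \pulp\ does not abort and $\bar\cL_{\TT^{(\ell)}}$ coincides with the closure $\overline{\{\root\}}_\TT$. Combining with the second-moment bound of \Cor~\ref{cor_c3} then gives the uniform estimate
\begin{align*}
\ex\brk{(\log\MU^{(\ell)})^2\vecone\cbc{\fn_\root(1,\TT)<\ell}}\le(\log 2)^2\ex\brk{|\overline{\{\root\}}_{\TT}|^2}<\infty.
\end{align*}
On the complementary event $\{\fn_\root(1,\TT)\ge\ell\}$ I would invoke the deterministic bound $-\log\MU^{(\ell)}\le|V(\TT^{(\ell)})|\log 2$, using that $Z(\TT^{(\ell)},\{\root\})\ge 1$ by a straightforward top-down construction on the tree (set $\root=1$, and for each still-unsatisfied clause pick one unresolved child variable to satisfy it, iterating down to the leaves). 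Since $|V(\TT^{(\ell)})|$ is a Galton--Watson population with offspring mean $d(k-1)$, standard moment estimates give $\ex[|V(\TT^{(\ell)})|^4]=O((d(k-1))^{4\ell})$; combining this via Cauchy--Schwarz with the doubly-exponential tail $\pr[\fn_\root(1,\TT)\ge\ell]\le c_1\exp(-\exp(c_2\ell))$ from \Lem~\ref{lem:Peeling_on_Trees} shows that the contribution of the bad event tends to zero as $\ell\to\infty$, and in particular is uniformly bounded in $\ell$.

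Adding the two contributions yields $M:=\sup_{\ell\ge 0}\ex[(\log\MU^{(\ell)})^2]<\infty$. To conclude I would invoke \Lem~\ref{lem_weak_pi}, which establishes $\pi_{d,k}^{(\ell)}\to\pi_{d,k}$ in $W_1$ and hence weakly on $[0,1]$. Since $\log^2$ extends to a non-negative lower semicontinuous function on $[0,1]$ (with value $+\infty$ at the endpoints), the Portmanteau theorem gives
\begin{align*}
\ex\brk{\log^2\MU_{\pi_{d,k},1,1}}\le\liminf_{\ell\to\infty}\ex\brk{(\log\MU^{(\ell)})^2}\le M<\infty,
\end{align*}
as desired. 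As a useful by-product this forces $\pi_{d,k}(\{0,1\})=0$, confirming that the limiting measure indeed lies in $\cP(0,1)$. The main technical point I anticipate is the identification $\bar\cL_{\TT^{(\ell)}}=\overline{\{\root\}}_\TT$ on the good event, which requires checking that truncation at depth $2\ell$ beyond the pure-literal height alters neither the breadth-first order in which \pulp\ visits clauses nor the minimum-height selections of Step~7 of Algorithm~\ref{fig_pulp}; once this is verified the rest of the argument is a routine combination of the previously developed tools.
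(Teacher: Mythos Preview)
Your proposal is correct and follows essentially the same route as the paper: identify $\MU^{(\ell)}$ combinatorially via Fact~\ref{fac_BPexact}, bound $(\log\MU^{(\ell)})^2$ through \Lem~\ref{lem_Ichi} and \Cor~\ref{cor_c3}, and then pass to the limit $\ell\to\infty$. The paper's version is terser---it omits your good/bad event split and handles the limit by truncation plus monotone convergence (bounding $\ex[N\wedge\log^2\MU_{\pi_{d,k},1,1}]=\lim_\ell\ex[N\wedge\log^2\MU^{(\ell)}]$ uniformly and then sending $N\to\infty$) rather than Portmanteau; your route is equally valid, modulo the harmless slip that $\log^2(1)=0$ rather than $+\infty$ (so the by-product only gives $\pi_{d,k}(\{0\})=0$ directly, with $\pi_{d,k}(\{1\})=0$ coming from the symmetry of Fact~\ref{prop_identical}).
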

\begin{proof}
We are going to bound $\ex\log^2\MU_{\pi_{d,k}^{(\ell)},1,1}$ and subsequently invoke the monotone convergence theorem to complete the proof.
First, we note that for all $\ell \ge 0$ we have
\begin{align*}
  \ex\log^2\MU_{\pi^{(\ell)}_{d,k},1,1} &= \ex\log^2\frac{Z(\TT^{(\ell)}, \{\root\})}{Z(\TT^{(\ell)})}&&\mbox{[by \Fact~\ref{fac_BPexact}]}\\
                                        &\le \Erw \brk{|\overline{\{\root\}}_{\TT}|^2}&&\mbox{[by \Lem~\ref{lem_Ichi}]}\\
&<\infty&&\mbox{[by \Cor~\ref{cor_c3}]}.
\end{align*}
Since $\pi_{d,k}$ is the weak limit of $(\pi_{d,k}^{(\ell)})_\ell$, we conclude that for any $N\in\NN$,
\begin{align}\label{eq_lem_secmu_1}
\ex\brk{ N \wedge \log^2\MU_{\pi_{d,k},1,1} } = \lim_{\ell \to \infty} \ex\brk{N \wedge \log^2\MU_{\pi^{(\ell)}_{d,k},1,1} } \le\Erw \brk{|\overline{\{\root\}}_{\TT}|^2}<\infty.
\end{align}
Finally,  applying the monotone convergence theorem to the limit $N\to\infty$, we see that the uniform bound~\eqref{eq_lem_secmu_1} implies the assertion.
\end{proof}

\begin{proof}[Proof of \Prop~\ref{prop_arnab}]
In light of Fact~\ref{prop_identical} and \Lem s~\ref{lem_weak_pi} and \ref{lem_secmu}, it only remains to show that
\begin{align*}
\ex\left|\log\bc{\prod_{i=1}^{\vd^-}\MU_{\pi_{d,k},2i}+\prod_{i=1}^{\vd^+}\MU_{\pi_{d,k},2i-1}}\right| < \infty
&&\text{ and }&&
\ex\left|\log\bc{1-\prod_{j=1}^k\MU_{\pi_{d,k},1,j}}\right| < \infty
\enspace.
\end{align*}
Recall the definition of $\MU_{\pi_{d,k},i}$ from \eqref{eqhat}. Using \Fact~\ref{prop_identical} and \Lem~\ref{lem_secmu}, we obtain
\begin{align*}
\ex\left|\log\bc{\prod_{i=1}^{\vd^-}\MU_{\pi_{d,k},2i}+\prod_{i=1}^{\vd^+}\MU_{\pi_{d,k},2i-1}}\right|
\le \log(2) +   \ex\left|\log{\prod_{i=1}^{\vd^-}\MU_{\pi_{d,k},2i}}\right| 
&\le  \log(2) +  \frac{d}{2} \ex\left|\log \MU_{\pi_{d,k},1}\right|\\
&\le  \log(2) +  \frac{d}{2} \sqrt{ \ex\left|\log^2 {\MU_{\pi_{d,k},1,1}}\right|} < \infty
\enspace,
\end{align*}
yielding the first inequality. Similarly, invoking \Fact~\ref{prop_identical} and \Lem~\ref{lem_secmu} for the second l.h.s. above gives
\begin{align*}
\ex\left|\log\bc{1-\prod_{j=1}^k\MU_{\pi_{d,k},1,j}}\right| 
\le \ex\left|\log\bc{1-\MU_{\pi_{d,k},1,1}}\right|
= \ex\left|\log\bc{\MU_{\pi_{d,k},1,1}}\right|
\le \sqrt{\ex\left|\log^2{\MU_{\pi_{d,k},1,1}}\right|}< \infty
\enspace,
\end{align*}
thereby completing the proof.
\end{proof}

\section{Proof of \Cor~\ref{cor_interpolation}}\label{sec_prop_interpolation}

\noindent
In order to turn the estimate of the expectation of $\log 1\vee Z(\PHI)$ provided by \Prop~\ref{prop_aizenman} into a `with high probability' statement, we harness a `soft' version of the $k$-SAT problem where violated clauses are discouraged but not strictly forbidden.
To be precise, for a $k$-CNF $\Phi$ and a real $\beta>0$ define
\begin{align}\label{eqZbeta_exp}
Z_\beta(\Phi)&=\sum_{\sigma\in\PM^{V(\Phi)}}\prod_{a\in F(\Phi)}\exp(-\beta\vecone\{\sigma\not\models a\}).
\end{align}
Thus, each satisfying assignment contributes one to the sum on the r.h.s.\ of~\eqref{eqZbeta_exp},
while the contribution of assignments that violate a number $M$ of clauses equals $\exp(-\beta M)$.
The value $Z_\beta(\PHI)$, called the {\em partition function of the random $k$-SAT model at inverse temperature $\beta$}, has received a considerable
amount of attention in the mathematical physics literature (see, e.g.,~\cite{PanchenkoBook}).
Crucially, by means of an interpolation argument~\cite{FranzLeone,Guerra} it is possible to prove the following.

\begin{theorem}[{\cite[\Thm~1]{PanchenkoTalagrand}}]\label{thm_interp}
For any $k\geq 3$, any $\beta>0$ and any probability measure $\pi$ on $[0,1]$ we have
\begin{align}\label{eqthm_interp}
\frac1n\ex\brk{\log Z_\beta(\PHI)}&\leq
\ex\brk{\log\bc{\prod_{i=1}^{\vd^-}\MU_{\beta, \pi,2i}+\prod_{i=1}^{\vd^+}\MU_{\beta, \pi,2i-1}}
-\frac{d(k-1)}{k}\log\bc{1-\bc{1- e^{-\beta}}\prod_{j=1}^k\MU_{\pi,1,j}}},&&\mbox{where}\\
\MU_{\beta,\pi,i}&=1-(1-\exp(-\beta))\prod_{j=1}^{k-1}\MU_{\pi,i,j}&&\mbox{ for $i\geq1$}.\nonumber
\end{align}
\end{theorem}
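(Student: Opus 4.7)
The plan is to prove \Thm~\ref{thm_interp} via the Franz--Leone/Panchenko--Talagrand interpolation scheme. Fix the target distribution $\pi$ and set $c=1-e^{-\beta}$. I introduce an interpolating random formula $\Phi(t)$, $t\in[0,1]$, comprising two independent ingredients: (i) $\Po(tdn/k)$ genuine random $k$-clauses drawn as in $\PHI$; and (ii) for each variable $x_i$, an independent $\Po((1-t)d)$ collection of \emph{cavity half-clauses}, where the $\alpha$-th such half-clause attached to $x_i$ carries a uniform sign $\vec s_0^{(\alpha)}\in\PM$ together with $k-1$ fresh iid cavity fields $\MU_j^{(\alpha)}\sim\pi$, contributing multiplicatively to the partition function the weight $w_{\alpha,i}(\sigma)=1-c\,\vecone\{\sigma_i=-\vec s_0^{(\alpha)}\}\prod_{j=1}^{k-1}\MU_j^{(\alpha)}$. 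At $t=1$ only real clauses remain and $Z_\beta(\Phi(1))=Z_\beta(\PHI)$. At $t=0$ only cavity half-clauses remain, $Z_\beta(\Phi(0))$ factorises over the $n$ variables, and Poisson thinning of the signs $\vec s_0$ into independent $\Po(d/2)$-many positive and negative halves yields directly $\tfrac1n\ex\log Z_\beta(\Phi(0))=\ex\log\bc{\prod_{i=1}^{\vd^-}\MU_{\beta,\pi,2i}+\prod_{i=1}^{\vd^+}\MU_{\beta,\pi,2i-1}}$.

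The next step is to differentiate in $t$. Writing $\bck{\cdot}_t$ for the Gibbs average under $\Phi(t)$, the Campbell--Mecke formula for Poisson point processes yields
\begin{align*}
\tfrac{d}{dt}\tfrac1n\ex\log Z_\beta(\Phi(t))=\tfrac{d}{k}\,A(t)-d\,B(t),
\end{align*}
where $A(t)=\ex\log\bck{1-c\prod_{j=1}^k\vecone\{\sigma_{i_j}=-s_j\}}_t$ is the expected log-weight of one fresh random real clause and $B(t)=\ex\log\bck{1-c\,\vecone\{\sigma_i=-s_0\}\prod_{j=1}^{k-1}\MU_j}_t$ the analogous quantity for a fresh cavity half-clause. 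I then expand both logarithms as the absolutely convergent series $-\log(1-x)=\sum_{n\geq1}x^n/n$ (valid because $|cX|\leq c<1$) and introduce $n$ iid Gibbs replicas $\sigma^1,\ldots,\sigma^n$. The key observation is that after averaging over the uniform variable indices and signs, both series reduce to expressions in the scalar replica overlap $R_n=\tfrac1n\sum_{i=1}^n\vecone\{\sigma^1_i=\cdots=\sigma^n_i\}$ and the cavity moments $q_n=\ex[\MU_{\pi,1,1}^n]\in[0,1]$: at order $n$ the real-clause term contributes $-\tfrac{c^n}{n\cdot 2^k}\ex\bck{R_n^k}_t$ while the cavity-clause term contributes $-\tfrac{c^n}{2n}q_n^{k-1}\ex\bck{R_n}_t$, the powers of $\tfrac12$ arising from the sign-average identity $\ex_s\prod_{a=1}^n\vecone\{\sigma^a_i=-s\}=\tfrac12\vecone\{\sigma^1_i=\cdots=\sigma^n_i\}$.

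The crux of the argument is then the order-by-order inequality
\begin{align*}
\tfrac{1}{2^k}\ex\bck{R_n^k}_t+(k-1)q_n^k\geq\tfrac{k}{2}q_n^{k-1}\ex\bck{R_n}_t\qquad(n\geq1),
\end{align*}
which I would establish in two steps: first, two applications of Jensen's inequality (once inside $\bck{\cdot}_t$ via convexity of $z\mapsto z^k$, once under the outer $\ex$) give $\ex\bck{R_n^k}_t\geq(\ex\bck{R_n}_t)^k$ and hence $\tfrac{1}{2^k}\ex\bck{R_n^k}_t\geq(\ex\bck{R_n}_t/2)^k$; second, AM--GM applied to the $k$ non-negative reals $\ex\bck{R_n}_t/2,q_n,\ldots,q_n$ yields $(\ex\bck{R_n}_t/2)^k+(k-1)q_n^k\geq k(\ex\bck{R_n}_t/2)q_n^{k-1}$. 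Multiplying each such inequality by $-c^n/n\leq 0$, summing over $n\geq1$, and identifying the resulting series with $A(t)+(k-1)C$ on one side and $kB(t)$ on the other (where $C=\ex\log(1-c\prod_{j=1}^k\MU_{\pi,1,j})$) produces $A(t)\leq kB(t)-(k-1)C$. Substituting into the derivative formula gives $\tfrac{d}{dt}\tfrac1n\ex\log Z_\beta(\Phi(t))\leq -\tfrac{d(k-1)}{k}C$, and integrating from $t=0$ to $t=1$ together with the boundary identification produces \eqref{eqthm_interp}. The main obstacle is the delicate but routine Poisson bookkeeping that identifies the order-$n$ coefficients cleanly in terms of the single scalar $R_n$ (together with a small correction from sampling $k$ distinct versus iid clause variables, negligible in the $n\to\infty$ limit); once this is in place, the sign of the derivative follows transparently from Jensen plus AM--GM.
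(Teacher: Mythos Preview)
The paper does not prove \Thm~\ref{thm_interp}; it is quoted verbatim from Panchenko--Talagrand (reference~\cite{PanchenkoTalagrand}) and used as a black box in the proof of \Cor~\ref{cor_interpolation}. So there is no ``paper's own proof'' to compare against.

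That said, your sketch is a faithful and correct outline of the Franz--Leone/Panchenko--Talagrand interpolation argument. Two minor remarks. First, the double use of $n$ for both the number of variables and the replica index is confusing; call the latter $m$. Second, the Jensen step is unnecessary: the inequality $r^k+(k-1)q^k\geq krq^{k-1}$ is just AM--GM applied to the $k$ non-negative reals $r,q,\ldots,q$, and it holds \emph{pointwise} for the random variable $r=R_m(\sigma^1,\ldots,\sigma^m)/2$ inside $\ex\bck{\cdot}_t$. Taking expectations of both sides directly yields $\ex\bck{(R_m/2)^k}_t+(k-1)q_m^k\geq kq_m^{k-1}\ex\bck{R_m/2}_t$, which is exactly what you need. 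Your route via Jensen first and AM--GM on the resulting constants is valid but circuitous. Finally, the distinct-versus-iid clause-variable correction you flag is indeed $O(1/n)$; the paper asserts the bound holds without any error term, which requires either working with the iid-variable clause model from the outset (as Panchenko--Talagrand do) or a slightly more careful bookkeeping, but this is immaterial for the application in \Sec~\ref{sec_prop_interpolation}, where only the $n\to\infty$ limit is used.
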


\noindent
We emphasise that the bound~\eqref{eqthm_interp} holds for any $n\geq k$ {\em without} an error term.
We also notice that by the monotone convergence theorem for the measure $\pi=\pi_{d,k}$ from \Thm~\ref{thm_main} we have
\begin{align}\nonumber
\lim_{\beta\to\infty}&
\ex\brk{\log\bc{\prod_{i=1}^{\vd^-}\MU_{\beta, \pi_{d,k},2i}+\prod_{i=1}^{\vd^+}\MU_{\beta, \pi_{d,k},2i-1}}
-\frac{d(k-1)}{k}\log\bc{1-\bc{1- e^{-\beta}}\prod_{j=1}^k\MU_{\pi_{d,k},1,j}}}\\
&= \ex\brk{\log\bc{\prod_{i=1}^{\vd^-}\MU_{ \pi_{d,k},2i}+\prod_{i=1}^{\vd^+}\MU_{ \pi_{d,k},2i-1}}
-\frac{d(k-1)}{k}\log\bc{1-\prod_{j=1}^k\MU_{\pi_{d,k},1,j}}} = \fB_{d,k}(\pi_{d,k})
\enspace.
\label{eq_dom}
\end{align}

The reason why we proceed by way of the `soft' model with $\beta<\infty$ is that for this model a routine application of Azuma-Hoeffding
implies the following concentration bound.

\begin{lemma}\label{lem_AH}
For any fixed $\beta>0$ we have $\pr\brk{\abs{\log Z_\beta(\PHI)-\ex\log Z_\beta(\PHI)}>\sqrt n\log n}=o(1/n).$
\end{lemma}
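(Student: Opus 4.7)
The proof plan relies on two standard probabilistic inputs: Azuma--Hoeffding for the bounded-differences martingale on the i.i.d.\ clauses, and Bennett's inequality (Lemma~\ref{bennett}) for the Poisson count $\vm$. The key structural remark is that $\log Z_\beta$ is $\beta$-Lipschitz with respect to individual clauses: since every factor $\exp(-\beta\vecone\{\sigma\not\models a\})$ in \eqref{eqZbeta_exp} lies in $[e^{-\beta},1]$, deleting or inserting a single clause rescales $Z_\beta$ by a factor in $[e^{-\beta},1]$. Hence, for any $k$-CNF $\Phi$ and any two clauses $a,a'$,
\[
\abs{\log Z_\beta(\Phi\cup\{a\})-\log Z_\beta(\Phi\cup\{a'\})}\leq\beta.
\]

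Set $f(m):=\ex[\log Z_\beta(\PHI)\mid\vm=m]$ and decompose
\[
\log Z_\beta(\PHI)-\ex\log Z_\beta(\PHI) \;=\; \bigl(\log Z_\beta(\PHI)-f(\vm)\bigr)+\bigl(f(\vm)-\ex f(\vm)\bigr).
\]
For the first summand, conditional on $\vm=m$ the clauses $a_1,\ldots,a_m$ are i.i.d., so the Doob martingale $M_i=\ex[\log Z_\beta(\PHI)\mid\vm=m,a_1,\ldots,a_i]$ has increments bounded by $\beta$ by the Lipschitz remark above. Azuma--Hoeffding then yields $\Pr[\abs{\log Z_\beta(\PHI)-f(\vm)}>t\mid\vm=m]\leq 2\exp(-t^2/(2m\beta^2))$. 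Applying Bennett's inequality to obtain $\Pr[\vm>2dn/k]=o(1/n)$ and integrating out $\vm$ on the complementary event, the choice $t=\tfrac12\sqrt n\log n$ gives $\Pr[\abs{\log Z_\beta(\PHI)-f(\vm)}>\tfrac12\sqrt n\log n]=o(1/n)$.

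For the second summand, the displayed Lipschitz bound implies $\abs{f(m+1)-f(m)}\leq\beta$ for every $m$, whence $f$ is $\beta$-Lipschitz on $\ZZpos$. Therefore
\[
\abs{f(\vm)-\ex f(\vm)}\;\leq\;\beta\abs{\vm-\ex\vm}+\beta\,\ex\abs{\vm-\ex\vm}\;\leq\;\beta\abs{\vm-dn/k}+O(\sqrt n).
\]
A second invocation of Bennett's inequality with $t=\sqrt n\log n/(4\beta)$ bounds $\Pr[\abs{\vm-dn/k}>\sqrt n\log n/(4\beta)]$ by $\exp(-\Omega(\log^2 n))=o(1/n)$, so on the complementary event the second summand is at most $\tfrac12\sqrt n\log n$ for $n$ large. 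A final union bound over the two $o(1/n)$ failure events concludes the argument. There is no serious obstacle here; the only subtlety is to dovetail the conditional Azuma step (natural \emph{given} $\vm$) with the Poisson fluctuation of $\vm$ itself, which the above decomposition accomplishes cleanly.
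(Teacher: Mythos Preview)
Your proof is correct and follows the same approach as the paper: the bounded-differences observation that each clause alters $\log Z_\beta$ by at most $\beta$, followed by Azuma--Hoeffding. The paper's proof is a one-line sketch that does not spell out how to handle the random Poisson number $\vm$ of clauses; your two-term decomposition (conditional Azuma given $\vm$, then Bennett for the fluctuation of $\vm$) is exactly the natural way to make that sketch rigorous.
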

\begin{proof}
The clauses of the random formula $\PHI$ are drawn independently, and adding or removing a single clause can alter the value of $\log Z_\beta(\nix)$
by no more than $\pm\beta$.
\end{proof}

\begin{proof}[Proof of \Cor~\ref{cor_interpolation}]
We proceed with a proof by contradiction. In particular, towards a contradiction, assume there exists an $\eps > 0$ such that 
for infinitely many $n \ge 1$ we have
\begin{align}\label{eq_fin_contra}
  \Pr\brk{\frac1n\log Z(\PHI) > \fB_{d,k}(\pi_{d,k}) + \eps }
    > \eps \enspace.
\end{align}
Moreover, by
\eqref{eq_dom} we can find a $\beta_0 >0$ such that for every $\beta \ge \beta_0$ we have 
\begin{align}\label{eq_beta_to_Bethe}
\left|\ex\brk{\log\bc{\prod_{i=1}^{\vd^-}\MU_{\beta, \pi_{d,k},2i}+\prod_{i=1}^{\vd^+}\MU_{\beta, \pi_{d,k},2i-1}}
-\frac{d(k-1)}{k}\log\bc{1-\bc{1- e^{-\beta}}\prod_{j=1}^k\MU_{\pi_{d,k},1,j}}} -   \fB_{d,k}(\pi_{d,k}) \right| < \eps/3 \enspace.
\end{align}
Invoking \Lem~\ref{lem_AH} for $\beta = \beta_0$ and sufficiently large $n$ gives
\begin{align}\label{eq_lem_inv}
  \Pr \brk{\frac1n\log Z_{\beta_{0}}(\PHI) >\frac1n \ex{\log Z_{\beta_0}(\PHI)} + \eps/3} \le \eps/3
  \enspace.
\end{align}
The definition~\eqref{eqZbeta_exp} of the partition function ensures that $Z_\beta(\PHI)\geq Z(\PHI)$ for all $\beta>0$.
Therefore, combining \eqref{eq_fin_contra}--\eqref{eq_lem_inv}, and \Thm~\ref{thm_interp} we see that for
large enough $n$ the following holds with probability at least $1 - \frac{2}{3} \eps$: 
\begin{align*}
  \frac1n \log Z(\PHI) \le \frac1n \log Z_{\beta_0}(\PHI)
  \le \frac1n \ex{\log Z_{\beta_0}(\PHI)} + \frac{\eps}{3} \le \fB_{d,k}(\pi_{d,k}) + \frac{2}{3} \eps
  \enspace,
\end{align*}
contradicting our assumption, and thus completing the proof.
\end{proof}

\section{Proof of \Prop~\ref{prop_aizenman}}\label{sec_prop_aizenman}

\noindent
In this section we prove \Prop s~\ref{lem_PHI''} and \ref{lem_PHI'''}, which in light of \Fact~\ref{fact_coupling},
imply \Prop~\ref{prop_aizenman}.
Both proofs follow a similar structure and make use of \Prop s~\ref{prop_bad} and~\ref{prop_benign}, which we therefore prove first.

\subsection{Proof of \Prop~\ref{prop_bad}}\label{sec_prop_bad}

We show that both terms of \eqref{eq_prop_bad} have finite expectation.
Let us begin with the first one.

\begin{lemma}\label{lem_bad''}
If $d<\duniq(k)$ then $\ex\brk{\left|\log\frac{Z(\PHI'')\vee 1}{Z(\PHI')\vee 1}\right|^{3/2}}=O(1)$.
\end{lemma}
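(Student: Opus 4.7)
The plan is to show $\Erw[R^{3/2}]=O(1)$ for $R:=\bigl|\log((Z(\PHI'')\vee 1)/(Z(\PHI')\vee 1))\bigr|$ by combining \Lem~\ref{lem_Ichi} and \Lem~\ref{lem_EIchi} with a careful choice of initial literal set for \pulp. Since $\PHI''$ is obtained from $\PHI'$ by adding clauses, $Z(\PHI'')\le Z(\PHI')$ deterministically, so $R=\log((Z(\PHI')\vee 1)/(Z(\PHI'')\vee 1))\ge 0$. Two trivial cases are immediate: if $\vec\Delta''=0$ then $\PHI''=\PHI'$; if $Z(\PHI')=0$ then $Z(\PHI'')=0$ too; in either case $R=0$. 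Hence I may assume $\vec\Delta''\ge 1$ and $Z(\PHI')\ge 1$ henceforth.

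For each added clause the plan is to pick a single literal to be deemed true, collecting these into a conflict-free set $\cL$ of literals of $\PHI'$. Any $\sigma\in S(\PHI',\cL)$ automatically satisfies all $\vec\Delta''$ new clauses (since one of their literals lies in $\cL$, hence evaluates to true under $\sigma$), so $Z(\PHI'')\ge Z(\PHI',\cL)$. Running \pulp\ on $(\PHI',\cL)$ and applying \Lem~\ref{lem_Ichi} then yields $Z(\PHI')\le 2^{|\bar\cL|}Z(\PHI',\cL)\le 2^{|\bar\cL|}Z(\PHI'')$, whence $R\le(\log 2)|\bar\cL|$. If \pulp\ aborts, the convention $|\bar\cL|=2n$ coupled with the deterministic bound $R\le n\log 2$ yields the same inequality. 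Thus whenever a valid $\cL$ is produced, $R^{3/2}\le(\log 2)^{3/2}|\bar\cL|^{3/2}$.

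To produce such an $\cL$ I process the extra clauses $a_1'',\ldots,a_{\vec\Delta''}''$ in order and, at step $i$, include any $l_i\in a_i''$ with $\neg l_i\notin\{l_1,\ldots,l_{i-1}\}$; the greedy construction fails only if every literal of $a_i''$ has its negation among the prior picks. For $i\le\log^2 n$, this demands the $k$ uniform variables of $a_i''$ to lie in a fixed set of size at most $i-1$ and to carry prescribed signs, an event of probability at most $\binom{i-1}{k}2^{-k}/\binom{n}{k}=O((\log n)^{2k}/n^k)$. Combined with the super-polynomial Poisson tail on $\{\vec\Delta''>\log^2 n\}$, a union bound gives
\begin{align*}
  \Pr[\fB]=O\bigl((\log n)^{2k+2}/n^k\bigr)=o(n^{-3/2})\qquad\text{for }k\ge 3,
\end{align*}
where $\fB$ denotes the bad event that the greedy construction fails or $|\cL|>\log^2 n$. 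On $\fB$ the trivial bound $R\le n\log 2$ yields $\Erw[R^{3/2}\vecone_\fB]=O(n^{3/2})\cdot o(n^{-3/2})=o(1)$.

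On the complementary good event $\fG$, the set $\cL$ is a deterministic function of the (independent) extra clauses, has size $1\le|\cL|\le\log^2 n$, and contains no complementary pair, so that \Lem~\ref{lem_EIchi} applies conditionally on $\cL$ and yields $\Erw[|\bar\cL|^{3/2}\mid \cL]\le C|\cL|^{3/2}$. Since $|\cL|\le\vec\Delta''\disteq\Po(d(k-1)/k)$ has all moments, integrating over the extra clauses gives $\Erw[R^{3/2}\vecone_\fG]\le(\log 2)^{3/2}C\,\Erw[(\vec\Delta'')^{3/2}]=O(1)$. Combining both cases proves the lemma. The main obstacle is precisely the greedy construction: it must succeed with probability $1-o(n^{-3/2})$ so as to absorb the worst-case $O(n)$ bound on $R$ on $\fB$, and this hinges critically on $k\ge 3$, the analogue failing already for $k=2$.
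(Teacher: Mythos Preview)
Your proof is correct and follows essentially the same strategy as the paper: isolate a high-probability good event on which a conflict-free literal set $\cL$ hitting every new clause exists, bound $R\le(\log 2)|\bar\cL|$ via \Lem~\ref{lem_Ichi} and then invoke \Lem~\ref{lem_EIchi}, and absorb the complementary bad event using the deterministic bound $R\le n\log 2$. The only cosmetic difference is that the paper defines its good event through a variable-collision condition $|\cX|>k(\vec\Delta''-1)$ (which in fact implies that your greedy procedure succeeds), whereas you construct $\cL$ explicitly by the greedy rule and bound its failure probability directly.
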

\begin{proof}
Since $\PHI''$ is obtained from $\PHI'$ by adding clauses, we have
\begin{align}\label{eq_lem_PHI''_1_1}
0\leq Z(\PHI'')\leq Z(\PHI')\leq 2^n.
\end{align}
Hence,
\begin{align}\label{eq_lem_bad''_1}
\log\frac{Z(\PHI'')\vee 1}{Z(\PHI')\vee 1}=0&&\mbox{ if }Z(\PHI')=0.
\end{align}
Therefore, we may assume from now on that $\PHI'$ is satisfiable.

The number $\vDelta''\disteq\Po(d(k-1)/k)$ of new clauses is a Poisson variable with bounded mean.
Therefore, Bennett's inequality shows that $\pr\brk{\vDelta''>\log n}=O(n^{-2})$.
Since~\eqref{eq_lem_PHI''_1_1} shows that $|\log((Z(\PHI'')\vee 1)/(Z(\PHI')\vee 1))|^{3/2}\leq n^{3/2}$, we conclude that
\begin{align}\label{eq_lem_PHI''_1_4}
\ex\brk{\vecone\cbc{\vDelta''>\log n}\cdot \left|\log\frac{Z(\PHI'')\vee 1}{Z(\PHI')\vee 1}\right|^{3/2}}&=o(1).
\end{align}

Further, let $c_1,\ldots,c_{\vDelta''}$ be the new clauses added by {\bf CPL2}.
Let $\vx_{1,1},\ldots,\vx_{1,k},\ldots,\vx_{\vDelta'',1},\ldots,\vx_{\vDelta'',k}$ be their constituent variables and let $\cX=\{\vx_{1,1},\ldots,\vx_{1,k},\ldots,\vx_{\vDelta'',1},\ldots,\vx_{\vDelta'',k}\}$.
Since the clauses $c_1,\ldots,c_{\vDelta''}$ are chosen uniformly and independently, a routine balls-into-bins consideration shows that
\begin{align}\label{eq_lem_PHI''_1_2}
\pr\brk{|\cX|\leq k(\vDelta''-1)\mid\vDelta''\leq\log n}&=\tilde O(n^{-2}).
\end{align}

Now, consider the `good' event
\begin{align*}
\fG&=\cbc{Z(\PHI')>0,\,\vDelta''\leq\log n,\,|\cX|> k(\vDelta''-1)}.
\end{align*}
Combining~\eqref{eq_lem_PHI''_1_1}--\eqref{eq_lem_PHI''_1_2}, we see that
\begin{align}\label{eq_lem_PHI''_1_3}
\ex\brk{(1-\vecone\fG)\cdot\left|\log\frac{Z(\PHI'')\vee 1}{Z(\PHI')\vee 1}\right|^{3/2}}&=o(1).
\end{align}
Hence, we are left to bound $\ex[\vecone\fG\cdot|\log((Z(\PHI'')\vee 1)/(Z(\PHI')\vee1))|^{3/2}]$.
If $\fG$ occurs and thus $|\cX|>k(\vDelta''-1)$, then there exists a set of literals $\cL\subset\cbc{\vx_{1,1},\neg\vx_{1,1},\ldots,\vx_{1,k},\neg\vx_{1,k},\ldots,\vx_{\vDelta'',1},\neg\vx_{\vDelta'',1},\ldots,\vx_{\vDelta'',k},\neg \vx_{\vDelta'',k}}$ such that
\begin{itemize}
\item every clause $c_i$ contains a literal from $\cL$ ($1\leq i\leq\vDelta''$), and
\item there does not exist $x\in\cX$ such that $x\in\cL$ and $\neg x\in\cL$.
\end{itemize}
Moreover, on $\fG$ we have $|\cL|\leq|\cX|\leq k\log n$.
Let $\bar\cL=\bar\cL_{\PHI'}$ be the output of \pulp\ on $(\PHI',\cL)$.
Then \Lem~\ref{lem_Ichi} shows that
\begin{align}\label{eq_lem_PHI''_1_3a}
\ex\brk{\vecone\fG\cdot\left|\log\frac{Z(\PHI'')\vee 1}{Z(\PHI')\vee 1}\right|^{3/2}}&\leq\ex\brk{\vecone\fG\cdot\abs{\bar\cL}^{3/2}}.
\end{align}
Furthermore, since by {\bf CPL2} the new clauses $c_1,\ldots,c_{\vDelta''}$ are chosen independently of the formula $\PHI'$, \Lem~\ref{lem_EIchi} implies that there exists $C=C(d,k)>0$ such that
\begin{align}\label{eq_lem_PHI''_1_3b}
\ex\brk{\vecone\fG\cdot\abs{\bar\cL}^{3/2}\mid\vDelta''}&\leq C\cdot\bc{\vDelta''}^{3/2}.
\end{align}
Combining~\eqref{eq_lem_PHI''_1_3a}--\eqref{eq_lem_PHI''_1_3b} and recalling that $\vDelta''\disteq\Po(d(k-1)/k)$, we obtain
\begin{align}\label{eq_lem_PHI''_1_3c}
\ex\brk{\vecone\fG\cdot\left|\log\frac{Z(\PHI'')\vee 1}{Z(\PHI')\vee 1}\right|^{3/2}}&=O(1).
\end{align}
Finally, the assertion follows from~\eqref{eq_lem_PHI''_1_3} and \eqref{eq_lem_PHI''_1_3c}.
\end{proof}

We move on to the second term of \eqref{eq_prop_bad}.

\begin{lemma}\label{lem_bad'''}
If $d<\duniq(k)$ then $\ex\brk{\left|\log\frac{Z(\PHI''')\vee 1}{Z(\PHI')\vee 1}\right|^{3/2}}=O(1)$.
\end{lemma}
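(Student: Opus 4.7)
My plan is to mirror the proof of \Lem~\ref{lem_bad''}, adjusted to account for the fact that $\PHI'''$ both adds the new variable $x_{n+1}$ and inserts clauses that constrain the old variables. The upper direction is essentially trivial: since $\PHI'''$ restricts to $\PHI'$ upon fixing $x_{n+1}$, we have $Z(\PHI''')\leq 2 Z(\PHI')$ deterministically, so $\log((Z(\PHI''')\vee 1)/(Z(\PHI')\vee 1))\leq\log 2$; moreover, if $Z(\PHI')=0$ then $Z(\PHI''')=0$ and the log-ratio vanishes. The main obstacle is the opposite direction, namely controlling how the newly inserted clauses might suppress $Z(\PHI''')$ relative to $Z(\PHI')$.

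For low-probability regimes I will argue exactly as in \Lem~\ref{lem_bad''}: by Bennett's inequality $\pr[\vec\Delta'''>\log n]=O(n^{-2})$, which together with the deterministic bound $|\log(\cdot)|\leq(n+1)\log 2$ contributes $o(1)$ to the $3/2$-moment. A balls-into-bins estimate (using $k\geq 3$) similarly excludes the rare event that the old-variable slots in the $\vec\Delta'''$ new clauses exhibit an excessive number of coincidences.

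The core step is the construction of a \pulp\ seed set. Fix the auxiliary assignment $x_{n+1}=+1$ and let $\cC_-$ denote the new clauses in which $x_{n+1}$ appears negatively; these are the only ones not automatically satisfied, and each leaves $k-1\geq 2$ old literals to be satisfied. On the good event I greedily pick one literal from each $c\in\cC_-$ to obtain a consistent set $\cL$ (no variable with both signs) of size $|\cL|\leq\vec\Delta'''\leq\log n$. Any $\sigma\in S(\PHI',\cL)$ then extends to a satisfying assignment of $\PHI'''$ by setting $x_{n+1}=+1$, hence $Z(\PHI''')\geq Z(\PHI',\cL)$.

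I then run \pulp\ on $(\PHI',\cL)$. \Lem~\ref{lem_Ichi} gives $Z(\PHI')\leq 2^{|\bar\cL|}Z(\PHI',\cL)\leq 2^{|\bar\cL|}Z(\PHI''')$ when \pulp\ succeeds, whereas the deterministic bounds $|\bar\cL|\leq 2n$ and $|\log(\cdot)|\leq n\log 2$ handle the contradiction branch. Since $\cL$ is determined purely by the freshly sampled clauses and is independent of $\PHI'$, \Lem~\ref{lem_EIchi} yields $\ex[|\bar\cL|^{3/2}\mid |\cL|]\leq C|\cL|^{3/2}$ on the good event; together with $|\cL|\leq\vec\Delta'''\disteq\Po(d)$ this yields an $O(1)$ bound, completing the proof.
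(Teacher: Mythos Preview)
Your proposal is correct and follows essentially the same route as the paper's proof: the deterministic bound $Z(\PHI''')\leq 2Z(\PHI')$, the Bennett and balls-into-bins estimates to dispose of the bad events, then a \pulp\ seed set $\cL$ built from the new clauses, followed by \Lem~\ref{lem_Ichi} and \Lem~\ref{lem_EIchi}. The only difference is cosmetic: you first fix $x_{n+1}=+1$ and seed $\cL$ only from the clauses in $\cC_-$, whereas the paper seeds $\cL$ with one old literal from \emph{every} new clause $b_i$ (which also guarantees $Z(\PHI''')\geq Z(\PHI',\cL)$, since any $\sigma\in S(\PHI',\cL)$ extends with either value of $x_{n+1}$). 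Your variant yields a possibly smaller seed set but the remainder of the argument is identical.
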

\begin{proof}
We proceed similarly as in the proof of \Lem~\ref{lem_bad''}.
The construction in {\bf CPL3} ensures that $\PHI'''$ contains one additional variable $x_{n+1}$ and $\vDelta'''\disteq\Po(d)$ new clauses $b_1,\ldots,b_{\vDelta'''}$ that each contain $x_{n+1}$ and $k-1$ other variables.
Let $\vx_{1,1},\ldots,\vx_{1,k-1},\ldots,\vx_{\vDelta''',1},\ldots,\vx_{\vDelta''',k-1}\in\{x_1,\ldots,x_n\}$ be the variables among $x_1,\ldots,x_n$ that appear in $b_1,\ldots,b_{\vDelta'''}$ and let $\cX=\{\vx_{1,1},\ldots,\vx_{\vDelta''',k-1}\}$.
Then
\begin{align}\label{eq_lem_PHI'''_1_1}
0\leq Z(\PHI'')\leq 2Z(\PHI')\leq 2^{n+1}.
\end{align}
Hence, if $\PHI'$ is unsatisfiable, then so is $\PHI'''$ and thus
\begin{align}\label{eq_lem_PHI'''_1_1a}
\log\frac{Z(\PHI''')\vee 1}{Z(\PHI')\vee1}&=0&&\mbox{ if }Z(\PHI')=0.
\end{align}
Furthermore, since $\vDelta'''\disteq\Po(d)$, Bennett's inequality shows that $\pr\brk{\vDelta'''>\log n}=O(n^{-2})$.
Therefore, \eqref{eq_lem_PHI'''_1_1} shows that
\begin{align}\label{eq_lem_PHI'''_1_2}
\ex\brk{\vecone\cbc{\vDelta'''>\log n}\cdot \left|\log\frac{Z(\PHI''')\vee 1}{Z(\PHI')\vee1}\right|^{3/2}}&=o(1).
\end{align}
Moreover, since the $k-1$ variables among $x_1,\ldots,x_n$ that appear in the clauses $b_1,\ldots,b_{\Delta'''}$ are chosen uniformly and independently, a simple balls-into-bins argument shows that
\begin{align}\label{eq_lem_PHI'''_1_2_a}
\pr\brk{|\cX|\leq(k-1)(\vDelta''-1)\mid\vDelta''\leq\log n}&=\tilde O(n^{-2}).
\end{align}

Hence, consider the event
\begin{align*}
\fG&=\cbc{Z(\PHI')>0,\,\vDelta'''\leq\log n,\,|\cX|>(k-1)(\vDelta''-1)}.
\end{align*}
Combining~\eqref{eq_lem_PHI'''_1_1}--\eqref{eq_lem_PHI'''_1_2_a}, we obtain
\begin{align}\label{eq_lem_PHI'''_1_2_b}
\ex\brk{(1-\vecone\fG)\cdot \left|\log\frac{Z(\PHI''')\vee 1}{Z(\PHI')\vee1}\right|^{3/2}}&=o(1).
\end{align}
Furthermore, if the event $\fG$ occurs, then there exists a set $\cL\subseteq\{x,\neg x:x\in\cX\}$ of literals such that each clause $b_i$, $1\leq i\leq\vDelta'''$, contains a literal $l\in\cL$ and such that $\{x,\neg x\}\not\subseteq\cL$ for all $x\in\cX$.
Hence, with $\bar\cL=\bar\cL_{\PHI'}$ the output of \pulp\ on $(\PHI',\cL)$, \Lem~\ref{lem_Ichi} shows that
\begin{align}\label{eq_lem_PHI'''_1_2_c}
\ex\brk{\vecone\fG\cdot \left|\log\frac{Z(\PHI''')\vee 1}{Z(\PHI')\vee1}\right|^{3/2}}&\leq\ex\brk{\vecone\fG\cdot|\bar\cL|^{3/2}}.
\end{align}
Furthermore, since the clauses $b_1,\ldots,b_{\vDelta'''}$ are drawn independently of $\PHI'''$, \Lem~\ref{lem_EIchi} shows that there exists $C=C(d,k)>0$ such that
\begin{align}\label{eq_lem_PHI'''_1_2_d}
\ex\brk{\vecone\fG\cdot|\bar\cL|^{3/2}\mid\vDelta'''}&\leq C\cdot(\vDelta''')^{3/2}.
\end{align}
Finally, since $\vDelta'''\disteq\Po(d)$, the assertion follows from~\eqref{eq_lem_PHI'''_1_2_b}, \eqref{eq_lem_PHI'''_1_2_c} and~\eqref{eq_lem_PHI'''_1_2_d}.
\end{proof}

\begin{proof}[Proof of \Prop~\ref{prop_bad}]
The proposition follows immediately from \Lem s~\ref{lem_bad''}--\ref{lem_bad'''}.
\end{proof}

\subsection{Proof of \Prop~\ref{prop_benign}}\label{sec_prop_benign}

Let $\pi_{d,k}^{(\ell)}=\BP_{d,k}^\ell(\delta_{1/2})$ be the result of an $\ell$-fold application of the operator $\BP_{d,k}$ from~\eqref{eqBPop} to the point mass at $1/2$.
Also recall from~\eqref{eqempirical} that $\vec\pi_n'$ denotes the empirical distribution 
of the marginals $(\pr[\SIGMA_{\PHI'}(x_i)=1\mid\PHI'])_{1\leq i\leq n}$.

\begin{lemma}\label{lem_benign}
Suppose that $d<\duniq(k)$.
For any $\eps>0$ there exists $\ell_0=\ell_0(d,k,\eps)>0$ such that for all $\ell\geq\ell_0$ we have $$\ex[W_1(\vec\pi_n',\pi_{d,k}^{(\ell)})\mid Z(\PHI')>0]<\eps+o(1).$$
\end{lemma}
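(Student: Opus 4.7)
The plan is to combine local weak convergence (\Cor~\ref{cor_lcwk}) with Gibbs uniqueness~\eqref{eqTreeUniq} to control one marginal at a time, and then use a second-moment argument to promote this pointwise control to control of the empirical distribution. Write $\vec{\mu}_i=\pr[\SIGMA_{\PHI'}(x_i)=1\mid\PHI']$, so that $\vec{\pi}_n'=\frac1n\sum_{i=1}^n\delta_{\vec{\mu}_i}$, and note that by the symmetry of $\PHI'$ under variable relabelling the $\vec{\mu}_i$ are identically distributed. First I would couple $\vec\phi'_{\ell,1}$ with a Galton--Watson tree $\TT^{(\ell)}$ via the construction underlying \Lem~\ref{lem_noexcess}, so that $\vec\phi'_{\ell,1}\ism\TT^{(\ell)}$ with probability $1-o(1)$ for any fixed $\ell$. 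By \Fact~\ref{fac_BPexact}, the tree root marginal $\vec{\nu}_1:=\pr[\TAU^{(\ell)}(\root)=1\mid\TT^{(\ell)}]$ has law $\pi_{d,k}^{(\ell)}$, and on the isomorphism event $\vec{\nu}_1$ can equivalently be viewed as a root marginal on $\vec\phi'_{\ell,1}$.

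The key step is to show that $\vec{\mu}_1$ is close to $\vec{\nu}_1$. Let $B$ be the set of variables of $\PHI'$ at graph distance exactly $2\ell$ from $x_1$. The cut $B$ separates $\vec\phi'_{\ell,1}$ from the rest of $\PHI'$, so the restrictions of $\SIGMA_{\PHI'}$ to the inside and to the outside of $\vec\phi'_{\ell,1}$ are conditionally independent given $\SIGMA_{\PHI'}$ restricted to $B$. Consequently
\begin{align*}
  \vec{\mu}_1 = \sum_\tau \pr\brk{\SIGMA_{\vec\phi'_{\ell,1}}(x_1)=1\mid\vec\phi'_{\ell,1},\,\SIGMA_{\vec\phi'_{\ell,1}}\text{ agrees with }\tau\text{ on }B}\cdot\pr\brk{\SIGMA_{\PHI'}\text{ agrees with }\tau\text{ on }B\mid\PHI'}.
\end{align*}
On the isomorphism event the inner probability is a tree marginal under the boundary condition $\tau$, and \eqref{eqTreeUniq} ensures that for $\ell$ exceeding a threshold $\ell_0(\eps)$ the maximum over $\tau\in S(\TT^{(\ell)})$ of its distance to $\vec{\nu}_1$ is at most $\eps/2$ with probability at least $1-\eps/2$. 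Averaging over $\tau$ yields $|\vec{\mu}_1-\vec{\nu}_1|<\eps$ with probability $\geq 1-\eps-o(1)$; hence $W_1(\mathrm{Law}(\vec{\mu}_1\mid Z(\PHI')>0),\pi_{d,k}^{(\ell)})<\eps+o(1)$. The conditioning on $Z(\PHI')>0$ is harmless because $\pr[Z(\PHI')=0]=o(1)$ for $d<\duniq(k)\leq\dpure(k)$.

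To upgrade to the empirical measure I would invoke Kantorovich--Rubinstein duality and a second-moment argument. By a standard finite $\eps$-net reduction over $1$-Lipschitz $f:[0,1]\to[-1,1]$, it suffices to prove $\Var\bc{\frac1n\sum_{i=1}^n f(\vec{\mu}_i)}=o(1)$ for each such $f$, since then $\frac1n\sum_i f(\vec{\mu}_i)$ concentrates around $\Erw f(\vec{\mu}_1)$, which lies within $\eps+o(1)$ of $\int f\,\mathrm{d}\pi_{d,k}^{(\ell)}$ by the preceding paragraph. Expanding the variance reduces the task to controlling $\mathrm{Cov}(f(\vec{\mu}_i),f(\vec{\mu}_j))$ for most pairs $i\neq j$. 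Extending the local coupling to a pair of uniformly random indices $(\vec{i},\vec{j})$, with probability $1-o(1)$ the depth-$2\ell$ balls around $x_{\vec{i}}$ and $x_{\vec{j}}$ in $\PHI'$ are disjoint and jointly couple to \emph{two independent} copies of $\TT^{(\ell)}$; applying the single-marginal argument to both variables simultaneously gives $|\vec{\mu}_{\vec{i}}-\vec{\nu}_{\vec{i}}|,|\vec{\mu}_{\vec{j}}-\vec{\nu}_{\vec{j}}|<\eps$ with probability $1-\eps-o(1)$, with $\vec{\nu}_{\vec{i}},\vec{\nu}_{\vec{j}}$ independent of law $\pi_{d,k}^{(\ell)}$. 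This yields $|\mathrm{Cov}(f(\vec{\mu}_{\vec{i}}),f(\vec{\mu}_{\vec{j}}))|=O(\eps)+o(1)$, and a Chebyshev inequality together with the triangle inequality finishes the proof. The main obstacle is the careful transfer of Gibbs uniqueness from the random tree to the subformula $\vec\phi'_{\ell,1}$ on the isomorphism event; this relies crucially on the clean inside--outside conditional independence across the boundary $B$, so that the tree formulation of \eqref{eqTreeUniq} can be applied uniformly in the (a priori $\PHI'$-dependent) boundary law.
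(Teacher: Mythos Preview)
Your proposal is correct and follows essentially the same route as the paper: couple the depth-$2\ell$ ball around $x_1$ to $\TT^{(\ell)}$ via \Cor~\ref{cor_lcwk}, invoke conditional independence across the boundary $B$ to reduce $\vec\mu_1$ to a mixture of tree marginals under varying boundary conditions, and use Gibbs uniqueness~\eqref{eqTreeUniq} to collapse that mixture onto $\pr[\TAU^{(\ell)}(\root)=1\mid\TT^{(\ell)}]\disteq\pi_{d,k}^{(\ell)}$. The only organisational difference is in the final step: the paper groups variables by their local tree type $T\in\cT_\ell$ and bounds $W_1$ directly via concentration of the type fractions $|\cV_T|/n$, whereas you go through Kantorovich--Rubinstein duality and a two-point covariance bound; both are equivalent second-moment arguments, and your version is arguably more explicit about where concentration comes from. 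One small wrinkle to tidy up: as written your covariance bound is $O(\eps)+o(1)$ rather than $o(1)$, so you should decouple the Gibbs-uniqueness accuracy parameter from the target $\eps$ (e.g.\ take $\ell_0$ large enough for accuracy $\eps^2$) before applying Chebyshev and the union bound over the $\eps$-net.
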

\begin{proof}
Assume that $\ell\geq\ell_0$ for a large enough $\ell_0=\ell_0(d,k,\eps)>0$.
Since $d<\duniq(k)$ and since $\TT=\TT_{d,k}$ is a Galton-Watson tree in which every variable node has $\Po(d)$ clause nodes as offspring and the offspring of every 
clause node consists of $k-1$ variable nodes, there exists a set $\cT_\ell$ of trees, with $|\cT_\ell|=O(1)$, such that the following hold:
\begin{description}
\item[T0] for every $T\in\cT_\ell$ we have $\pr\brk{\TT^{(\ell)}=T}>0$.
\item[T1] $\pr\brk{\TT^{(\ell)}\in\cT_\ell}>1-\eps$.
\item[T2] given $\TT^{(\ell)}\in\cT_\ell$ we have
\begin{align*}
\max_{\tau\in S(\TT^{(\ell)})}\abs{\pr\brk{\TAU^{(\ell)}(\root)=1\mid\TT^{(\ell)}} -\pr\brk{\TAU^{(\ell)}(\root)=1\mid\TT^{(\ell)},\,\forall x\in\partial^{2\ell}\root:\TAU^{(\ell)}(x)=\tau(x)} }&<\eps.
\end{align*}
\end{description}

For a variable node $x_i$ of $\PHI'$ obtain $\vec\phi'_{\ell}(x_i)$ from $\PHI'$ by deleting all variables and clauses at distance greater than $2\ell$ from $x_i$.
We consider $x_i$ being the root of $\vec\phi'_{\ell}(x_i)$.
Moreover, for a tree $T\in\cT_\ell$ let $\cV_T$ be the set of variable nodes $x_i$, $1\leq i\leq n$, such that $\vec\phi'_{\ell}(x_i)\ism T$; thus, there is an isomorphism of the CNFs $T$ and $\vec\phi'_\ell(x_i)$ that maps the root $\root$ of $T$ to $x_i$.
Consider the event
\begin{align}\label{eq_lem_benign_100}
\fT_\ell&=\cbc{\sum_{T\in\cT_\ell}\abs{\pr\brk{\TT^{(\ell)}\ism T}-|\cV_T|/n}<\eps}.
\end{align}
Then \Cor~\ref{cor_lcwk} implies that
\begin{align}\label{eq_lem_benign_1}
\pr\brk{\fT_\ell}&=1-o(1)&&\mbox{for every }\ell\geq0.
\end{align}

We now claim that
\begin{align}\label{eq_lem_benign_4}
\abs{\pr\brk{\TAU^{(\ell)}(\root)=1\mid\TT^{(\ell)}=T}-\pr\brk{\SIGMA_{\PHI'}(x_i)=1\mid \PHI'}}<\eps&&\mbox{ for all }T\in\cT_\ell,\,x_i\in\cV_T.
\end{align}
To see this, let $S_\ell(\PHI',x_i)$ be the set of all assignments $\sigma\in\PM^{\partial^{2\ell}x_i}$ of the variables at distance $2\ell$ from $x_i$ in $\PHI'$ such that there exists a satisfying assignment $\sigma'\in S(\PHI')$ with $\sigma'(y)=\sigma(y)$ for all $y\in\partial^{2\ell}x_i$.
Then the law of total probability shows that
\begin{align}\label{eq_lem_benign_2}
\pr\brk{\SIGMA_{\PHI'}(x_i)=1\mid \PHI'}&=\sum_{\sigma\in S_\ell(\PHI',x_i)}\pr\brk{\SIGMA_{\PHI'}(x_i)=1\mid \PHI',\,\forall y\in \partial^{2\ell}x_i : \SIGMA_{\PHI'}(y)=\sigma(y)}\pr\brk{\forall y\in \partial^{2\ell}x_i : \SIGMA_{\PHI'}(y)=\sigma(y)\mid\PHI'}.
\end{align}
Further, since for $T\in\cT_\ell$ and $x_i\in\cV_T$ we have $\vec\phi_\ell'(x_i)\ism T$, condition {\bf T2} implies that
\begin{align}\label{eq_lem_benign_3}
\abs{\pr\brk{\SIGMA_{\PHI'}(x_i)=1\mid \PHI',\,\forall y\in \partial^{2\ell}x_i : \SIGMA_{\PHI'}(y)=\sigma(y)}-\pr\brk{\TAU^{(\ell)}(\root)=1\mid \TT^{(\ell)}=T}}&<\eps.
\end{align}
Combining~\eqref{eq_lem_benign_2} and~\eqref{eq_lem_benign_3}, we obtain~\eqref{eq_lem_benign_4}.

To complete the proof, we recall from Fact~\ref{fac_BPexact} that $\pi_{d,k}^{(\ell)}$ is precisely the distribution of $\pr\brk{\TAU^{(\ell)}(\root)=1\mid\TT^{(\ell)}}$.
Therefore, coupling the formulas $\TT^{(\ell)}, \PHI'$ on the event $\fT_\ell$ we have

\begin{align*}
W_1(\vec\pi_n',\pi_{d,k}^{(\ell)})
&\leq\pr\brk{\TT^{(\ell)}\not\in\cT_\ell}+\frac1n\sum_{T\in\cT_\ell}\sum_{x\in\cV_T}\abs{\pr\brk{\TAU^{(\ell)}(\root)=1\mid\TT^{(\ell)}=T}-\pr\brk{\SIGMA_{\PHI'}(x)=1\mid \PHI'}}+\eps&&\mbox{[by~\eqref{eq_lem_benign_100}]}\\
&\leq3\eps&&\mbox{[by {\bf T1} and \eqref{eq_lem_benign_4}].}
\end{align*}
Combining this bound with~\eqref{eq_lem_benign_1} completes the proof.
\end{proof}

\begin{proof}[Proof of \Prop~\ref{prop_benign}]
The first assertion follows from \Prop~\ref{prop_arnab}, \Lem~\ref{lem_benign} and
the fact that, since $0<d < \duniq(k) <\dsat(k)$, we have that $\pr\brk{Z(\PHI')>0}=1-o(1)$. 

The second follows a 
routine argument, which we present below for the case $\ell = 2$ and it is standard to extend to any finite $\ell$ (see \cite[Proposition 2.5]{COP}). 
Let $t= \Theta(\log\log n)$ and recall the definitions of $\vec\phi'_{t}(x_i)$, $\cT_{t}$ and $S_t(\PHI',x_i)$ from the proof of \Lem~\ref{lem_benign}. 
Consider the event 
$\mathfrak{D} = \{\vec\phi'_{t}(x_1), \vec\phi'_{t}(x_2) \text{ are disjoint tree formulas}\}$.

From \Lem~\ref{lem_excess}, we have that $\Pr\brk{\mathfrak{D}} = 1 - o(1)$. On the event 
$\mathfrak{D}$, \Lem~\ref{lem_benign} implies that for every $\sigma_1, \sigma_2 \in \{\pm 1\}$, and $\tau_1 \in S_t(\PHI',x_1),
\tau_2 \in S_t(\PHI',x_2)$ we have
\begin{align}\label{eq_fromabove}
  |\pr\brk{\SIGMA_{\PHI'}(x_i)=\sigma_i\mid \PHI',\,\forall y\in \partial^{2t}x_i : \SIGMA_{\PHI'}(y)=\tau_i(y)} - \pr\brk{\SIGMA_{\PHI'}(x_i)=\sigma_i\mid \PHI'}| = o(1)\enspace,
  &&\text{ for } i=1,2\enspace.
\end{align}
Therefore, from the law of total probability and the triangle inequality we see that for every $\sigma_1,\sigma_2 \in \{\pm 1\}$
\begin{align*}
  &\abs{\pr\brk{\SIGMA_{\PHI'}(x_1)=\sigma_1, \SIGMA_{\PHI'}(x_2)=\sigma_2\mid\PHI'}-\pr\brk{\SIGMA_{\PHI'}(x_1)=\sigma_1\mid\PHI'}\pr\brk{\SIGMA_{\PHI'}(x_2)=\sigma_2\mid\PHI'}} \\
  &\le \Big|\abs{\pr\brk{\SIGMA_{\PHI'}(x_1)=\sigma_1\mid \PHI', \SIGMA_{\PHI'}(x_2)= \sigma_2 } - \ex_{\tau_1, \tau_2}\brk{\pr\brk{\SIGMA_{\PHI'}(x_1)=\sigma_1, \SIGMA_{\PHI'}(x_2)=\sigma_2\mid\PHI',\tau_1, \tau_2}}}\\
  &- \abs{\ex_{\tau_1, \tau_2}\brk{\pr\brk{\SIGMA_{\PHI'}(x_1)=\sigma_1\mid \PHI',\tau_1}\pr\brk{\SIGMA_{\PHI'}(x_2)=\sigma_2\mid\PHI', \tau_2}}-\pr\brk{\SIGMA_{\PHI'}(x_2)=\sigma_2\mid\PHI'}\pr\brk{\SIGMA_{\PHI'}(x_2)=\sigma_2\mid\PHI'}} \Big|  \\
  &= \Big|\abs{\ex_{\tau_1, \tau_2}\brk{\pr\brk{\SIGMA_{\PHI'}(x_1)=\sigma_1\mid \PHI',\tau_1}\pr\brk{\SIGMA_{\PHI'}(x_2)=\sigma_2\mid\PHI', \tau_2}}-\pr\brk{\SIGMA_{\PHI'}(x_2)=\sigma_2\mid\PHI'}\pr\brk{\SIGMA_{\PHI'}(x_2)=\sigma_2\mid\PHI'}} \Big| \\
  &\le\ex_{\tau_1}\abs{\pr\brk{\SIGMA_{\PHI'}(x_1)=\sigma_1 \mid \PHI',\tau_1} - \pr\brk{\SIGMA_{\PHI'}(x_1)=\sigma_1 \mid \PHI'}} + \ex_{\tau_2}\abs{\pr\brk{\SIGMA_{\PHI'}(x_2)=\sigma_2\mid \PHI',\tau_2} - \pr\brk{\SIGMA_{\PHI'}(x_2)=\sigma_2\mid \PHI'}} \\
  &= o(1). &&\!\!\!\!\!\!\!\!\!\!\mbox{[by \eqref{eq_fromabove}]}
\end{align*}
Summing over the four sign combinations of $\sigma_1, \sigma_2$ gives the desired result.
\end{proof}

\subsection{Proof of Proposition~\ref{lem_PHI''}}\label{sec_lem_PHI''}

As in the proof of \Lem~\ref{lem_bad''} let $c_1,\ldots,c_{\vDelta''}$ be the new clauses added by {\bf CPL2} and let $\vx_{1,1},\ldots,\vx_{1,k},\ldots,\vx_{\vDelta'',1},\ldots,\vx_{\vDelta'',k}$ be their constituent variables.
Let $\cX=\{\vx_{1,1},\ldots,\vx_{1,k},\ldots,\vx_{\vDelta'',1},\ldots,\vx_{\vDelta'',k}\}$.
For $\eps > 0$ and $z \in \RR$ define $\lambda_\eps(z) = \log(z\vee\eps)$.
Finally, let $(\vs_i)_{i\geq0}$ be a sequence of uniformly random $\pm1$-valued random variables, mutually independent and independent of all other randomness.

\begin{lemma}\label{lem_PHI''_2}
Assume that $d<\duniq(k)$.
There exists $B=B(d,k)>0$ such that for all $0<\eps<1$ we have
\begin{align*}
\limsup_{n\to\infty}\ex\brk{\bc{\sum_{i=1}^{\vDelta''}\lambda_\eps\bc{1-\prod_{j=1}^k
\pr\brk{\SIGMA(\vx_{i,j})\neq\sign(\vx_{i,j},c_i)\mid\PHI'}}}^2\mid Z(\PHI')>0}\leq B.
\end{align*}
\end{lemma}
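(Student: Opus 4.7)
The plan is to exploit the compound-Poisson structure of $S:=\sum_{i=1}^{\vDelta''} X_i$, with $X_i:=\lambda_\eps(1-\prod_j\pr[\SIGMA(\vx_{i,j})\neq\sign(\vx_{i,j},c_i)\mid\PHI'])$, and to reduce matters to a uniform-in-$\eps$ second-moment bound on a single term. Conditional on $\PHI'$ and $\vDelta''$, the $X_i$ are i.i.d.\ because the clauses $c_1,\ldots,c_{\vDelta''}$ are drawn independently in {\bf CPL2}. Hence, writing $\lambda:=d(k-1)/k$ and applying the standard compound-Poisson second-moment identity together with the conditional Jensen inequality $(\ex[X_1\mid\PHI'])^2\leq\ex[X_1^2\mid\PHI']$,
\begin{align*}
\ex[S^2\mid\PHI'] \;=\; \lambda\,\ex[X_1^2\mid\PHI']+\lambda^2(\ex[X_1\mid\PHI'])^2 \;\leq\; (\lambda+\lambda^2)\,\ex[X_1^2\mid\PHI'].
\end{align*}
Since $d<\duniq(k)\leq\dpure(k)$ implies $\pr[Z(\PHI')>0]=1-o(1)$, it suffices to show $\limsup_n\ex X_1^2<\infty$ uniformly in $\eps\in(0,1)$.

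For the single-term bound I would first use the pointwise inequality $|\lambda_\eps(z)|=|\log(z\vee\eps)|\leq|\log z|$ valid for all $z\in(0,1]$, which yields $X_1^2\leq|\log P_1|^2$ with $P_1:=1-\prod_j\pr[\SIGMA(\vx_{1,j})\neq\sign(\vx_{1,j},c_1)\mid\PHI']$. Setting $\cL:=\{\sign(\vx_{1,1},c_1)\cdot\vx_{1,1}\}$ and letting $\bar\cL$ denote the $\pulp$ output on $(\PHI',\cL)$, the trivial bound $\prod_j\pr[\SIGMA(\vx_{1,j})\neq\sign(\vx_{1,j},c_1)\mid\PHI']\leq\pr[\SIGMA(\vx_{1,1})\neq\sign(\vx_{1,1},c_1)\mid\PHI']$ combined with Lemma~\ref{lem_Ichi} gives
\begin{align*}
P_1\;\geq\;1-\pr[\SIGMA(\vx_{1,1})\neq\sign(\vx_{1,1},c_1)\mid\PHI']\;=\;\frac{Z(\PHI',\cL)}{Z(\PHI')}\;\geq\;2^{-|\bar\cL|},
\end{align*}
so that $X_1^2\leq(\log 2)^2|\bar\cL|^2$, and the task reduces to the uniform-in-$n$ second-moment bound $\ex|\bar\cL|^2=O(1)$ for the size-one initial set $\cL$.

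The second-moment bound on $|\bar\cL|$ is obtained by refining the case decomposition underlying Lemma~\ref{lem_EIchi}. On the acyclic event $\{\vX_{\Lambda,1}=-1\}$, Lemma~\ref{lem_noexcess} couples $\bar\cL$ with the $\pulp$ closure $\overline{\{\root\}}_\TT$ on an independent copy of $\TT_{d,k}$, and the doubly-exponential tail of Corollary~\ref{lem:SMSizeGivenHeight} (via Corollary~\ref{cor_c3}) yields $\ex[\vecone\{\vX_{\Lambda,1}=-1\}|\bar\cL|^2]=O(1)$; this is precisely Corollary~\ref{lem:coup} with $L=1$. On the unicyclic event $\{\vX_{\Lambda,1}=\vX_{\Lambda^+,1}=0\}$, the decomposition from the proof of Lemma~\ref{lem_unicyc} gives $|\bar\cL|\leq(\log n)^{O(1)}$ up to a further event of probability $O(n^{-2})$, so that the contribution is bounded by $(\log n)^{O(1)}\cdot\tilde O(n^{-1})+O(n^{-2})\cdot(2n)^2=O(1)$. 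Finally, the remaining event $\{\vX_{\Lambda^+,1}\geq 1\}$ is split into $\{\vX_{\Lambda^+,1}=\vX_{\Lambda^{++},1}\geq 1\}$ and $\{\vX_{\Lambda^{++},1}>\vX_{\Lambda^+,1}\}$ for a sufficiently larger depth $\Lambda^{++}$: the former admits a polylogarithmic bound on $|\bar\cL|$ exactly as in Lemma~\ref{lem_unicyc}, while iterating the excess estimate of Lemma~\ref{lem_excess} shrinks the probability of the latter by a further factor of $\tilde O(n^{-1})$, which is precisely what is needed for the deterministic bound $|\bar\cL|\leq 2n$ to yield an $o(1)$ contribution.

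The main obstacle is exactly this passage from the $3/2$-moment supplied by Lemma~\ref{lem_EIchi} to an $\eps$-free second moment: the naive clipping at $\eps$ yields only an $\eps$-dependent a priori bound, so I convert $\lambda_\eps$ into a bona fide logarithm via $|\lambda_\eps|\leq|\log|$ and rely on the considerably stronger doubly-exponential tail of $|\bar\cL_\TT|$ supplied by Corollary~\ref{lem:SMSizeGivenHeight}, together with the rareness of successive excess events, in order to absorb the trivial linear deterministic bound $|\bar\cL|\leq 2n$.
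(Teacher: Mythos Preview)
Your compound-Poisson reduction to a single-term second moment is sound and matches the paper's first step. The divergence is in how you bound that single term: you pass to $X_1^2\leq(\log 2)^2|\bar\cL_{\PHI'}|^2$ and then claim $\ex|\bar\cL_{\PHI'}|^2=O(1)$ by refining the excess decomposition. This last claim is strictly stronger than the $3/2$-moment of \Lem~\ref{lem_EIchi}, and your sketch has a genuine gap. On $\{\vX_{\Lambda^+,1}=\vX_{\Lambda^{++},1}\geq1\}$ you invoke \Lem~\ref{lem_unicyc} for a polylogarithmic bound on $|\bar\cL|$, but that lemma's proof uses explicitly that with \emph{one} excess edge \pulp\ cannot run into a contradiction (each clause retains at least $k-2\geq1$ unassigned literals when it is processed). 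With two excess edges this fails already for $k=3$: three clauses through $x_1$ can be arranged so that both excess incidences fall on one clause $a$, and if the heights in the rest of $\PHI'$ steer \pulp\ to the ``wrong'' choices in the other two clauses, all three literals of $a$ land in $\neg\bar\cL$ and \pulp\ aborts with $|\bar\cL|=2n$. You have not bounded the probability of this sub-event below the $\tilde O(n^{-2})$ supplied by \Lem~\ref{lem_excess}, so its contribution to the second moment is only $\tilde O(1)$, which does not give a bound $B$ uniform in $n$ as the lemma requires.

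The paper bypasses this entirely. After the same single-term reduction it exploits that $\lambda_\eps^2$ is bounded and continuous for each fixed $\eps$, so \Prop~\ref{prop_benign} (weak convergence of the $\PHI'$-marginals to $\pi_{d,k}$) yields
$\ex[\lambda_\eps(\pr[\SIGMA(\vx_{1,1})=\sign(\vx_{1,1},c_1)\mid\PHI'])^2\mid Z(\PHI')>0]=\ex[\lambda_\eps(\MU_{\pi_{d,k},1,1})^2]+o(1)\leq\ex[\log^2\MU_{\pi_{d,k},1,1}]+o(1)$,
and the last quantity is finite by \Prop~\ref{prop_arnab}. Thus the $\eps$-free constant comes from the second moment of \pulp\ on the \emph{Galton--Watson tree} (\Cor~\ref{cor_c3}), where there is no excess to worry about; no second-moment control of \pulp\ on $\PHI'$ is ever needed, which is exactly why the paper can afford only the $3/2$-moment in \Lem~\ref{lem_EIchi}.
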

\begin{proof}
Given $Z(\PHI')>0$ we have
\begin{align}\label{eq_lem_PHI''_2_2}
0&\geq\lambda_\eps\bc{1-\prod_{j=1}^k
\pr\brk{\SIGMA(\vx_{1,j})\neq\sign(\vx_{1,j}, c_1)\mid\PHI'}
}\geq
\lambda_\eps\bc{1-\pr\brk{\SIGMA(\vx_{1,1})\neq\sign(\vx_{1,1}, c_1)\mid\PHI'}}.
\end{align}
Recalling that $\vDelta''\disteq\Po(d(k-1)/k)$, we combine~\eqref{eq_lem_PHI''_2_2} with Cauchy-Schwarz to obtain $B'=B'(d,k)>0$ such that
\begin{align}\nonumber
\ex&\brk{\bc{\sum_{i=1}^{\vDelta''}\lambda_\eps\bc{1-\prod_{j=1}^k\pr\brk{\SIGMA(\vx_{i,j})\neq\sign(\vx_{i,j}, c_i)\mid\PHI'}}}^2\mid Z(\PHI')>0}\\
&\leq B'\cdot \ex\brk{\lambda_\eps\bc{1-\pr\brk{\SIGMA(\vx_{1,1})\neq\sign(\vx_{1,1}, c_1)\mid\PHI'}}^2\mid Z(\PHI')>0}\label{eq_lem_PHI''_2_1}.
\end{align}
Further, since the function $\lambda_\eps$ is bounded and continuous for every $\eps>0$ and since $\sign(\vx_{1,1}, c_1)$ is chosen independently of $\PHI'$, \Prop~\ref{prop_benign} shows that for any $\eps>0$,
\begin{align}\nonumber
\ex\brk{\lambda_\eps\bc{1-\pr\brk{\SIGMA(\vx_{1,1})\neq\sign(\vx_{1,1}, c_1)\mid\PHI'}}^2\mid Z(\PHI')>0}&= \ex\brk{\lambda_\eps\bc{\MU_{\pi_{d,k},1,1}}^2}+o(1) \\
&\leq\ex\brk{\log\bc{\MU_{\pi_{d,k},1,1}}^2}+o(1).\label{eq_lem_PHI''_2_3}
\end{align}
Since \Prop~\ref{prop_arnab} shows that $\ex\brk{\log\bc{\MU_{\pi_{d,k},1,1}}^2}=O(1)$, the assertion follows from \eqref{eq_lem_PHI''_2_1} and \eqref{eq_lem_PHI''_2_3}.
\end{proof}

\begin{lemma}\label{lem_PHI''_3}
Assume that $d<\duniq(k)$.
For any $\delta>0$ there exists $\eps_0>0$ such that for all $\eps_0>\eps>0$ we have
\begin{align*}
\limsup_{n\to\infty}\abs{\ex\brk{\log\frac{Z(\PHI'')\vee 1}{Z(\PHI')\vee 1}}-\frac{d(k-1)}k\ex\brk{\lambda_\eps\bc{1-\prod_{j=1}^k\pr\brk{\SIGMA(x_j)=\vs_j\mid\PHI'}}\mid Z(\PHI')>0}}<\delta.
\end{align*}
\end{lemma}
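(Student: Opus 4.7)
My plan is to decompose $\log\bc{Z(\PHI'')/Z(\PHI')}$ clause-by-clause, replace the joint marginals of $\SIGMA_{\PHI'}$ on the (few) variables of the new clauses $c_1,\ldots,c_{\vDelta''}$ by a product of single-variable marginals using \Prop~\ref{prop_benign}, handle the $\lambda_\eps$-truncation using the tail bound \Prop~\ref{prop_bad} combined with \Lem~\ref{lem_PHI''_2}, and finally collapse the outer Poisson sum through Wald's identity.

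To begin, let $\fE$ be the event $\{Z(\PHI')>0,\ Z(\PHI'')>0,\ \vDelta''\leq\log n\}$. Since $\vDelta''\disteq\Po(d(k-1)/k)$, Bennett's inequality ensures $\pr[\vDelta''>\log n]=O(n^{-2})$, and the deterministic bound $|\log((Z(\PHI'')\vee 1)/(Z(\PHI')\vee 1))|\leq n\log 2$ together with \Prop~\ref{prop_bad} will show that $\ex[(1-\vecone\fE)|\log((Z(\PHI'')\vee 1)/(Z(\PHI')\vee 1))|]=o(1)$. On $\fE$ we have the exact identity
\begin{align*}
\log\frac{Z(\PHI'')\vee 1}{Z(\PHI')\vee 1}
=\log\Pr\brk{\SIGMA_{\PHI'}\models c_1\wedge\cdots\wedge c_{\vDelta''}\mid\PHI',c_1,\ldots,c_{\vDelta''}}.
\end{align*}

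Next, since the variables $\vx_{i,j}$ are sampled independently of $\PHI'$ and there are at most $k\log n$ of them on $\fE$, I would apply \Prop~\ref{prop_benign} conditionally on $\vDelta''$ and on the choices of the $\vx_{i,j}$'s (with $\ell=k\vDelta''\leq k\log n$). This yields that the joint law of $(\SIGMA_{\PHI'}(\vx_{i,j}))_{i\leq\vDelta'',\,j\leq k}$ is within $o(1)$ in total variation of a product measure. Applying inclusion-exclusion at the level of each clause $c_i$ (whose violation event only involves the $k$ variables $\vx_{i,1},\ldots,\vx_{i,k}$) and using the union bound over the at most $\log n$ clauses, this implies that on $\fE$,
\begin{align*}
\Pr\brk{\SIGMA_{\PHI'}\models c_1\wedge\cdots\wedge c_{\vDelta''}\mid\PHI'}
=\prod_{i=1}^{\vDelta''}\bc{1-\prod_{j=1}^k\pr\brk{\SIGMA_{\PHI'}(\vx_{i,j})=-\sign(\vx_{i,j},c_i)\mid\PHI'}}+o(1).
\end{align*}

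The third step is to pass to $\lambda_\eps$. Write $q_i=\prod_{j=1}^k\pr[\SIGMA(\vx_{i,j})=-\sign(\vx_{i,j},c_i)\mid\PHI']$, so that on $\fE$, up to the $o(1)$ factorisation error, $\log((Z(\PHI'')\vee 1)/(Z(\PHI')\vee 1))=\sum_{i=1}^{\vDelta''}\log(1-q_i)$. Replacing $\log(1-q_i)$ by $\lambda_\eps(1-q_i)=\log((1-q_i)\vee\eps)$ introduces only non-positive error on $\{1-q_i<\eps\}$. I would control this error uniformly in $n$ by a standard tail truncation: for any $K>0$, Cauchy-Schwarz gives $\ex[\vecone\fE\cdot|\sum_i(\lambda_\eps(1-q_i)-\log(1-q_i))|]\leq\pr[\fE\cap\{\sum_i\lambda_\eps(1-q_i)\leq -K\}]^{1/2}\cdot(\ex[(\sum_i\lambda_\eps(1-q_i))^2])^{1/2}+K\cdot\text{(bounded)}$; the first factor vanishes by \Lem~\ref{lem_PHI''_2} and Chebyshev, and taking $K\to\infty$ slowly with $\eps\to 0$ makes the truncation error arbitrarily small. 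Similarly, a matching upper bound of the form $\ex\log((Z(\PHI'')\vee 1)/(Z(\PHI')\vee 1))\geq \ex\sum_i\lambda_\eps(1-q_i)-\delta/2$ follows from \Prop~\ref{prop_bad} via a uniform-integrability argument on $|\log((Z(\PHI'')\vee 1)/(Z(\PHI')\vee 1))|^{3/2}$.

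Finally, to match the r.h.s.~of the lemma, I observe that the triples $(\vx_{i,1},\ldots,\vx_{i,k},\sign(\vx_{i,1},c_i),\ldots,\sign(\vx_{i,k},c_i))_{i\geq1}$ are i.i.d.\ and independent of both $\PHI'$ and $\vDelta''$, and that for a uniformly chosen sign $s$ the law of $-s$ equals that of $s$. Hence, by Wald's identity for a Poisson sum,
\begin{align*}
\ex\brk{\sum_{i=1}^{\vDelta''}\lambda_\eps(1-q_i)\,\Big|\,Z(\PHI')>0}=\frac{d(k-1)}{k}\,\ex\brk{\lambda_\eps\bc{1-\prod_{j=1}^k\pr\brk{\SIGMA(x_j)=\vs_j\mid\PHI'}}\,\Big|\,Z(\PHI')>0}+o(1).
\end{align*}
Combining the four steps and choosing $\eps_0$ small enough yields the claimed bound with error $<\delta$. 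The main obstacle will be the third step: carefully bookkeeping the competing errors from the factorisation, the $\vee 1$ clipping, and the $\lambda_\eps$-truncation, so that the combined error remains $<\delta$ in the limit $n\to\infty$ for all sufficiently small $\eps$. This is where the $3/2$-moment bound of \Prop~\ref{prop_bad} and the second moment bound of \Lem~\ref{lem_PHI''_2} are both indispensable.
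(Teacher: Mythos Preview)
Your four-step strategy (good event, factorisation, truncation, Wald) is the same as the paper's, but there is a genuine gap at the interface of steps~2 and~3, caused by your good event $\fE$ being too coarse.

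The factorisation from \Prop~\ref{prop_benign} gives $\Pr[\SIGMA\models c_1\wedge\cdots\wedge c_{\vDelta''}\mid\PHI']=\prod_i(1-q_i)+o(1)$ only as an \emph{additive} approximation. Passing to $\sum_i\log(1-q_i)$ requires $\prod_i(1-q_i)$ to be bounded away from zero; on your $\fE$ nothing prevents some marginal $\pr[\SIGMA(\vx_{i,j})=-\sign(\vx_{i,j},c_i)\mid\PHI']$ from being, say, $1-n^{-1}$, in which case the additive $o(1)$ swamps $\prod_i(1-q_i)$ and the logarithms become incomparable. The same issue undermines your truncation step: the difference $\sum_i(\lambda_\eps(1-q_i)-\log(1-q_i))=\sum_i\vecone\{1-q_i<\eps\}\log\bigl(\eps/(1-q_i)\bigr)$ is unbounded on $\fE$, and the Cauchy--Schwarz display you wrote in step~3 does not control it---the right-hand side involves only $\sum_i\lambda_\eps(1-q_i)$, whose second moment says nothing about the unbounded gap $\lambda_\eps-\log$. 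You correctly flag step~3 as the main obstacle, but the sketch does not close it.

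The paper's remedy is to refine the good event. In addition to $Z(\PHI')>0$ and the factorisation condition (E5), it requires (E2) $\vDelta''\leq\zeta^{-1}$ for a \emph{constant} $\zeta^{-1}$ rather than $\log n$, (E3) the $\vx_{i,j}$ are pairwise distinct, and crucially (E4) every relevant marginal lies in $[\eta,1-\eta]$ for a fixed $\eta>0$ chosen so that $\eps_0<\eta$. Condition~(E4) fails with probability at most $\xi$ by \Prop~\ref{prop_benign} combined with $\ex[\log^2\MU_{\pi_{d,k},1,1}]<\infty$ from \Prop~\ref{prop_arnab}; the off-event contributions are then $<\delta/3$ via H\"older's inequality with \Prop~\ref{prop_bad} and Cauchy--Schwarz with \Lem~\ref{lem_PHI''_2}. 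On the refined good event each $1-q_i\geq 1-(1-\eta)^k>\eps$, so $\lambda_\eps(1-q_i)=\log(1-q_i)$ exactly, and with at most $\zeta^{-1}$ factors each bounded below by $1-(1-\eta)^k$ the product is bounded away from zero, making the logarithm Lipschitz with respect to the $o(1)$ factorisation error. Your final Wald step is then correct as written.
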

\begin{proof}
We choose small enough $\xi=\xi(d,k,\delta)>\zeta(\xi)>\eta=\eta(\zeta)>\eps_0=\eps_0(\eta)>0$, let $0<\eps<\eps_0$ and assume that $n\ge n_0(\eps)$ is large enough.
Also let $\gamma=\gamma(n)=o(1)$ be a sequence that tends to zero sufficiently slowly.
Additionally, let $\fE$ be the event that all of the following conditions occur.
\begin{description}
\item[E1] $Z(\PHI')>0.$
\item[E2] $\vDelta''\leq\zeta^{-1}.$
\item[E3] $|\cX|=k\vDelta''.$
\item[E4] $\max_{x\in\cX,s\in\PM}\pr[\SIGMA(x)=s\mid\PHI']\leq1-\eta.$
\item[E5] $\sum_{\tau\in\PM^{\cX}}\abs{\pr[\forall x\in\cX:\SIGMA(x)=\tau(x)\mid\PHI']-\prod_{x\in\cX}\pr[\SIGMA(x)=\tau(x)\mid\PHI']}<\gamma.$
\end{description}

We claim that
\begin{align}\label{eqlem_PHI''_3_1}
\pr\brk{\fE}&\geq1-2\xi+o(1).
\end{align}
Indeed, since $0<d < \duniq(k) <\dsat(k)$, we have that $\pr\brk{Z(\PHI')>0}=1-o(1)$.
Moreover, since $\vDelta^{\prime \prime}\disteq\Po(d(k-1)/k)$, Markov's inequality shows that $\pr\brk{\vDelta''>\zeta^{-1}}\leq\zeta d <\xi$.
Further, since the new clauses $c_1,\ldots,c_{\vDelta''}$ are chosen independently, we have $\pr\brk{|\cX|=k\vDelta''\mid\vDelta''\leq\zeta^{-1}}=1-O(1/n)$.

Moreover, per \Prop~\ref{prop_benign} we see that the joint distribution on the assignments over
$\cX$ must be approximately the product measure. The tails of the limiting distribution of the latter are 
controlled by \eqref{eq_prop_arnab_bound}. Therefore, for small enough $\eta$ we should have
\begin{align*}
\pr\brk{\max_{x\in\cX,s\in\PM}\pr[\SIGMA(x)=s\mid\PHI']\leq1-\eta\mid\vDelta''\leq\zeta^{-1},\,Z(\PHI')>0}\geq1-\xi
\enspace.
\end{align*}
Similarly, \Prop~\ref{prop_benign} shows together with Markov's inequality that
\begin{align*}
  \pr\brk{\mbox{{\bf E5} occurs}\mid\vDelta'' \le \zeta^{-1},\,Z(\PHI')>0}=1-o(1)
\enspace ,
\end{align*}
provided that $\gamma\to\infty$ sufficiently slowly.
Thus, we obtain \eqref{eqlem_PHI''_3_1}.

Furthermore, \eqref{eqlem_PHI''_3_1} implies together with
\Prop~\ref{prop_bad}
and H\"older's inequality that
\begin{align}\label{eqlem_PHI''_3_2}
\ex\abs{(1-\vecone\fE)\cdot\log\frac{Z(\PHI'')}{Z(\PHI')}}&\leq\delta/3+o(1),
\end{align}
provided that $\xi=\xi(d,k,\delta)>0$ is small enough.
Analogously, \eqref{eqlem_PHI''_3_1}, \Lem~\ref{lem_PHI''_2} and Cauchy-Schwarz yield
\begin{align}\label{eqlem_PHI''_3_3}
\ex\abs{(1-\vecone\fE)\lambda_\eps\bc{1-\prod_{j=1}^k\pr\brk{\SIGMA(x_j)=\vs_j\mid\PHI'}}}&\leq\delta/3+o(1).
\end{align}

Thus, we confine ourselves to the event $\fE$, on which we have $Z(\PHI'),Z(\PHI'')>0$ due to {\bf E1}, {\bf E3}, {\bf E4} and {\bf E5}.
Hence,
\begin{align}\nonumber
\log\frac{Z(\PHI'')\vee1}{Z(\PHI')\vee1}&=\log\frac{Z(\PHI'')}{Z(\PHI')}=
\log\sum_{\tau\in\PM^{\cX}}\vecone\cbc{\tau\models c_1,\ldots,c_{\vDelta''}}\pr[\forall x\in\cX:\SIGMA(x)=\tau(x)\mid\PHI']\\
&=\log\sum_{\tau\in\PM^{\cX}}\vecone\cbc{\tau\models c_1,\ldots,c_{\vDelta''}}\prod_{x\in\cX}\pr[\SIGMA(x)=\tau(x)\mid\PHI']+o(1)&&\mbox{[by {\bf E4, E5}]}\nonumber\\
&=\sum_{i=1}^{\vDelta''}\log\brk{1-\prod_{j=1}^k\pr\brk{\SIGMA(\vx_{i,j})\neq\sign(\vx_{i,j},c_i)\mid\PHI'}} +o(1)	&&\mbox{[by {\bf E3}].}
\label{eqlem_PHI''_3_4}
\end{align}
Further, {\bf E4} ensures that for any $1\leq i\leq\Delta''$,
\begin{align}
\abs{\log\brk{1-\prod_{j=1}^k\pr\brk{\SIGMA(\vx_{i,j})\neq\sign(\vx_{i,j},c_i)\mid\PHI'}}-\lambda_\eps\brk{1-\prod_{j=1}^k\pr\brk{\SIGMA(\vx_{i,j})\neq\sign(\vx_{i,j}, c_i)\mid\PHI'}}}<\xi.
\label{eqlem_PHI''_3_5}
\end{align}
Thus, combining \eqref{eqlem_PHI''_3_4} and~\eqref{eqlem_PHI''_3_5}, we obtain
\begin{align}
\ex\abs{\vecone\fE\bc{\log\frac{Z(\PHI'')\vee1}{Z(\PHI')\vee1}-\sum_{i=1}^{\vDelta''}\lambda_\eps\bc{1-\prod_{j=1}^k\pr\brk{\SIGMA(\vx_{i,j})\neq\sign(\vx_{i,j}, c_i)\mid\PHI'}}}}&<\delta/3+o(1).
\label{eqlem_PHI''_3_6}
\end{align}
Further, combining \eqref{eqlem_PHI''_3_2} and \eqref{eqlem_PHI''_3_6} with \Lem~\ref{lem_PHI''_2}, we obtain
\begin{align}\label{eqlem_PHI''_3_7}
\abs{\ex\brk{\log\frac{Z(\PHI'')\vee1}{Z(\PHI')\vee1}}-\ex\brk{\sum_{i=1}^{\vDelta''}\lambda_\eps\bc{1-\prod_{j=1}^k\pr\brk{\SIGMA(\vx_{i,j})\neq\sign(\vx_{i,j},c_i)\mid\PHI'}}\mid Z(\PHI')>0}}&<\delta+o(1).
\end{align}
Finally, since the clauses $c_1,\ldots,c_{\vDelta''}$ are drawn uniformly and independently and since the distribution of $\PHI'$ is invariant under permutation of the variable nodes, we find
\begin{align}\nonumber
\ex&\brk{\sum_{i=1}^{\vDelta''}\lambda_\eps\bc{1-\prod_{j=1}^k\pr\brk{\SIGMA(\vx_{i,j})\neq\sign(\vx_{i,j},c_i)\mid\PHI'}\mid Z(\PHI')>0}}\\
&=\frac{d(k-1)}{k}\ex\brk{\lambda_\eps\bc{1-\prod_{j=1}^k\pr\brk{\SIGMA(x_j)=\vs_j}\mid\PHI'}\mid Z(\PHI')>0}.\label{eqlem_PHI''_3_8}
\end{align}
Combining~\eqref{eqlem_PHI''_3_7} and~\eqref{eqlem_PHI''_3_8} completes the proof.
\end{proof}

\begin{proof}[Proof of \Prop~\ref{lem_PHI''}]
\Prop~\ref{prop_benign} shows together with \Lem~\ref{lem_PHI''_3} that
\begin{align}\label{eq_lem_PHI''_A}
\ex\brk{\log\frac{Z(\PHI'')\vee 1}{Z(\PHI')\vee 1}}&=\frac{d(k-1)}k\ex\brk{\lambda_\eps\bc{1-\prod_{j=1}^k\MU_{\pi_{d,k},1,j}}}+o_\eps(1),
\end{align}
with $o_\eps(1)$ hiding a term that vanishes in the limit $\eps\to0$.
Furthermore, in light of~\eqref{eq_prop_arnab_bound} the monotone convergence theorem yields
\begin{align}\label{eq_lem_PHI''_B}
\ex\brk{\log\bc{1-\prod_{j=1}^k\MU_{\pi_{d,k},1,j}}}&=\lim_{\eps\to0}\ex\brk{\lambda_\eps\bc{1-\prod_{j=1}^k\MU_{\pi_{d,k},1,j}}}.
\end{align}
The assertion follows from~\eqref{eq_lem_PHI''_A} and~\eqref{eq_lem_PHI''_B}.
\end{proof}

\subsection{Proof of \Prop~\ref{lem_PHI'''}}\label{sec_lem_PHI'''}
We adapt the steps from \Sec~\ref{sec_lem_PHI''} to the coupling of $\PHI'$, $\PHI'''$.
Recall that the latter is obtained by adding to $\PHI'$ a single variable $x_{n+1}$ along with $\vDelta'''$ clauses $b_1,\ldots,b_{\vDelta'''}$ that each contain $x_{n+1}$ and $k-1$ other variables.
Thus, let $\vx_{1,1},\ldots,\vx_{1,k-1},\ldots,\vx_{\vDelta''',1},\ldots,\vx_{\vDelta''',k-1}\in\{x_1,\ldots,x_n\}$ be the variables other than $x_{n+1}$ that appear in $b_1,\ldots,b_{\vDelta'''}$ and let $\cX=\{\vx_{1,1},\ldots,\vx_{\vDelta''',k-1}\}$ be the set comprising all these variables.

\begin{lemma}\label{lem_PHI'''_2}
Assume that $0<d<\duniq(k)$.
There exists $B=B(d,k)>0$ such that for all $0<\eps<1$ we have
\begin{align*}
  \limsup_{n\to\infty}\ex\brk{ \lambda_\eps\bc{\sum_{s\in\PM}\prod_{i=1}^{\vDelta'''}\bc{1-\vecone\{\sign(x_{n+1},b_i)\neq s\}\prod_{j=1}^{k-1}\pr[\SIGMA(\vx_{i,j})\neq\sign(\vx_{i,j}, b_i)\mid\PHI']}}^2 \mid Z(\PHI')>0}&\leq B.
\end{align*}
\end{lemma}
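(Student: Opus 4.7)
My plan is to parallel the strategy of \Lem~\ref{lem_PHI''_2}: first reduce the expectation to its limiting analogue via \Prop~\ref{prop_benign}, then bound the limit using the finite moments supplied by \Prop~\ref{prop_arnab}. Setting $\vec P_i:=\prod_{j=1}^{k-1}\pr[\SIGMA(\vx_{i,j})\ne\sign(\vx_{i,j},b_i)\mid\PHI']$ and denoting the argument of $\lambda_\eps$ by $\vec W$, the key structural observation is that $\vec W=\vec X_{+}+\vec X_{-}$, where $\vec X_s:=\prod_{i:\sign(x_{n+1},b_i)\ne s}(1-\vec P_i)\in[0,1]$. Consequently $\vec W\in[0,2]$, and $\lambda_\eps(\vec W)^2$ is a bounded continuous function of $\vec W$ for every fixed $\eps>0$.

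Mimicking the proof of \Lem~\ref{lem_bad'''}, I will condition on the high-probability events $\{Z(\PHI')>0\}$, $\{\vDelta'''\le\log n\}$ and $\{|\cX|=(k-1)\vDelta'''\}$, which ensure the $\vx_{i,j}$ are pairwise distinct and the number of marginals entering $\vec W$ is deterministically bounded. On this event, \Prop~\ref{prop_benign} together with \Fact~\ref{prop_identical} will yield convergence in distribution of $\vec W$ to
\[\vec W^\infty:=\sum_{s\in\PM}\prod_{i=1}^{\vd}\bc{1-\vecone\{\vs_i\ne s\}\prod_{j=1}^{k-1}\MU_{\pi_{d,k},i,j}},\]
where $\vd\sim\Po(d)$, the $(\vs_i)_{i\ge1}$ are i.i.d.\ uniform on $\PM$, and the $(\MU_{\pi_{d,k},i,j})_{i,j\ge1}$ are i.i.d.\ with law $\pi_{d,k}$, all mutually independent. \Fact~\ref{prop_identical} is invoked to identify, in law, the sign-flipped marginal $\pr[\SIGMA(\vx_{i,j})\ne\sign(\vx_{i,j},b_i)\mid\PHI']$ with $\MU_{\pi_{d,k},i,j}$. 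Boundedness of $\lambda_\eps^2$ for each fixed $\eps$ then gives $\ex[\lambda_\eps(\vec W)^2\mid Z(\PHI')>0]\to\ex\lambda_\eps(\vec W^\infty)^2$, reducing the task to producing an $\eps$-independent upper bound on the latter.

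For the final step I will split $\vec W^\infty=\vec X_+^\infty+\vec X_-^\infty$, where $\vec X_\pm^\infty:=\prod_{i=1}^{\vd^\pm}(1-\prod_{j=1}^{k-1}\MU_{\pi_{d,k},i,j})\in[0,1]$ are \emph{independent} thanks to Poisson thinning $\vd^\pm\sim\Po(d/2)$. The AM-GM estimate $\vec W^\infty\ge 2\sqrt{\vec X_+^\infty\vec X_-^\infty}$ combined with $\vec W^\infty\le 2$ and $|\lambda_\eps(z)|\le|\log z|$ on $(0,1]$ will yield
\[\lambda_\eps(\vec W^\infty)^2\le(\log 2)^2+\frac{1}{4}\bc{|\log\vec X_+^\infty|+|\log\vec X_-^\infty|}^2.\]
Applying Cauchy-Schwarz and the distributional symmetry of $\vec X_\pm^\infty$, the r.h.s.\ has expectation at most $(\log 2)^2+\ex\log^2\vec X_+^\infty$. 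A Wald-type identity for the Poisson sum $\log\vec X_+^\infty=\sum_{i=1}^{\vd^+}\log(1-\prod_{j=1}^{k-1}\MU_{\pi_{d,k},i,j})$ then reduces matters to bounding $\ex\log^2(1-\prod_{j=1}^{k-1}\MU_{\pi_{d,k},1,j})$, which follows from the pointwise estimate $1-\prod_{j}\MU_{\pi_{d,k},1,j}\ge 1-\MU_{\pi_{d,k},1,1}$, \Fact~\ref{prop_identical}, and the finiteness of $\ex\log^2\MU_{\pi_{d,k},1,1}$ provided by \Prop~\ref{prop_arnab}.

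The main technical obstacle I anticipate is establishing the convergence $\vec W\to\vec W^\infty$ rigorously: although \Prop~\ref{prop_benign} supplies the required factorisation of marginals for any deterministic collection of at most $\log n$ variables, the number of marginals contributing to $\vec W$ is itself the random variable $\vDelta'''(k-1)$. The conditioning above fixes both the number and the locations of the relevant variables, so \Prop~\ref{prop_benign} can be applied directly once we average over $\vDelta'''$ and the variables $(\vx_{i,j})$; the contribution to $\ex[\lambda_\eps(\vec W)^2\mid Z(\PHI')>0]$ from the complementary event vanishes in the limit $n\to\infty$ for each fixed $\eps$ by the boundedness of $\lambda_\eps^2$ combined with the $o(1)$ probability of the bad event.
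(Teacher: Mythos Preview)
Your argument is correct but organised differently from the paper's. You first establish weak convergence $\vec W\to\vec W^\infty$ (via \Prop~\ref{prop_benign}, using that conditionally on $\PHI'$ the marginals $\pr[\SIGMA(\vx_{i,j})=1\mid\PHI']$ are i.i.d.\ samples from the empirical distribution $\vec\pi_n'$, which converges to $\pi_{d,k}$), then exploit the boundedness of $\lambda_\eps^2$ for fixed $\eps$ to pass the expectation to the limit, and finally bound $\ex\lambda_\eps(\vec W^\infty)^2$ uniformly in $\eps$ via the AM--GM estimate $\vec W^\infty\ge 2\sqrt{\vec X_+^\infty\vec X_-^\infty}$ and Poisson thinning. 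The paper instead bounds $|\lambda_\eps(\vec W)|$ \emph{pointwise at finite $n$}: from $\vec W\ge\prod_{i=1}^{\vDelta'''}\pr[\SIGMA(\vx_{i,1})=\sign(\vx_{i,1},b_i)\mid\PHI']$ one obtains $|\lambda_\eps(\vec W)|\le 1+\sum_{i=1}^{\vDelta'''}|\lambda_\eps(\pr[\SIGMA(\vx_{i,1})=\sign(\vx_{i,1},b_i)\mid\PHI'])|$, then applies Wald together with \Prop~\ref{prop_benign} for a \emph{single} marginal and \Prop~\ref{prop_arnab}. The paper's route is shorter and requires only the one-dimensional consequence of \Prop~\ref{prop_benign}; your route needs the (still elementary) joint convergence of finitely many marginals, but separates the ``pass to the limit'' and ``bound the limit'' steps more cleanly. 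Both arrive at the same endpoint, namely the finiteness of $\ex\log^2\MU_{\pi_{d,k},1,1}$ from \Prop~\ref{prop_arnab}.
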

\begin{proof}
Given that $\PHI'$ is satisfiable, and noticing that $\lambda_\eps$ is increasing, and $\eps \in (0,1)$, we see that
\begin{align}
0\wedge\lambda_\eps&\bc{\sum_{s\in\PM}\prod_{i=1}^{\vDelta'''}\bc{1-\vecone\{\sign(x_{n+1},b_i)\neq s\}\prod_{j=1}^{k-1}\pr[\SIGMA(\vx_{i,j})\neq\sign(\vx_{i,j}, b_i)\mid\PHI']}}
                    \nonumber\\
                   &\ge \lambda_\eps\bc{\sum_{s\in\PM}\prod_{i=1}^{\vDelta'''}1-\prod_{j=1}^{k-1}\pr[\SIGMA(\vx_{i,j})\neq\sign(\vx_{i,j}, b_i)\mid\PHI']}
                   \nonumber\\
                   &\ge \lambda_\eps\bc{\prod_{i=1}^{\vDelta'''}1-\pr[\SIGMA(\vx_{i,1})\neq\sign(\vx_{i,1}, b_i)\mid\PHI']}
                   \nonumber \\
                   &= \lambda_\eps\bc{\prod_{i=1}^{\vDelta'''}\pr[\SIGMA(\vx_{i,1})=\sign(\vx_{i,1}, b_i)\mid\PHI']}\geq\sum_{i=1}^{\vDelta'''}\lambda_\eps(\pr[\SIGMA(\vx_{i,1})=\sign(\vx_{i,1}, b_i)\mid\PHI']).
                   \label{eq_le0leps}
\end{align}
We also notice that  
\begin{align}\label{eq_lepsle1}
  0 \vee \lambda_\eps&\bc{\sum_{s\in\PM}\prod_{i=1}^{\vDelta'''}\bc{1-\vecone\{\sign(x_{n+1},b_i)\neq s\}\prod_{j=1}^{k-1}\pr[\SIGMA(\vx_{i,j})\neq\sign(\vx_{i,j}, b_i)\mid\PHI']}} <1 \enspace.
\end{align}
In light of the above, we now bound 
\begin{align}
  \ex&\brk{ \lambda_\eps\bc{\sum_{s\in\PM}\prod_{i=1}^{\vDelta'''}\bc{1-\vecone\{\sign(x_{n+1},b_i)\neq s\}\prod_{j=1}^{k-1}\pr[\SIGMA(\vx_{i,j})\neq\sign(\vx_{i,j},b_i)\mid\PHI']}}^2 \mid Z(\PHI')>0}\nonumber\\
   &\leq\ex\brk{ 1+ \bc{\sum_{i=1}^{\vDelta'''}\lambda_\eps(\pr[\SIGMA(\vx_{i,1})=\sign(\vx_{i,1}, b_i)\mid\PHI']}^2\mid Z(\PHI')>0}&&\mbox{[from \eqref{eq_le0leps},\eqref{eq_lepsle1}]}\nonumber\\
   &\leq d(d+1)\ex\brk{1+\bc{\lambda_\eps(\pr[\SIGMA(\vx_{1,1})=\sign(\vx_{1,1}, b_1)\mid\PHI'])}^2\mid Z(\PHI')>0} &&\mbox{[$\vDelta'''\disteq\Po(d)$]}\nonumber\\
&\leq d(d+1)\bc{1+\ex\brk{\lambda_\eps(\pr[\SIGMA(\vx_{1,1})=\sign(\vx_{1,1}, b_1)\mid\PHI'])^2\mid Z(\PHI')>0}}.
\label{eq_lem_PHI'''_2_2}
\end{align}
Further, \Prop~\ref{prop_benign} implies that for any $\eps>0$,
\begin{align}
\ex\brk{\lambda_\eps(\pr[\SIGMA(\vx_{1,1})=\sign(\vx_{1,1}, b_1)\mid\PHI'])^2\mid Z(\PHI')>0}&= \ex\brk{\lambda_\eps(\MU_{\pi_{d,k},1,1})^2}+o(1)\leq \ex\brk{\log^2\MU_{\pi_{d,k},1,1}}+o(1).
\label{eq_lem_PHI'''_2_3}
\end{align}
Finally, the assertion follows from \eqref{eq_lem_PHI'''_2_2} and~\eqref{eq_lem_PHI'''_2_3}.
\end{proof}

\begin{lemma}\label{lem_PHI'''_3}
Assume that $0<d<\duniq(k)$.
For any $\delta>0$ there exists $\eps_0>0$ such that for all $\eps_0>\eps>0$ we have
\begin{align*}
  \limsup_{n\to\infty}&\Bigg|\ex\brk{\log\frac{Z(\PHI''')\vee 1}{Z(\PHI')\vee 1}}\\&-\ex\brk{\lambda_\eps\bc{\sum_{s\in\PM}\bc{\prod_{i=1}^{\vDelta'''}1-\vecone\{\sign(x_{n+1},b_i)\neq s\}\prod_{j=1}^{k-1}\pr[\SIGMA(\vx_{i,j})\neq\sign(\vx_{i,j}, b_i)\mid\PHI']}}^2 \mid Z(\PHI')>0}\Bigg|<\delta.
\end{align*}
\end{lemma}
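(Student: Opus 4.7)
The strategy closely mirrors that of \Lem~\ref{lem_PHI''_3}, with coupling \textbf{CPL3} in place of \textbf{CPL2}. Fix auxiliary parameters $\xi=\xi(d,k,\delta)>\zeta=\zeta(\xi)>\eta=\eta(\zeta)>\eps_0=\eps_0(\eta)>0$ together with $\gamma=\gamma(n)=o(1)$ tending to zero slowly, pick $0<\eps<\eps_0$, and assume $n\geq n_0(\eps)$. Let $\fE$ be the event that \textbf{(E1)}~$Z(\PHI')>0$; \textbf{(E2)}~$\vDelta'''\leq\zeta^{-1}$; \textbf{(E3)}~$|\cX|=(k-1)\vDelta'''$, i.e., the variables $\vx_{i,j}$ are pairwise distinct; \textbf{(E4)}~$\max_{x\in\cX,s\in\PM}\pr[\SIGMA(x)=s\mid\PHI']\leq 1-\eta$; \textbf{(E5)}~$\sum_{\tau\in\PM^{\cX}}|\pr[\forall x\in\cX:\SIGMA(x)=\tau(x)\mid\PHI']-\prod_{x\in\cX}\pr[\SIGMA(x)=\tau(x)\mid\PHI']|<\gamma$. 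Since $0<d<\duniq(k)<\dsat(k)$ we have $\pr[Z(\PHI')>0]=1-o(1)$; Markov's inequality applied to $\vDelta'''\disteq\Po(d)$ yields \textbf{(E2)} up to probability $\zeta d<\xi$; a balls-into-bins bound gives \textbf{(E3)} with failure probability $O(1/n)$ conditional on \textbf{(E2)}; and \Prop~\ref{prop_benign} together with \eqref{eq_prop_arnab_bound} provides \textbf{(E4)} and \textbf{(E5)} up to error $\xi$ after tuning $\eta$ small and letting $\gamma\to 0$ slowly enough. Hence $\pr[\fE]\geq 1-2\xi+o(1)$.

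On the complement $\fE^c$, \Prop~\ref{prop_bad} combined with H\"older's inequality (for the $3/2$-moment of the log-ratio) yields
\begin{align*}
\ex\brk{(1-\vecone\fE)\cdot\abs{\log\frac{Z(\PHI''')\vee 1}{Z(\PHI')\vee 1}}}\leq\delta/3+o(1)
\end{align*}
for $\xi$ small enough. Analogously, \Lem~\ref{lem_PHI'''_2} and Cauchy--Schwarz control the $\lambda_\eps$-term off $\fE$ by $\delta/3+o(1)$. Hence it suffices to match the two quantities on $\fE$.

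On $\fE$ we expand $Z(\PHI''')/Z(\PHI')$ by conditioning on the value $s\in\PM$ assigned to the new variable $x_{n+1}$: when $\sign(x_{n+1},b_i)=s$ the clause $b_i$ is automatically satisfied, whereas otherwise it must be satisfied by one of $\vx_{i,1},\ldots,\vx_{i,k-1}$. Writing the expansion over the $2^{|\cX|}$ values of $(\SIGMA(x))_{x\in\cX}$, using \textbf{(E3)} and \textbf{(E5)} to factorise the joint marginal inside the sum up to $o(1)$ additive error, and invoking \textbf{(E4)} to replace $\log$ with $\lambda_\eps$ up to $O(\xi)$ per summand, yields on $\fE$
\begin{align*}
\log\frac{Z(\PHI''')\vee 1}{Z(\PHI')\vee 1}=\lambda_\eps\bc{\sum_{s\in\PM}\prod_{i=1}^{\vDelta'''}\bc{1-\vecone\{\sign(x_{n+1},b_i)\neq s\}\prod_{j=1}^{k-1}\pr[\SIGMA(\vx_{i,j})\neq\sign(\vx_{i,j},b_i)\mid\PHI']}}+o(1).
\end{align*}
Combining this identity with the off-$\fE$ estimates, taking expectations and passing to the limit $n\to\infty$ completes the proof.

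\textbf{Main obstacle.} The delicate step is the factorisation of the joint marginal on $\cX$ inside the logarithm: since the argument of $\log$ may be arbitrarily close to zero, a $o(1)$ multiplicative error does not translate directly into a $o(1)$ additive error after taking logs. This is exactly why \textbf{(E4)} is imposed, forcing each factor $1-\prod_{j}\pr[\SIGMA(\vx_{i,j})\neq\sign(\vx_{i,j},b_i)\mid\PHI']$ to stay bounded away from zero, and why $\log$ is replaced by the truncation $\lambda_\eps$; the slack from the truncation is absorbed afterwards by sending $\eps\to 0$ in the proof of \Prop~\ref{lem_PHI'''}. Controlling the off-$\fE$ contribution relies in turn on the \pulp-based tail bound of \Prop~\ref{prop_bad} together with the $\lambda_\eps$-moment bound of \Lem~\ref{lem_PHI'''_2}.
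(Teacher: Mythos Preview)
Your proof is correct and follows essentially the same approach as the paper: the same good event $\fE$ with conditions \textbf{E1}--\textbf{E5}, the same off-$\fE$ control via \Prop~\ref{prop_bad} (H\"older) and \Lem~\ref{lem_PHI'''_2} (Cauchy--Schwarz), and the same on-$\fE$ expansion of $Z(\PHI''')/Z(\PHI')$ by summing over $s=\pm1$ for $x_{n+1}$ and factorising via \textbf{E3}--\textbf{E5}. The only cosmetic difference is that the paper observes that on $\fE$ (with $\eps<\eps_0(\eta)$ small enough) the argument of $\lambda_\eps$ is bounded below by a fixed positive quantity, so that $\lambda_\eps$ and $\log$ agree \emph{exactly} there rather than up to $O(\xi)$; this sharpening is not needed for the conclusion.
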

\begin{proof}
Choose small enough $\xi=\xi(d,k,\delta)>\zeta(\xi)>\eta=\eta(\zeta)>\eps_0=\eps_0(\eta)>0$, let $0<\eps<\eps_0$, suppose that $n>n_0(\eps)$ is sufficiently large and let $0<\gamma=\gamma(n)=o(1)$ be a sequence that converges to zero slowly.
Let $\fE$ be the event that the following conditions occur.
\begin{description}
\item[E1] $Z(\PHI')>0$.
\item[E2] $\vDelta'''\leq\zeta^{-1}$.
\item[E3] $|\cX|=(k-1)\vDelta'''$.
\item[E4] $\max_{x\in\cX,s\in\PM}\pr[\SIGMA(x)=s\mid\PHI']\leq1-\eta$.
\item[E5] $\sum_{\tau\in\PM^{\cX}}\abs{\pr[\forall x\in\cX:\SIGMA(x)=\tau(x)\mid\PHI']-\prod_{x\in\cX}\pr[\SIGMA(x)=\tau(x)\mid\PHI']}<\gamma$.
\end{description}
As in the proof of \Lem~\ref{lem_PHI''_3} we find that
\begin{align}\label{eqlem_PHI'''_3_1}
\pr\brk{\fE}&\geq1-2\xi+o(1).
\end{align}

Let
\begin{align*}
  \vL_\eps&= \lambda_\eps\bc{\sum_{s\in\PM}\prod_{i=1}^{\vDelta'''}\bc{1-\vecone\{\sign(x_{n+1}, b_i)\neq s\}  \prod_{j=1}^{k-1}\pr[\SIGMA(\vx_{i,j})\neq\sign(\vx_{i,j}, b_i)\mid\PHI']}}
\end{align*}
for brevity.
Combining \Prop~\ref{prop_bad}, \Lem~\ref{lem_PHI'''_2} and \eqref{eqlem_PHI'''_3_1} and using H\"older's inequality, we obtain
\begin{align}\label{eqlem_PHI'''_3_2}
\ex\abs{(1-\vecone\fE)\log\frac{Z(\PHI'')}{Z(\PHI')}}&\leq\delta/3+o(1),&
\ex\abs{(1-\vecone\fE)\vL_\eps\mid Z(\PHI')>0}&\leq\delta/3+o(1).
\end{align}

Hence, we are left to compare $\ex\abs{\vecone\fE\cdot\log\frac{Z(\PHI'')}{Z(\PHI')}}$ and $\ex\abs{\vecone\fE\cdot\vL_\eps\mid Z(\PHI')>0}$.
On the event $\fE$ we have $Z(\PHI'),Z(\PHI''')>0$.
Consequently,
\begin{align}\nonumber
\log\frac{Z(\PHI''')\vee1}{Z(\PHI')\vee1}&=\log\frac{Z(\PHI''')}{Z(\PHI')}=
\log\sum_{\tau\in\PM^{\cX\cup\{x_{n+1}\}}}\vecone\cbc{\tau\models b_1,\ldots,b_{\vDelta'''}}\pr[\forall x\in\cX:\SIGMA(x)=\tau(x)\mid\PHI']\\
&=\log\sum_{\tau\in\PM^{\cX\cup\{x_{n+1}\}}}\vecone\cbc{\tau\models b_1,\ldots,b_{\vDelta'''}}\prod_{x\in\cX}\pr[\SIGMA(x)=\tau(x)\mid\PHI']+o(1)&&\mbox{[by {\bf E4, E5}]}\nonumber\\
&=\log\brk{\sum_{s\in\PM}\prod_{i=1}^{\vDelta'''}\bc{1-\vecone\cbc{\sign(x_{n+1}, b_i)\neq s}\prod_{j=1}^{k-1}\pr\brk{\SIGMA(\vx_{i,j})\neq\sign(\vx_{i,j},b_i)\mid\PHI'}}} +o(1)	&&\mbox{[by {\bf E3}].}
\label{eqlem_PHI'''_3_4}
\end{align}
Now, {\bf E4} guarantees that
\begin{align}
  \log\brk{\sum_{s\in\PM}\prod_{i=1}^{\vDelta'''}\bc{1-\vecone\cbc{\sign(x_{n+1},b_i)\neq s}\prod_{j=1}^{k-1}\pr\brk{\SIGMA(\vx_{i,j})\neq\sign(\vx_{i,j}, b_i)\mid\PHI'}}}&=\vL_\eps.
\label{eqlem_PHI'''_3_5}
\end{align}
Therefore, we combine \eqref{eqlem_PHI'''_3_4} and~\eqref{eqlem_PHI'''_3_5} to obtain
\begin{align}
\ex\abs{\vecone\fE\bc{\log\frac{Z(\PHI''')\vee1}{Z(\PHI')\vee1}-\vL_\eps}}&<\delta/3+o(1).
\label{eqlem_PHI'''_3_6}
\end{align}
Finally, the assertion follows from \eqref{eqlem_PHI'''_3_2} and \eqref{eqlem_PHI'''_3_6}.
\end{proof}

\begin{proof}[Proof of \Prop~\ref{lem_PHI'''}]
Following similar steps as in the proof of \Prop~\ref{lem_PHI''}, we see that the assertion follows from \Lem~\ref{lem_PHI'''_3}, \Prop~\ref{prop_arnab}, \Prop~\ref{prop_benign}, and the dominated convergence theorem.
\end{proof}

\begin{proof}[Proof of \Prop~\ref{prop_aizenman}]
Immediate from Fact~\ref{fact_coupling}, \Prop~\ref{lem_PHI''} and \Prop~\ref{lem_PHI'''}.
\end{proof}

\section
{Proof of \Prop~\ref{prop_BPPureLit} }\label{sec_prop:condMarg}

\subsection{Proof of \Lem~\ref{lem:ExtremalBnd}}
The proof is by induction on the height of the tree.
The following claim summarises the main step of the induction.

\begin{claim}\label{Lemma_Noela}
For all $\ell\geq0$, all variables $x$ of ${\TT}^{(\ell)}$ and all
satisfying assignments $\tau\in S({\TT}^{(\ell)})$ we have
\begin{align}\label{eqLemma_Noela1}
\frac{Z({\TT}_x^{(\ell)},\tau,\TAU^+(x))}{Z(\TT_x^{(\ell)},\tau)}
&\leq\frac{Z({\TT}_x^{(\ell)},\TAU^+,\TAU^+(x))}{Z(\TT_x^{(\ell)},\TAU^+)}.
\end{align}
\end{claim}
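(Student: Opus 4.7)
The plan is to prove Claim~\ref{Lemma_Noela} by induction on the codistance $\ell - \dist_{\TT^{(\ell)}}(\root,x)/2$ of $x$ from the boundary layer $\partial^{2\ell}\root$. In the base case $x\in\partial^{2\ell}\root$ is itself a boundary variable, so $\TT_x^{(\ell)}=\{x\}$ carries no clauses and the boundary condition $\tau$ pins the value of $x$. Hence $Z(\TT_x^{(\ell)},\tau,\TAU^+(x))/Z(\TT_x^{(\ell)},\tau)=\vecone\{\tau(x)=\TAU^+(x)\}\le 1$, with equality when $\tau=\TAU^+$, as required.

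For the inductive step, let $x$ be a non-boundary variable with child clauses $a_1,\dots,a_m$ and write $y_{i,1},\dots,y_{i,k-1}$ for the other variables of $a_i$. A standard double counting---clauses already satisfied by $x$ factor freely, while the remaining clauses must be satisfied by at least one $y_{i,j}$---yields, with $t=\TAU^+(x)$,
\begin{align*}
R_x(\tau):=\frac{Z(\TT_x^{(\ell)},\tau,t)}{Z(\TT_x^{(\ell)},\tau,-t)}
=\prod_{i\,:\,\sign(x,a_i)=t}\frac{1}{1-\prod_j p_{i,j}(\tau)}\cdot
\prod_{i\,:\,\sign(x,a_i)=-t}\Bigl(1-\prod_j p_{i,j}(\tau)\Bigr),
\end{align*}
where $p_{i,j}(\tau)=Z(\TT_{y_{i,j}}^{(\ell)},\tau,-\sign(y_{i,j},a_i))/Z(\TT_{y_{i,j}}^{(\ell)},\tau)$ denotes the conditional probability that $y_{i,j}$ fails to satisfy $a_i$. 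The recursion~\eqref{eq_extremalBoundDef} forces $\TAU^+(y_{i,j})=-\sign(y_{i,j},a_i)$ exactly when $\sign(x,a_i)=t$, and $\TAU^+(y_{i,j})=\sign(y_{i,j},a_i)$ exactly when $\sign(x,a_i)=-t$. Introducing the target-value ratio $q_{i,j}(\tau):=Z(\TT_{y_{i,j}}^{(\ell)},\tau,\TAU^+(y_{i,j}))/Z(\TT_{y_{i,j}}^{(\ell)},\tau)$, we therefore have $p_{i,j}(\tau)=q_{i,j}(\tau)$ on positive clauses (with $\sign(x,a_i)=t$) and $p_{i,j}(\tau)=1-q_{i,j}(\tau)$ on negative ones.

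Substituting, each factor of $R_x(\tau)$ becomes manifestly nondecreasing in every $q_{i,j}(\tau)$: the positive-clause factor reads $1/(1-\prod_j q_{i,j}(\tau))$, and the negative-clause factor reads $1-\prod_j(1-q_{i,j}(\tau))$. The inductive hypothesis applied to each grandchild $y_{i,j}$ gives $q_{i,j}(\tau)\le q_{i,j}(\TAU^+)$; componentwise monotonicity then yields $R_x(\tau)\le R_x(\TAU^+)$ (with the usual conventions $1/0=\infty$ and $\infty/(\infty+1)=1$ dealing with the degenerate cases where $\tau$ forces $x$). Since the ratio appearing in~\eqref{eqLemma_Noela1} equals $R_x(\tau)/(R_x(\tau)+1)$, itself increasing in $R_x$, the claim follows. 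The main conceptual point---and the step meriting care---is to verify that the sign alignment mandated by~\eqref{eq_extremalBoundDef} between $\sign(y_{i,j},a_i)$ and $\TAU^+(y_{i,j})$ is precisely what makes \emph{both} types of clause factors monotone in the \emph{same} direction (namely in $q_{i,j}$); this alignment is the combinatorial content of the extremality of $\TAU^+$.
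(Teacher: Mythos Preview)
Your proof is correct and follows essentially the same approach as the paper's: an induction on the distance from the boundary, the same factorisation of $Z(\TT_x^{(\ell)},\tau,\pm\TAU^+(x))$ over child clauses, and the same monotonicity argument using the inductive hypothesis on the grandchildren. The only cosmetic difference is that you work with the ratio $R_x(\tau)=Z(\cdot,\tau,\TAU^+(x))/Z(\cdot,\tau,-\TAU^+(x))$ and show it is nondecreasing in the $q_{i,j}$, whereas the paper works with the reciprocal and shows it is nonincreasing; your explicit identification of the sign alignment from~\eqref{eq_extremalBoundDef} as the source of monotonicity is a nice expository touch.
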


\begin{proof}
For boundary variables $x\in\partial^{2\ell} \root$ there is nothing to show
because the r.h.s.\ of \eqref{eqLemma_Noela1} equals one.
Hence, consider a variable $x\in\partial^{2q}\root$ for some $q<\ell$.
If $Z({\TT}_x^{(\ell)},\tau,\TAU^+(x))=0$, then \eqref{eqLemma_Noela1} is
trivially satisfied. Hence, assume that
$Z({\TT}_x^{(\ell)},\tau,\TAU^+(x))>0$.

Let $a_1^+,\ldots,a_g^+$ be the children (clauses) of $x$ with
$\sign(x,a_i^+)=\TAU^+(x)$. Also let
$y_{11},\ldots,y_{1(k-1)}, \ldots, y_{g1}, \ldots, y_{g(k-1)}$ be
the children (variables) of $a_1^+,\ldots,a_g^+$.
Similarly, let $a_1^-,\ldots,a_h^-$ be the children of $x$ with
$\sign(x,a_i^-)=-\TAU^+(x)$ and let
$z_{11},\ldots,z_{1(k-1)}, \ldots, z_{h1}, \ldots, z_{h(k-1)}$ be their
children.
We claim that for all $\tau\in S({\TT}^{(\ell)})$,
\begin{align}
Z({\TT}_x^{(\ell)},\tau,\TAU^+(x))
&=
\left(
\prod_{i=1}^g
\prod_{t=1}^{k-1}
Z({\TT}_{y_{it}}^{(\ell)},\tau)
\right)
\cdot
\prod_{j=1}^h
\left(
\prod_{t=1}^{k-1}
Z({\TT}_{z_{jt}}^{(\ell)},\tau)
-
\prod_{t=1}^{k-1}
Z({\TT}_{z_{jt}}^{(\ell)},\tau,-\TAU^+(z_{jt}))
\right)
\label{eq:SplusMax}
\enspace,
\\
Z({\TT}_x^{(\ell)},\tau,-\TAU^+(x))
&=
\prod_{i=1}^g
\left(
\prod_{t=1}^{k-1}
Z({\TT}_{y_{it}}^{(\ell)},\tau)
-
\prod_{t=1}^{k-1}
Z({\TT}_{y_{it}}^{(\ell)},\tau, \TAU^+(y_{it}))
\right)
\cdot
\left(
\prod_{j=1}^h
\prod_{t=1}^{k-1}
Z({\TT}_{z_{jt}}^{(\ell)},\tau)
\right)
\label{eq:SminusMax}
\enspace.
\end{align}
For setting $x$ to $\TAU^+(x)$ satisfies $a_1^+,\ldots,a_g^+$; hence,
arbitrary satisfying assignments of the sub-trees ${\TT}_{y_{it}}^{(\ell)}$
can be combined, which explains the first product in \eqref{eq:SplusMax}.
By contrast, upon assigning $x$ the value $\TAU^+(x)$ we need to ensure
that each of the clauses $a_1^-,\ldots,a_g^-$ are satisfied by at least
one variable other than $x$. This explains the second factor of
\eqref{eq:SplusMax}.
A similar argument yields \eqref{eq:SminusMax}.
Dividing \eqref{eq:SminusMax} by \eqref{eq:SplusMax} and invoking the
induction hypothesis (for $q+1$), we obtain
\begin{align*}
\frac{Z({\TT}_x^{(\ell)},\tau,-\TAU^+(x))}{Z(\TT_x^{(\ell)},\tau,\TAU^+(x))}
&=
\prod_{i=1}^g
\left(1-\prod_{t=1}^{k-1}
\frac{Z({\TT}_{y_{it}}^{(\ell)},\tau,\TAU^+(y_{it}))}
{Z({\TT}_{y_{it}}^{(\ell)},\tau)}
\right)
\cdot
\prod_{j=1}^h
\left(
1-
\prod_{t=1}^{k-1}
\frac{Z({\TT}_{z_{jt}}^{(\ell)},\tau, -\TAU^+(z_i))}
{Z({\TT}_{z_{jt}}^{(\ell)},\tau)}
\right)^{-1}
\\
&\ge
\prod_{i=1}^g
\left(1-\prod_{t=1}^{k-1}
\frac{Z({\TT}_{y_{it}}^{(\ell)},\TAU^+,\TAU^+(y_{it}))}
{Z({\TT}_{y_{it}}^{(\ell)},\TAU^+)}
\right)
\cdot
\prod_{j=1}^h
\left(
1-
\prod_{t=1}^{k-1}
\frac{Z({\TT}_{z_{jt}}^{(\ell)},\TAU^+, -\TAU^+(z_i))}
{Z({\TT}_{z_{jt}}^{(\ell)},\TAU^+)}
\right)^{-1}
=\frac{Z({\TT}_x^{(\ell)},\TAU^+,-\TAU^+(x))}
{Z({\TT}_x^{(\ell)},\TAU^+,\TAU^+(x))} \enspace ,
\end{align*}
completing the induction.
\end{proof}

\begin{proof}[Proof of \Lem~\ref{lem:ExtremalBnd}]
Applying Claim~\ref{Lemma_Noela} to $x=\root$ completes the proof of \Lem~\ref{lem:ExtremalBnd}.
\end{proof}

\subsection{Proof of Lemma \ref{lem_SmallOnTop}}\label{sec:lem:SmallOnTop}
We employ the {\pulp} algorithm introduced in \Sec~\ref{sec_pulp} and its analysis on the random tree from \Sec~\ref{sec_treepulp}.
Recall that given an initial set of literals $\cL$, {\pulp} returns a superset $\bar\cL$ with the property that the partial assignment obtained from setting all literals of  $\bar\cL$ to true, leaves no clause with only unsatisfying literals.
Let us write $\bar\cL = \bar\cL_{x,s}$ for the set returned by {\pulp} algorithm, initialized with the literal set $\cL =\{s \cdot x\}$.

\begin{claim}\label{cor_from_Ichi}
  Let $0\leq t<\ell$ and assume that $x\in \partial^{2t}_\TT\root$, $s \in \PM$, satisfy $|\bar\cL_{x,s}| < \ell - t$.
Then for all $\tau\in S(\TT^{(\ell)})$
\begin{align}
Z({\TT}_x^{(\ell)},\tau)\leq 2^{|\bar\cL_{x,s}|}\cdot Z(\TT_x^{(\ell)},\tau,s)
\enspace.
\end{align}
\end{claim}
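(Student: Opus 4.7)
The plan is to apply Lemma~\ref{lem_Ichi} to the formula $\TT_x^{(\ell)}$ enriched with unary clauses that pin down the boundary variables according to $\tau$, using as the closure the set returned by \pulp\ on the (possibly infinite) subtree $\TT_x$ with initial literal set $\{s\cdot x\}$.

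First I would let $\bar\cL$ be the output of \pulp\ run on $\TT_x$ with initial set $\cL=\{s\cdot x\}$, so that $|\bar\cL|=\vN_{x,s}$. Because $\vN_{x,s}$ is finite by hypothesis, \pulp\ terminates without reporting a contradiction, and hence $\bar\cL$ satisfies \textbf{PULP1}--\textbf{PULP2} relative to $\TT_x$. A straightforward induction on the iterations of \pulp---using that each newly added literal must share a clause with a literal already present in $\bar\cL$, and that $\TT_x$ is a tree---shows that the variable underlying any literal of $\bar\cL$ lies at generation at most $\vN_{x,s}-1$ from $x$ in $\TT_x$. Since $\vN_{x,s}<\ell-t$, this bound is strictly smaller than the boundary generation $\ell-t$; in particular $\bar\cL$ contains no literal built on a variable of $V(\TT_x^{(\ell)})\cap\partial^{2\ell}_{\TT}\root$.

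Next, let $\Phi^\tau$ denote the $k$-CNF obtained from $\TT_x^{(\ell)}$ by adjoining a unary clause $\{\tau(y)\cdot y\}$ for every boundary variable $y\in V(\TT_x^{(\ell)})\cap\partial^{2\ell}_{\TT}\root$, so that $Z(\Phi^\tau)=Z(\TT_x^{(\ell)},\tau)$ and $Z(\Phi^\tau,\{s\cdot x\})=Z(\TT_x^{(\ell)},\tau,s)$. I would then verify that $\bar\cL$ satisfies \textbf{PULP1}--\textbf{PULP2} relative to $\Phi^\tau$: condition \textbf{PULP2} is inherited unchanged; condition \textbf{PULP1} continues to hold for every original tree clause, because each clause of $\TT_x^{(\ell)}$ is already a clause of $\TT_x$; and each newly adjoined unary boundary clause $\{\tau(y)\cdot y\}$ satisfies \textbf{PULP1} vacuously, since neither $\tau(y)\cdot y$ nor $-\tau(y)\cdot y$ belongs to $\bar\cL$ by the depth estimate of the previous paragraph. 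An application of Lemma~\ref{lem_Ichi} to the pair $(\Phi^\tau,\{s\cdot x\})$ with this closure now yields
\begin{align*}
Z(\TT_x^{(\ell)},\tau)\;=\;Z(\Phi^\tau)\;\leq\;2^{|\bar\cL|}\,Z(\Phi^\tau,\{s\cdot x\})\;=\;2^{\vN_{x,s}}\,Z(\TT_x^{(\ell)},\tau,s),
\end{align*}
which is the desired inequality.

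The only step requiring genuine care is the depth estimate on the support of $\bar\cL$ in terms of $\vN_{x,s}$; this is precisely where the acyclicity of $\TT_x$ matters, and everything else is a routine bookkeeping exercise around Lemma~\ref{lem_Ichi}.
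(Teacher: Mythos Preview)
Your proposal is correct and follows essentially the same approach as the paper: both arguments hinge on the observation that, since $\vN_{x,s}<\ell-t$, the \pulp\ closure never reaches the boundary variables of $\TT_x^{(\ell)}$, so the boundary condition $\tau$ cannot conflict with the partial assignment determined by $\bar\cL$, after which Lemma~\ref{lem_Ichi} (or its proof) applies. Your device of encoding the boundary condition via unary clauses and then invoking Lemma~\ref{lem_Ichi} directly is a clean formalisation of what the paper leaves implicit when it says ``the same argument as in the proof of Lemma~\ref{lem_Ichi}''; the paper's two-sentence proof simply asserts that $\tau$ does not clash with \pulp\ and refers back to that argument.
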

\begin{proof}
Notice that under our assumption on the size of $\bar\cL_{x,s}$, the assignment $\tau$ does not clash with the one imposed by {\pulp}.
The assertion therefore follows immediately from 
the same argument as in the proof of \Lem~\ref{lem_Ichi}.
\end{proof}

\begin{claim}\label{cl_size_t_GWtree}
We have $\lim_{t \to \infty} \Pr\brk{ |\partial_{\TT}^{2t} \root |>(200d\cdot (k-1))^{t}} = 0$.
\end{claim}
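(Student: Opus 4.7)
The claim reduces to a one-line application of Markov's inequality to the number of variable nodes at depth $2t$ of the Galton-Watson tree. Let $\vN_t := |\partial^{2t}_\TT \root|$. The two-type branching structure of $\TT$ dictates that each variable node begets a $\Po(d)$ number of clause-children, and each clause node in turn begets exactly $k-1$ variable-children. By additivity of independent Poisson variables, conditionally on $\vN_{t-1}$ the count $\vN_t$ therefore has the distribution of $(k-1)\cdot\Po(d\,\vN_{t-1})$.

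Taking expectations gives $\ex[\vN_t \mid \vN_{t-1}] = d(k-1)\,\vN_{t-1}$, and a straightforward induction on $t$, starting from $\vN_0 = 1$, yields $\ex[\vN_t] = (d(k-1))^t$. Applying Markov's inequality with threshold $(200 d(k-1))^t$ now gives
\[
\Pr\brk{\vN_t > (200 d(k-1))^t} \leq \frac{\ex[\vN_t]}{(200 d(k-1))^t} = 200^{-t},
\]
which vanishes as $t\to\infty$, establishing the claim.

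There is no genuine obstacle: the factor $200^t$ in the threshold is far larger than the actual mean growth rate $(d(k-1))^t$, so the crude first-moment bound already suffices. Sharper concentration bounds, such as the Bennett-based inductive scheme used to prove \Lem~\ref{lem_crude} on finite truncations of $\TT$, would be needed only if one wished to shrink the constant $200$ toward $d(k-1)$ or to obtain a rate beyond the geometric $200^{-t}$; for the qualitative statement required here, Markov is enough.
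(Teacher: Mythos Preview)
Your proof is correct and in fact more elementary than the paper's. The paper simply cites \Lem~\ref{lem_crude}, which bounds the total number $|V(\TT^{(\ell)})|$ of variable nodes (an upper bound on $|\partial^{2t}_\TT\root|$) via an inductive application of Bennett's inequality and yields a much stronger tail decay of order $\ell\exp(-t^{\ell/2}/4)$. Your route bypasses that machinery entirely: because the threshold $(200d(k-1))^t$ exceeds the mean $(d(k-1))^t$ by the factor $200^t$, a single application of Markov's inequality already gives the geometric decay $200^{-t}$, which is all the qualitative claim needs. The Bennett-based lemma buys a quantitatively sharper rate and is reused elsewhere in the paper, but for this particular claim your first-moment argument is both sufficient and cleaner; you correctly anticipated this trade-off in your final remark.
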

\begin{proof}
This is an immediate consequence of \Lem~\ref{lem_crude}.
\end{proof}

\begin{proof}[Proof of \Lem~\ref{lem_SmallOnTop}]
Assume that $\ell>ct^c$ for a large enough $c=c(d,k)>0$ and that $t>t_0=t_0(d,k)$ is sufficiently large.
Then \Cor~\ref{lem:SMSizeGivenHeight} shows that
\begin{align}\label{eq_tail_Nct}
\Pr\brk{|\bar\cL_{\root,\pm 1}| \ge t^c} \le \exp(-t^2) \enspace.
\end{align}
Combining \Cl~\ref{cl_size_t_GWtree} with \eqref{eq_tail_Nct} and using the union bound, we obtain a sequence $\varepsilon_t \to 0$ such that
\begin{align}\label{eq_UnOnN}
\Pr \brk{ \forall x \in \partial^{2t}_\TT\root: |\bar\cL_{x,\pm 1}| < t^c} \ge 1-\eps_t
\enspace.
\end{align}
If $x \in \partial_\TT^{2t}\root$ satisfies $|\bar\cL_{x,\pm 1}|  < t^c$ and $\ell > ct^c$, then Claim~\ref{cor_from_Ichi} yields that for all $x\in \partial^{2t}\root$
\begin{align}\label{eq_Inter_diffN}
\left|\ETA^{(\ell)}_x\right| \le \log \frac{Z(\TT^{(\ell)}_x, \SIGMA^+)}{Z(\TT^{(\ell)}_x, \SIGMA^+, +1)} +
\log \frac{Z(\TT^{(\ell)}_x, \SIGMA^+)}{Z(\TT^{(\ell)}_x, \SIGMA^+, -1)}
\le |\bar\cL_{x, +1}| + |\bar\cL_{x, -1}|\le 2t^c \enspace.
\end{align}
The result now follows from \eqref{eq_UnOnN} and \eqref{eq_Inter_diffN}.
\end{proof}

\subsection{Proof of \Prop~\ref{prop_BPPureLit}}\label{sec_prop_BPPureLit}

We focus on the operator $\LDELitS{d,k}$ introduced in \Sec~\ref{sec_uniq_outline}.
Let $\rho = \left(\rho_{\AllKid}, \rho_{\PureP}, \rho_{\PureM} \right)$, and
${\rho'} = \left({\rho}'_{\AllKid}, {\rho}'_{\PureP}, {\rho}'_{\PureM}\right)$
be two arbitrary triplets in
$\cP (-\infty,\infty] \times \cP (0, +\infty] \times \cP(-\infty, 0] $, and write
$\hat\rho = \left(\hat\rho_{\AllKid}\hat\rho_{\PureP}, \hat\rho_{\PureM} \right)$ and 
$\hat\rho' = \left(\hat\rho'_{\AllKid}\hat\rho'_{\PureP}, \hat\rho'_{\PureM} \right)$ for the 
images $\LDELitS{d,k}(\rho)$ and $\LDELitS{d,k}(\rho')$, respectively.
We wish to bound $\dist_{d}\left(\hat\rho, \hat\rho'\right)$ in terms of $\dist_{d}\left(\rho, \rho'\right)$.

To this end, we begin with bounding the $W_1$-distance separately for each of the coordinates $(\hat\rho_{\PureP}, \hat\rho'_{\PureP}),
(\hat\rho_{\PureM}, \hat\rho'_{\PureM})$ and $(\hat\rho_{\AllKid}, \hat\rho'_{\AllKid})$. Observe that  
it is sufficient to consider only $W_1(\hat\rho_{\PureP}, \hat\rho'_{\PureP})$ and $W_1(\hat\rho_{\PureM}, \hat\rho'_{\PureM})$, 
as the triangle inequality implies that $W_1(\hat\rho_{\AllKid}, \hat\rho'_{\AllKid}) \le 
W_1(\hat\rho_{\PureP}, \hat\rho'_{\PureP})+ W_1(\hat\rho_{\PureM}, \hat\rho'_{\PureM})$.

To spell out our bounds, we need to introduce some additional notation. Recall that for 
$i,j \ge 1$ the random variables $\ETA_{\AllKid,i,j}$, $\ETA_{\PureP,i,j}$, $\ETA_{\PureM,i,j}$ follow the law of 
$\rho_{\AllKid}$, $\rho_{\PureP}$, $\rho_{\PureM}$, respectively. Similarly, let $\ETA'_{\AllKid,i,j}$, $\ETA'_{\PureP,i,j}$, $\ETA'_{\PureM,i,j}$ 
be random variables with law $\rho'_{\AllKid}, \rho'_{\PureP}$ and $\rho'_{\PureM}$,
respectively.
We denote with $\ETA^{\wedge}_{\AllKid,i,j}$ the random variable $\ETA_{\AllKid,i,j} \wedge \ETA'_{\AllKid,i,j}$,  and with $\ETA^{\vee}_{\AllKid,i,j}$ 
the random variable $\ETA_{\AllKid,i,j} \vee \ETA'_{\AllKid,i,j}$. Similarly, we write  
$\ETA^{\wedge}_{\PureP,i,j} = \ETA_{\PureP,i,j}\wedge \ETA'_{\PureP,i,j}$
and $\ETA^{\vee}_{\PureP,i,j} = \ETA_{\PureP,i,j} \vee \ETA'_{\PureP,i,j}$, and also write 
$\ETA^{\wedge}_{\PureM,i,j} = \ETA_{\PureM,i,j} \wedge \ETA'_{\PureM,i,j}$ and 
$\ETA^{\vee}_{\PureM,i,j} = \ETA_{\PureM,i,j}, \ETA'_{\PureM,i,j}$.

Moreover, for a sign $\varepsilon \in \{\pm 1\}$ and a vector $r=(r_{\AllKid}, r_{\PureP}, r_{\PureM}, r_{\NoKid})$ of non-negative integers with 
$r_{\AllKid} + r_{\PureP} + r_{\PureM} + r_{\NoKid} = {k-1}$ and $1 \le i \le r_{\AllKid}$, $1 \le j \le r_{\PureP}$, $1 \le \ell \le r_{\PureM}$, we let
\begin{align*}
\diffr^{\AllKid}_{i}&(z, r; \varepsilon)\\
&= 
\left|
\frac{\partial}
{\partial z}
\log{
{\left(
1 -  \frac{1}{2^{\vec{r}_{\NoKid}}}
\Pfun\left(
\varepsilon  (\ETA_{\AllKid,1,1}, \ldots,\ETA_{\AllKid,1,i-1}, z, \ETA'_{\AllKid,1,i+1}, \ldots
\ETA'_{\AllKid,1,r_{\AllKid}})
\right)
\Pfun\left(
\varepsilon(\ETA'_{\PureP,1,1}, \ldots, \ETA'_{\PureP,1,r_{\PureP}})
\right)
\Pfun\left(
\varepsilon(\ETA'_{\PureM,1,1}, \ldots, \ETA'_{\PureM,1,r_{\PureM}})
\right)
\right)
}
}\right|
.
\end{align*}
Analogously, we define
\begin{align*}
\diffr^{\PureP}_{j}&(z, r; \varepsilon)\\
&=
\left|
\frac{\partial}
{\partial z}
\log{
{\left(
1 -  \frac{1}{2^{\vec{r}_{\NoKid}}} 
\Pfun\left(
\varepsilon  (\ETA_{\AllKid,1,1}, \ldots
\ETA_{\AllKid,1,r_{\AllKid}})
\right)
\Pfun\left(
\varepsilon  (\ETA_{\PureP,1,1}, \ldots, \ETA_{\PureP,1,j-1}, z, \ETA'_{\PureP,1,j+1},
\ldots, \ETA'_{\PureP,1,r_{\PureP}})
\right)
\Pfun\left(
\varepsilon  (\ETA'_{\PureM,1,1}, \ldots
\ETA'_{\PureM,1,r_{\PureM}})
\right)
\right)
}
}\right|,
\end{align*}
\begin{align*}
\diffr^{\PureM}_{\ell}&(z, r; \varepsilon) \\
&=
\left|
\frac{\partial}
{\partial z}
\log{
{\left(
1 -  \frac{1}{2^{\vec{r}_{\NoKid}}} 
\Pfun\left(
\varepsilon  (\ETA_{\AllKid,1,1}, \ldots
\ETA_{\AllKid,1,r_{\AllKid}})
\right)
\Pfun\left(
\varepsilon(\ETA_{\PureP,1,1}, \ldots, \ETA_{\PureP,1,r_{\PureP}})
\right)
\Pfun\left(
\varepsilon  (\ETA_{\PureM,1,1}, \ldots, \ETA_{\PureM,1,\ell-1}, z, \ETA'_{\PureM,1,\ell+1},
\ldots, \ETA'_{\PureP,1,r_{\PureM}})
\right)
\right)
}
}\right|.
\end{align*}

With the above notation in place, we are now ready to bound $W_1(\hat\rho_{\PureP}, \hat{\rho}'_{\PureP})$. 
For each of the pairs of distributions $(\rho_{\AllKid},\rho'_{\AllKid})$, 
$(\rho_{\PureP},\rho'_{\PureP})$, and 
$(\rho_{\PureM},\rho'_{\PureM})$, fix an arbitrary coupling among its coordinates. 

\begin{lemma}\label{lem_BoundW1P}
$W_1(\hat\rho_{\PureP}, \hat{\rho}'_{\PureP})$ is upper bounded by
\begin{align}\label{eq:FinalBoundForW1P}
\frac{d/2}{1-e^{-\frac{d}{2}}}
\cdot
\Erw\Biggl[
\sum_{i=1}^{\vec{r}_{\AllKid,1}}
\int_{{\ETA}^\wedge_{\AllKid, 1, i}}^{{\ETA}^\vee_{\AllKid, 1, i}}\!
\diffr^{\AllKid}_{i}(w_i, \vr_1; +1)
\dd w_i
+
\sum_{j=1}^{\vec{r}_{\PureP,1}}
\!
\int_{{\ETA}^\wedge_{\PureP, 1, j}}^{{\ETA}^\vee_{\PureP, 1, j}}\!
\diffr^{\PureP}_{j}(y_j, \vr_1; +1)
\dd y_j
+
\sum_{\ell=1}^{\vec{r}_{\PureM,1}}
\int_{{\ETA}^\wedge_{\PureM, 1, \ell}}^{{\ETA}^\vee_{\PureM, 1, \ell}}
\!
\diffr^{\PureM}_{\ell}(z_\ell, \vr_1; +1)
\dd z_\ell
\Biggr]
\enspace.
\end{align}
\end{lemma}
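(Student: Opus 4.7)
The plan is to couple $\hat\rho_{\PureP}$ and $\hat\rho'_{\PureP}$ by sharing the structural randomness (the positive Poisson variable $\vdpc'$ and the multinomial degree vectors $\vec r_i$) and using the pre-specified couplings of the type-$t$ log-likelihood random variables. Under this coupling, the representation \eqref{eq:AllKidRecGUnq} gives
\begin{align*}
W_1(\hat\rho_{\PureP}, \hat\rho'_{\PureP}) \le \Erw\abs{\sum_{i=1}^{\vdpc'} \bc{\log\LArg_{i,3}-\log\LArg'_{i,3}}} \le \Erw[\vdpc'] \cdot \Erw\abs{\log\LArg_{1,3}-\log\LArg'_{1,3}},
\end{align*}
via the triangle inequality and Wald's identity, since the summands are i.i.d.\ and independent of $\vdpc'$. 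Because $\vdpc'$ follows $\Po(d/2)$ conditioned to be positive, $\Erw[\vdpc']=(d/2)/(1-e^{-d/2})$, which matches the prefactor in~\eqref{eq:FinalBoundForW1P}.

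To handle a single summand I would apply a telescoping decomposition: swap the variables $\ETA_{t,1,\cdot}$ for $\ETA'_{t,1,\cdot}$ one at a time in a particular order. The order is dictated by the convention baked into the definitions of $\diffr^{\AllKid}_i, \diffr^{\PureP}_j, \diffr^{\PureM}_\ell$: in each of these functions, variables ``not yet swapped'' appear as $\ETA$ while ``already swapped'' ones appear as $\ETA'$. This forces us to swap the $\PureM$ variables first (from index $\vec r_{\PureM,1}$ down to $1$), then the $\PureP$ variables, then the $\AllKid$ variables. At each elementary swap, the fundamental theorem of calculus bounds the change in $\log\LArg$ by the integral of the absolute value of the corresponding partial derivative from the minimum to the maximum of the two endpoints; and by construction the integrand is, up to the sign $\varepsilon=(-1)^{3+1}=+1$, exactly the corresponding $\diffr^{t}_{\cdot}(z,\vec r_1;+1)$.

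Combining these two steps and taking expectations yields~\eqref{eq:FinalBoundForW1P}. The main bookkeeping burden is verifying the match between the telescoping-induced evaluation points of $\partial_z\log\LArg$ and the formal definitions of $\diffr^{\AllKid}_i, \diffr^{\PureP}_j, \diffr^{\PureM}_\ell$ stated just before the lemma; once the swap order is fixed this is mechanical, and no further analytic input is required. I expect the analogous bound for $W_1(\hat\rho_{\PureM}, \hat\rho'_{\PureM})$ to follow by the same argument with $\varepsilon=(-1)^{4+1}=-1$, and the bound for $W_1(\hat\rho_{\AllKid}, \hat\rho'_{\AllKid})$ to follow from these two via the triangle inequality, setting the stage for the coupling-based maximisation of derivative pairs described in \Sec~\ref{sec_uniq_outline}.
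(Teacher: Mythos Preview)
Your proposal is correct and follows essentially the same approach as the paper: couple the structural randomness, apply Wald's identity to extract the prefactor $(d/2)/(1-e^{-d/2})$, then telescope over the arguments of $\Pfun$ and use the fundamental theorem of calculus together with the triangle inequality to obtain the sum of $\diffr$-integrals. Your observation about the swap order being dictated by the definitions of $\diffr^{\AllKid}_i,\diffr^{\PureP}_j,\diffr^{\PureM}_\ell$ is exactly the bookkeeping the paper leaves implicit.
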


\begin{proof}
Let us write $\LArg_{i,j}'(\eps,r)$ for the expression 
in the r.h.s. of \eqref{eq:defOfXI} where distribution $\rho'$ is used instead of $\rho$, i..,
\begin{align}\label{eq:defOfprimeXI}
\LArg_{i,j}'(\eps,r)=  1 -  \frac{1}{2^{{r}_{\NoKid}}} 
\Pfun
\left(\eps \bigl( {\ETA'_{\AllKid,4i+j,1}},\ldots,{\ETA'_{\AllKid,4i+j,r_{\AllKid}}}\bigr) \right)
\Pfun
\left(\eps\bigl( {\ETA'_{\PureP,4i+j,1}},\ldots,{\ETA'_{\PureP,4i+j,r_{\PureP}}}\bigr) \right)
\Pfun\left(\eps
\bigl( {\ETA'_{\PureM,4i+j,1}},\ldots,{\ETA'_{\PureM,4i+j,r_{\PureM}}}\bigr)
\right)
.
\end{align}

By identically coupling the number of clauses and the types of the children variables of each clause in $\hat\rho_{\PureP}, \hat\rho'_{\PureP}$, 
we see that by the definition of the $W_1$ norm, 
\begin{align*}
W_1(\hat\rho_{\PureP}, \hat{\rho}'_{\PureP})
&\leq
\Erw{
        \brk
        {\left| -\sum_{i=1}^{\vdpc}
\log{\LArg_{i,3}(+1, \vr_{4i+3})
\over
\LArg'_{i,3}(+1, \vr_{4i+3})}
\right|}
}
\enspace.
\end{align*}
Applying Wald's lemma, we further obtain 
\begin{align}\label{eq:WaldsOnlyP}
W_1(\hat\rho_{\PureP}, \hat{\rho}'_{\PureP})
\le	
\frac{d/2}{1- e^{-d/2}}
\cdot
\Erw{
\brk
{\left| 
\log{\LArg_{1,3}(+1, \vr_{7})
\over
\LArg'_{1,3}(+1, \vr_{7})}
\right|}
}
=
\frac{d/2}{1- e^{-d/2}}
\cdot
\Erw{
\brk
{\left| 
\log{\LArg_{0,1}(+1, \vr_{1})
\over
\LArg'_{0,1}(+1, \vr_{1})}
\right|}
}
\enspace.
\end{align}
Let us now focus on the expectation in the r.h.s. of
\eqref{eq:WaldsOnlyP}. Recalling the definition of $\; \LArg$ in \eqref{eq:defOfXI}, and the definition of $\; \LArg'$ in \eqref{eq:defOfprimeXI},
we expand  
\begin{align}\label{eq:RecallXI}
\log{\LArg_{0,1}(+1, \vr_{1})
\over
\LArg'_{0,1}(+1, \vr_{1})}
=
\log{
\frac{
1 -  {2^{-{r}_{\NoKid}}} \cdot
\Pfun
\left( {\ETA_{\AllKid,1,1}},\ldots,{\ETA_{\AllKid,1,\vr_{\AllKid, 1}}} \right)
\cdot
\Pfun
\left( {\ETA_{\PureP,1,1}},\ldots,{\ETA_{\PureP,1,\vr_{\PureP,1}}} \right)
\cdot
\Pfun\left(
{\ETA_{\PureM,1,1}},\ldots,{\ETA_{\PureM,1,\vr_{\PureM,1}}}
\right)
}
{
1 -  {2^{-{r}_{\NoKid}}} \cdot
\Pfun
\left( {\ETA'_{\AllKid,1,1}},\ldots,{\ETA'_{\AllKid,1,\vr_{\AllKid,1}}} \right)
\cdot
\Pfun
\left( {\ETA'_{\PureP,1,1}},\ldots,{\ETA'_{\PureP,1,\vr_{\PureP,1}}} \right)
\cdot
\Pfun\left( {\ETA'_{\PureM,1,1}},\ldots,{\ETA'_{\PureM,1,\vr_{\PureM,1}}}  \right)
}
}.
\end{align}
Telescoping over the arguments of the functions $\Pfun$ in the r.h.s of
\eqref{eq:RecallXI}, invoking the fundamental theorem of calculus for
each term, and applying the triangle inequality we further obtain
\begin{align*}
\left|\log{\LArg_{0,1}(+1, \vr_{1})
\over
\LArg'_{0,1}(+1, \vr_{1})}
\right|
\le
\sum_{i=1}^{\vec{r}_{\AllKid,1}}
\left|
\int_{{\ETA}'_{\AllKid,1,i}}^{\ETA_{\AllKid,1,i}}
\diffr^{\AllKid}_{i}(w_i, \vec{r}_1; +1)
\dd w_i
\right|
+
\sum_{j=1}^{\vec{r}_{\PureP,1}}
\left|
\int_{{\ETA}'_{\PureP,1,j}}^{\ETA_{\PureP,1,j}}
\diffr^{\PureP}_{j}(y_j, \vec{r}_1; +{1})
\dd y_j
\right|
+
\sum_{\ell=1}^{\vec{r}_{\PureM,1}}
\left|
\int_{{\ETA}'_{\PureM,1,\ell}}^{\ETA_{\PureM,1,\ell}}
\diffr^{\PureM}_{\ell}(z_\ell, \vec{r}_1; +{1})
\dd z_\ell
\right|.
\end{align*}
Plugging the above into \eqref{eq:WaldsOnlyP} gives the result.
\end{proof}

Following the same steps as above, but replacing `$+1$' with `$-1$', yields the corresponding bound for $W_1(\hat\rho_{\PureM},\hat{\rho}'_{\PureM})$.

\begin{lemma}\label{lem_BoundW1M}
$W_1(\hat\rho_{\PureM},\hat{\rho}'_{\PureM})$ is upper bounded by
\begin{align}\label{eq:FinalBoundForW1M}
\frac{d/2}{1-e^{-\frac{d}{2}}}
\cdot
\Erw\Biggl[
\sum_{i=1}^{\vec{r}_{\AllKid,1}}
\int_{{\ETA}^\wedge_{\AllKid, 1, i}}^{{\ETA}^\vee_{\AllKid, 1, i}}\!
\diffr^{\AllKid}_{i}(w_i, \vr_1; -1)
\dd w_i
+
\sum_{j=1}^{\vec{r}_{\PureP,1}}
\!
\int_{{\ETA}^\wedge_{\PureP, 1, j}}^{{\ETA}^\vee_{\PureP, 1, j}}\!
\diffr^{\PureP}_{j}(y_j, \vr_1; -1)
\dd y_j
+
\sum_{\ell=1}^{\vec{r}_{\PureM,1}}
\int_{{\ETA}^\wedge_{\PureM, 1, \ell}}^{{\ETA}^\vee_{\PureM, 1, \ell}}
\!
\diffr^{\PureM}_{\ell}(z_\ell, \vr_1; -1)
\dd z_\ell
\Biggr]
\enspace.
\end{align}
\end{lemma}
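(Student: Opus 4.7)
The proof will be a direct mirror of that of Lemma \ref{lem_BoundW1P}, exploiting that $\hat\rho_{\PureM}$ plays, up to the sign convention, the role played by $\hat\rho_{\PureP}$ in the previous argument. First I would identically couple the number of clauses ${\vdmc}'$ and the type vectors $\vec r_{4i+4}$ for the children of each clause in the expressions \eqref{eq:AllKidRecGUnq} defining $\hat\rho_{\PureM}$ and $\hat\rho'_{\PureM}$. Together with the definition of the $W_1$ distance, this yields
\begin{align*}
W_1(\hat\rho_{\PureM}, \hat\rho'_{\PureM}) \leq \Erw\left[\left|\sum_{i=1}^{{\vdmc}'} \log\frac{\LArg_{i,4}(-1, \vr_{4i+4})}{\LArg'_{i,4}(-1, \vr_{4i+4})}\right|\right].
\end{align*}
Since ${\vdmc}' \disteq \Po(d/2)$ conditioned on being positive, we have $\Erw[{\vdmc}'] = (d/2)/(1-e^{-d/2})$. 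Applying Wald's identity and the identical distribution of the summands, this simplifies to
\begin{align*}
W_1(\hat\rho_{\PureM}, \hat\rho'_{\PureM}) \leq \frac{d/2}{1-e^{-d/2}}\cdot \Erw\left[\left|\log\frac{\LArg_{0,1}(-1, \vr_1)}{\LArg'_{0,1}(-1, \vr_1)}\right|\right],
\end{align*}
which is the exact analogue of \eqref{eq:WaldsOnlyP} with sign $\eps = -1$ in place of $\eps = +1$.

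Next I would recall the explicit expressions \eqref{eq:defOfXI} and \eqref{eq:defOfprimeXI} for $\LArg_{0,1}(-1,\vr_1)$ and $\LArg'_{0,1}(-1,\vr_1)$, and telescope the log-ratio over the three blocks of arguments $(\ETA_{\AllKid,1,\cdot}, \ETA_{\PureP,1,\cdot}, \ETA_{\PureM,1,\cdot})$ in the three $\Pfun$-factors, replacing one coordinate at a time with its primed counterpart. Each of the resulting difference terms can then be written, via the fundamental theorem of calculus, as an integral of the corresponding partial derivative $\diffr^{\AllKid}_{i}(\cdot,\vr_1;-1)$, $\diffr^{\PureP}_{j}(\cdot,\vr_1;-1)$, or $\diffr^{\PureM}_{\ell}(\cdot,\vr_1;-1)$, and the triangle inequality bounds the sum by \eqref{eq:FinalBoundForW1M}. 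No genuine obstacle arises: every step literally reproduces the calculation carried out for Lemma \ref{lem_BoundW1P}, with the sole bookkeeping change that the argument of $\Pfun$ carries the negative sign; in particular, the integration limits becoming $\ETA^{\wedge}_{\cdot,1,\cdot}$ and $\ETA^{\vee}_{\cdot,1,\cdot}$ handles both orderings of the endpoints so that the absolute values from the triangle inequality can be dropped in favour of the oriented integrals that appear in the bound.
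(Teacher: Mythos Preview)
Your proposal is correct and matches the paper's own approach exactly: the paper simply states that following the same steps as in Lemma~\ref{lem_BoundW1P} with `$-1$' in place of `$+1$' yields the bound. Your write-up spells out precisely those steps (identical coupling, Wald's identity, telescoping and the fundamental theorem of calculus), so there is nothing to add.
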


Exploiting the signs of the variables with types $\rPureP$ and $\rPureM$, we obtain the following bounds for each of the $\diffr$-functions.
For $\lambda \in (0,1]$, we define the real function $\fun_\lambda : [0,1] \to \RR$ as
\begin{align}
\fun_\lambda \left(w\right)	=  \frac{\lambda\cdot w}{1-\lambda\cdot w} \cdot(1-w)
\enspace.
\label{eq:DefofPsi}
\end{align}
It is easy to check that $\fun_{\lambda'}(w) \le \fun_{\lambda}(w)$, for every $\lambda' \le \lambda$.
\begin{claim}\label{cl_Dbound}
For every
$r = \left({r}_{\AllKid},
{r}_{\PureP}, {r}_{\PureM}, {r}_{\NoKid}\right)$, and
$i\in [r_{\AllKid}]$ we have that
\begin{align}\label{eq:BoundOfDiffAllKid}
\diffr^{{\AllKid}}_{i}
\left(w_i, r ;+ 1
\right)
\le
\fun_{2^{-r_{\NoKid}-r_{\PureM}}}
\left(\frac{1 + \tanh({w_i}/{2})}{2}\right)
\enspace
&& \text{ and } &&
\diffr^{{\AllKid}}_{i}
\left(w_i, r ;- {1}
\right)
&\le
\fun_{2^{-r_{\NoKid}-r_{\PureP}}}
\left(\frac{1 - \tanh({w_i}/{2})}{2}\right)
\enspace.
\end{align}
Similarly, we also have that for $j \in [r_{\PureP}]$,
\begin{align}\label{eq:BoundOfDiffPureP}
\diffr^{{\PureP}}_{j}
\left(y_j, r ;+ {1}
\right)
\le
\fun_{2^{-r_{\NoKid}-r_{\PureM}}}
\left(\frac{1 + \tanh({y_j}/{2})}{2}\right)
\enspace
&& \text{ and } &&
\diffr^{{\PureP}}_{j}
\left(y_j, r ;- {1}
\right)
&\le
\fun_{2^{-r_{\NoKid}-(r_{\PureP}-1)}}
\left(\frac{1 - \tanh({y_j}/{2})}{2}\right)
\enspace,
\end{align}
and for $\ell \in [r_{\PureM}]$
\begin{align}\label{eq:BoundOfDiffPureM}
\diffr^{{\PureM}}_{\ell}
\left(z_\ell, r ;+ {1}
\right)
\le
\fun_{2^{-r_{\NoKid}-(r_{\PureM}-1)}}
\left(\frac{1 + \tanh({z_\ell}/{2})}{2}\right)
\enspace
&& \text{ and } &&
\diffr^{{\PureM}}_{\ell}
\left(z_\ell, r ;- {1}
\right)
&\le
\fun_{2^{-r_{\NoKid}-r_{\PureP}}}
\left(\frac{1 - \tanh({z_\ell}/{2})}{2}\right)
\enspace.
\end{align}
\end{claim}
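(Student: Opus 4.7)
The plan is to reduce each estimate to a direct calculation of the derivative followed by elementary bounds on the resulting factors via the supports of the three variable types. Fix the AllKid case with $\varepsilon=+1$ as a representative example. Because $\tanh$ is an odd function, the factor $\Pfun(\varepsilon(z_1,\ldots,z_q))$ in the definition \eqref{eq:defOfXI} simplifies to $\prod_{i}(1+\varepsilon\tanh(z_i/2))/2$. Writing $A, B, C$ for the AllKid, PureP and PureM factors and isolating the $i$-th AllKid argument as $A=\tfrac{1+\tanh(w_i/2)}{2}\cdot A'$, a routine computation using $\tfrac{d}{dw}\tanh(w/2)=\tfrac12(1-\tanh^2(w/2))$ and $1-\tanh^2=(1-\tanh)(1+\tanh)$ yields
\[
\left|\frac{\partial}{\partial w_i}\log\left(1-2^{-r_\NoKid}\cdot A B C\right)\right|=\frac{\Lambda\cdot w(1-w)}{1-\Lambda w},\qquad w=\frac{1+\tanh(w_i/2)}{2},\quad\Lambda=2^{-r_\NoKid}A' B C.
\]
Comparing with \eqref{eq:DefofPsi}, this right-hand side is exactly $\fun_\Lambda(w)$. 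Since $\fun_\lambda(w)$ is monotone increasing in $\lambda$, it suffices to exhibit a deterministic upper bound on $\Lambda$.

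The second step is to exploit the supports of the variables to bound $\Lambda$. Since $\ETA_{\PureP,*,*}\in(0,+\infty]$, we have $\tanh(\ETA_\PureP/2)\geq 0$, so each $\PureP$-factor lies in $[1/2,1]$ under sign $+1$ and in $[0,1/2]$ under sign $-1$; symmetrically, $\ETA_{\PureM,*,*}\in(-\infty,0]$ gives $\PureM$-factors in $[0,1/2]$ under sign $+1$ and in $[1/2,1]$ under sign $-1$. The AllKid factors lie in $[0,1]$ without further information. In the AllKid case with $\varepsilon=+1$ this gives $A'\leq 1$, $B\leq 1$ and $C\leq 2^{-r_\PureM}$, hence $\Lambda\leq 2^{-r_\NoKid-r_\PureM}$, which is exactly the first bound in \eqref{eq:BoundOfDiffAllKid}. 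Flipping to $\varepsilon=-1$ swaps the roles of $\PureP$ and $\PureM$ and produces $\Lambda\leq 2^{-r_\NoKid-r_\PureP}$.

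The PureP and PureM cases follow by the same template, with the sole modification that the argument being differentiated sits inside $B$ or $C$ rather than $A$. When we single out a factor of the same type that we differentiate, we are left with one fewer factor of that type in the remaining product, which is the source of the exponents $r_\PureP-1$ and $r_\PureM-1$ in \eqref{eq:BoundOfDiffPureP} and \eqref{eq:BoundOfDiffPureM}. For instance, differentiating $\diffr^\PureP_j$ under sign $-1$ leaves $A\leq 1$, $C\leq 1$ and a residual PureP product $B'$ of $r_\PureP-1$ factors each bounded by $1/2$, giving $\Lambda\leq 2^{-r_\NoKid-(r_\PureP-1)}$; and so on for the remaining three sub-cases.

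There is no genuine obstacle beyond bookkeeping: the derivative identity is a one-line computation, and the sign-support arguments are immediate once one has fixed the convention $\ETA_\PureP\geq 0\geq\ETA_\PureM$. The only point that deserves care is the reindexing when a derivative is taken in a factor of the \emph{same} type that is being contracted — this is what distinguishes the exponent $r_\PureM$ from $r_\PureM-1$ (respectively $r_\PureP$ from $r_\PureP-1$) and must be matched precisely to the six asserted inequalities.
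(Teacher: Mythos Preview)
Your proposal is correct and follows essentially the same approach as the paper's proof: compute the derivative explicitly, identify it as $\fun_\Lambda(w)$ with $w=(1+\varepsilon\tanh(\cdot/2))/2$ and $\Lambda$ the residual product, then bound $\Lambda$ via the sign constraints on the $\rPureP$ and $\rPureM$ factors together with the monotonicity of $\fun_\lambda$ in $\lambda$. The paper carries out exactly this computation for the first inequality of \eqref{eq:BoundOfDiffAllKid} and declares the remaining five analogous; your write-up simply makes the bookkeeping for the $r_{\PureP}-1$ and $r_{\PureM}-1$ exponents more explicit.
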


\begin{proof}
We only prove the first inequality of \eqref{eq:BoundOfDiffAllKid} as the rest of them follow in a similar manner.  
A straightforward calculation shows that for $\rvec{z} \in \RR^q, \varepsilon \in \PM$, and $i \in [q]$ we have
\begin{align}\label{eq_derOfG}
\frac{\partial}{\partial z_i}
\Pfun \left(\varepsilon\cdot\rvec{z}\right)
&=
{\varepsilon}
\cdot
\frac{1-\tanh(\varepsilon\cdot{z_i}/2)}{2}
\cdot
\Pfun \left(\varepsilon \cdot \rvec{z}\right)
\enspace.
\end{align}
Writing 
$K =  {2^{-{r}_{\NoKid}}}
\Pfun\left(
\ETA_{\AllKid,1,1}, \ldots,\ETA_{\AllKid,1,i-1}, \ETA'_{\AllKid,1,i+1}, \ldots
\ETA'_{\AllKid,1,r_{\AllKid}}
\right)
\Pfun\left(
\ETA'_{\PureP,1,1}, \ldots, \ETA'_{\PureP,1,r_{\PureP}}
\right)
\Pfun\left(
\ETA'_{\PureM,1,1}, \ldots, \ETA'_{\PureM,1,r_{\PureM}}
\right)$, 
applying the chain rule, and using \eqref{eq_derOfG}, we see that
\begin{align}
\diffr^{{\AllKid}}_{i}\left(w_i, r ;+ 1\right) 
=
\fun_K \bc{\frac{1 + \tanh({w_i}/{2})}{2}}\enspace.
\end{align}
Using the fact that $\rho'_{\PureM}$ is supported in $(-\infty, 0]$, and that $\Pfun \le 1$, we obtain $K \le 2^{-r_{\NoKid}-r_{\PureM}}$. The
monotonicity of $\fun_\lambda$ with respect to the parameter $\lambda$ concludes the proof. 
\end{proof}

Using \Cl~\ref{cl_Dbound}, and maximising each of the functions $\fun_\lambda$ appearing in \eqref{eq:BoundOfDiffAllKid}--\eqref{eq:BoundOfDiffPureM}, we can recover
the bounds of~\cite{MM}. To obtain sharper bounds, a natural idea is to optimise groups of summands, instead of optimising each $\diffr$-summand of
$W_1(\hat\rho_{\PureP},\hat{\rho}'_{\PureP}) + W_1(\hat\rho_{\PureM},\hat{\rho}'_{\PureM})$ in isolation. In particular, 
it is tempting to pair terms of the form $\diffr(\cdot,-1)$ with corresponding terms of the form $\diffr(\cdot,+1)$, as \Lem~\ref{lem:OptimizeAntipodal} suggests.
\begin{lemma}
\label{lem:OptimizeAntipodal}
Let $ \sfun_\lambda : [0,1] \to \RR$ to be the function
$\sfun_\lambda(w) =
\fun_\lambda(w) + \fun_\lambda(1-w)
$.
For every $\lambda \in (0,1]$, we have that  $\sfun_\lambda(w) \le \sfun_\lambda(1/2) = \frac{\lambda/2}{1-\lambda/2}$, for all $w\in [0,1]$.
\end{lemma}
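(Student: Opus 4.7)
The plan is to exploit the symmetry of $\sfun_\lambda$ under $w\mapsto 1-w$ by reparametrising through the symmetric quantity $u=w(1-w)$. This reduces the claim to checking monotonicity of a one-variable rational function on $[0,1/4]$, and the bound then follows because $w(1-w)$ attains its maximum precisely at $w=1/2$.

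First I would put the two fractions $\fun_\lambda(w)$ and $\fun_\lambda(1-w)$ over a common denominator. Factoring out $\lambda w(1-w)$, which is symmetric under $w\mapsto 1-w$, and noting that $(1-\lambda w)+(1-\lambda(1-w))=2-\lambda$ and $(1-\lambda w)(1-\lambda(1-w))=1-\lambda+\lambda^2 w(1-w)$, gives
\begin{align*}
\sfun_\lambda(w)
\;=\;
\lambda w(1-w)\cdot\frac{(1-\lambda w)+(1-\lambda(1-w))}{(1-\lambda w)(1-\lambda(1-w))}
\;=\;
\frac{\lambda(2-\lambda)\,w(1-w)}{1-\lambda+\lambda^2 w(1-w)}.
\end{align*}

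Next I would set $u=w(1-w)\in[0,1/4]$ and define $h(u)=\lambda(2-\lambda)u/(1-\lambda+\lambda^2 u)$, so that $\sfun_\lambda(w)=h(u)$. A direct differentiation yields
\begin{align*}
h'(u) \;=\; \frac{\lambda(2-\lambda)(1-\lambda)}{\left(1-\lambda+\lambda^2 u\right)^2},
\end{align*}
which is non-negative for every $\lambda\in(0,1]$. Hence $h$ is non-decreasing on $[0,1/4]$, and since $u=w(1-w)\le 1/4$ with equality iff $w=1/2$, we conclude $\sfun_\lambda(w)\le h(1/4)=\sfun_\lambda(1/2)$.

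Finally I would verify the explicit value at $w=1/2$: using the identity $1-\lambda+\lambda^2/4=(1-\lambda/2)^2$ together with $2-\lambda=2(1-\lambda/2)$, one computes
\begin{align*}
\sfun_\lambda(1/2)=\frac{\lambda(2-\lambda)/4}{(1-\lambda/2)^2}=\frac{\lambda/2}{1-\lambda/2},
\end{align*}
matching the stated formula. There is no real obstacle here; the only insight is the symmetrising substitution $u=w(1-w)$, which turns an inequality in $w$ with two asymmetric-looking terms into a single monotonicity statement.
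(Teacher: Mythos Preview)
Your proof is correct and takes a genuinely different route from the paper. The paper handles $\lambda=1$ separately (where $\fun_1(w)=w$ so $\sfun_1\equiv 1$), and for $\lambda\in(0,1)$ differentiates $\sfun_\lambda$ directly, obtaining
\[
\sfun_\lambda'(w)=\frac{1-\lambda}{(1-\lambda(1-w))^2}-\frac{1-\lambda}{(1-\lambda w)^2},
\]
and then checks that $w=1/2$ is the unique root and a maximum by a sign analysis. Your approach instead exploits the $w\mapsto 1-w$ symmetry up front by passing to $u=w(1-w)$, which collapses the problem to monotonicity of the single rational function $h(u)=\lambda(2-\lambda)u/(1-\lambda+\lambda^2 u)$. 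This is cleaner: the derivative $h'(u)=\lambda(2-\lambda)(1-\lambda)/(1-\lambda+\lambda^2 u)^2$ is manifestly non-negative, so no root-finding or sign-checking is needed, and the location of the maximum at $w=1/2$ drops out of the elementary fact that $w(1-w)$ is maximised there. The paper's approach, on the other hand, is slightly more direct in that it never needs the common-denominator algebra. One tiny remark: at $\lambda=1$ your formula for $\sfun_\lambda(w)$ becomes $u/u$, which is undefined at $u=0$ (i.e.\ $w\in\{0,1\}$); this is harmless since $\sfun_1(0)=\sfun_1(1)=1$ by direct evaluation, but you may wish to note it.
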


\begin{proof}
For $\lambda = 1$, we have that $\fun_\lambda(w) = w$ implying $\sfun_\lambda(w) = 1$, and thus, the result holds trivially. Let now $\lambda \in (0,1)$. 
Differentiating $\fun_\lambda$ gives
\begin{align*}
\fun^\prime_\lambda(w)
=
\frac{\lambda^{2} w^{2} - 2 \lambda w + \lambda}{\lambda^{2} w^{2} - 2 \, \lambda w + 1}
=
1 - \frac{ 1 - \lambda }{(1-\lambda w)^2}
\enspace.
\end{align*}
Therefore,
\begin{align*}
\sfun^\prime_\lambda(w)
=
1 - \frac{ 1 - \lambda }{(1-\lambda w)^2}
-\left(1 - \frac{ 1 - \lambda }{(1-\lambda (1-w))^2}\right)
=
\frac{ 1 - \lambda }{(1-\lambda (1-w))^2}
- \frac{ 1 - \lambda }{(1-\lambda w)^2}
\enspace.
\end{align*}
It is straightforward to check that the above expression has only one root at $w=1/2$, being non-negative for $w \in [0, 1/2)$, and non-positive for $w \in (1/2, 1]$. Therefore, $\sfun_\lambda(1/2) = \frac{\lambda/2}{1-\lambda/2}$ is the maximum value of $\sfun_\lambda$.
\end{proof}

However, directly applying \Lem~\ref{lem:OptimizeAntipodal} to 
$W_1(\hat\rho_{\PureP},\hat{\rho}'_{\PureP}) + W_1(\hat\rho_{\PureM},\hat{\rho}'_{\PureM})$
seems hopeless, since in the bounds supplied by \Cl~\ref{cl_Dbound}, the parameters of the functions $\fun$ bounding 
$\diffr(\cdot,r,+1)$-terms in $W_1\left(\hat\rho_{\PureP},\hat{\rho}'_{\PureP}\right)$
are quite different from the parameters of the functions $\fun$ bounding the corresponding $\diffr(\cdot,r,-1)$-terms in
$W_1\left(\hat\rho_{\PureM},\hat{\rho}'_{\PureM}\right)$.

The following lemma reveals, a somewhat unexpected, symmetry between $W_1(\hat\rho_{\PureP},\hat{\rho}'_{\PureP})$
and $W_1(\hat\rho_{\PureM},\hat{\rho}'_{\PureM})$, that facilitates our pairing strategy. 

Some additional notation is in order. We denote with $\cR(k)$ for the set of all vectors $r = (r_{\AllKid}, r_{\PureP}, r_{\PureM}, r_{\NoKid})$ of 
non-negative integer entries which sum to $k-1$. For every $r \in \cR(k)$ we use the shorthand %
\begin{align*}
P({r})
= \frac{(k-1)!}
{{r}_{\AllKid}! {r}_{\PureP}!{r}_{\PureM}! {r}_{\NoKid}!}
\cdot
p_{\AllKid}^{{r}_{\AllKid}}
p_{\PureP}^{{r}_{\PureP}}
p_{\PureM}^{{r}_{\PureM}}
p_{\NoKid}^{{r}_{\NoKid}}
\enspace,
\end{align*}
where $p_{\AllKid}, p_{\PureP}, p_{\PureM}, p_{\NoKid}$ are the probabilities defined in \eqref{eq_def_multipr}. Finally, we define
\begin{align}
\Ex{\AllKid}
&=
\sum_{\substack{r \in \cR(k) \\ r_{\AllKid}\ge 1}}
P({r})
\cdot
{r}_{\AllKid}
\cdot
\Erw
\left[
\int_{{\ETA}^\wedge_{\AllKid, 1, 1}}^{{\ETA}^\vee_{\AllKid, 1, 1}}
\sfun_{2^{-r_{\NoKid}-r_{\PureM}}} \left(\frac{1+\tanh(w/2)}{2}\right)
\;
\dd w
\right]
\enspace,
\label{eq:ExpAllKid}
\\
\Ex{\PureP}
&=
\sum_{\substack{r \in \cR(k) \\ r_{\PureP}\ge 1}}
P({r})
\cdot
{r}_{\PureP}
\cdot
\Erw
\left[
\int_{{\ETA}^\wedge_{\PureP, 1, 1}}^{{\ETA}^\vee_{\PureP, 1, 1}}
\sfun_{2^{-r_{\NoKid}-r_{\PureM}}}
\left(\frac{1 + \tanh({y}/{2})}{2}\right)
\;
\dd y
\right]
\enspace,
\label{eq:ExpPureP}
\\
\Ex{\PureM}
&=
\sum_{\substack{r \in \cR(k) \\ r_{\PureM}\ge 1}}	P({r})
\cdot
{r}_{\PureM}
\cdot
\Erw
\left[ 
\int_{{\ETA}^\wedge_{\PureM, 1, 1}}^{{\ETA}^\vee_{\PureM, 1, 1}}
\sfun_{2^{-r_{\NoKid}-r_{\PureP}}}
\left(\frac{1 + \tanh({z}/{2})}{2}\right)
\;
\dd z
\right]
\label{eq:ExpPureM}
\enspace.
\end{align}

\begin{lemma}\label{lem_PairingE}
We have that
\begin{align}
W_1(\hat\rho_{\PureP},\hat{\rho}'_{\PureP})
+
W_1(\hat{\rho}_{\PureM},\hat{\rho}'_{\PureM})
\le
\frac{d/2}{1-e^{-d/2}}
\left(
\Ex{\AllKid}
+
\Ex{\PureP}
+
\Ex{\PureM}
\right)
\enspace.
\label{eq:ExpBreak3}
\end{align}
\end{lemma}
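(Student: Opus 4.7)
The plan is to combine \Lem s~\ref{lem_BoundW1P} and \ref{lem_BoundW1M} with \Cl~\ref{cl_Dbound} to bound every integrand appearing in $W_1(\hat\rho_\PureP,\hat\rho'_\PureP)+W_1(\hat\rho_\PureM,\hat\rho'_\PureM)$ by an appropriate $\fun_\lambda$ function, and then expand the expectation over the multinomial $\vr_1$ into a sum over $r\in\cR(k)$ weighted by $P(r)$. After these two steps the prefactor $(d/2)/(1-e^{-d/2})$ multiplies a sum of six groups of terms, indexed by a type $t\in\{\AllKid,\PureP,\PureM\}$ and a sign $\pm 1$. The heart of the argument will be to pair each positive-sign group with the correct negative-sign group via the couplings $\mathfrak{p}_\AllKid,\mathfrak{p}_\PureP,\mathfrak{p}_\PureM$ alluded to in Figure~\ref{fig_coupling}, and then invoke \Lem~\ref{lem:OptimizeAntipodal} pointwise on each paired integrand.

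Concretely, I will set $\mathfrak{p}_\AllKid(r)=(r_\AllKid,r_\PureM,r_\PureP,r_\NoKid)$, $\mathfrak{p}_\PureP(r)=(r_\AllKid,r_\PureM+1,r_\PureP-1,r_\NoKid)$, and $\mathfrak{p}_\PureM(r)=(r_\AllKid,r_\PureM-1,r_\PureP+1,r_\NoKid)$. The first is an involution on $\cR(k)$ swapping the $\PureP$- and $\PureM$-coordinates; the latter two are bijections between $\{r:r_\PureP\ge 1\}$ and $\{r':r'_\PureP\ge 1\}$ and between $\{r:r_\PureM\ge 1\}$ and $\{r':r'_\PureM\ge 1\}$, respectively. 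Two short computations, both leaning on $p_\PureP=p_\PureM$ from \eqref{eq_def_multipr}, will establish for each $\mathfrak{p}_t$ that (i) the $\fun_\lambda$-parameter produced by \Cl~\ref{cl_Dbound} is invariant under the pairing (equal to $2^{-r_\NoKid-r_\PureM}$ in the $\AllKid$- and $\PureP$-cases and to $2^{-r_\NoKid-r_\PureP}$ in the $\PureM$-case), and (ii) the positive-side coefficient $P(r)\cdot r_t$ equals the paired negative-side coefficient $P(\mathfrak{p}_t(r))\cdot[\mathfrak{p}_t(r)]_t$. For instance, in the $\PureP$-case one finds $P(\mathfrak{p}_\PureP(r))/P(r)=r_\PureP/(r_\PureM+1)$, which cancels exactly against the factor $r_\PureM+1=[\mathfrak{p}_\PureP(r)]_\PureP$.

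With coefficients and parameters aligned, exchangeability of the i.i.d.\ $\ETA$-coordinates lets me collapse each sum $\sum_{i=1}^{r_t}$ into $r_t$ copies of the $i=1$ integral and identify the integration interval $[\ETA^\wedge_{t,1,1},\ETA^\vee_{t,1,1}]$ across the paired $\pm 1$ terms. Then \Lem~\ref{lem:OptimizeAntipodal} collapses each pair of integrands via $\fun_\lambda((1+\tanh(w/2))/2)+\fun_\lambda((1-\tanh(w/2))/2)=\sfun_\lambda((1+\tanh(w/2))/2)$, and regrouping by $r$ reproduces exactly $\Ex{\AllKid}+\Ex{\PureP}+\Ex{\PureM}$ as defined in \eqref{eq:ExpAllKid}--\eqref{eq:ExpPureM}. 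The main obstacle will be the combinatorial bookkeeping around steps (i) and (ii): verifying that the $\mathfrak{p}_t$ are genuine bijections between the relevant index sets so that each negative-sign term is paired exactly once with no leftovers and no double counting, and pushing through the probability-ratio computation; all remaining steps reduce to the triangle inequality, the bounds of \Cl~\ref{cl_Dbound}, and \Lem~\ref{lem:OptimizeAntipodal}.
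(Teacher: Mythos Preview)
Your proposal is correct and follows essentially the same route as the paper: expand the bounds of \Lem s~\ref{lem_BoundW1P}--\ref{lem_BoundW1M} over $\vr_1$, bound each $\diffr$-term via \Cl~\ref{cl_Dbound}, then pair positive- and negative-sign terms via the bijections $\mathfrak{p}_\AllKid,\mathfrak{p}_\PureP,\mathfrak{p}_\PureM$ (which are exactly the paper's $r\mapsto r',r'',r'''$), verify that both the $\fun$-parameter and the coefficient $P(r)\cdot r_t$ are preserved, and collapse each pair into a single $\sfun_\lambda$ integrand. One cosmetic remark: the identity you invoke at the end is just the \emph{definition} of $\sfun_\lambda$ stated in \Lem~\ref{lem:OptimizeAntipodal}; the actual inequality $\sfun_\lambda\le\sfun_\lambda(1/2)$ is not needed here but only later in the proof of \Prop~\ref{prop_BPPureLit}.
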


\begin{proof}
Expanding the expectation in \eqref{eq:FinalBoundForW1P} with respect
to $\vec{r}=\left(\vec{r}_{\AllKid}, \vec{r}_{\PureP},
\vec{r}_{\PureM}, \vec{r}_{\NoKid}\right)$ , and using the shorthand
\begin{align*}
E^{\pm}_{\AllKid}(r) &=  \ex\brk{
\int_{{\ETA}^\wedge_{\AllKid, 1, 1}}^{{\ETA}^\vee_{\AllKid, 1, 1}}
\diffr^{\AllKid}_{1}
\left(w, r; \pm{1}
\right) \dd w
} ,\enspace 
E^{\pm}_{\PureP}(r) =  \ex\brk{
\int_{{\ETA}^\wedge_{\PureP, 1, 1}}^{{\ETA}^\vee_{\PureP, 1, 1}}
\diffr^{\PureP}_{1}
\left(y, r; \pm{1}
\right) \dd y} 
, \enspace 
E^{\pm}_{\PureM}(r) =  \ex\brk{
\int_{{\ETA}^\wedge_{\PureM, 1, 1}}^{{\ETA}^\vee_{\PureM, 1, 1}}
\diffr^{\PureM}_{1}
\left(z, r; \pm\vec{1}
\right) \dd z},
\end{align*}
we see that
\begin{align}\label{eq_W1Pb}
W_1(\hat\rho_{\PureP},\hat\rho'_{\PureP})
&\le \frac{d/2}{1-e^{-d/2}}
\bc{
\sum_{r \in \cR(k)} P(r) \cdot r_{\AllKid} \cdot E^{+}_{\AllKid}(r) +
\sum_{r \in \cR(k)} P(r) \cdot r_{\PureP} \cdot E^{+}_{\PureP}(r) +
\sum_{r \in \cR(k)} P(r) \cdot r_{\PureM} \cdot E^{+}_{\PureM}(r)
}\nonumber\\
&= \frac{d/2}{1-e^{-d/2}}
\bc{
\sum_{\substack{r \in \cR(k) \\ r_{\AllKid}\ge1}} P(r) \cdot r_{\AllKid} \cdot E^{+}_{\AllKid}(r) +
\sum_{\substack{r \in \cR(k) \\ r_{\PureP}\ge1}} P(r) \cdot r_{\PureP} \cdot E^{+}_{\PureP}(r) +
\sum_{\substack{r \in \cR(k) \\ r_{\PureM}\ge1}} P(r) \cdot r_{\PureM} \cdot E^{+}_{\PureM}(r)
}
\enspace.
\end{align}
In a similar manner, we derive
\begin{align}\label{eq_W1Mb}
W_1(\hat\rho_{\PureM},\hat\rho'_{\PureM})
\le \frac{d/2}{1-e^{-d/2}}
\bc{
\sum_{\substack{r \in \cR(k) \\ r_{\AllKid}\ge1}} P(r) \cdot r_{\AllKid} \cdot E^{-}_{\AllKid}(r) +
\sum_{\substack{r \in \cR(k) \\ r_{\PureP}\ge1}} P(r) \cdot r_{\PureP} \cdot E^{-}_{\PureP}(r) +
\sum_{\substack{r \in \cR(k) \\ r_{\PureM}\ge1}} P(r) \cdot r_{\PureM} \cdot E^{-}_{\PureM}(r)
}\enspace.
\end{align}

Let us now consider the bound on the sum $W_1(\hat\rho_{\PureP},\hat\rho'_{\PureP}) + W_1(\hat\rho_{\PureM},\hat\rho'_{\PureM})$ obtained by summing 
\eqref{eq_W1Pb}, \eqref{eq_W1Mb}. We next group each of the three sums in \eqref{eq_W1Pb} with the corresponding sum in \eqref{eq_W1Mb}, carefully pairing their terms.
Specifically, for the $\rAllKid$--sums we match  the term of $\sum_{r} P(r) \cdot r_{\AllKid} \cdot E^{+}_{\AllKid}(r)$ corresponding to
$r = ({r}_{\AllKid}, {r}_{\PureP},
{r}_{\PureM}, {r}_{\NoKid})$ with the term of $\sum_{r'} P(r') \cdot r'_{\AllKid} \cdot E^{-}_{\AllKid}(r')$ that corresponds to
$r' = ({r}_{\AllKid}, {r}_{\PureM}, {r}_{\PureP}, {r}_{\NoKid})$. Since $r \mapsto r'$ is a bijection of
$\mathcal{R}(k) \cap \{r: r_{\AllKid}\ge1\}$, and $r'_{\AllKid}= r_{\AllKid}$, and $P(r') = P(r)$ we see that
\begin{align}\label{eq_MatchAllKid}
\sum_{\substack{r \in \cR(k)\\ r_{\AllKid \ge 1}}} P(r)\cdot r_{\AllKid} \bc{E^{+}_{\AllKid}(r) + E^{-}_{\AllKid}(r)}
= \sum_{\substack{r \in \cR(k)\\ r_{\AllKid \ge 1}}} P(r)\cdot r_{\AllKid} \bc{E^{+}_{\AllKid}(r) + E^{-}_{\AllKid}(r')} \enspace.
\end{align}
Invoking the bounds \eqref{eq:BoundOfDiffAllKid} of \Cl~\ref{cl_Dbound}, and recalling the definitions of $\sfun$, $\Ex{\AllKid}$, we upper bound the r.h.s. of
\eqref{eq_MatchAllKid}~by
\begin{align}\label{eq_SumEA}
\sum_{\substack{r \in \cR(k)\\ r_{\AllKid \ge 1}}} P(r)r_{\AllKid} \bc{\Erw
\brk{
\int_{{\ETA}^\wedge_{\AllKid, 1, 1}}^{{\ETA}^\vee_{\AllKid, 1, 1}}
\fun_{2^{-r_{\NoKid}-r_{\PureM}}} \left(\frac{1+\tanh(w/2)}{2}\right)
\; \dd w}
+
\Erw\brk{
\int_{{\ETA}^\wedge_{\AllKid, 1, 1}}^{{\ETA}^\vee_{\AllKid, 1, 1}}
\fun_{2^{-r_{\NoKid}-r_{\PureM}}} \left(\frac{1-\tanh(w/2)}{2}\right)
\;\dd w
}}	
= \Ex{\AllKid}
\enspace.
\end{align}
The matchings between the terms for the $\rPureP, \rPureM$--sums of \eqref{eq_W1Pb}, \eqref{eq_W1Mb} are more delicate.
In particular, for the $\rPureP$--sum it turns out that we can pull off the same trick as above by pairing the term of
$\sum_{r} P(r) \cdot r_{\PureP} \cdot E^{+}_{\PureP}(r)$ corresponding to the vector $r = ({r}_{\AllKid}, {r}_{\PureP},
{r}_{\PureM}, {r}_{\NoKid})$ with the term of $\sum_{r''} P(r'') \cdot r''_{\PureP} \cdot E^{-}_{\PureP}(r'')$ that corresponds to 
$r'' = ({r}_{\AllKid}, {r}_{\PureM}+1, {r}_{\PureP}-1, {r}_{\NoKid})$. To see this, note that the mapping $r \mapsto r''$ is a bijection of
$\mathcal{R}(k) \cap \{r: r_{\PureP}\ge1\}$, leaving the quantity $P(r)\cdot r_{\PureP}$ invariant as  
\begin{align*}
P(r) \cdot r_{\PureP} =
\frac{(k-1)!}
{{r}_{\AllKid}! ({r}_{\PureP}-1)!{r}_{\PureM}! {r}_{\NoKid}!}
\cdot
p_{\AllKid}^{{r}_{\AllKid}}
p_{\PureP}^{{r}_{\PureP}}
p_{\PureM}^{{r}_{\PureM}}
p_{\NoKid}^{{r}_{\NoKid}} =
\frac{(k-1)!}
{{r}_{\AllKid}! ({r}_{\PureM}+1)!({r}_{\PureP}-1)! {r}_{\NoKid}!}
\cdot
p_{\AllKid}^{{r}_{\AllKid}}
p_{\PureP}^{{r}_{\PureP}}
p_{\PureM}^{{r}_{\PureM}}
p_{\NoKid}^{{r}_{\NoKid}}
\cdot
(r_{\PureM}+1)=
P(r'') \cdot r''_{\PureP}
\enspace.
\end{align*}
Invoking the bounds \eqref{eq:BoundOfDiffPureP} of \Cl~\ref{cl_Dbound}, recalling the definitions of $\sfun$, $\Ex{\PureP}$, and arguing as above, 
we obtain
\begin{align}\label{eq_SumEP}
\sum_{\substack{r \in \cR(k)\\ r_{\PureP}\ge1}} P(r)\cdot r_{\PureP} \bc{E^{+}_{\PureP}(r) + E^{-}_{\PureP}(r'')} \le \Ex{\PureP}\enspace.
\end{align}
Similarly, using the mapping $r\mapsto r'''$, with $r'''\! = (r_{\AllKid}, r_{\PureM} - 1, r_{\PureP}+1, r_{\NoKid})$, and following the same steps as above, we get
\begin{align}\label{eq_SumEM}
\sum_{\substack{r \in \cR(k)\\ r_{\PureM}\ge1}} P(r)\cdot r_{\PureM} \bc{E^{+}_{\PureM}(r) + E^{-}_{\PureM}(r''')} \le \Ex{\PureM}\enspace.
\end{align}
Summing \eqref{eq_SumEA}--\eqref{eq_SumEM} concludes the proof.
\end{proof}

In light of the above, we are now ready to finish the proof of \Prop~\ref{prop_BPPureLit}.

\begin{proof}[Proof of \Prop~\ref{prop_BPPureLit}]
Applying \Lem~\ref{lem:OptimizeAntipodal} on the function $\sfun$ in the r.h.s. of \eqref{eq:ExpAllKid} gives
\begin{align}
\sfun_{2^{-r_{\NoKid}-r_{\PureM}}} \left(\frac{1+\tanh(w/2)}{2}\right)
\le
\frac{2^{-\vec{r}_{\PureM}- \vec{r}_{\NoKid}-1}}{1 - 2^{-\vec{r}_{\PureM}- \vec{r}_{\NoKid}-1}}
\le
\left(\frac{1}{2}\right)^{\vec{r}_{\PureM} + \vec{r}_{\NoKid}} \enspace.
\label{eq:SfunAllKidBound}
\end{align}
Plugging the above into  \eqref{eq:ExpAllKid} and applying the binomial theorem, further yields 
\begin{align}
\Ex{\AllKid} \le (k-1) \cdot p_{\AllKid} \bc{1 - \frac{e^{-\frac{d}{2}}}{2}}^{k-2} \ex\brk{|\ETA_{\AllKid, 1,1 }-\ETA'_{\AllKid, 1,1 }|}
\enspace.
\label{eq:ExpAllKidFin}
\end{align}
Working in a similar manner, we obtain 
\begin{align}
\Ex{\PureP}
\le
(k-1)\cdot p_{\PureP}
\left(1-\frac{e^{-\frac{d}{2}}}{2}\right)^{k-2}
\!\!
\ex\brk{|\ETA_{\PureP, 1,1 }-\ETA'_{\PureP, 1,1 }|},
&&
\text{and}
&&
\Ex{\PureM}
\le
(k-1)\cdot p_{\PureM}
\left(1-\frac{e^{-\frac{d}{2}}}{2}\right)^{k-2}
\!\!
\ex\brk{|\ETA_{\PureM, 1,1 }-\ETA'_{\PureM, 1,1 }|}
\label{eq:ExpPureMFin}
\enspace.
\end{align}
Finally, plugging the bounds \eqref{eq:ExpAllKidFin},
and \eqref{eq:ExpPureMFin} into \eqref{eq:ExpBreak3} we see that 
$ W_1(\hat\rho_{\PureP},\hat\rho'_{\PureP})
+
W_1(\hat\rho_{\PureM},\hat\rho'_{\PureM})$
is upper bounded by 
\begin{align}
\frac{d\cdot (k-1)}{2}
\cdot
\left(1-\frac{e^{-\frac{d}{2}}}{2}\right)^{k-2}
\!\! 
\left[
\left(1-e^{-\frac{d}{2}}\right)
\ex\brk{|\ETA_{\AllKid, 1,1 }-\ETA'_{\AllKid, 1,1 }|}
+
e^{-\frac{d}{2}}
\ex\brk{|\ETA_{\PureP, 1,1 }-\ETA'_{\PureP, 1,1 }|}
+
e^{-\frac{d}{2}}
\ex\brk{|\ETA_{\PureM, 1,1 }-\ETA'_{\PureM, 1,1 }|}
\right]
\enspace.
\label{eq:ExpBound3}
\end{align}
Recall that we established \eqref{eq:ExpBound3} assuming an arbitrary coupling between the coordinates of each pair of distributions $(\rho_{\AllKid},\rho'_{\AllKid})$, 
$(\rho_{\PureP},\rho'_{\PureP})$, and $(\rho_{\PureM},\rho'_{\PureM})$. Therefore, the definition of $W_1$ norm and \eqref{eq:ExpBound3}, imply the first inequality
below, while \eqref{eq:ExpBound4} follows by the definition \eqref{eq:DistDef} of $\dist_d$  
\begin{align}
W_1(\hat\rho_{\PureP},\hat\rho'_{\PureP})+W_1(\hat\rho_{\PureM},\hat\rho'_{\PureM}) 	
&\le
\frac{d (k-1)}{2}
\left(1-\frac{e^{-\frac{d}{2}}}{2}\right)^{k-2}
\left[
\left(1-e^{-\frac{d}{2}}\right)
W_1(\rho_{\AllKid}, \rho'_{\AllKid})
+
e^{-\frac{d}{2}}
W_1(\rho_{\PureP}, \rho'_{\PureP})
+
e^{-\frac{d}{2}}
W_1(\rho_{\PureM}, \rho'_{\PureM})
\right]
\nonumber\\
&\le
\frac{d (k-1)}{2}
\left(1-\frac{e^{-\frac{d}{2}}}{2}\right)^{k-2}
\dist_d(\rho, \rho') \enspace.
\label{eq:ExpBound4}
\end{align}
Moreover, as per the triangle inequality we see that
\begin{align}
W_1(\hat\rho_{\AllKid},\hat\rho'_{\AllKid}) \le W_1(\hat\rho_{\PureP},\hat\rho'_{\PureP})+ W_1(\hat\rho_{\PureM},\hat\rho'_{\PureM})
\le\frac{d (k-1)}{2}
\left(1-\frac{e^{-\frac{d}{2}}}{2}\right)^{k-2}
\dist_d(\rho, \rho')
\enspace.  
\label{eq_AllBTri}
\end{align}
Plugging the bounds \eqref{eq:ExpBound4} and \eqref{eq_AllBTri} into the expression of $\dist_{d} \bc{\hat\rho, \hat\rho'}$ yields
\begin{align*}
\dist_{d} \bc{\hat\rho, \hat\rho'} 
&= 
(1-e^{-d/2})
\cdot
W_1(\hat\rho_{\AllKid}, \hat{\rho}_{\AllKid}')
+
e^{-d/2}
\bc{
W_1(\hat\rho_{\PureP}, \hat{\rho}_{\PureP}')
+
W_1(\hat\rho_{\PureM}, \hat{\rho}_{\PureM}')}
\le
\frac{d (k-1)}{2}
\left(1-\frac{e^{-\frac{d}{2}}}{2}\right)^{k-2}
\dist_d(\rho, \rho') \enspace.
\end{align*}
Recalling the definition of ${\dours}$, we see that for $d<\dours(k)$, the operator $\LDELitS{d,k}$
contracts with respect to the metric $\dist_{d}$, as desired.
\end{proof}

\subsection{Proof of \Prop~\ref{prop:condMarg}}
To get a handle on the $\ETA_{x}^{(\ell)}$ from \eqref{eta}, we show that these quantities
can be calculated by propagating the extremal boundary condition $\SIGMA^+$
bottom-up toward the root of the tree. Specifically, we consider the operator
\begin{align*}
  \Uplambda^{+}_{{\TT}^{(\ell)}}
&:
(-\infty,\infty]^{V({\TT}^{(\ell)})}
\to
(-\infty,\infty]^{V({\TT}^{(\ell)})}
\enspace,&
\eta&\mapsto\hat\eta=\Uplambda^{+}_{{\TT}^{(\ell)}}(\eta)
\enspace,
\end{align*}
defined as follows. 
For all $x\in\partial^{2\ell}\root$ we set $\hat\eta_x=\infty$. 
Moreover, for a variable $x\in\partial^{2q}\root$ with $q<\ell$ having children clauses $a_1,\ldots,a_t$, and grandchildren variables $y_{1,1}, \ldots, y_{1,(k-1)}, \ldots, y_{t,1}, \ldots, y_{t,(k-1)}$ we define
\begin{align}\label{eqhatetax}
\hat\eta_x &= -\sum_{i=1}^{t} {\TAU^+(x)}\sign(x,a_i)
\cdot
\log\left(
1-
{\Pfun\left({\TAU^+(x)}\sign(x,a_i)\cdot
(
\eta_{y_{i,1}}, \ldots, \eta_{y_{1,(k-1)}}
)
\right)}
\right)
\enspace.
\end{align}
It may not be apparent that the sum above is well-defined as $-\infty$ summands may manifest. 
The following lemma rules out such possibility and shows that the $\ell$-fold iteration of $\Uplambda^{+\, (\ell)}_{{\TT}^{(\ell)}}$, initiated all-$(+\infty)$ yields $\ETA^{(\ell)}=(\ETA^{(\ell)}_x)_{x \in V({\TT}^{(\ell)})}$.

\begin{lemma}\label{lem:upthetree}
The operator $\Uplambda^{+}_{{\TT}^{(\ell)}}$ is well-defined and
$\Uplambda^{+\, (t)}_{{\TT}^{(\ell)}}(+\infty,\ldots,+\infty)= \ETA^{(\ell)}$
for every $t \ge \ell$.
\end{lemma}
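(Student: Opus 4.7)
The plan is to prove both assertions together by reverse induction on the depth $2q$ of a variable in $\TT^{(\ell)}$, $q=\ell,\ell-1,\ldots,0$, which corresponds to induction on the number of iterations of $\LDEP_{\TT^{(\ell)}}$. The induction hypothesis at depth $2q$ is that after $\ell-q$ iterations of the operator (starting from the all-$(+\infty)$ vector), the component at every variable $x\in\partial^{2q}\root$ is well-defined and equals $\ETA^{(\ell)}_x$. The base case $q=\ell$ is immediate: by definition the operator resets the boundary to $+\infty$ on every iteration, and on the other hand for $x\in\partial^{2\ell}\root$ we have $Z(\TT_x^{(\ell)},\TAU^+,\TAU^+(x))=1$ while $Z(\TT_x^{(\ell)},\TAU^+,-\TAU^+(x))=0$, so $\ETA^{(\ell)}_x=+\infty$ as well. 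Once the hypothesis reaches $q=0$ we obtain the claim for $t=\ell$, and any additional iteration leaves the vector invariant because each updated coordinate $\hat\eta_x$ depends only on its $2(q+1)$-level descendants, whose values already coincide with $\ETA^{(\ell)}$; this yields the statement for every $t\geq\ell$.

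For the inductive step, fix $x\in\partial^{2q}\root$ with children clauses $a_1,\dots,a_t$ and grandchildren $y_{i,1},\dots,y_{i,k-1}$. The product structure of $\TT_x^{(\ell)}$ together with inclusion--exclusion on each clause (the same computation as in the proof of Claim~\ref{Lemma_Noela}, eq.~\eqref{eq:SplusMax}--\eqref{eq:SminusMax}) gives
\begin{align*}
\frac{Z(\TT_x^{(\ell)},\TAU^+,\TAU^+(x))}{Z(\TT_x^{(\ell)},\TAU^+,-\TAU^+(x))}
=\prod_{i:\sign(x,a_i)=\TAU^+(x)}\!\!\!\!\bigl(1-\textstyle\prod_j q_{y_{i,j}}^{(i)}\bigr)^{-1}\!\!\cdot\!\!\!\prod_{i:\sign(x,a_i)=-\TAU^+(x)}\!\!\!\bigl(1-\textstyle\prod_j q_{y_{i,j}}^{(i)}\bigr),
\end{align*}
where $q_{y}^{(i)}$ is the probability under the uniform measure on $S(\TT_y^{(\ell)},\TAU^+)$ that $y$ takes the value $-\sign(y,a_i)$. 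Using the extremal-boundary identity $\sign(y,a)\TAU^+(y)=-\TAU^+(x)\sign(x,a)$ derived from \eqref{eq_extremalBoundDef}, this probability can be rewritten as $(1+\tanh(\TAU^+(x)\sign(x,a_i)\,\ETA^{(\ell)}_y/2))/2$, so that $\prod_j q_{y_{i,j}}^{(i)}=\Pfun\bigl(\TAU^+(x)\sign(x,a_i)\cdot(\ETA^{(\ell)}_{y_{i,j}})_j\bigr)$. Taking logarithms and applying the induction hypothesis (which lets us substitute $\eta_{y_{i,j}}=\ETA^{(\ell)}_{y_{i,j}}$) recovers precisely the right-hand side of \eqref{eqhatetax}.

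The last thing to verify is that \eqref{eqhatetax} is unambiguously defined when some of the $\ETA^{(\ell)}_{y_{i,j}}$ equal $+\infty$. The only potential issue is an $\infty-\infty$ indeterminacy between summands; to rule it out, we observe that each summand has the sign of $-\TAU^+(x)\sign(x,a_i)$ because $\log(1-\Pfun(\cdot))\le 0$. A summand can be $\pm\infty$ only when $\Pfun(\cdot)=1$, which forces every argument to be $+\infty$, hence $\TAU^+(x)\sign(x,a_i)=+1$ (since each $\ETA^{(\ell)}_y>-\infty$ thanks to $\SIGMA^+\in S(\TT^{(\ell)})$). Thus divergent summands are all of the same sign $+$, so $\hat\eta_x$ is a well-defined element of $(-\infty,+\infty]$, matching the range of $\ETA^{(\ell)}_x$. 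The only subtlety worth flagging is bookkeeping the signs and the $\pm\infty$ cases consistently; the combinatorial factorisation itself is routine once Claim~\ref{Lemma_Noela} is in hand.
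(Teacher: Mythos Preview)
Your proof is correct and follows essentially the same approach as the paper: both rely on the factorisation \eqref{eq:SplusMax}--\eqref{eq:SminusMax} from Claim~\ref{Lemma_Noela} to identify $\ETA^{(\ell)}_x$ with the update rule \eqref{eqhatetax}, and both rule out $\infty-\infty$ indeterminacies by the same sign analysis. Your write-up is slightly more explicit about the reverse induction and the stability for $t>\ell$, and you state the key sign identity $\sign(y,a)\TAU^+(y)=-\TAU^+(x)\sign(x,a)$ coming from \eqref{eq_extremalBoundDef} more directly; one small remark is that your well-definedness argument, phrased via $\ETA^{(\ell)}_y>-\infty$, in fact holds for any input in the declared domain $(-\infty,+\infty]^{V(\TT^{(\ell)})}$, which is what the lemma asserts.
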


\begin{proof}
To show that $\Uplambda^{+}_{{\TT}^{(\ell)}}$ is well defined we verify that, in the notation of \eqref{eqhatetax},
$\hat\eta_x\in(-\infty,\infty]$ for all $x$. Indeed, in the expression on the r.h.s.\ of \eqref{eqhatetax} a
$\pm\infty$ summand can arise only from variables $y_{i,j}$ with $\eta_{y_{i,j}}=\infty$. But the definition of
$\TAU^+$ ensures that such $y_{i,j}$ either render a zero summand if $\TAU^+(x)\sign(x,a_i)=-1$, or a $+\infty$ summand if $\TAU^+(x)\sign(x,a_i)=1$.
Thus, the sum is well-defined and $\hat\eta_x\in(-\infty,\infty]$.

Further, to verify the identity $\ETA^{(\ell)}=\Uplambda^{+\, (\ell)}_{{\TT}^{(\ell)}}(\infty,\ldots,\infty)$, consider a variable $x$ of $\TT^{(\ell)}$.
Let $a_1^+,\ldots,a_g^+$ be the children (clauses) of $x$ with $\sign(x,a_i^+)=\TAU^+(x)$. Also let $y_{11},\ldots,y_{1(k-1)}, \ldots,
y_{g1}, \ldots, y_{g(k-1)}$ be the children of $a_1^+,\ldots,a_g^+$. 
Similarly, let $a_1^-,\ldots,a_h^-$ be the children of $x$ with
$\sign(x,a_i^-)=-\TAU^+(x)$ and let
$z_{11},\ldots,z_{1(k-1)}, \ldots, z_{h1}, \ldots, z_{h(k-1)}$ be their
children. Then \eqref{eq:SplusMax}, and \eqref{eq:SminusMax}  yield
\begin{align*}
\ETA_x^{(\ell)}
&=-
\sum_{i=1}^g
\log\left(1-\prod_{q=1}^{k-1}
\frac{Z({\TT}_{y_{iq}}^{(\ell)},\TAU^+,\TAU^+({y_{iq}}))}
{Z({\TT}_{y_{iq}}^{(\ell)},\TAU^+)}
\right)
+\sum_{j=1}^h\log
\left(
1-
\prod_{q=1}^{k-1}
\frac{Z({\TT}_{z_{jq}}^{(\ell)},\TAU^+, -\TAU^+({z_{jq}}))}
{Z({\TT}_{z_{jq}}^{(\ell)},\TAU^+)}
\right)
\\
& =-\sum_{i=1}^g\log
\left(1- \Pfun\left(\sign(x,a_i^+)\TAU^+(x)\cdot\left(\ETA^{(\ell)}_{y_{i1}}, \ldots, \ETA^{(\ell)}_{y_{i(k-1)}}\right) \right)\right)
+\sum_{j=1}^h \log
\left(1- \Pfun\left( \sign(x,a_i^-)\TAU^+(x)\cdot\left(\ETA^{(\ell)}_{z_{j1}}, \ldots, \ETA^{(\ell)}_{z_{j(k-1)}}\right)\right)\right)
\enspace.
\end{align*}
The assertion follows because $\sign(x,a_i^+)\TAU^+(x)=1$ and $\sign(x,a_i^-)\TAU^+(x)=-1$.
\end{proof}

The next aim is to approximate the $\ell$-fold iteration of $\Uplambda^{+}_{{\TT}^{(\ell)}}$, and more specifically the distribution of $\ETA_\root^{(\ell)}$, using a non-random operator.
To this end, we need to cope with the $\pm\infty$-entries of the vector $\ETA^{(\ell)}$. 
This is addressed by \Lem~\ref{lem_SmallOnTop}, proven in \Sec~\ref{sec:lem:SmallOnTop}, which provides a bound on $\ETA^{(\ell)}_x$ for variables $x$ near the root of the tree.

In the following we continue to write $c$ and $(\varepsilon_t)_t$ for the number and the sequence supplied by Lemma~\ref{lem_SmallOnTop}.
Guided by Lemma \ref{lem_SmallOnTop} we consider the vector $\ETA_{\wedge t}^{(\ell)}$ of truncated log-likelihood ratios
\begin{align}\label{eq:truncEta}
  \left(\ETA_{\wedge t}^{(\ell)}\right)_x
&=
\begin{cases}-2t^c&\mbox{ if $x\in\partial^{2t}\root$ and } \ETA_{x}^{(\ell)}<-2t^c \enspace, \\
2t^c&\mbox{ if $x\in\partial^{2t}\root$ and } \ETA_{x}^{(\ell)}>2t^c \enspace,\\
\ETA_{x}^{(\ell)}&\mbox{ otherwise} \enspace.
\end{cases}
\end{align}
Further, let
$
\ETA^{(\ell,t)}
$
be the result of $t$ iterations of $\Uplambda^{+}_{{\TT}^{(\ell)}}(\nix)$ starting from $\ETA_{\wedge t}^{(\ell)}$.
The following corollary is a direct consequence of \Lem~\ref{lem:upthetree} and  \Lem~\ref{lem_SmallOnTop}.

\begin{corollary}\label{Cor_boundary}
For any $\ell>ct^c$ we have $\Pr[\ETA^{(\ell,t)}_\root \neq \ETA^{(\ell)}_\root]<\eps_t.$
\end{corollary}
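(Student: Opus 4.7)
The plan is to leverage the observation that $\ETA^{(\ell)}$ is itself a fixed point of the operator $\LDEP_{\TT^{(\ell)}}$, so that the corollary reduces to the event on which the truncation in \eqref{eq:truncEta} does not kick in. As a first step I would define the good event
\[
\fG_t=\cbc{\max_{x\in\partial^{2t}\root}\abs{\ETA^{(\ell)}_x}\le 2t^c}
\]
and invoke \Lem~\ref{lem_SmallOnTop}, which for $\ell>ct^c$ yields $\Pr[\fG_t]>1-\eps_t$. Inspecting \eqref{eq:truncEta} shows that on $\fG_t$ the truncation is vacuous at every $x\in\partial^{2t}\root$, so $\bar\ETA^{(\ell,t)}=\ETA^{(\ell)}$ as vectors indexed by $V(\TT^{(\ell)})$.

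The second step is to note that $\ETA^{(\ell)}$ is preserved by every application of $\LDEP_{\TT^{(\ell)}}$. This drops out of \Lem~\ref{lem:upthetree}: instantiating the identity $(\LDEP_{\TT^{(\ell)}})^{s}(+\infty,\ldots,+\infty)=\ETA^{(\ell)}$ once at $s=\ell$ and once at $s=\ell+1$ yields
\[
\LDEP_{\TT^{(\ell)}}\bc{\ETA^{(\ell)}}=(\LDEP_{\TT^{(\ell)}})^{\ell+1}(+\infty,\ldots,+\infty)=\ETA^{(\ell)}\enspace,
\]
so $(\LDEP_{\TT^{(\ell)}})^t(\ETA^{(\ell)})=\ETA^{(\ell)}$ for every $t\ge 0$. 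Combining the two observations, on $\fG_t$ we have
\[
\ETA^{(\ell,t)}=(\LDEP_{\TT^{(\ell)}})^t\bc{\bar\ETA^{(\ell,t)}}=(\LDEP_{\TT^{(\ell)}})^t\bc{\ETA^{(\ell)}}=\ETA^{(\ell)}\enspace,
\]
and in particular $\ETA^{(\ell,t)}_\root=\ETA^{(\ell)}_\root$, so $\Pr[\ETA^{(\ell,t)}_\root\neq\ETA^{(\ell)}_\root]\le\Pr[\fG_t^c]<\eps_t$.

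No serious obstacle arises: the whole argument is a short bookkeeping check combining the tail bound of \Lem~\ref{lem_SmallOnTop} with the determinism of $\LDEP_{\TT^{(\ell)}}$. The only moment requiring attention is the fixed-point identity, and even that comes for free from the $s\mapsto s+1$ telescoping of \Lem~\ref{lem:upthetree} above.
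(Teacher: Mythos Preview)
Your proposal is correct and follows essentially the same approach as the paper: invoke \Lem~\ref{lem_SmallOnTop} to ensure the truncation at level $2t$ is vacuous with probability at least $1-\eps_t$, then use \Lem~\ref{lem:upthetree} to conclude that applying $\LDEP_{\TT^{(\ell)}}$ further to $\ETA^{(\ell)}$ leaves it unchanged. The paper writes the fixed-point step as $\LDE^{+(t)}_{\TT^{(\ell)}}(\ETA^{(\ell)})=\LDE^{+(\ell+t)}_{\TT^{(\ell)}}(+\infty,\ldots,+\infty)=\ETA^{(\ell)}$, which is exactly your $s\mapsto s+1$ telescoping spelled out differently.
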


\begin{proof}
Due to Lemma~\ref{lem_SmallOnTop}, the truncation in \eqref{eq:truncEta} is
inconsequential with probability at least $1-\varepsilon_t$, in which case
\begin{align*}
\ETA^{(\ell,t)}
=
\Uplambda^{+\, (t)}_{{\TT}^{(\ell)}}\left(\ETA_{\wedge t}^{(\ell)}\right)
=
\Uplambda^{+\, (t)}_{{\TT}^{(\ell)}}(\ETA^{(\ell)})
=
\Uplambda^{+\, (\ell+ t)}_{{\TT}^{(\ell)}}(+\infty, \ldots, +\infty)
=
\ETA^{(\ell)}
\enspace,
\end{align*}
where the last equality follows from Lemma~\ref{lem:upthetree}.
\end{proof}

Recall that we defined the non-random operator $\LDELitS{d,k}$ from~\eqref{eqLDELitS}, mimicking $\Uplambda^{+}_{{\TT}^{(\ell)}}$.
To make the connection between the
random operator $\Uplambda^{+}_{{\TT}^{(\ell)}}$ and $\LDELitS{d,k}$ precise, we introduce the
following concepts. 
Given a tree formula $T$ we write $V_{\AllKid}(T)$, for the set of $x$ variables of $T$ that appear both as positive and negative literals in the sub-tree $T_x$ comprising $x$ and its the progeny. 
We define  $V_{\NoKid}(T), V_{\PureP}(T)$, and $V_{\PureM}(T)$ similarly.
Note that the above sets constitute a partition of $V(T)$. 
We use $\type: V(T) \to \{\rAllKid, \rPureP, \rPureM, \rNoKid\}$ to indicate the part each vertex belongs to.
We denote with $\TT^{(\ell)}_{\AllKid}$ the random Galton-Watson formula $\TT$ conditioned on the root satisfying $\type(\root) = \rAllKid$. 
We define $\TT^{(\ell)}_{\PureP}$, and $\TT^{(\ell)}_{\PureM}$ analogously. 
Degenerately, we also write $\TT^{(\ell)}_{\NoKid}$ for the formula comprised by a single variable $\root$.
Let us denote with $\hat\eta_{\AllKid}^{(\ell,t)}$ the distribution of $(\ETA_{\wedge t}^{(\ell)})_\root$ in $\TT^{(\ell)}_{\AllKid}$. 
Moreover, let $\bar\eta^{(\ell-t)}_{\AllKid}$ be the distribution of
\begin{align*}
\ETA_{\root}^{(\ell-t)} \cdot\Ind\left\{|\ETA_{\root}^{(\ell-t)}| \le 2t^c\right\}  + 2t^c \cdot\Ind\left\{\ETA_{\root}^{(\ell-t)} > 2t^c\right\}
-2t^c \cdot\Ind\left\{\ETA_{\root}^{(\ell-t)} < -2t^c\right\}
\enspace,
\end{align*}
i.e., the truncation of $\ETA_{\root}^{(\ell-t)}$ in
$\TT^{(\ell)}_{\AllKid}$.
Analogously we define the distributions $ \hat\eta^{(\ell, t)}_{\PureP}, \hat\eta^{(\ell, t)}_{\PureM}$, and
$\bar\eta^{(\ell-t)}_{\PureP}, \bar\eta^{(\ell- t)}_{\PureM}$. Notice that, degenerately,
$\hat\eta^{(\ell,t)}_{\NoKid} = \bar\eta^{(\ell-t)}_{\NoKid} = \delta_{0}$.

\begin{lemma}\label{lem_NoR_R_OP}
For $\ell > ct^c$ we have that
$ \left(\hat\eta^{(\ell, t)}_{\AllKid}, \hat\eta^{(\ell, t)}_{\PureP}, \hat\eta^{(\ell, t)}_{\PureM}\right)
= \LDELitS{d,k} \left(\bar\eta^{(\ell- t)}_{\AllKid}, \bar\eta^{(\ell- t)}_{\PureP}, \bar\eta^{(\ell- t)}_{\PureM}\right).
$
\end{lemma}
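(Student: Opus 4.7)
The plan is to unfold one step of the recursion~\eqref{eqhatetax} at $\root$ in $\TT^{(\ell)}$ (conditioned on the type of $\root$) and identify the result with a single application of $\LDELitS{k,d}$, term-by-term against the definitions~\eqref{eq:defOfXI}--\eqref{eq:AllKidRecGUnq}. I would begin by conditioning on $\type(\root)=\AllKid$. By the Galton--Watson construction, the numbers of positive and negative child clauses of $\root$ are independent $\Po(d/2)$, and the event $\{\type(\root)=\AllKid\}$ is exactly the intersection of both counts being positive. This reproduces the variables $\vdpc, \vdmc \disteq \Po(d/2) \mid \{\cdot \ge 1\}$ in~\eqref{eq:AllKidRecGUnq}. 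Conditioning on $\type(\root)=\PureP$ (resp.~$\PureM$) constrains only the positive (resp.~negative) count to be positive and forces the other to be zero, reproducing ${\vdpc}^{\prime}$ (resp.~${\vdmc}^{\prime}$) while making the opposite-sign sum in~\eqref{eq:AllKidRecGUnq} drop out.

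Next, for a single child clause $a$ of $\root$ with grandchildren $y_1,\ldots,y_{k-1}$, the independence of the subtrees pending on the $y_i$ in the Galton--Watson process implies that their types are iid with probabilities $(p_\AllKid, p_\PureP, p_\PureM, p_\NoKid)$ from~\eqref{eq_def_multipr}, matching the multinomial vector $\vr$ in~\eqref{eq:defOfXI}. A grandchild of type $\NoKid$ has no descendants, so $\bar\ETA^{(\ell,t)}_{y_i} = 0$ and its factor in $\Pfun$ equals $(1+\tanh(0/2))/2 = 1/2$; collecting these across $r_\NoKid$ such grandchildren yields the $2^{-r_\NoKid}$ prefactor in~\eqref{eq:defOfXI}. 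For each non-$\NoKid$ grandchild $y_i$, the subtree $\TT^{(\ell)}_{y_i}$ is, by Galton--Watson self-similarity, an independent copy of $\TT^{(\ell-1)}_{\type(y_i)}$, and the values of $\bar\ETA^{(\ell,t)}$ on this subtree are precisely the truncation-at-depth-$2(t-1)$-from-$y_i$ version of the corresponding tree log-likelihood ratios; pushing this re-rooting all the way through identifies the distribution of the log-likelihood feeding into $a$ from $y_i$ with $\bar\rho^{(\ell-t)}_{\type(y_i)}$.

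Finally, I would adopt the convention $\TAU^+(\root) = +1$, without loss since the distribution of $\bar\ETA^{(\ell,t)}_\root$ is invariant under global sign flips by symmetry of the signs in $\TT$. Then $\TAU^+(\root)\sign(\root,a)=+1$ for each positive child clause and $-1$ for each negative child clause, matching the factors $\varepsilon=(-1)^{j+1}$ in~\eqref{eq:defOfXIWithoutr}. Substituting the distributional description of the grandchildren into~\eqref{eqhatetax} and splitting each $\Pfun$ factor over the four grandchild types gives exactly the random variables $\LArg_{i,j}(\varepsilon,\vr)$ of~\eqref{eq:defOfXI}, and the resulting sums over positive/negative clauses are precisely the r.h.s.\ of~\eqref{eq:AllKidRecGUnq}. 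Running the analogous argument for $\TT^{(\ell)}_\PureP$ and $\TT^{(\ell)}_\PureM$ recovers the remaining two coordinates of $\LDELitS{k,d}(\bar\rho^{(\ell-t)}_\AllKid,\bar\rho^{(\ell-t)}_\PureP,\bar\rho^{(\ell-t)}_\PureM)$ and completes the identification.

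The main bookkeeping hurdle is the middle step: one must check carefully that re-rooting the truncated boundary $\bar\ETA^{(\ell,t)}$ along a subtree pending on a grandchild yields exactly the distribution $\bar\rho^{(\ell-t)}_{\type}$, with independence preserved across different branches. This is essentially the Markov property of the Galton--Watson tree combined with the fact that the truncation is a pointwise deterministic function of the boundary variables; the hypothesis $\ell > ct^c$ only serves to ensure $2t \le 2\ell$ so that this re-rooting makes sense.
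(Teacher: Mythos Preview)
Your single-step unfolding at $\root$ is the right ingredient, but it does not give the identity you claim. After one application of~\eqref{eqhatetax} at the root, the values feeding in from a grandchild $y_i$ are the $(t-1)$-fold iterates of $\bar\ETA^{(\ell,t)}$ restricted to the subtree $\TT^{(\ell)}_{y_i}$. By Galton--Watson self-similarity and re-rooting, these have the law of $\ETA^{(\ell-1,t-1)}_\root$ in $\TT^{(\ell-1)}_{\type(y_i)}$, i.e.\ $\rho^{(\ell-1,t-1)}_{\type(y_i)}$, and \emph{not} the truncated input $\bar\rho^{(\ell-t)}_{\type(y_i)}$; the two coincide only for $t=1$. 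Your phrase ``pushing this re-rooting all the way through'' would, if actually carried out, descend $t$ levels to the truncation layer and therefore yield the $t$-fold application $\bigl(\LDELitS{k,d}\bigr)^{t}\bigl(\bar\rho^{(\ell-t)}\bigr)$, not a single one.

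That $t$-fold statement is in fact what the paper proves (the displayed lemma has a typo; the intended identity involves $\bigl(\LDELitS{k,d}\bigr)^{t}$, as is clear from the proof and from the way the lemma is combined with the contraction \Prop~\ref{prop_BPPureLit} in \Cor~\ref{cor_wk_prp}). The paper makes the induction on $t$ explicit: for an arbitrary input triplet $\nu$, feeding independent samples from $\nu_{\type(x)}$ at the leaves $x$ of $\TT^{(t)}$ and applying $\LDE^{+(t)}_{\TT^{(t)}}$ yields at the root a sample from $\bigl(\LDELitS{k,d}\bigr)^{t}(\nu)_{\type(\root)}$; specialising to $\nu=\bar\rho^{(\ell-t)}$ finishes. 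Your verification of the clause/type counts and the $2^{-r_{\NoKid}}$ prefactor is exactly the content of the inductive step, so the material is sound; what is missing is to frame and run the induction rather than assert the input distribution at depth~$2$ is already~$\bar\rho^{(\ell-t)}$.
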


\begin{proof}
We use induction on $t$. Specifically, let $\nu = \left(\nu_{\AllKid}, \nu_{\PureP}, \nu_{\PureM}\right)$ be any triplet in
$\cP(-\infty, \infty] \times \cP[0, +\infty] \times \cP[-\infty,0]$, and $\nu^{(t)} =\LDELit^{\star (t)}_{d,k}(\nu)$ be the outcome
of the $t$-fold application of $\LDELitS{d,k}$. Moreover, let $(\ETA_x)_{x\in V(\TT^{(t)})}$ be a vector of independent samples
with  $\ETA_x \disteq \nu_{\type(x)}$.
We claim that root value $\ETA^{(t)}_\root$ of the random operator $\Uplambda^{+\, (t)}_{{\TT^{(t)}}}$,  coincides with $\nu_{\type(\root)}$. Indeed, for
$t=1$ the claim follows readily from the definitions. For the inductive step, we notice that the $t$-fold application of $\LDELitS{d,k}$ is
obtained by applying $\LDELitS{d,k}$ to the $(t-1)$-fold application. 
Per the induction hypothesis
\begin{align}
\bc{\Uplambda^{+\, (t-1)}_{\TT^{(t-1)}}(\ETA_x)_{x}}_{\root} \disteq \nu^{(t-1)}_{\type(\root)}
\enspace.
\end{align}
Applying $\LDELitS{d,k}$ to $\nu^{(t-1)}$ implies the result as the first layer of $\TT^{(t)}$ is independent of the subtrees rooted at the
grandchildren $\partial^2 \root$ of the root, which are distributed as i.i.d.\ copies of $\TT^{(t-1)}$. 
The lemma follows from applying the above identity to
$\nu = \left(\bar\eta^{(\ell- t)}_{\AllKid}, \bar\eta^{(\ell- t)}_{\PureP}, \bar\eta^{(\ell- t)}_{\PureM}\right)$.
\end{proof}

Refining the definition of the $\BP_{d,k}$ operator in \eqref{eqBPop}, we write  $\BP_{d,k}^{\AllKid}$ for the operator obtained from
$\BP_{d,k}$ upon conditioning on $\vd^+,\vd^- \ge 1$. Similarly $\BP_{d,k}^{\PureP}$ and $\BP_{d,k}^{\PureM}$ are obtained from
$\BP_{d,k}$ upon conditioning on $\vd^+\ge 1,\vd^- = 0$, and $\vd^+ = 0 ,\vd^- \ge 1$, respectively.
We define
\begin{align*}
\pi^{\AllKid}_{d,k}=\BP_{d,k}^{\AllKid} \left(\pi_{d,k}\right),\enspace
\pi^{\PureP}_{d,k}=\BP_{d,k}^{\PureP} \left(\pi_{d,k}\right),\enspace
\pi^{\PureM}_{d,k}=\BP_{d,k}^{\PureM} \left(\pi_{d,k}\right)\enspace.
\end{align*}
Let us write $\fratio, \fprop$ for the continuous and mutually inverse real functions
\begin{align}
\fratio: \RR \to (0,1), \enspace z \mapsto (1 + \tanh(z/2))/2, &&
\fprop: (0,1) \to \RR, \enspace p \mapsto \log (p/(1-p))
\enspace.
\end{align}
Let $\rho^{\AllKid}_{d,k}=\fprop(\pi_{d,k}^{\AllKid})$,  and define $\rho^{\PureP}_{d,k}, \rho^{\PureM}_{d,k}$ similarly.

\begin{claim}\label{cl_FixedPoints_LL}
The vector $\bc{\rho^{\AllKid}_{d,k}, \rho^{\PureP}_{d,k}, \rho^{\PureM}_{d,k}}$ is a fixed point of the operator $\LDELitS{d,k}$.
\end{claim}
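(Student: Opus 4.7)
The plan is to derive the fixed-point identity $\LDELitS{k,d}(\rho^{\AllKid}_{d,k},\rho^{\PureP}_{d,k},\rho^{\PureM}_{d,k})=(\rho^{\AllKid}_{d,k},\rho^{\PureP}_{d,k},\rho^{\PureM}_{d,k})$ by translating the Belief Propagation fixed-point equation $\pi_{d,k}=\BP_{d,k}(\pi_{d,k})$ supplied by \eqref{eqBPfixedPoint} into the language of log-likelihood ratios under the extremal boundary $\TAU^+$, after decomposing by the type of the root variable.

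The first ingredient will be the mixture decomposition
\[
\pi_{d,k}=p_{\AllKid}\pi^{\AllKid}_{d,k}+p_{\PureP}\pi^{\PureP}_{d,k}+p_{\PureM}\pi^{\PureM}_{d,k}+p_{\NoKid}\delta_{1/2},
\]
which will follow by splitting the Poisson counts $\vd^{\pm}$ in \eqref{eqhat} according to whether they vanish; the $\NoKid$ contribution simplifies to $\delta_{1/2}$ because \eqref{eqhat} with $\vd^{+}=\vd^{-}=0$ returns $1/2$ identically. This lets me couple an i.i.d.\ sample $\MU\sim\pi_{d,k}$ as $\MU=\fratio(\ETA_T)$ with $T$ drawn from the multinomial on $\{\AllKid,\PureP,\PureM,\NoKid\}$ with probabilities $(p_{\AllKid},p_{\PureP},p_{\PureM},p_{\NoKid})$ from \eqref{eq_def_multipr} and $\ETA_T\sim\rho^{T}_{d,k}$ (with $\ETA_{\NoKid}=0$). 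The second ingredient will be a sign-bookkeeping lemma obtained by unfolding the recurrence \eqref{eq_extremalBoundDef}: for any grandchild $w$ of the root whose parent clause $a$ has sign $\eps\in\PM$ at the root, one gets $\TAU^+(w)=-\eps\cdot\sign(w,a)$, and hence the ``failing-to-satisfy'' probability $\MU_{\pi_{d,k},i,j}=\pr[w_{i,j}=-\sign(w_{i,j},a_i)]$ appearing in \eqref{eqhat} equals $\fratio(\eps_i\cdot\ETA_{w_{i,j}})$, where $\eps_i$ is the sign of clause $a_i$ at the root; the symmetry $\MU\disteq 1-\MU$ from \Fact~\ref{prop_identical} guarantees that this sign-flip leaves the marginal law of $\MU_{\pi_{d,k},i,j}$ unchanged, so the coupling is valid.

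With these two ingredients in hand, translating BP into $\LDELitS{k,d}$ will become a routine computation. Under $\fprop$ the update \eqref{eqhat} reads
\[
\fprop(\hat\MU_{\pi_{d,k}})=\sum_{i=1}^{\vd^-}\log\bigl(1-\Pfun(-\vec\ETA_{w_i})\bigr)-\sum_{i=1}^{\vd^+}\log\bigl(1-\Pfun(+\vec\ETA_{w_i})\bigr),
\]
and factoring $\Pfun(\pm\vec\ETA_{w_i})$ by children's types (with each $\NoKid$ grandchild contributing $\fratio(0)=1/2$, producing the factor $2^{-r_{\NoKid}}$) reproduces $\LArg_{i,1}(+1,\vec r_i)$ and $\LArg_{i,2}(-1,\vec r_i)$ from \eqref{eq:defOfXI} term by term. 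Conditioning on the root's type $T\in\{\AllKid,\PureP,\PureM\}$ then replaces the Poisson counts $\vd^{\pm}$ by the positive-conditioned Poissons $\vdpc,\vdmc,\vdpc',\vdmc'$, at which point the resulting expression coincides with \eqref{eq:AllKidRecGUnq}. Since $\pi^{T}_{d,k}=\BP^{T}_{d,k}(\pi_{d,k})$ by definition and $\pi_{d,k}=\BP_{d,k}(\pi_{d,k})$ by \Prop~\ref{prop_arnab}, pushing forward through $\fprop$ will yield $\hat\rho_T=\rho^{T}_{d,k}$, as claimed. The only delicate point is the sign bookkeeping encoded in $\eps_i$ above; once that identity is pinned down, everything else reduces to a term-by-term comparison of \eqref{eqhat} with \eqref{eq:defOfXI}--\eqref{eq:AllKidRecGUnq}.
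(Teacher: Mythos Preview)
Your argument is correct and rests on the same three ingredients as the paper's: the mixture decomposition
\[
\pi_{d,k}=p_{\AllKid}\pi^{\AllKid}_{d,k}+p_{\PureP}\pi^{\PureP}_{d,k}+p_{\PureM}\pi^{\PureM}_{d,k}+p_{\NoKid}\delta_{1/2}
\]
(obtained from the fixed-point identity \eqref{eqBPfixedPoint} by conditioning on which of $\vd^{\pm}$ vanish), the symmetry $\MU_{\pi_{d,k},1,1}\disteq 1-\MU_{\pi_{d,k},1,1}$ from \Prop~\ref{prop_arnab} (which renders the sign $\eps$ in $\LArg_{i,j}(\eps,r)$ distributionally immaterial), and the definitions of $\BP^{\AllKid}_{d,k},\BP^{\PureP}_{d,k},\BP^{\PureM}_{d,k}$.

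The only difference from the paper is the order of assembly. The paper first states the auxiliary identity $\LDELitS{d,k}(\rho_{d,k},\rho_{d,k},\rho_{d,k})=(\rho^{\AllKid}_{d,k},\rho^{\PureP}_{d,k},\rho^{\PureM}_{d,k})$ and then invokes the law of total probability; you instead feed $(\rho^{\AllKid}_{d,k},\rho^{\PureP}_{d,k},\rho^{\PureM}_{d,k})$ directly into $\LDELitS{k,d}$ and observe that the multinomial type assignment together with the hard-coded $2^{-r_{\NoKid}}$ factor reconstitutes exactly the mixture $\rho_{d,k}$ at each grandchild, after which the three output coordinates equal $\rho^{\AllKid}_{d,k},\rho^{\PureP}_{d,k},\rho^{\PureM}_{d,k}$ by definition. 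Your direct route is arguably cleaner, since the paper's auxiliary identity is delicate to read literally: with input $(\rho_{d,k},\rho_{d,k},\rho_{d,k})$ the effective grandchild law is $(1-p_{\NoKid})\rho_{d,k}+p_{\NoKid}\delta_0$, not $\rho_{d,k}$ itself, so the ``law of total probability'' step is really doing the work your mixture decomposition does explicitly.

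One remark: your $\TAU^+$ sign-bookkeeping paragraph is correct (the identity $\TAU^+(w)=-\eps\cdot\sign(w,a)$ does follow from \eqref{eq_extremalBoundDef}) but it is not needed for the fixed-point verification. The matching of \eqref{eqhat} with \eqref{eq:defOfXI}--\eqref{eq:AllKidRecGUnq} goes through purely from the symmetry of $\pi_{d,k}$; the extremal boundary plays no role here, only in the surrounding \Lem~\ref{lem:upthetree} and \Lem~\ref{lem_NoR_R_OP}. You can drop that paragraph without loss.
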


\begin{proof}
Let $\rho_{d,k}=\fprop\bigl(\pi_{d,k}\bigr)$.
First, we claim that
\begin{align}\label{eq_dixq}
\LDELitS{d,k}\bc{\rho_{d,k}, \rho_{d,k}, \rho_{d,k}} = \bc{\rho^{\AllKid}_{d,k}, \rho^{\PureP}_{d,k}, \rho^{\PureM}_{d,k}}
\enspace.
\end{align}
Indeed, since all input distributions are the same, by \Prop~\ref{prop_arnab}, the two summands in 
the left term of \eqref{eq:AllKidRecGUnq} corresponding to $\vdpc$ and $\vdmc$ are identically distributed, and also identically distributed to the sums that appear in the other two terms.
Therefore,
\eqref{eq_dixq} follows directly from the definitions of $\BP_{d,k}^{\AllKid}, \BP_{d,k}^{\PureP}$, and $ \BP_{d,k}^{\PureM}$.
The claim now follows from \eqref{eq_dixq}, the definition of the operator $\LDELitS{d,k}$, and the law of total probability.
\end{proof}

Let $\rho^{(\ell)}$ be the distribution of the log-likelihood ratio $\ETA^{(\ell)}_\root$.

\begin{corollary}\label{cor_wk_prp}
For $d < \dours(k)$ the sequence $\bc{\fratio\bc{\rho^{(\ell)}}}_{\ell}$ converges weakly to $\pi_{d,k}$.
\end{corollary}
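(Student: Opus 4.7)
The plan is to reduce the weak convergence $\fratio(\rho^{(\ell)})\Rightarrow\pi_{d,k}$ to the convergence of the type-conditioned marginals. Specifically, conditioning on the type of the root variable yields
\[
\rho^{(\ell)}=p_{\AllKid}\rho^{(\ell)}_{\AllKid}+p_{\PureP}\rho^{(\ell)}_{\PureP}+p_{\PureM}\rho^{(\ell)}_{\PureM}+p_{\NoKid}\delta_{0},
\]
and, analogously, $\rho_{d,k}:=\fprop(\pi_{d,k})=\sum_{\type}p_{\type}\rho^{\type}_{d,k}$. Weak convergence of each $\rho^{(\ell)}_{\type}$ to $\rho^{\type}_{d,k}$ will therefore imply weak convergence of $\rho^{(\ell)}$ to $\rho_{d,k}$, whence the continuous mapping theorem applied to the bounded, continuous map $\fratio$ delivers $\fratio(\rho^{(\ell)})\Rightarrow\fratio(\rho_{d,k})=\pi_{d,k}$.

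To establish the type-conditioned convergence I will combine three ingredients. Fix $\eps>0$ and pick $t$ large enough that $\eps_{t}<\eps$ (via \Lem~\ref{lem_SmallOnTop}) and that $\kappa^{t}M_{t}<\eps$, where $\kappa=\kappa(d,k)<1$ is the contraction constant furnished by \Prop~\ref{prop_BPPureLit} and $M_{t}$ is an upper bound on $\dist_{d}$-distances between triples supported in $[-2t^{c},2t^{c}]^{3}$ and the fixed-point triple $(\rho^{\AllKid}_{d,k},\rho^{\PureP}_{d,k},\rho^{\PureM}_{d,k})$. Crucially, $M_{t}<\infty$ because the fixed-point marginals have finite first moments -- this follows from $\ex[\log^{2}\MU_{\pi_{d,k},1,1}]<\infty$ (\Prop~\ref{prop_arnab}) together with the symmetry supplied by \Fact~\ref{prop_identical} and the identity $\fprop(p)=\log(p/(1-p))$.

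For $\ell>ct^{c}$ the truncated triples $(\bar\rho^{(\ell-t)}_{\AllKid},\bar\rho^{(\ell-t)}_{\PureP},\bar\rho^{(\ell-t)}_{\PureM})$ lie in the truncation ball by construction \eqref{eq:truncEta}. Interpreting \Lem~\ref{lem_NoR_R_OP} as the $t$-fold correspondence its inductive proof actually establishes, $(\rho^{(\ell,t)}_{\AllKid},\rho^{(\ell,t)}_{\PureP},\rho^{(\ell,t)}_{\PureM})$ is the image of this truncated triple under the $t$-fold iterate of $\LDELitS{k,d}$. Combining with \Cl~\ref{cl_FixedPoints_LL} (which identifies $(\rho^{\AllKid}_{d,k},\rho^{\PureP}_{d,k},\rho^{\PureM}_{d,k})$ as a fixed point) and iterating the contraction, I obtain
\[
\dist_{d}\!\bigl((\rho^{(\ell,t)}_{\AllKid},\rho^{(\ell,t)}_{\PureP},\rho^{(\ell,t)}_{\PureM}),(\rho^{\AllKid}_{d,k},\rho^{\PureP}_{d,k},\rho^{\PureM}_{d,k})\bigr)\leq\kappa^{t}M_{t}<\eps.
\]
\Cor~\ref{Cor_boundary} then transfers this bound from the truncated $\rho^{(\ell,t)}_{\type}$ to the actual $\rho^{(\ell)}_{\type}$ up to a weak-metric error of at most $\eps_{t}$, which vanishes as $\eps\to 0$. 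A standard triangle inequality completes the argument for each type $\type\in\{\AllKid,\PureP,\PureM\}$; the $\NoKid$ case is trivial since both $\rho^{(\ell)}_{\NoKid}$ and $\rho^{\NoKid}_{d,k}$ equal $\delta_{0}$.

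The main obstacle is to orchestrate three distinct perturbations carefully: the contraction error $\kappa^{t}M_{t}$ coming from \Prop~\ref{prop_BPPureLit}, the truncation error $\eps_{t}$ coming from \Cor~\ref{Cor_boundary}, and the implicit conversion between $\dist_{d}$-convergence (a weighted $W_{1}$-convergence on triples) and weak convergence of the unconditioned $\rho^{(\ell)}$. Ensuring $M_{t}<\infty$ is the most delicate point; if the fixed-point triple failed to have finite first moments, the contraction bound would be vacuous and the entire scheme would collapse, which is precisely why the second-moment control from \Prop~\ref{prop_arnab} is indispensable here.
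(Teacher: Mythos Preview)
Your proposal is correct and follows essentially the same approach as the paper: combine the truncation estimate (\Cor~\ref{Cor_boundary}), the identification of the truncated iterates with iterates of $\LDELitS{k,d}$ (\Lem~\ref{lem_NoR_R_OP}), the contraction (\Prop~\ref{prop_BPPureLit}), the fixed-point claim (\Cl~\ref{cl_FixedPoints_LL}), the law of total probability over types, and the continuous mapping theorem. Your write-up is in fact more careful than the paper's one-line proof, in particular in flagging that \Lem~\ref{lem_NoR_R_OP} is to be read as the $t$-fold statement and in justifying $M_t<\infty$ via \Prop~\ref{prop_arnab}.
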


\begin{proof}
The result follows by combining \Cor~\ref{Cor_boundary}, \Lem~\ref{lem_NoR_R_OP}, \Prop~\ref{prop_BPPureLit}, \Cl~\ref{cl_FixedPoints_LL},
and applying the continuous mapping theorem and the law of total probability.
\end{proof}

\begin{proof}[Proof of \Prop~\ref{prop:condMarg}]
Recall that we write $\Uplambda_{\TT^{(\ell)}}^{+ \, (\ell)}$ for the $\ell$-fold iteration of the operator $\Uplambda^{+}_{{\TT}}$.
Let us write $\THETA^{(\ell)}_\root = \bigl(\Uplambda^{+\, (\ell)}_{\TT^{(\ell)}}(0, \ldots, 0)\bigr)_\root$.
Using arguments similar to \Fact~\ref{fac_BPexact}, we can show that  $\THETA^{(\ell)}_\root$ is nothing but the distribution of the random variable
$\fprop(\pr\brk{\TAU^{(\ell)}(\root)=1\mid\TT})$. Therefore,
\begin{align*}
{\pr[\TAU^{(\ell)}(\root)=1\mid\TT]}\disteq
\fratio(\THETA^{(\ell)}_\root) \enspace, &&
\text{ and }
&&
{\pr[\TAU^{(\ell)}(\root)=1 \mid \forall y\in\partial^{2\ell}\root: \TAU^{(\ell)}(y) = \TAU^+(y),\TT]} \disteq \fratio(\ETA^{(\ell)}_{\root})
\enspace.
\end{align*}
Due to \Lem~\ref{lem:ExtremalBnd}, $0 \le \fratio(\THETA^{(\ell)}_\root) \le \fratio(\ETA^{(\ell)}_\root) \le 1$. 
Moreover, from \Lem~\ref{lem_NoR_R_OP}, \Prop~\ref{prop_BPPureLit}, and \Cl~\ref{cl_FixedPoints_LL}, we see that for $d < \dours(k)$ the sequence $\bigl(\fratio(\THETA^{(\ell)}_\root)\bigr)_\ell$ converges weakly to $\pi_{d,k}$. 
Finally, \Cor~\ref{cor_wk_prp} implies that $\bigl(\fratio(\ETA^{(\ell)}_\root)\bigr)_\ell$ also converges weakly to $\pi_{d,k}$, and thus,
\begin{align*}
\lim_{\ell \to \infty} \Erw \left|\fratio(\THETA^{(\ell)}_\root)- \fratio(\ETA^{(\ell)}_\root)\right|
=
\lim_{\ell \to \infty} \left|\Erw\brk{\fratio(\THETA^{(\ell)}_\root)} - \Erw\brk{\fratio(\ETA^{(\ell)}_\root)}\right| = 0 \enspace,
\end{align*}
implying the assertion.
\end{proof}

\section*{Acknowledgements}

We would like to thank the anonymous referees for thoroughly reviewing our paper and for suggesting valuable corrections and improvements.

Amin Coja-Oghlan is supported by DFG CO 646/3, DFG CO 646/5 and DFG CO 646/6. Catherine Greenhill is supported by ARC DP250101611.
Vincent Pfenninger is supported by the Austrian Science Fund (FWF) [10.55776~/ 16502]. Pavel Zakharov is supported by DFG CO 646/6. Kostas Zampetakis is supported by DFG CO 646/5.
For open access, the authors have applied a CC BY public copyright licence to any Author Accepted Manuscript version arising from this submission.

\end{document}